\DeclareFontFamily{U}{mathx}{\hyphenchar\font45}
\DeclareFontShape{U}{mathx}{m}{n}{
      <5> <6> <7> <8> <9> <10>
      <10.95> <12> <14.4> <17.28> <20.74> <24.88>
      mathx10
      }{}
\DeclareSymbolFont{mathx}{U}{mathx}{m}{n}
\DeclareMathSymbol{\bigtimes}{1}{mathx}{"91}
\definecolor{DarkRed}{rgb}{0.5,0.1,0.1}
\definecolor{DarkBlue}{rgb}{0.1,0.1,0.5}
\definecolor{ForestGreen}{rgb}{0.1333,0.5451,0.1333}
\definecolor{Red}{rgb}{0.9,0,0}
\crefname{property}{property}{Property}
\crefname{equation}{eq}{Eq}
\def\BState{\State\hskip-\ALG@thistlm}
\setlist[itemize]{leftmargin=20pt}
\setlist[enumerate]{leftmargin=20pt}
\newtheorem{theorem}{Theorem}
\newtheorem{lemma}{Lemma}[section]
\newtheorem{proposition}[lemma]{Proposition}
\newtheorem{claim}[lemma]{Claim}
\newtheorem{fact}[lemma]{Fact}
\newtheorem{definition}[lemma]{Definition}
\newtheorem{problem}{Problem}
\newtheorem*{claim*}{Claim}
\newtheorem*{assumption*}{Assumption}
\newtheorem*{proposition*}{Proposition}
\newtheorem*{lemma*}{Lemma}
\newtheorem{observation}[lemma]{Observation}
\newtheorem*{theorem*}{Theorem}
\crefname{lemma}{Lemma}{Lemmas}
\crefname{claim}{claim}{claims}
\crefname{property}{Property}{Properties}
\crefname{invariant}{Invariant}{Invariants}
\newtheorem{mdresult}{Result}
\newenvironment{result}{\begin{mdframed}[backgroundcolor=lightgray!40,topline=false,rightline=false,leftline=false,bottomline=false,innertopmargin=2pt]\begin{mdresult}}{\end{mdresult}\end{mdframed}}
\theoremstyle{definition}
\newtheorem{remark}[lemma]{Remark}
\newtheorem*{mdproblem*}{Problem}
\newenvironment{Problem*}{\begin{mdframed}[hidealllines=false,innerleftmargin=10pt,backgroundcolor=gray!10,innertopmargin=5pt,innerbottommargin=5pt,roundcorner=10pt]\begin{mdproblem*}}{\end{mdproblem*}\end{mdframed}}
\newtheorem{mddefinition}[lemma]{Definition}
\newtheorem*{mddefinition*}{Definition}
\newenvironment{Definition*}{\begin{mdframed}[hidealllines=false,innerleftmargin=10pt,backgroundcolor=white!10,innertopmargin=5pt,innerbottommargin=5pt,roundcorner=10pt]\begin{mddefinition*}}{\end{mddefinition*}\end{mdframed}}
\newtheorem{mdremark}{Remark}
\newtheorem{mddistribution}{Distribution}
\newenvironment{Distribution}{\begin{tbox}\begin{mddistribution}}{\end{mddistribution}\end{tbox}}
\newtheorem{mdprotocol}{Protocol}
\newenvironment{Protocol}{\begin{tbox}\begin{mdprotocol}}{\end{mdprotocol}\end{tbox}}
\newenvironment{ourbox}{\begin{mdframed}[hidealllines=false,innerleftmargin=10pt,backgroundcolor=white!10,innertopmargin=2pt,innerbottommargin=5pt,roundcorner=10pt]}{\end{mdframed}}
\newtheorem{mdalgorithm}{Algorithm}
\renewcommand{\qed}{\nobreak \ifvmode \relax \else
      \ifdim\lastskip<1.5em \hskip-\lastskip
      \hskip1.5em plus0em minus0.5em \fi \nobreak
      \vrule height0.75em width0.5em depth0.25em\fi}
\renewcommand{\leq}{\leqslant}
\renewcommand{\geq}{\geqslant}
\renewcommand{\ge}{\geq}
\newcommand{\tvd}[2]{\ensuremath{\norm{#1 - #2}_{\mathrm{tvd}}}}
\newcommand{\Ot}{\ensuremath{\widetilde{O}}}
\newcommand{\eps}{\ensuremath{\varepsilon}}
\newcommand{\Paren}[1]{\Big(#1\Big)}
\newcommand{\Bracket}[1]{\Big[#1\Big]}
\newcommand{\bracket}[1]{\left[#1\right]}
\newcommand{\paren}[1]{\ensuremath{\left(#1\right)}\xspace}
\newcommand{\card}[1]{\left\vert{#1}\right\vert}
\newcommand{\Omgt}{\ensuremath{\widetilde{\Omega}}}
\newcommand{\IN}{\ensuremath{\mathbb{N}}}
\newcommand{\norm}[1]{\ensuremath{\|#1\|}}
\newcommand{\set}[1]{\ensuremath{\left\{ #1 \right\}}}
\newcommand{\poly}{\mbox{\rm poly}}
\DeclareMathOperator*{\Exp}{\ensuremath{{\mathbb{E}}}}
\DeclareMathOperator*{\Prob}{\ensuremath{\textnormal{Pr}}}
\renewcommand{\Pr}{\Prob}
\newcommand{\Ex}{\Exp}
\newenvironment{tbox}{\begin{tcolorbox}[
		enlarge top by=5pt,
		enlarge bottom by=5pt,
		 breakable,
		 boxsep=0pt,
                  left=4pt,
                  right=4pt,
                  top=10pt,
                  arc=4pt,
                  boxrule=1pt,toprule=1pt,
                  colback=white
                  ]
	}
{\end{tcolorbox}}
\newcommand{\event}{\ensuremath{\mathcal{E}}}
\newcommand{\rv}[1]{\ensuremath{{\mathsf{#1}}}\xspace}
\newcommand{\rA}{\rv{A}}
\newcommand{\rB}{\rv{B}}
\newcommand{\rC}{\rv{C}}
\newcommand{\rD}{\rv{D}}
\newcommand{\rX}{\rv{X}}
\newcommand{\rY}{\rv{Y}}
\newcommand{\rZ}{\rv{Z}}
\newcommand{\supp}[1]{\ensuremath{\textnormal{\text{supp}}(#1)}}
\newcommand{\distribution}[1]{\ensuremath{\textnormal{dist}(#1)}\xspace}
\newcommand{\kl}[2]{\ensuremath{\mathbb{D}(#1~||~#2)}}
\newcommand{\II}{\ensuremath{\mathbb{I}}}
\newcommand{\HH}{\ensuremath{\mathbb{H}}}
\newcommand{\mi}[2]{\ensuremath{\def\mione{#1}\def\mitwo{#2}\mireal}}
\newcommand{\mireal}[1][]{
  \ifx\relax#1\relax%
    \II(\mione \,; \mitwo)%
  \else%
    \II(\mione \,; \mitwo\mid #1)%
  \fi
}
\newcommand{\en}[1]{\ensuremath{\HH(#1)}}
\newcommand{\itfacts}[1]{\Cref{fact:it-facts}-(\ref{part:#1})\xspace}
\newcounter{mybox}
\crefname{mybox}{Box}{Boxes}
\newcommand{\prot}{\ensuremath{\pi}}
\newcommand{\suc}{\ensuremath{\textnormal{\textsf{suc}}}}
\newcommand{\round}{\ensuremath{\textnormal{\textsf{round}}}}
\newcommand{\bwidth}{\ensuremath{\textnormal{\textsf{bw}}}}
\newcommand{\cC}{\ensuremath{\mathcal{C}}}
\newcommand{\typeval}[1]{\ensuremath{type(#1)}}
\newcommand{\rtypeval}[1]{\ensuremath{\rv{type}(#1)}}
\newcommand{\xstar}{\ensuremath{x^{*}}}
\newcommand{\Xstar}{\ensuremath{X^{*}}}
\newcommand{\ystar}{\ensuremath{y^{*}}}
\newcommand{\Ystar}{\ensuremath{Y^{*}}}
\newcommand{\zstar}{\ensuremath{z^{*}}}
\newcommand{\Zstar}{\ensuremath{Z^{*}}}
\newcommand{\astar}{\ensuremath{a^*}}
\newcommand{\bstar}{\ensuremath{b^*}}
\newcommand{\cstar}{\ensuremath{c^*}}
\newcommand{\Nbtogether}[2]{\ensuremath{\mathcal{N}^{#1_{#2}}}}
\newcommand{\rNbtogether}[2]{\ensuremath{\bm{\mathcal{N}}^{#1_{#2}}}}
\newcommand{\Nbhood}[3]{\ensuremath{\mathcal{N}^{#1_{#3} \to #2}}}
\newcommand{\Ninner}[2]{\ensuremath{\mathcal{N}_{\textnormal{in}}^{#1_{#2}}}}
\newcommand{\rNinner}[2]{\ensuremath{\bm{\mathcal{N}}_{\textnormal{in}}^{#1_{#2}}}}
\newcommand{\cH}{\ensuremath{\mathcal{H}}}
\newcommand{\cD}{\ensuremath{\mathcal{D}}}
\newcommand{\cG}{\ensuremath{\mathcal{G}}}
\newcommand{\cGtil}{\ensuremath{\widetilde{\cG}}}
\newcommand{\cDreal}{\ensuremath{\cD^{\textnormal{real}}}}
\newcommand{\cDrealtil}{\ensuremath{\widetilde{\cDreal}}}
\newcommand{\cDfake}{\ensuremath{\cD^{\textnormal{fake}}}}
\newcommand{\cDmarg}{\ensuremath{\cD_{\textnormal{in}}}}
\newcommand{\idlayer}[1]{\ensuremath{id^{#1}}}
\newcommand{\ridlayer}[1]{\ensuremath{\rv{id}^{#1}}}
\newcommand{\id}[1]{\ensuremath{id(#1)}}
\newcommand{\ids}{\ensuremath{ids}}
\newcommand{\aux}{\ensuremath{aux}}
\newcommand{\raux}{\ensuremath{\rv{aux}}}
\newcommand{\cJ}{\ensuremath{\mathcal{J}}}
\newcommand{\cK}{\ensuremath{\mathcal{K}}}
\newcommand{\cJup}[1]{\ensuremath{\cJ^{#1}}}
\newcommand{\Jupdown}[2]{\ensuremath{J^{#1}_{#2}}}
\newcommand{\cKupdown}[2]{\ensuremath{\cK^{#1}_{#2}}}
\newcommand{\Kupdown}[2]{\ensuremath{K^{#1}_{#2}}}
\newcommand{\Lupdown}[2]{\ensuremath{L^{#1}_{#2}}}
\newcommand{\rcJup}[1]{\ensuremath{\bm{\cJ}^{#1}}}
\newcommand{\rcKupdown}[2]{\ensuremath{\bm{\cK}^{#1}_{#2}}}
\newcommand{\rLupdown}[2]{\ensuremath{\rv{L}^{#1}_{#2}}}
\newcommand{\cJall}{\ensuremath{\mathcal{J}_{\textnormal{all}}}}
\newcommand{\cKall}{\ensuremath{\mathcal{K_{\textnormal{all}}}}}
\newcommand{\Lall}{\ensuremath{L_{\textnormal{all}}}}
\newcommand{\rcJall}{\ensuremath{\bm{\mathcal{J}}_{\textnormal{all}}}}
\newcommand{\rcKall}{\ensuremath{\bm{\mathcal{K}}_{\textnormal{all}}}}
\newcommand{\rLall}{\ensuremath{\bm{L}_{\textnormal{all}}}}
\newcommand{\Npub}[1]{\ensuremath{\mathcal{N}_{\textnormal{pub}}^{#1}}}
\newcommand{\rNpub}[1]{\ensuremath{\bm{\mathcal{N}}_{\textnormal{pub}}^{#1}}}
\newcommand{\Mpub}[1]{\ensuremath{\mathcal{M}_{\textnormal{pub}}^{#1}}}
\newcommand{\rMpub}[1]{\ensuremath{\bm{\mathcal{M}}_{\textnormal{pub}}^{#1}}}
\newcommand{\Minner}[1]{\ensuremath{\mathcal{M}_{\textnormal{in}}^{#1}}}
\newcommand{\rMinner}[1]{\ensuremath{\bm{\mathcal{M}}_{\textnormal{in}}^{#1}}}
\newcommand{\vecx}{\ensuremath{\vec{x}}}
\newcommand{\msg}[3]{\ensuremath{\mathcal{M}^{#1_{#3} \to #2}}}
\newcommand{\msgrest}[2]{\ensuremath{\mathcal{M}^{\textnormal{out} \to #1_{#2} }}}
\newcommand{\msgall}{\ensuremath{\mathcal{M}_{\textnormal{all}}}}
\newcommand{\Nrest}[1]{\ensuremath{\mathcal{N}^{#1}_{\textnormal{rest}}}}
\newcommand{\rNrest}[1]{\ensuremath{\bm{\mathcal{N}}^{#1}_{\textnormal{rest}}}}
\newcommand{\rG}{\ensuremath{\rv{G}}}
\newcommand{\rid}[1]{\ensuremath{\rv{id}(#1)}}
\newcommand{\rids}{\ensuremath{\rv{ids}}}
\newcommand{\rNbhood}[3]{\ensuremath{\bm{\mathcal{N}}^{#1_{#3} \to #2}}}
\newcommand{\rmsgall}{\ensuremath{\bm{\mathcal{M}}_{\textnormal{all}}}}
\newcommand{\rmsg}[3]{\ensuremath{\bm{\mathcal{M}}^{#1_{#3} \to #2}}}
\newcommand{\rmsgrest}[2]{\ensuremath{\bm{\mathcal{M}}^{\textnormal{out} \to #1_{#2}}}}
\newcommand{\Euniq}{\ensuremath{\mathcal{E}_{\textnormal{uniq}}}}
\newcommand{\rIuniq}{\ensuremath{\rv{I}_{\textnormal{uniq}}}}
\newcommand{\rWcond}{\ensuremath{\rv{W}_{\textnormal{cond}}}}
\newcommand{\rWrest}{\ensuremath{\rv{W}_{\textnormal{rest}}}}
\newcommand{\rmsgup}[1]{\ensuremath{\bm{\mathcal{M}}^{#1}}}
\newcommand{\rNbhoodup}[1]{\ensuremath{\bm{\mathcal{N}}^{#1}}}
\newcommand{\rmsgnos}{\ensuremath{\bm{\mathcal{M}}}}
\newcommand{\Econd}{\ensuremath{\mathcal{E}_{\textnormal{cond}}}}
\newcommand{\Et}{\ensuremath{\mathcal{E}_t}}
\newcommand{\rw}{\rv{w}}
\newcommand{\rwstart}{\rv{w}^{\textnormal{start}}}
\newcommand{\rwend}{\rv{w}^{\textnormal{end}}}
\title{Distributed Triangle Detection is Hard in Few Rounds} 
\author{Sepehr Assadi\footnote{(sepehr@assadi.info) Supported in part by a  Sloan Research Fellowship, an NSERC
Discovery Grant, and a Faculty of Math Research Chair grant. \smallskip} 
\\ {\small University of Waterloo} \and
Janani Sundaresan\footnote{(jsundaresan@uwaterloo.ca) Supported in part by a Cheriton Scholarship from the School of Computer Science, Faculty of Math Graduate Research Excellence Award, and Sepehr Assadi's NSERC Discovery Grant. \smallskip} \\ {\small University of Waterloo}
}
\date{}
\begin{document}
\maketitle


\begin{abstract}

\medskip

In the distributed triangle detection problem, we have an $n$-vertex network $G=(V,E)$ with one player for each vertex of the graph who sees the edges incident on the vertex. The players communicate in synchronous rounds using the 
edges of this network and have a limited bandwidth of $O(\log{n})$ bits over each edge. The goal is to detect whether or not $G$ contains a triangle as a subgraph in a minimal number of rounds. 

\medskip

We prove that any protocol (deterministic or randomized) for distributed triangle detection requires $\Omega(\log\log{n})$ rounds of communication. Prior to our work, only 
one-round lower bounds were known for this problem. 

\medskip

The primary technique for proving these types of distributed lower bounds is via reductions from two-party communication complexity. However, it has been known for a while that this approach is provably incapable of establishing 
any meaningful lower bounds for distributed triangle detection. Our main technical contribution is a new information theoretic argument which combines recent advances on multi-pass graph streaming lower bounds 
with the point-to-point communication aspects of distributed models, and can be of independent interest.

\end{abstract}

\pagenumbering{roman}

\clearpage

\setcounter{tocdepth}{3}
\tableofcontents
\clearpage
\pagenumbering{arabic}
\setcounter{page}{1}



\section{Introduction}\label{sec:intro}

In the distributed triangle detection problem, we have an $n$-vertex graph $G=(V,E)$ representing a distributed network with one player for each vertex of the graph who sees the edges incident on the vertex. Players communicate in synchronous rounds 
by sending $O(\log{n})$ bit messages to each of their neighbors per round. The goal is to detect if $G$ contains a triangle as a subgraph in a small number of rounds. Detecting in this problem means that if $G$ contains no triangles, \emph{all} players should output \emph{No}; and, if $G$ contains a triangle, at least \emph{some} player should output \emph{Yes}. 

This model of distributed communication is referred
to as the CONGEST model~\cite{Peleg00} to contrast it with the LOCAL model~\cite{Linial92} that ignores bandwidth limitations and focuses solely on locality, imposed by communicating only over the edges of the graph. Triangle detection
can be trivially solved in a single round of LOCAL by each vertex collecting the neighborhood of its own neighbors. But, this approach incurs a significant communication bottleneck and misrepresents the true complexity of 
this problem in distributed networks. As a result, triangle detection (among other subgraphs) has been studied extensively in the CONGEST model in recent years, leading to the following state of affairs (see also~\Cref{sec:related} for more on the related work): 
\begin{itemize}
	\item From the upper bound side, the first algorithm for distributed triangle detection is due to~\cite{IzumiG17} and achieves $\Ot(n^{2/3})$ rounds. This was subsequently improved 
	in~\cite{ChangPZ19} to $\Ot(n^{1/2})$ rounds using expander decompositions, and further refined in~\cite{ChangS19,ChangPSZ21} to $\Ot(n^{1/3})$ rounds. See also \cite{HuangPZZ21} for an  algorithm with guarantees based on maximum
	 degree of the graph. These algorithms are randomized
	and their guarantees hold with high probability.
	Subsequent work ~\cite{CHLV22}, building on ~\cite{ChangS20}, obtained deterministic $n^{1/3+o(1)}$ round algorithms for this problem. Plugging in 
	the more recent deterministic expander routing algorithm of~\cite{ChangHS24} in the framework of~\cite{CHLV22}, leads to an $\Ot(n^{1/3})$ round deterministic algorithm. 
	
	
	\item On the lower bound side, not much is known about this problem. Basically, the only lower bounds for this problem are due to~\cite{AbboudCKL20} and~\cite{FischerGKO18} who, respectively, 
	showed that deterministic or randomized algorithms cannot solve this problem in a  \emph{single} round. 
\end{itemize}
We refer the reader to the excellent survey of~\cite{CH22} for a detailed overview of the literature. 

The lack of progress on lower bounds can be attributed to two factors: 
(1) the main lower bound technique in this model, namely reductions from
two-player communication complexity, is provably incapable of establishing any meaningful lower bounds for triangle detection~\cite{AbboudCKL20,FischerGKO18,CH22} (see also~\Cref{sec:congest-lower-background}); and, (2) there is a provable barrier for establishing 
strong lower bounds:~\cite{EdenFFKO22} shows that a lower bound of $n^{\delta}$ rounds for any constant $\delta > 0$ implies circuit complexity lower bounds that are entirely out of reach of existing techniques (see also~\Cref{sec:cktbounds}).

Given the above challenges, proving any lower bounds for distributed triangle detection has been considered a challenging task~\cite{AbboudCKL20,FischerGKO18,CH22}, and prior
work has only ruled out one-round algorithms. In this work, we make further progress on this tantalizing open question. 


\begin{result}\label{res:main}
	Any algorithm for distributed triangle detection in CONGEST on $n$-vertex graphs requires $\Omega(\log\log{n})$ rounds to succeed with high constant probability. 
\end{result}

\vspace{-5pt}

\Cref{res:main} provides a partial answer to the question of determining the round complexity of distributed triangle detection---from the lower bound side---raised repeatedly in the literature, e.g., 
in~\cite[Open Question 1]{AbboudCKL20} and~\cite[Open Problem 2.2]{CH22}, among others.

To put our main result in perspective, we should highlight other lower bounds for distributed triangle detection in related models. The lack of lower bounds in CONGEST motivated~\cite{AbboudCKL20} to consider a more limited model of 
communication wherein each player sends a single bit on each of its edges per round (simulating a round of CONGEST requires $\Omega(\log{n})$ rounds in this model). In this model,~\cite{AbboudCKL20}
proved an $\Omega(\log^*\!{n})$ lower bound for \emph{deterministic} algorithms and~\cite{FischerGKO18} extended the lower bound to $\Omega(\log{n})$ rounds. 
Similarly,~\cite{DruckerKO14} proved that any deterministic algorithm that broadcasts the \emph{same} $O(\log{n})$-bit message to all its neighbors in each round, needs $n^{1-o(1)}$ rounds\footnote{The lower bound of~\cite{DruckerKO14} also holds
	in the more general model of \emph{broadcast} CLIQUE wherein each player can send the same message to all vertices of the graph, not only its neighbors.}. To our knowledge, for randomized algorithms, 
even in these computationally weaker models, no lower bounds better than a single round were known. 


\paragraph{Technical perspective.} As mentioned earlier, a key bottleneck in proving lower bounds for distributed triangle detection is the inadequacy of standard applications of communication complexity. 
This is reminiscent of another (loosely) related area of research: \emph{multi-pass graph streaming lower bounds}, wherein direct applications of communication complexity are often limited (see,~\cite{AssadiCK19} or~\cite{Assadi23}
for a discussion of this topic). However, recent years have witnessed a flurry of breakthroughs in proving lower bounds in this model, e.g., in~\cite{GuruswamiO16,AssadiR20,ChenKPSSY21,AssadiKNS24} (see~\cite{Assadi23} for a quick summary) using more direct arguments tailored to the model and problems at hand. 

Our work is inspired by these successes and our conceptual contribution is to draw a connection between some of these techniques and the distributed triangle detection problem. At a technical level however, 
this requires bypassing the inherent difference between the point-to-point communication aspect of CONGEST model versus the broadcast nature of all these lower bounds in
graph streaming and related models (e.g.~\cite{AlonNRW15,AssadiKZ22}). Our main technical contribution is thus bridging this gap, which we hope can be of independent interest
for proving other distributed lower bounds in scenarios which are not amenable to standard communication complexity arguments.


\paragraph{Our techniques.} We shall go over our techniques in detail in the streamlined overview of our approach in~\Cref{sec:overview}. For now, we only mention the high-level bits of our techniques.

Our proof is based on \emph{round elimination}, specifically the types introduced in~\cite{AlonNRW15}\footnote{\label{footnote:welfare}Their work addresses the welfare maximization problem in mechanism design and bipartite matching in a broadcast communication model between (active) players on
	one side and (passive) items on the other side.}, and generalized more recently in~\cite{AssadiKZ22} and~\cite{AssadiKNS24,AssadiBKNS25} to prove lower bounds in
\emph{broadcast} CLIQUE and graph streaming models, respectively\footnote{We note that while conceptually related, this technique is entirely different from the round elimination in LOCAL lower bounds; see, e.g.,~\cite{Suomela20} for
	more details on the LOCAL round elimination technique.}. Prior work employ this technique by creating ``large'' $r$-round hard instances via combining many $(r-1)$-round hard ``small'' instances together, in a way that solving the large instance, requires solving one or a few of small instances as well. The construction ensures that
the identities of these few small instances are \emph{hidden} from the players in the first round. The analysis then shows that the first-round messages can be ``eliminated'', 
leaving the protocol with solving a hard (small) $(r-1)$-instance in $(r-1)$ rounds, which, inductively, is impossible. As we will argue later, direct applications of this approach inherently fail for us given CONGEST allows for a very large communication overall, and the complexity only comes from the limited and ``inconsistent'' view of each individual player. We thus show how to implement and analyze round elimination at a ``per player'' level instead of ``per (small) instance'' level. 

It is worth mentioning that graph streaming applications of these techniques in~\cite{AssadiKNS24,AssadiBKNS25} rely on complex combinatorial constructions to facilitate the lower bound arguments. In contrast, our hard input distributions are quite elementary from a combinatorial point of view and all the challenge is in the information-theoretic analysis of these distributions. 

\paragraph{Barriers for improvement.} As stated earlier, the barrier identified in~\cite{EdenFFKO22} suggests that, modulo breaking notoriously difficult circuit complexity barriers, 
the best lower bound we can hope for distributed triangle detection is some $n^{o(1)}$ rounds. This is precisely what our~\Cref{res:main} achieves, although at the ``lower end'' of the $n^{o(1)}$ range. 
A natural question now is whether we can improve our lower bound further to ``higher ends'' of $n^{o(1)}$; we provide some context for this question here.

A slight optimization of the parameters in \cite{EdenFFKO22} gives us that even an $n^{\Omega(1/\log \log n)}$ lower bound for distributed triangle detection breaks circuit complexity barriers; we present this proof in \Cref{sec:cktbounds} for completeness. Moreover, the proof of \cite{EdenFFKO22} relies on expander decompositions and routing schemes in CONGEST model. Expander decompositions now run in $\poly\!\log{\!(n)}$ rounds \cite{ChenMGS25} and it is \emph{plausible} that the current algorithms for expander routings can be improved dramatically to also run in $\poly\!\log{\!(n)}$ rounds. If such a fast algorithms for routing exists, even an $\Omega(\log^c{(n)})$-round lower bound for some large constant $c > 1$ for triangle detection could lead to breaking circuit complexity barriers (see \Cref{rem:cktbounds}). In light of all these, we believe that realistically, the best lower bound one can hope to prove for our problem is an $\Omega(\log{n})$ bound. We leave closing the gap of $\Omega(\log\log{n})$ in our~\Cref{res:main} to this ``realistic target'' of $\Omega(\log{n})$ as an interesting open question. 

Finally, we remark that our techniques, adapted from \cite{AssadiKNS24,AssadiBKNS25} (and originated from~\cite{AlonNRW15}), are heavily tailored to proving $\Omega(\log \log n)$ lower bounds -- in fact, 
both prior lower bounds of \cite{AssadiKNS24,AssadiBKNS25} are \emph{optimal} for their respective problems, and as such cannot be extended beyond these bounds. 
We believe proving a lower bound of $\Omega(\log n)$ for triangle detection in CONGEST likely requires new techniques.

\subsection{Related Work}\label{sec:related}

\paragraph{Listing vs detection.} A more general version of our problem is triangle \emph{listing} where the goal is that every triangle in the network is output by at least one player. 
All algorithms mentioned earlier for distributed triangle detection~\cite{IzumiG17,ChangPSZ21,ChangS20,CHLV22,ChangHS24}, with the exception of~\cite{IzumiG17}, 
also solve the listing problem within the same round complexity. As such, this problem can also be solved in~$\Ot(n^{1/3})$ rounds of CONGEST~\cite{ChangS19,ChangHS24}. 
In addition,~\cite{IzumiG17,PanduranganRS18} proved that this $\widetilde{\Omega}(n^{1/3})$ rounds is nearly-optimal for the listing problem. 
Their proof is based on a clever but relatively simple information-theoretic argument on random graphs which shows some vertex ``learns'' $\Omega(n^{4/3})$ bits
about the graph (outside its neighborhood) based solely on the answer; since each vertex 
can only receive $O(n\log{n})$ bits per round, this implies an $\Omega(n^{1/3}/\log{n})$ round lower bound. 
This technique is entirely disjoint from our work for triangle detection. 


\paragraph{CONGEST vs CLIQUE models.} The lower bounds of~\cite{IzumiG17,PanduranganRS18} hold in the more general (Congested) CLIQUE model~\cite{LotkerPPP05} 
wherein \emph{every} pair of players communicates $O(\log{n})$ bits per round. 
Interestingly, better CLIQUE algorithms are known for triangle \emph{detection}:~\cite{CensorHillelKK15} 
designed an $O(n^{1-2/\omega})$ rounds algorithm for triangle detection where $\omega$ is the matrix multiplication exponent %
which currently stands at $\omega \sim 2.371339$~\cite{AlmanDWXXZ25}. 
Unlike the lower bounds of~\cite{IzumiG17,PanduranganRS18} for triangle listing, proving \emph{any} super-constant round lower bounds for \emph{any} decision problem in CLIQUE---so, in particular, triangle detection---%
is a highly challenging task, as it implies circuit lower bounds that are beyond the reach of existing techniques~\cite{DruckerKO14}. 


CONGEST algorithms for triangle listing or detection in~\cite{ChangPSZ21,ChangS20,CHLV22,ChangHS24} all work by using expander decompositions and routings, 
to decompose the graph into well-connected components and simulate triangle \emph{listing} algorithms in the CLIQUE model on these components. The circuit complexity barrier of~\cite{EdenFFKO22} 
for proving triangle detection lower bound in CONGEST follows a similar pattern and relies on earlier work of~\cite{DruckerKO14} in the CLIQUE model. 


\paragraph{Other subgraphs.} There is also a large body of work on distributed subgraph detection beyond triangles. 
One example is the $4$-clique listing CONGEST algorithm of~\cite{EdenFFKO22} that runs in $n^{3/4+o(1)}$ rounds, which was subsequently improved to $\Ot(n^{1-2/p})$-round algorithms by~\cite{CensorHillelCG21,CHLV22,ChangHS24}
for all $p$-cliques for $p \geq 3$. In addition,~\cite{CzumajK18} proved a nearly-optimal $\Omgt(n^{1/2})$ lower on the round complexity of detecting $4$-cliques in CONGEST. Unlike for triangles, 
the lower bound for $4$-clique can be proven using the standard two-party communication complexity approach. 
We refer the interested reader to the excellent survey of~\cite{CH22} for a thorough review of the literature on distributed subgraph listing and detection, as well as more details on the 
uniqueness of \emph{triangles} among all other subgraphs when it comes to proving lower bounds. 


\paragraph{Multi-pass graph streaming lower bounds.} Reviewing the large body of work in this area is beyond the scope of this paper and 
we instead refer the reader to~\cite{GuruswamiO16,AssadiKSY20,AssadiR20,ChenKPSSY21,KolPSY23,ChenKPSSY23,AssadiS23,AssadiGLMM24,KonradN24,AssadiKNS24,AssadiKZ24,AssadiBKNS25} and references therein (see~\cite{Assadi23} for a short survey). But, we note that most relevant to us are the $\Omega(\log\log{n})$ pass lower bounds of~\cite{AssadiKNS24} and~\cite{AssadiBKNS25} for, respectively, maximal independent sets (in insertion-only streams) and approximate matchings (in dynamic streams), which are both also known to be \emph{optimal}. 

Two other related results---although not exactly in the streaming model---are the $\Omega(\log\log{n})$ round lower bounds of~\cite{AlonNRW15} for bipartite matching (see~\Cref{footnote:welfare})
and~\cite{AssadiKZ22} for maximal independent sets and matchings in \emph{broadcast} CLIQUE models (these results are \emph{not} known to be optimal and in fact the former one is improved to $\Omega(\log{n})$ rounds in~\cite{BravermanO17}); we discuss
 similarities and differences of the techniques in~\cite{AlonNRW15,AssadiKNS24,AssadiBKNS25} with ours in~\Cref{sec:overview}.



\newcommand{\eqcolor}[1]{\textcolor{blue}{#1}}

\section{Preliminaries}\label{sec:prelim}

\subsection{Notation}
We use $[n]$ to refer to the set $\{1, 2, \ldots, n\}$ for any $n \in\mathbb{N}$. For a tuple $x=(x_1,\ldots,x_n)$ and $i \in [n]$, we define $x_{<i}:=(x_1,\ldots,x_{i-1})$; we define 
$x_{>i}$ and $x_{-i}$ analogously. 

We use sans-serif font for random variables (for calligraphic letters, we instead use bold font to represent random variables). When it is clear from context, we may use random variables to refer to their distributions also. 
For random variables $\rA$ and $\rB$, we use $\en{\rA}$ and $\mi{\rA}{\rB}$ to denote the Shannon entropy of $\rA$ and mutual information between $\rA$ and $\rB$. Moreover, $\tvd{\rA}{\rB}$ denotes 
the total variation distance of $\rA$ and $\rB$.~\Cref{app:info} reviews information theory background we use. 

At certain places with consecutive lengthy equations, we may \eqcolor{color} some parts of the text to highlight the differences between nearby equations; these highlights are at no place necessary for parsing
the equations and can always be ignored entirely. 

\paragraph{Tripartite graphs.}
We work with tripartite graphs $G = (V, E)$ with vertex set partitioned into $V = A \sqcup B \sqcup C$, and $\card{A} = \card{B} = \card{C} = n$ and no edges with both end points in $A$ or $B$ or $C$. 
We use $A = \{a_1, a_2, \ldots, a_n\}$ to denote the elements of set $A$, and similarly for $B, C$. 
For any $i \in [n]$, we use $\{< a_i\}$ to refer to the set $\{a_1, a_2, \ldots, a_{i-1}\}$. This is defined analogously for $\{< b_i\}$ and $\{< c_i\}$. 

\noindent
\emph{\textbf{An important note on the notation:}} to avoid the repetition and clutter in the notation, we use $x, y, z$ to iterate over vertices of the graph without referring to each particular part of $A,B,C$. When we use $x \in \{a, b, c\}$, we use $X$ to refer to the set among $\{A,B,C\}$ that $x$ belongs to and vice-versa, and use $Y$ and $Z$ for the other two sets (e.g., if $x=a$, then $X=A$ and $Y \neq Z \in \set{B,C}$). 

Another example is to write $x_i$ for $x \in \set{a,b,c}$, to refer to $a_i$ or $b_i$ or $c_i$, correspondingly. That is, when, say, $x=a$, any other mention of $x_i$ or $\xstar_i$ should be interpreted as $a_i$ or $\astar_i$, respectively. 

\paragraph{Vectors.}
Given a vector $L$ of $\ell$ elements, we use $L[i]$ for $i \in [\ell]$ to refer to the $i^{\textnormal{th}}$ element of $L$. We use $L[S]$ for any $S \subseteq [\ell]$ to refer to all the elements at positions from $S$ in vector $L$.

\subsection{Model of Communication}\label{subsec:prelim-model}

For technical reasons, we work with a stronger model than CONGEST, wherein vertices are allowed to communicate to some select other vertices in certain rounds even without having an edge to them. Given we are proving a lower bound, 
this can only strengthen our result. 
Throughout the paper, we use $G=(V,E)$ to denote the input graph and use the terms `players' and `vertices' interchangeably.

\paragraph{Channels.} We consider communication as happening over \textbf{channels} between pairs of vertices. When we say a channel exists between $u, v \in V$ in some particular round, vertices $u$ and $v$ can send messages to each other in this round. 
Edge $(u,v)$ may not exist in the input graph, and the channels are a superset of the edges. The sets of channels available may vary between the rounds. 

Let $r$ be the total number of rounds in the protocol. 
We use $\cC_j \subseteq V \times V$ to denote the set of channels available at round $(r-j+1)$ for $j \in [r]$. The channels and the graph satisfy: 
\[
	\cC_{r} \supseteq \cC_{r-1} \supseteq \ldots \supseteq \cC_1  \supseteq \cC_0 :=  E,
\]
where recall that $E$ is the set of edges in the input graph (for notational convenience, it is easier to define the channels in this reverse order, e.g., have $\cC_{r}$ be the channels in round $1$ instead of $r$). 

\paragraph{Type of vertex pairs.}
To any pair of vertices $u,v \in V$, we assign a \textbf{type}, denoted by $type(u,v)$, which is an integer in $[r+1] \cup \{0\}$ to determine until when the channel is available for communication: 
\begin{itemize}
	\item If $(u, v) \in E$, then $type(u,v) = 0$; 
	\item If $(u, v) \in \cC_j \setminus \cC_{j-1}$ for some $j \in [r]$, then $type(u,v) = j$;
	\item If $(u, v) \notin \cC_r$, then $type(u,v) = r+1$ and $(u, v)$ is not a channel for any round in the protocol. 
\end{itemize}

The \textbf{channel-degree} of any vertex $u$ will denote the total number of other vertices $v$ such that $\typeval{u, v} < r+1$, i.e., the total number of vertices $v$ such that $(u, v) \in \cC_r$.

\paragraph{Sources of randomness.}
Any protocol has three distinct sources of randomness:

\begin{itemize}
	\item \underline{Public} randomness: this is a global source of random bits visible to all vertices. 
	\item \underline{Pair} randomness: for each pair $u,v \in V$, there is a tape of random bits visible only to the two vertices $u$ and $v$.  This is independent of whether $(u,v)$ form any channel in the graph. 
	\item \underline{Private} randomness: each vertex $u \in V$ has a private tape of random bits only visible to itself. 
\end{itemize}

\paragraph{Input of players.} We work with tripartite graphs with a \emph{known} partition of vertices into three parts $A,B,C$, each of size $n$ (players also know $n$). The vertices in each layer 
are identified with elements from $V = A \sqcup B \sqcup C$; we refer to this element as the \textbf{identity} of the vertex. 
Each vertex $x \in X$ is given two vectors of length $n$, one for each of layers; in the vector for a layer $Y$, for each $y \in Y$, the type of pair $(x,y)$ which is an integer in $[r+1] \cup \set{0}$ is specified. 
Note that this uniquely identifies the input graph and all the communication channels. 

\paragraph{Communication in a protocol.} Communication happens in rounds. In round $i \in [r]$, each vertex $u \in V$ simultaneously sends messages to all $v \in V$ such that $(u,v) \in \cC_{r+1-i}$
 i.e., in the first round the vertices can send messages over all pairs $(u,v) \in \cC_r$, in the second round, over $\cC_{r-1}$, and so on (as $\cC_0 = E$, the players
 can always communicate over the edges of $G$). 

\paragraph{Output in a protocol:} 
When no triangle exists in the input graph $G$, we want \emph{all} vertices to declare that no triangle exists at the end of the last round. When a triangle exists, we want \emph{at least one} vertex to declare that a triangle exists 
(the vertex may not be part of any triangle). 

\paragraph{Protocol parameters.} 
There are three relevant parameters associated with a protocol $\prot$:
\begin{itemize}
	\item $\round(\prot)$: the total number of rounds used by protocol $\prot$. 
	\item $\bwidth(\prot)$: the bandwidth of the protocol, or the maximum of the number of bits sent by any vertex over any channel, taken over all rounds (in CONGEST, we have $\bwidth(\prot)=O(\log{n})$).
	\item $\suc(\prot)$: the minimum success probability of the protocol $\prot$ over all input instances where the probability is taken over all the sources of randomness. 
\end{itemize}
For deterministic protocols $\prot$, we also have the following parameter for any distribution $\mu$ of inputs:
\begin{itemize}
	\item $\suc(\prot, \mu)$: probability of success of $\prot$ on average over inputs drawn from the distribution $\mu$. 
\end{itemize}

Our model is \textbf{more powerful than the CONGEST} model, as any protocol $\prot$ in CONGEST can be implemented in our model by only 
communicating messages over channels $\cC_0=E$ (and ignoring other channels). It is also worth pointing out---even though this will not be related to our paper---that this model \emph{in the limit} can even capture the CLIQUE model by allowing a channel between \emph{all} pairs of vertices throughout all the rounds (i.e., $\cC_r = \ldots = \cC_1 = {{V}\choose{2}}$ and $\cC_0=E$ as before). Although for our lower bounds, we certainly do not allow all channels to exist.



\newcommand{\GG}{\ensuremath{\mathcal{G}}}
\newcommand{\jstar}{\ensuremath{j}^{\star}}

\newcommand{\rM}{\rv{M}}
\newcommand{\rN}{\rv{N}}

\newcommand{\rI}{\rv{I}}

\newcommand{\Astar}{\ensuremath{A^*}}
\newcommand{\Bstar}{\ensuremath{B^*}}
\newcommand{\Cstar}{\ensuremath{C^*}}

\newcommand{\Gstar}{\ensuremath{G^*}}

\newcommand{\rL}{\rv{L}}
\newcommand{\rK}{\rv{K}}
\newcommand{\rW}{\rv{W}}

\newcommand{\rtype}[1]{\rv{type}(#1)}

\section{Technical Overview}\label{sec:overview}

Our proof of~\Cref{res:main} is quite technical and involves various information-theoretic maneuvers that can be unintuitive or daunting to parse. Thus, we 
use this section to unpack our main ideas and give a streamlined overview of our approach. We emphasize that this section oversimplifies many details and the discussions will be informal for the sake of intuition.
The rest of the paper is written in an independent way so the reader can skip this part and directly jump to technical arguments.

We start with two background subsections on: (1) distributed lower bounds in CONGEST via communication complexity and (2) 
round elimination arguments in~\cite{AlonNRW15,AssadiKZ22,AssadiKNS24,AssadiBKNS25} to review these techniques and be able to pinpoint the obstacles 
in extending them for our purpose. A reader familiar with these works can safely skip these subsections. We then move to the main part of this section that reviews our own approach in establishing~\Cref{res:main}. 

\subsection{Background I: CONGEST Lower Bounds via Communication Complexity}\label{sec:congest-lower-background}

Communication complexity provides us with a wide array of tools for proving lower bounds in distributed computing and beyond. 
There are however two aspects that differentiate the CONGEST model from typical scenarios wherein one can apply communication complexity arguments directly: 
\vspace{-15pt}
\begin{itemize}[leftmargin=10pt]
	\item \textbf{Point-to-point communication}: the communication bottleneck in CONGEST model is \emph{not} based on a limit on the \emph{total} communication of players: the total communication across all vertices in each round
	is quite large and is sufficient to specify the \emph{entire} input of all players to an external party that sees all the messages. Instead, the bottleneck is imposed by the \emph{individual} view of each player who can only receive 
	limited information from each of its own neighbors. Thus, here, the players not only have an inconsistent view of the input, but also, the communicated messages (this is in sharp contrast with all broadcast 
	or ``shared blackboard'' communication models used for establishing similar lower bounds in other models). 
	\item \textbf{Input-sharing}: there is a limited but non-negligible degree of ``input sharing'' between the players in this model as each edge of the input is seen by two players. This makes
	this model to move one (small) step away from the friendly number-in-hand communication models (with no input sharing) towards the notorious number-on-forehead models (with arbitrary input sharing). We 
	refer the reader to~\cite{NelsonY19,AssadiKO20,AssadiKZ22} that discuss this aspect in more detail. 
\end{itemize}
\vspace{-5pt}

Starting from~\cite{PelegR00}, the vast majority of lower bounds in CONGEST have found an interesting way that simultaneously addresses both these aspects and allows for obtaining reductions from two-party communication complexity 
lower bounds. They work by carefully splitting the input graph $G=(V,E)$ between two \emph{induced} subgraphs $G_A$ and $G_B$ with only \emph{few} edges between $G_A$ and $G_B$. They then show that a 
\emph{hard} two-party communication problem (almost always \emph{set disjointness}) can be embedded in the edges of $G_A$ and $G_B$, while keeping the few edges between $G_A$ and $G_B$ fixed and input-independent. 
Assuming one needs $t$ bits of communication for solving the two-party communication problem and there are only $k$ edges between $G_A$ and $G_B$, this construction immediately implies 
an $\Omega(t/(k\log{n}))$ round lower bound for the original problem; see~\Cref{fig:alice-bob}.

This way, the point-to-point communication aspect is reduced to the overall communication between vertices of $G_A$ on one side and vertices of $G_B$ on the other side (via the few edges between them). Input-sharing is bypassed entirely 
because the only shared edges are now fixed and input-independent. We refer the reader to~\cite{AbboudCKP21} for many successful applications of this technique. 

\begin{figure}[t!]
	\centering
	\includegraphics[scale=0.30]{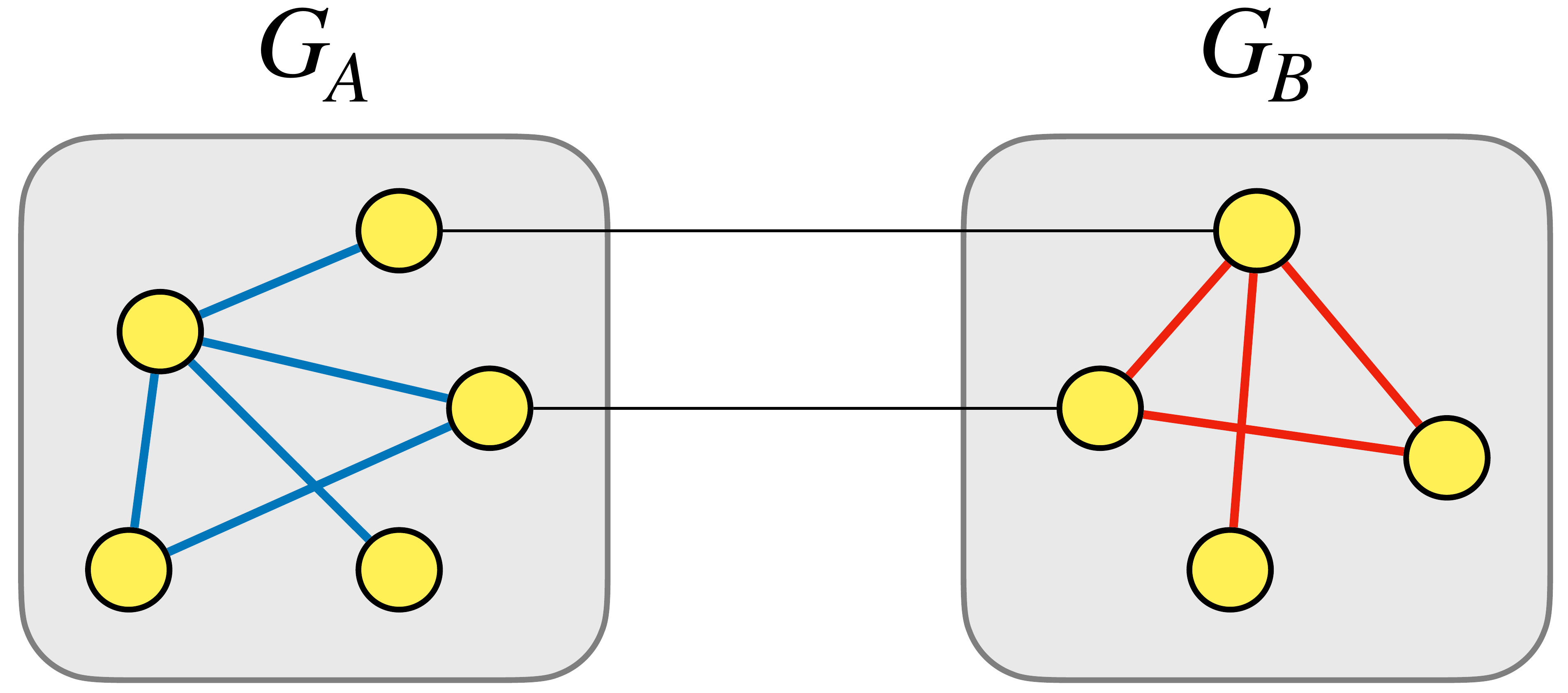}
	\caption{An illustration of CONGEST lower bounds via two-party communication complexity. The edges in the left subgraph $G_A$ (resp. right subgraph $G_B$) are known only to Alice (resp. Bob), and 
		the edges between the two subgraphs are input-independent and fixed. Alice and Bob can run any CONGEST algorithm on this graph by Alice simulating vertices in $G_A$ and Bob doing the same for $G_B$. The only communication between Alice and Bob 
		is the messages of the CONGEST algorithm that cross the middle (input-independent) edges. Thus, any $r$-round CONGEST protocol over a graph with $k$ edges between $G_A$ and $G_B$, implies
		a communication protocol with $O(r \cdot k \cdot \log{n})$ communication.}\label{fig:alice-bob}
\end{figure}

Unfortunately, it is easy to see that this approach is incapable of addressing triangle detection: no matter what graph we use and how we partition it, as long as two vertices of any
triangle are handled by a single player, that player gets to see all edges of the triangle and solve the problem with no communication. As such, any lower bound for distributed triangle detection 
effectively needs to handle each vertex as its own individual player and cannot shortcut the aforementioned unique aspects of the CONGEST model. This is perhaps why distributed triangle detection 
is considered ``one of the best illustrations of the lack and necessity of new techniques for proving lower bounds in distributed computing''~\cite{AbboudCKL20}. Indeed, the only existing lower bounds for this problem
in~\cite{AbboudCKL20,FischerGKO18} for one-round algorithms precisely target the problem in this way. 

\subsection{Background II: Round Elimination Arguments in Prior Work}\label{sec:re-background} 

We now switch to discussing the type of round elimination ideas that have proven quite successful in establishing distributed lower bounds in certain broadcast models~\cite{AlonNRW15,AssadiKZ22} and more recently graph streaming lower bounds~\cite{AssadiKNS24,AssadiBKNS25}\footnote{We shall note that origins of these ideas also directly traces back to the two-party communication complexity model and the ``round elimination lemma'' of~\cite{MiltersenNSW95} in that model.}. While there are substantial differences between applications of these ideas across these works, at a sufficiently high level (and for the purpose of comparison with our own work), we can characterize all of them as follows. 

We inductively create a family of distributions $\set{\GG_r}_{r \geq 0}$ where $\GG_r$ generates hard inputs for $r$-round algorithms (with the base case $\GG_0$ being simply some non-trivial distribution as $0$-round algorithms 
are typically very easy to analyze directly). For any $r \geq 1$, the distribution $\GG_r$ is constructed by sampling a ``large'' number of $(r-1)$-round independent instances $I_1,I_2,\ldots,I_{k_r}$ from $\GG_{r-1}$. 
These instances are then ``packed'' together carefully to ensure that for some randomly chosen $\jstar \in [k_r]$, (1) the answer to the instance $I$ sampled from $\GG_r$ is determined by the answer of the inner instance $I_{\jstar}$ from $\GG_{r-1}$, and yet, (2) in the
first round of the protocol, the players are oblivious to the value of $\jstar$. We note that this way, often the inputs and number of players in $r$-round instances in $\GG_r$ are polynomially larger than those in $(r-1)$-round instances. 

Suppose we have an $r$-round protocol $\prot_r$ for solving instances $I \sim \GG_r$ with success probability $\suc({\prot_r})$. We construct an $(r-1)$-round protocol $\prot_{r-1}$ 
for solving instances $I' \sim \GG_{r-1}$ using $\prot_r$: 
\begin{tbox}
	\begin{enumerate}
		\item The players of $\prot_{r-1}$ use \underline{public randomness} to sample an index $\jstar \in [k_r]$ as well as the first round of messages $M^{(1)}$ of $\prot_r$ from its distributions. They assume the inner 
		$(r-1)$-round instance $I_{\jstar}$ of the outer $r$-round instance $I$ is their input $I'$. 
		\item They then use a combination of \underline{private and public randomness} to sample the remainder of $I$ conditioned on $M^{(1)}$ and $I_{\jstar} = I'$ to create a complete $r$-round instance\footnote{This step
			is highly non-trivial and specialized as each player of $I'$ only knows its own input in $I'$ and not the entire $I'$ and so it is not clear (nor always possible) for the players to sample $I$ conditioned on $M^{(1)}$ and $I_{\jstar}=I'$.}. 
		\item Finally, the players in $\prot_{r-1}$ already have access to the first-round message $M^{(1)}$ and thus only need to run $\prot_r$ from its second round onwards---while simulating remaining parts of $I$ outside $I_{\jstar}=I'$---and
		output the same answer as $\prot_r$. 
	\end{enumerate}
\end{tbox}

To argue the correctness of the protocol $\prot_{r-1}$, we only need to show that the distribution of inputs and messages to $\prot_r$ induced by $\prot_{r-1}$ is within $o(1)$ total variation distance of the correct distribution;
the rest follows by construction, since the answer to $I \sim \GG_r$ is the same as $I_{\jstar}$ which is chosen to be $I'$, namely, the instance $\prot_{r-1}$ wants to solve in the first place. Specifically, 
\begin{itemize}
	\item The \emph{right} distribution of all random variables in $\prot_r$, for a fixed $\jstar$, can be expressed as:
	\[
	\rM^{(1)} \times \eqcolor{\paren{\rI_{\jstar} \mid \rM^{(1)}}} \times \paren{\rI_{-\jstar} \mid \rM^{(1)},\rI_{\jstar}}, 
	\]
	where $\rI_{-\jstar}$ denotes the random variable for all $(r-1)$-round sub-instances except for $I_{\jstar}$. 
	\item The distribution sampled from in the protocol $\prot_{r-1}$ on the other hand is: 
	\[
	\underbrace{\rM^{(1)}}_{\textnormal{publicly}} \times \underbrace{\eqcolor{\rI_{\jstar}}}_{\textnormal{input}} \times \underbrace{\paren{\rI_{-\jstar} \mid \rM^{(1)},\rI_{\jstar}}}_{\textnormal{mix of publicly and privately}},
	\]
	namely, here, $M^{(1)}$ and $I_{\jstar}$ are sampled independently of each other. 
\end{itemize}
Assuming the third step of sampling can be done---which, to emphasize again, contains the bulk of efforts in many of these works ---the distance between the two distributions is only a function of their second arguments. 
This distance, for a random choice of $\jstar \in [k_r]$, can be upper bounded as 
\[
\Exp_{\jstar \in [k_r]}\tvd{\rI_{\jstar}}{(\rI_{\jstar} \mid \rM^{(1)})}^2 \leq \Exp_{\jstar \in [k_r]}\mi{\rI_{\jstar}}{\rM^{(1)}} \leq \frac{1}{k_r} \cdot \mi{\rI_1,\ldots,\rI_{k_r}}{\rM^{(1)}} = o(1);
\]
here, the first inequality is standard (see~\Cref{fact:pinskers} and~\Cref{fact:kl-info}) and the main inequality is the second one: it holds roughly because the players in $\prot_r$ (but of course not $\prot_{r-1}$) are 
\emph{oblivious} to the choice of $\jstar$ in their first round and thus the information revealed by their first-round messages is ``spread'' over all $k_r$ sub-instances. The final equality also holds
by ensuring that $k_r$ is much larger than the length of sampled messages $M^{(1)}$ (and applying~\itfacts{uniform}). 

All in all, this means protocol $\prot_{r-1}$ has success probability $\suc(\prot_{r-1},\GG_{r-1}) \geq \suc(\prot_r,\GG_r)-o(1)$ and a similar communication bandwidth as $\prot_r$, but with one fewer round. 
Continuing like this then leaves us with a $0$-round protocol for $\GG_0$ with a non-trivial probability of success, a contradiction. 

Before moving on, we should caution the reader that while the above discussion captures the common theme of arguments across the aforementioned work, none of those work follow this recipe directly. For instance,~\cite{AlonNRW15} samples
the message $M^{(1)}$ to be only the ones originating from players in $I_{\jstar}$ (to ensure its size is small with respect to $k_r$);~\cite{AssadiKZ22} does not even sample the remainder of $I$ and follows
the simulation through sampling relevant messages from $I_{-\jstar}$ to $I_{\jstar}$; and finally, for~\cite{AssadiKNS24,AssadiBKNS25}, the number of players is a lot fewer (in fact, just two players in~\cite{AssadiBKNS25}) and 
instead the players are forced to solve many albeit a small fraction of inner sub-instances at the same time. This discussion also only focused on the information-theoretic aspects of the lower bounds and entirely 
neglected the combinatorial parts of how one can ``pack'' so many $(r-1)$-round independent instances inside a single $r$-round instance. However, we hope our description provides a big picture of these prior arguments.

\subsection{Our Approach}\label{sec:our-ideas}

We now start presenting the main ideas behind our own work. 

\subsubsection{A Hard Input Distribution}

We generate a recursive family of hard input distributions $\set{\GG_r}_{r \geq 0}$ where $\GG_r$ samples tripartite graphs with $n_r$ vertices in each layer for $r$-round protocols. 
For reasons that will become clear soon, unlike in~\Cref{sec:re-background}, we cannot specify our $r$-round instances as a combination of many $(r-1)$-round instances. Instead, they are obtained by sampling a 
single $(r-1)$-round instance plus a very large individualized ``noise'' for each vertex in the inner $(r-1)$-round instance to hide its inner edges among the outer graph. 
 
 We use parameters $n_r$ and $d_r$ in the description of the distribution and they are defined recursively as follows\footnote{We specify these parameters explicitly in the actual proof, but here keep them
 	at this level to focus on their connections as opposed to their actual values.}:
 \begin{align}
 	d_r = \poly(n_{r-1}) \qquad \text{and} \qquad n_{r} = \poly(d_r).  \label{eq:overview-para}
 \end{align}
 
The base case $\GG_0$ has $n_0 = 1$ and simply consists of three vertices $a,b,c$ with an edge between each pair appearing with probability $1/2$ (and thus a triangle with probability $1/8$).
The distribution for $r$-rounds for $r>0$ is as follows. 

\begin{Distribution}\label{dist:dist-overview}
	A hard distribution $\GG_r$ of graphs $G=(A \sqcup B \sqcup C , E)$ for $r$-round protocols:  
	\begin{enumerate}
		\item Sample three sets of vertices $\Astar \subseteq A, \Bstar \subseteq B$, and $\Cstar \subseteq C$ each of size $n_{r-1}$ independently. 
		\item Sample an $(r-1)$-round instance $\Gstar$ over $(\Astar,\Bstar,\Cstar)$ and let the induced subgraph of $G$ on these vertices be $\Gstar$. The vertices in different layers have a channel in $\cC_r$ to each other, whose type 
		is determined by the channel-type in $\Gstar$ (not having a channel in $\Gstar$ translates to a type $r$ in $G$). 
		\item For any $x$ in some layer $\Xstar$, any layer $Y \neq X \in \set{A,B,C}$, and any type $t \in [r] \cup \set{0}$ of channel, sample enough vertices for $v$ uniformly from $Y \setminus \Ystar$
		such that $x$ has $d_r$ type-$t$ channels in total to $Y$. We require all sampled vertices to be unique across all vertices of $\Gstar$. 
	\end{enumerate}
\end{Distribution}
See~\Cref{fig:hard-dist} for an illustration.

\begin{figure}[t!]
	\centering
	\includegraphics[scale=0.20]{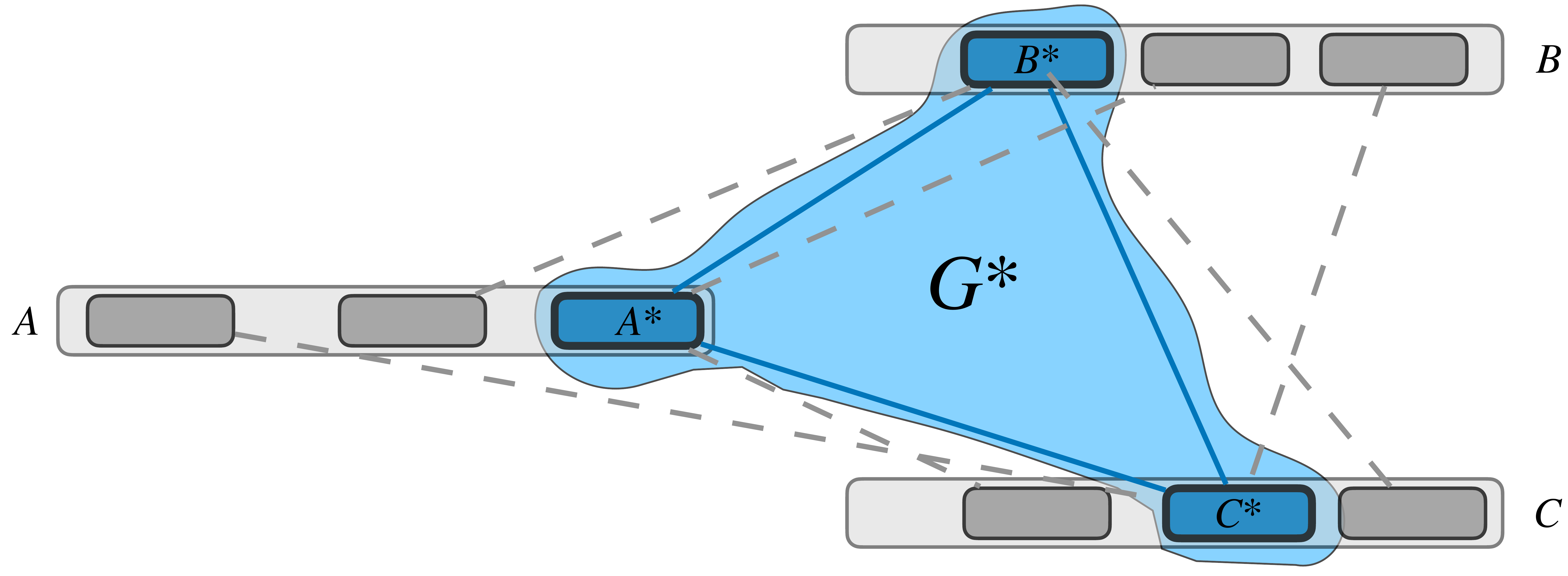}
	\caption{An illustration of our hard input distribution. The solid (blue) edges show the $(r-1)$-round hard instance inside an $r$-round hard instance and the dashed (gray) edges show the extra unique channels sampled
		for each vertex of the inner $(r-1)$-round instance. }\label{fig:hard-dist}
\end{figure}

It is easy to see that a graph $G$ sampled from $\GG_r$ for $r \geq 1$ has a triangle iff its inner graph $\Gstar$ has one. Thus, the task of players in solving an $r$-round instance $G$
reduces to that of solving the inner $(r-1)$-round instance $\Gstar$. Moreover, while a player in $G$ can determine whether it belongs to $\Gstar$ or not (its channel-degree is only $1$ in the latter case), 
it cannot determine this information about its neighbors on its own. Intuitively, this suggests that the messages between two players $x,y \in \Gstar$ cannot reveal almost any information
about $\Gstar$ as either of these vertices have $> d_r$ channels, but only $n_{r-1} \ll d_r$ of them are in $\Gstar$, and their identities are unknown to the sender. Hence, one expects the first
round messages to be ``wasted'' until the players find their channels in $\Gstar$, and then have to solve a hard $(r-1)$-round instance in the remaining $(r-1)$ rounds. 

As expected, turning this intuition into an actual proof is quite challenging. For instance, our distribution $\GG_1$ for one-round protocols is almost identical to the ones for the one-round lower
bound of~\cite{FischerGKO18} (modulo the notion of channels) and based on the same principle as the deterministic one-round lower bound of~\cite{AbboudCKL20}.  But, even for one-round protocols, quite a lot of technical work has been done in~\cite{FischerGKO18,AbboudCKL20}
to establish the lower bound. As such, all our effort in this paper is dedicated to formalizing this basic intuition. It is also worth pointing out that while many one-round lower bounds can be cast as round elimination arguments to zero-round instances, this 
is \emph{not} the case for the arguments in~\cite{FischerGKO18,AbboudCKL20}, and thus from a technical point of view, our arguments already deviate from these prior approaches even for one-round protocols. 

\subsubsection{First Attempt on Round Elimination: Public Sampling of Messages}\label{sec:attempt1}
Following prior work in~\Cref{sec:re-background}, let us consider creating an $(r-1)$-round protocol $\prot_{r-1}$ for $G_{r-1} \sim \GG_{r-1}$ from an $r$-round protocol $\prot_r$ for $G_r \sim \GG_r$: 
\begin{tbox}
	\begin{enumerate}
		\item The players of $\prot_{r-1}$ use \underline{public randomness} to sample vertices $(A^*,B^*,C^*)$ and map $[n_{r-1}]$, namely, vertices of their input $G_{r-1}$ in each layer,
		to these sets accordingly. They then define the induced subgraph $G^*$ in $G_r$ to be $G_{r-1}$ after this mapping of vertices. 
		
		\item The players of $\prot_{r-1}$ sample the first-round messages $M^{(1)}$ of $\prot_r$ using \underline{public randomness} conditioned on $(\Astar,\Bstar,\Cstar)$ but independent of the types of channels in $\Gstar$.  \\
		...
	\end{enumerate}
	We will already run into a serious problem!
\end{tbox}

For this approach to have any chance of succeeding, we need to be able to say that the joint distribution of $(M^{(1)},\Gstar) \mid (\Astar,\Bstar,\Cstar)$ in the protocol $\prot_{r-1}$ is close to being a product distribution, so that 
sampling them independently in $\prot_{r-1}$ does not change their distribution too much. But certainly distribution of $M^{(1)}$ highly correlates with that of $G^*$: for any channel in $G^*$, 
the message of $M^{(1)}$ can easily reveal the type of this channel with very small communication (and if we consider the original CONGEST model, then simply having a message between a pair of vertices reveals the existence of the edge
between them as well!). Thus, the protocol above is doomed to fail. 

This issue stems from the inherent difference of point-to-point communication versus broadcast ones targeted in~\Cref{sec:re-background}. 
Basically, sampling the message $M^{(1)}$ publicly is effectively the same as revealing the first round messages to everyone (in the second round); 
but unlike a broadcast model, for us, this is way too much information. Fortunately, to simulate a CONGEST protocol $\prot_r$, each vertex only needs to know the messages
communicated to and from it and not all messages. Hence, the \textbf{first lesson} for our round elimination protocol is that sampling first-round messages should ``respect'' the 
point-to-point communication pattern as well. 

\subsubsection{Second Attempt on Round Elimination: Pair Sampling of Messages}\label{sec:attempt2}
Equipped with our previous lesson, we make a second attempt at creating the protocol $\prot_{r-1}$. In addition to public randomness,  we have equipped the vertices with pair randomness, to facilitate the sampling of first round messages. See \Cref{sec:prelim} for the different sources of randomness given to the vertices.
\begin{tbox}
	\begin{enumerate}
		\item The players of $\prot_{r-1}$ use \underline{public randomness} to sample vertices $(A^*,B^*,C^*)$ and define $\Gstar$ of $G_r$ based on their input $G_{r-1}$, as before. 
		
		\item Then, every pair of vertices in $x,y \in G^*$, use \underline{pair randomness} to sample the messages to and from each other, namely, $M^{(1)}_{x \rightarrow y}$ and $M^{(1)}_{y \rightarrow x}$, 
		conditioned on $(\Astar,\Bstar,\Cstar)$ and $type(x,y)$ but independent of the rest of $G^*$. 
		\\
		...
	\end{enumerate}
	Again, we stop the description of the protocol here already. 
\end{tbox}

Let us examine the two distributions involved here: 
\begin{itemize}
	\item The right distribution of involved variables is: 
	\[
	\paren{\rA^*,\rB^*,\rC^*} \times \paren{\rG^* \mid \rA^*,\rB^*,\rC^*} \times \bigtimes_{(x,y) \in \rX^{*} \neq \rY^*} \paren{\rM^{(1)}_{x,y} \mid \eqcolor{\rM^{(1)}_{<(x,y)}, \rG^*}, \rA^*,\rB^*,\rC^*}. 
	\]
	where $\rM^{(1)}_{x,y} = (\rM^{(1)}_{x \rightarrow y},\rM^{(1)}_{y \rightarrow x})$ denotes both messages communicated over a pair $(x,y)$, and, under some arbitrary ordering between
	all pairs of vertices $(x,y) \in \Gstar$, variable $\rM^{(1)}_{<(x,y)}$ collects these messages for all pairs before $(x,y)$.
	\item On the other hand, the distribution of variables in $\prot_{r-1}$ are: 
	\[
	\underbrace{\paren{\rA^*,\rB^*,\rC^*}}_{\textnormal{public randomness}} \times \underbrace{\paren{\rG^* \mid \rA^*,\rB^*,\rC^*}}_{\textnormal{input plus renaming}} \times 
	\bigtimes_{(x,y) \in \rX^{*} \neq \rY^*} \underbrace{\paren{\rM^{(1)}_{x,y} \mid \eqcolor{\rG^*_{x,y}}, \rA^*,\rB^*,\rC^*}}_{\textnormal{pair randomness}}, 
	\]
	where $\rG^*_{x,y}$ is the random variable for $type(x,y)$ (conditioned on vertices $(\Astar,\Bstar,\Cstar)$). 
\end{itemize}
Bounding the difference between these distributions thus boils down to bounding the following for every pair $(x,y) \in \Gstar$ (we use $G^*_{-(x,y)}$ to denote the other types in $G $ apart from $(x,y)$ and we say $\rN^* = (\rA^*,\rB^*,\rC^*)$ to avoid the clutter):
\begin{align*}
	\tvd{\paren{\rM^{(1)}_{x,y} \mid \rM^{(1)}_{<(x,y)}, \rG^*, \rN^*}}{\paren{\rM^{(1)}_{x,y} \mid \rG^*_{x,y}, \rN^*}}^2 &\leq \mi{\rM^{(1)}_{x,y}}{\rG^*_{-(x,y)},\rM^{(1)}_{<(x,y)} \mid  \rG^*_{x,y}, \rN^*}. \tag{by~\Cref{fact:pinskers} and~\Cref{fact:kl-info}} \\
\end{align*}

\vspace{-5mm}

\noindent
Using the chain rule of mutual information (\itfacts{chain-rule}), 
we can further write the RHS above as 
\begin{align}
	\mi{\rM^{(1)}_{x,y}}{\rG^*_{-(x,y)},\rM^{(1)}_{<(x,y)} \mid  \rG^*_{x,y}, \rN^*} =  \mi{\rM^{(1)}_{x,y}}{\rG^*_{-(x,y)} \mid  \rG^*_{x,y},\rN^*} + \mi{\rM^{(1)}_{x,y}}{\rM^{(1)}_{<(x,y)} \mid \rG^*, \rN^*}. \label{eq:overview-1} 
\end{align}
Let us now consider each term in the RHS separately. 

\paragraph{First term in RHS of~\Cref{eq:overview-1}.} This term measures the information revealed by the messages between $x,y$ about \emph{other} pairs inside $\Gstar$ (conditioned on $type(x,y)$). Our intuition is that this information
should be quite low because vertices $x$ and $y$ cannot distinguish between their channels in or out of $G^*$, and thus should not be able to tell each other about the type of their other edges inside the graph $\Gstar$. 

At this point, it is tempting to follow a similar strategy as in~\Cref{sec:re-background} and re-write the distribution $\GG_r$ so that it is a combination of, say, $k_r$ many $(r-1)$-round instances. We can then say
that given $x$ participates in $k_r$ many instances and only one of them is special (namely, is $\Gstar$), $x$ will not be able to reveal much information about this instance in its message. The problem is again the unicast versus broadcast: 
if the partitioning of the input into separate instances are known to the players, even if they do not know which one is special, we still have a problem: vertex $x$ can reveal a lot of information to its neighbor $y$ about the \emph{same} instance
they are both part of. 

To bypass this, we instead use an \emph{inconsistent} way of creating these instances. Suppose in the distribution $\GG_r$, 
for the vertex $x \in \Gstar$ we sample $k_r$ \emph{partial} $(r-1)$-round sub-instances $K_1,\ldots,K_{k_r}$ on $2n_{r-1}-1$ vertices  such that each one is sampled from a distribution that combined with the edge $(x,y)$ can form the input of $x$ in a $\GG_{r-1}$ instance. 
See~\Cref{fig:partial} for an illustration. 

\begin{figure}[h!]
	\centering
	\includegraphics[scale=0.25]{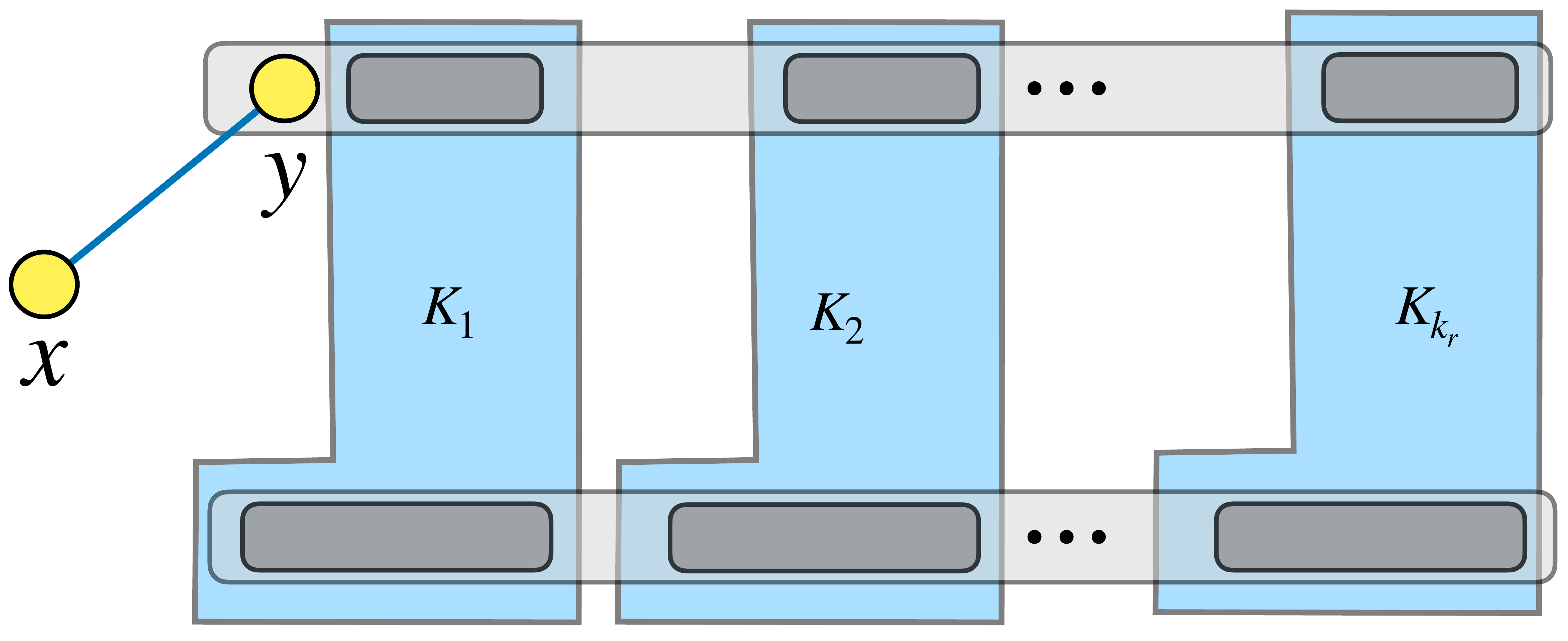}
	\caption{An illustration of sampling partial $(r-1)$-round sub-instances. Here, from the perspective of $x$, the edge $(x,y)$ plus any of $K_i$'s for $i \in [k_r]$ form a proper input of $x$ in an instance sampled from $\GG_{r-1}$. 
		One of these instances corresponds to the special inner instance $G^*$, while for all other instances, the instance is not complete, as in there are no edges between $Y$- and $Z$-part of the instance.
	}\label{fig:partial}
\end{figure}

By setting $k_r = d_r/(n_{r-1})^2$, we can show that for inputs sampled from $\GG_r$, with high probability, we can find such sub-instances for every vertex $x$ and any of the $r$ possible types of channels in $G^*$ (basically, $x$ has sampled $d_r$
many channels of each type so one can re-order them to create $k_r$ many partial sub-instances). Moreover, from the perspective of $x$ in protocol $\prot_r$, 
the special part of the input, namely, $G^*_{-(x,y)}$ can be any of these $k_r$ many partial sub-instances. This, with some more technical but standard ideas that we omit here, allows us to essentially write:  
\[
\mi{\rM^{(1)}_{x,y}}{\rG^*_{-(x,y)} \mid  \rG^*_{x,y},\rN^*} \leq \mi{\rM^{(1)}_{x,y}}{\rG^*_{-(x,y)} \mid \rK_1,\ldots,\rK_{k_r},\rG^*_{x,y},\rN^*}  \leq \frac{1}{k_r} \cdot \bwidth(\prot_r), 
\]
by effectively arguing that the information revealed by $\rM^{(1)}_{x \rightarrow y}$ is spread over $k_r$ many possible choices for $\rG^*_{-(x,y)}$ among $K_1,\ldots,K_{k_r}$ from the perspective of $x$ (and similarly for $\rM^{(1)}_{y \rightarrow x}$ and $y$). 
Here, with an abuse of notation, conditioning on $\rK_1,\ldots,\rK_{k_r}$ means conditioning on the existence of $\cC_r$ channels to them, but the type of those channels are not fully determined (only that they are sampled from the marginal distribution of $\GG_{r-1} \mid type(x,y)$). 

Given the choice of parameters in~\Cref{eq:overview-para}, across all choices of $x,y \in G^*$, the total contribution of the first term of RHS of~\Cref{eq:overview-1} to the distances between distributions is
\[
\paren{n_{r-1}}^2 \cdot \frac{1}{k_r} \cdot \bwidth(\prot_r) = \frac{({n_{r-1})}^4}{d_r} \cdot \bwidth(\prot_r) \leq \frac{1}{\poly(n_{r-1})} \cdot \bwidth(\prot_r) \leq 1/\poly(n_{r-1}). 
\]

\paragraph{Second term in RHS of~\Cref{eq:overview-1}.} This term measures the information revealed by the messages between $x,y$ about some of \emph{other messages}, 
conditioned on the entire special sub-instance $\Gstar$. This is where we run into another serious problem. 

Consider a protocol wherein every vertex $x$ samples a random coin privately and sends its value as part of the 
message to all its neighbors. In such a protocol, knowing the message $\rM^{(1)}_{x,y}$ for any $y$ reveals (at least) one bit of information about $\rM^{(1)}_{x,z}$ for any other $z \neq y$ as well, making the RHS
of~\Cref{eq:overview-1} at least $1$ which is way too large for our purpose. 

Thus, our second attempt at designing protocol $\prot_{r-1}$ also fails, although this time we made some further progress. The \textbf{second lesson} is that 
we need to break the correlation between messages originating from a single vertex to other vertices in $\Gstar$ (but certainly not all of $G$). 

\subsubsection{Third Attempt on Round Elimination: Public and Pair Sampling of Messages}\label{sec:attempt3}

We now use yet another sampling process for obtaining $\prot_{r-1}$ from $\prot_r$. 
To simplify the exposition, we only write this sampling from the perspective of a single pair $(x,y) \in \Gstar$ and only the message $M^{(1)}_{x \rightarrow y}$ from $x$ to $y$, 
\emph{assuming} other messages $M^{(1)}_{x \rightarrow -y}$ of $x$ to $\Gstar \setminus \set{y}$ have already been (somehow) sampled using their own pair randomness. 
Doing this for all pairs of vertices has several additional challenges but we skip those in this discussion to provide the high level intuition (see also~\Cref{fig:K-L}). 
\begin{tbox}
	\begin{enumerate}
		\item The players of $\prot_{r-1}$ use \underline{public randomness} to sample vertices $(A^*,B^*,C^*)$ and define $\Gstar$ of $G_r$ based on their input $G_{r-1}$, as before.
		We also assume messages $M^{(1)}_{x \rightarrow -y}$ have been sampled using \underline{pair randomness} at this point (so, in particular, are unknown to $y$). 
		\item Player $x$ uses \underline{public randomness} to sample 
		\[
		K_1,\ldots,K_{k_r} \subseteq{Y \cup Z},\qquad \qquad   L \subseteq Y, \qquad \text{and} \qquad M^{(1)}_{x}[L \cap \{<id(y)\}];
		\]
		\begin{itemize}[leftmargin=5pt]
			\item $K_i$ is a disjoint set of $(2n_{r-1}-1)$ vertices in $G$, which together with $(x,y)$ can form an $(r-1)$-round instance from $\GG_{r-1}$; we let $x$ have $\cC_r$-channels to all these vertices 
			but do not sample their types yet (also, value of $k_r$ will be determined later);
			\item $L$ is a disjoint set of $\ell_r$ (to be determined later) random vertices in $Y$, each of which has a $type(x,y)$-channel to $x$; 
			\item $M^{(1)}_{x}[L \cap \{<id(y)\}]$ is the messages $x$ sends to its neighbors in $L \cap \{<id(y)\}$ in $\prot_r$ conditioned on all the \emph{publicly} sampled information ($id(y) \in \Ystar$ is the id assigned to $y \in G_{r-1}$). 
		\end{itemize}
		\item Finally, $x$ and $y$ sample $M^{(1)}_{x \rightarrow y}$ via \underline{pair randomness} conditioned on $type(x,y)$ and all the \emph{publicly} sampled variables (so specifically, independent of $M^{(1)}_{x \rightarrow -y}$). 
	\end{enumerate}
\end{tbox}

\begin{figure}[t!]
	\centering
	\includegraphics[scale=0.25]{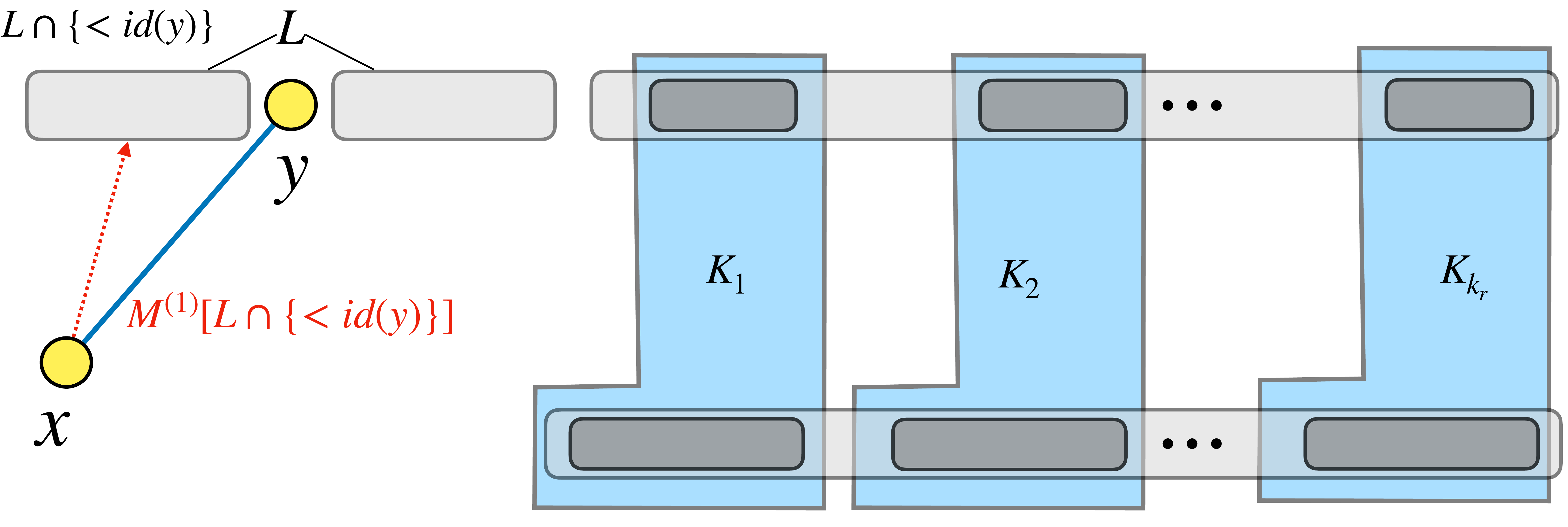}
	\caption{An illustration of sampling messages via a mixture of public and pair randomness. The type of the channel between $(x,y)$ any $(x,y')$ for any $y' \in L$ is the same.}\label{fig:K-L}
\end{figure}

As for the parameters $\ell_r$ and $k_r$, we pick them such that 
\begin{align}
	\ell_r = \poly(n_{r-1}), \qquad \qquad k_r = \poly(\ell_r), \qquad \text{and} \qquad d_r = \poly(k_r). \label{eq:overview-para2}
\end{align}
\vspace{-20pt}

We note that the entire goal of these sampling steps is to generate an input $G_r$ and first-round messages $M^{(1)}$ for the protocol $\prot_r$ so that players in $\prot_{r-1}$ can run this protocol on the sampled input. 
The above process only specifies some part of the input to one of the players $x$, with respect to one of the pairs $(x,y)$ in the original input of $\prot_{r-1}$. We will need to repeat this process for all pairs of 
vertices in $\prot_{r-1}$ and complete the remainder of inputs and messages of players (including $x$) before we can run $\prot_r$. However, we can already start analyzing the difference between variables involved in this process 
and the actual distribution of $\GG_r$ and $\prot_r$. In the following, it helps to think of $y$ as a vertex in the original input $G_{r-1}$ and $id(y)$ as 
the vertex $y$ maps to in $G_r$. 
\vspace{-7pt}
\begin{itemize}
	\item The right distribution of involved variables is (we define $\rN^* := (\rA^*,\rB^*,\rC^*)$, $\rK := (\rK_1,\ldots,\rK_{k_r})$, and $\rW := (\rN^*,\rK,\rL)$ to reduce the clutter): 
	\[
	\hspace{-20pt}	\paren{\rW,\rG^*,\rM^{(1)}_{x \rightarrow -y}} \times \paren{\rM^{(1)}_{x}[\rL \cap \{<\rid{y}\}] \mid \eqcolor{\rW,\rG^*,\rM^{(1)}_{x \rightarrow -y}}} \times \paren{\rM^{(1)}_{x \rightarrow y} \mid \rM^{(1)}_{x}[\rL \cap \{<\rid{y}\}] , \rW,\eqcolor{\rG^*_x,\rM^{(1)}_{x \rightarrow -y}}},
	\]
	where  $\rG^*_x$ in the last term denotes types of all channels incident on $x$ to vertices in $G^*$. We note that technically, here, we should have also
	conditioned on all of $\rG^*$ and not only $\rG^*_x$. However, it is easy to see that the distribution of messages sent by $x$ is independent of the rest of $\Gstar$ once we condition on $\Gstar_x$. This is because $\Gstar_x$ provides
	the input to $x$, thus conditioning on $\rG^*_x$ or $\rG^*$ is the same. 
	\item And, the distribution of variables in $\prot_{r-1}$ is: 
	\[
	\hspace{-20pt}	\underbrace{\paren{\rW,\rG^*,\rM^{(1)}_{x \rightarrow -y}}}_{\textnormal{input plus public and pair randomness}} \times \underbrace{\paren{\rM^{(1)}_{x}[\rL \cap \{<\rid{y}\}] \mid \eqcolor{\rW}}}_{\textnormal{public randomness}} \times 
	\underbrace{\paren{\rM^{(1)}_{x \rightarrow y} \mid \rM^{(1)}_{x}[\rL \cap \{<\rid{y}\}] , \rW,\eqcolor{\rG_{x,y}^*}}}_{\textnormal{pair randomness}}
	\]
	where $\rG_{x,y}^*$ (after conditioning on $\rN^*$) will simply be the type of the pair $(x,y)$. 
\end{itemize}
\vspace{-5pt}
Let us focus on the third terms for now. Bounding the difference between these terms (as before using~\Cref{fact:pinskers} and~\Cref{fact:kl-info}) boils down to upper bounding 
\begin{align}
	&\hspace{60pt}\mi{\rM^{(1)}_{x \rightarrow y}}{\rG^*_{x,-y},\rM^{(1)}_{x \rightarrow -y} \mid \rW,\rM^{(1)}_{x}[\rL \cap \{<\rid{y}\}],\rG^*_{x,y}} = \tag{$\rG^*_{x,-y}$ is $\rG^*_x$ minus $\rG^*_{x,y}$} \\
	&\hspace{-5pt}\mi{\rM^{(1)}_{x \rightarrow y}}{\rG^*_{x,-y} \mid \rW,\rG^*_{x,y},\rM^{(1)}_{x}[\rL \cap \{<\rid{y}\}]}  + \mi{\rM^{(1)}_{x \rightarrow y}}{\rM^{(1)}_{x \rightarrow -y} \mid \rW,\rM^{(1)}_{x}[\rL \cap \{<\rid{y}\}] ,\rG^*_x},  \label{eq:overview-2}
\end{align}
where the equality is a direct application of chain rule (\itfacts{chain-rule}). Notice that the LHS here is quite similar to the LHS of~\Cref{eq:overview-1}: modulo the extra conditioning on some new random variables, 
LHS of~\Cref{eq:overview-2} also measures the correlation of message of $x$ to $y$ with its remaining input $\Gstar_{x,-y}$ as well as $x$'s other messages (in~\Cref{eq:overview-1} also, we could have replaced $G_{-(x,y)}$ by $G_{x,-y}$). 
While we (provably) could have not bound the 
RHS of~\Cref{eq:overview-1} (in particular its second term), we can bound the RHS of~\Cref{eq:overview-2} with the help of these extra conditionings. 

Before getting to the sketch of the proof, it is worth briefly checking the protocol outlined at the end of the last subsection when discussing the second term of~\Cref{eq:overview-1}. For that specific protocol, 
the moment we condition on any message of $x$ in $M^{(1)}_{x}[L \cap\{<id(y)\}]$ (as in the RHS of~\Cref{eq:overview-2}), we already have fixed the random coin toss of the protocol across its different messages and 
thus can at least hope to say messages $M^{(1)}_{x \rightarrow y}$ and $\rM^{(1)}_{x \rightarrow -y}$ have low correlation. 

\paragraph{First term in RHS of~\Cref{eq:overview-2}.} We have 
\begin{align*}
	&\mi{\rM^{(1)}_{x \rightarrow y}}{\rG^*_{x,-y} \mid \rW,\rG^*_{x,y}, \rM^{(1)}_{x}[\rL \cap\{<\rid{y}\}]} \\
	&\hspace{30pt} = \mi{\rM^{(1)}_{x \rightarrow y}}{\rG^*_{x,-y} \mid \rW \setminus \rK,\textcolor{blue}{\rK_1,\ldots,\rK_{k_r}},\rG^*_{x,y},  \rM^{(1)}_{x}[\rL \cap \{<\rid{y}\}]} \tag{as $\rW$ contains $\rK = (\rK_1,\ldots,\rK_{k_r})$} \\
	&\hspace{30pt} \leq ~~\ddots \tag{skipping some technical steps} \\
	&\hspace{30pt} \leq \frac{1}{k_r} \cdot \mi{\rM^{(1)}_{x \rightarrow y}}{\eqcolor{\rK_1,\ldots,\rK_{k_r}} \mid \rW \setminus \rK,\rG^*_{x,y},  \rM^{(1)}_{x}[\rL \cap \{<\rid{y}]}  \tag{following~\Cref{sec:attempt2} as $\rG^*_{x,-y}$ could be any of $\rK_1,\ldots,\rK_{k_r}$ from the perspective of $x \mid \Gstar_{x,y}$} \\
	&\hspace{30pt} \leq \frac{1}{k_r} \cdot \bwidth(\prot_r) \tag{by~\itfacts{uniform}}, 
\end{align*}
given that $\rM^{(1)}_{x \rightarrow y}$ is message with at most $\bwidth(\prot_r)$ bits. By the choice of $k_r$ in~\Cref{eq:overview-para2}, 
we can bound the contribution of this term, across all $x,y \in \Gstar$, by $1/\poly(n_{r-1})$. 

\paragraph{Second term in RHS of~\Cref{eq:overview-2}.} We have, 
\begin{align*}
	\mi{\rM^{(1)}_{x \rightarrow y}}{\rM^{(1)}_{x \rightarrow -y} \mid \rW,\rG^*_{x}} &= \mi{\rM^{(1)}_{x \rightarrow y}}{\rM^{(1)}_{x \rightarrow -y} \mid \rM^{(1)}_{x}[\rL \cap\{ <\rid{y}\}], \textcolor{blue}{\rW_{-}, \rL, \rid{y}},\rG^*_{x}} \tag{by defining $\rW = (\rW_{-},\rL,\rid{y})$ since $\rid{y}$ 
		is fixed in $\rY^*$} \\
	&=  \mi{\rM^{(1)}_{x \rightarrow y}}{\rM^{(1)}_{x \rightarrow -y} \mid \rW_{-}, \rM^{(1)}_{x}[\rL \cap \{<\rid{y}\}], \textcolor{blue}{\rL \cup \rid{y}, \rid{y}}, \rG^*_{x}} \tag{since $\rid{y}$ is disjoint from $\rL$ and thus $(\rL \cup \rid{y}, \rid{y}) \equiv  (\rL, \rid{y})$}. 
\end{align*}
But now the good part is that the choice of $\rid{y}$ is uniform over $\rL \cup \rid{y}$, conditioned on all other variables; in particular, since all neighbors of $x$ in $L$ have the same type as the $(x,y)$ edge, this 
is still true even conditioned on $\rG^*_{x,y}$. In other words, from the perspective of the vertex $x$ in the protocol $\prot_r$ (but certainly not $\prot_{r-1}$), the edge $(x,y)$ is ``just another'' one of its edges of this type
among the set $L \cup id(y)$. Thus, when it is sending the message $M^{(1)}_{x \rightarrow -y}$, it cannot particularly correlate it with a specific message to vertices in $L$ as opposed to all messages to $L$. 
Using this intuition and an 
application of chain rule (\itfacts{chain-rule}) allows us to bound the RHS above as
\begin{align*}
	&\mi{\rM^{(1)}_{x \rightarrow y}}{\rM^{(1)}_{x \rightarrow -y} \mid \rW_{-}, \rM^{(1)}_{x}[\rL \cap \{<\rid{y}\}], {\rL \cup \rid{y}, \textcolor{blue}{\rid{y}}}, \rG^*_{x}} \\
	&\hspace{30pt}= \Exp_{y_j \in [L \cup id(y)]} \mi{\textcolor{blue}{\rM^{(1)}_{x}[y_j]}}{\rM^{(1)}_{x \rightarrow -y} \mid \rW_{-}, \rM^{(1)}_{x}[\rL \cap \textcolor{blue}{\{<y_j\}}], {\rL \cup \rid{y}}, \rid{y}=y_j, \rG^*_{x}} \tag{think of $y_j$ here 
		as going over all vertices in $L \cup id(y)$ and choosing which one is $id(y)$} \\
	&\hspace{30pt}\leq ~~\ddots \tag{skipping some technical arguments to drop the conditioning on $\rid{y}=y_j$} \\
	&\hspace{30pt} = \frac{1}{\ell_r+1} \cdot \mi{\textcolor{blue}{\rM^{(1)}[\rL \cup \rid{y}]}}{\rM^{(1)}_{x \rightarrow -y} \mid \rW_{-}, {\rL \cup \rid{y}}, \rG^*_{x}} \tag{by chain rule in~\itfacts{chain-rule}} \\
	&\hspace{30pt} \leq \frac{1}{\ell_r+1} \cdot (n_{r-1}-1) \cdot \bwidth(\prot_r) \tag{by~\itfacts{uniform}}, 
\end{align*}
given that $\rM^{(1)}_{x \rightarrow -y}$ contains $(n_{r-1}-1)$ messages with $\bwidth(\prot_r)$ bits each. 
Finally, by the choice of $\ell_r$ in~\Cref{eq:overview-para2}, we can bound the contribution of this term, across all $x,y \in \Gstar$, by $1/\poly(n_{r-1})$. 

\medskip

\paragraph{Back to the distributions.} We are not done yet however as we only compared the third terms in the distributions of variables induced by $\prot_{r-1}$ versus their actual distribution in $\prot_r$. 
We now need to compare the second terms also, namely, bound 
\[
\mi{\rM^{(1)}_{x}[\rL \cap \{<\rid{y}\}]}{\rG^*,\rM^{(1)}_{x \rightarrow -y} \mid \rW}, 
\]
which is the information revealed by some of the messages of $x$ about the inner graph $G^*$ (and some other messages). But this seems to bring us to the very beginning: we again 
have a collection of publicly sampled messages and need to bound their correlation with the inner special sub-instance. The \textbf{key observation} here is that we moved away from messages communicated 
over channels of $\Gstar$ and now we can indeed hope that at least messages $M^{(1)}_{x}[L]$ sampled for $x$ (and eventually other vertices in $\Gstar$) which are going to channels \emph{outside} of $G^*$ may not be 
too correlated with $\Gstar$. So, while the current protocol does not handle this part, we can hope to fix this issue using an argument similar to~\Cref{sec:re-background} which, unlike in~\Cref{sec:attempt1}, is no 
longer doomed to fail.

\subsubsection{The Final Attempt: Our Round Elimination Protocol} 

At this point, hopefully, we have provided enough intuition about the need for the rather peculiar sequence of sampling of variables in our final round elimination protocol. Describing this protocol at the level of details of previous subsections 
requires us to provide another lengthy set of definitions. Instead, equipped with the lessons and observations of previous subsections, we only state our final protocol $\prot_{r-1}$ at a very high-level as follows.  

\paragraph{Public randomness.} The players of $\prot_{r-1}$ use \underline{public randomness} to sample 
\begin{itemize}
	\item $(\Astar,\Bstar,\Cstar)$ defines the vertices of the inner graph $\Gstar$ and allows players to map their input $G_{r-1}$ to this induced subgraph; 
	\item $(J^x_1,\ldots,J^x_{j_r})$ are {full} $(r-1)$-round instances sampled from $\GG_{r-1}$ for each vertex $x$ of $G^*$ (similar in spirit to~\Cref{sec:re-background} but now for each individual vertex);
	\item $(K^{x,y,t}_1,\ldots,K^{x,y,t}_{k_r})$ are {partial} $(r-1)$-round instances that for each pairs of vertices $x,y \in \Gstar$ are sampled from $(\GG_{r-1} \mid type(x,y)=t)$\footnote{\label{footnote} Only $x$ and $y$ know $type(x,y)$ and so to sample these
		sets using public randomness, we instead need to sample the sets for all possible types from $[r] \cup \{0\}$ and then let $x,y$ pick the right type for themselves.}  so that together with the type of $(x,y)$ they can form a full $(r-1)$-round instance (similar to~\Cref{sec:attempt2}); 
	\item $(L^{x,t}_1,\ldots,L^{x,t}_{\ell_r})$ are sets of neighbors for each vertex $x \in \Gstar$ and a fixed type $t \in [r] \cup \set{0}$;  (similar to~\Cref{sec:attempt3}); 
	\item $M_{\textnormal{public}}$ contains publicly sampled messages to subsets of $(L^{x,t}_1,\ldots,L^{x,t}_{\ell_r})$ from each $x \in \Gstar$ and type $t \in [r] \cup \set{0}$  (similar to~\Cref{sec:attempt3}). 
\end{itemize}
\paragraph{Pair randomness.} Each pair of vertices $(x,y) \in \Gstar$ samples the messages $M^{(1)}_{x,y}$ conditioned on $type(x,y)$ and all publicly sampled variables (similar to~\Cref{sec:attempt2} and~\Cref{sec:attempt3}). 

\paragraph{Private randomness.} By the above steps, each vertex $x \in \Gstar$ has sampled many of its channels in $G$. It then simply samples any remaining channels so that it has $d_r$ channels for each type in each layer, conditioned
on all the publicly sampled variables and all variables sampled with pair randomness that are known to $x$ (i.e., are related to channels incident on $x$). 

The parameters of the protocol are chosen as follows (notice that the dependencies are acyclic): 
\[
\ell_r = \poly(n_{r-1}), \quad k_r = \poly(\ell_r), \quad j_r = \poly(k_r), \quad d_r = \poly(j_r), \quad n_r = \poly(d_r).  
\]
These variables then, using a combination of ideas outlined in the previous subsections, allow us to
\begin{enumerate}
	\item $L$-variables and $M_{\textnormal{public}}$: break the correlation between messages $M^{(1)}_{x,y}$ sampled by pair randomness for all pairs $(x,y) \in \Gstar$; 
	\item $K$-variables: break the correlation between messages $M^{(1)}_{x,y}$ sampled by pair randomness and input graph $\Gstar_{-(x,y)}$ for all pairs $(x,y) \in \Gstar$; 
	\item $J$-variables: break the correlation between $M_{\textnormal{public}}$ sampled by public randomness and $\Gstar$. 
\end{enumerate}

\paragraph{Simulation step.} We are not yet done because we also need to handle messages and inputs of players in $G$ that are outside $\Gstar$. While for some other problems, this can be quite challenging (see, e.g.~\cite{AssadiKZ22} and their
``partial-input embedding'' technique for handling this), for us this step is quite straightforward\footnote{This also means our techniques and~\cite{AssadiKZ22}, besides their starting point in~\cite{AlonNRW15}, are almost entirely disjoint: all our efforts in this paper is in the handling of ``special inner'' vertices whereas in~\cite{AssadiKZ22}, this part is done exactly as in~\cite{AlonNRW15} and instead their main focus is on the
	remaining vertices.}. Any vertex $x$ in $G \setminus \Gstar$ only has a single channel by construction and this channel is to a vertex $y \in \Gstar$; thus, in our protocol, 
either the channel $(x,y)$ has been sampled publicly or vertex $y$ has sampled this channel privately. In either case, $y$ knows the entire input of $x$ and since $x$ only communicates with $y$, vertex $y$ can completely simulate the work of $x$ on its own by sampling
its message also. 

This means that players of $\prot_{r-1}$ on $G_{r-1}$ have all the information needed to simulate the protocol $\prot_r$ on the graph $G$ they have collectively created, use the sampled messages $M^{(1)}$ for the first round of $\prot_r$ without themselves
communicating at all, and then communicating messages of $\prot_r$ in its remaining $(r-1)$ rounds as part of their own $(r-1)$ rounds of communication. This way, and using our analysis for the sampling steps, we can prove that $\prot_{r-1}$ satisfies
\[
\suc(\prot_{r-1},\GG_{r-1}) \geq \suc(\prot_r,\GG_r) - 1/\poly(n_{r-1}),
\]
using only $(r-1)$ rounds and a similar bandwidth as $\prot_r$. 

Continuing like this allows us to obtain a $0$-round protocol for $\GG_0$ with success probability much better than $7/8$ (which is easy to show is the optimal bound on the distribution $\GG_0$ for $0$-round protocols), a contradiction. 
Finally, in terms of parameters, given that the size of instances grows polynomially from $\GG_{r-1}$ to $\GG_r$, we have 
\[
n_r \geq 2^{2^{\Omega(r)}}
\]
which means an $r$-round lower bound on $\GG_r$ translates to an
\[
r = \Theta(\log\log{n_r})
\]
lower bound as desired by~\Cref{res:main}.

\clearpage


\section{A Hard Distribution and its Properties}\label{sec:hard-dist}
We start our formal technical proofs from this section. Throughout the proofs, we use a large number of 
 random variables, and~\Cref{sec:list-rv} lists them to help the reader in keeping track of them. 
We also provide a schematic organization of our proofs in~\Cref{sec:schematic-org}. 

Throughout, we use $\cG_r(n_r)$ to denote the hard distribution for $r$-rounds over tripartite graphs with $n_r$ vertices in each layer. In this section, we describe our hard distribution and some of its properties, 
and prove the following theorem (minus its technical details) that formalizes~\Cref{res:main}. 
\begin{theorem}\label{thm:hard-dist}
	There exists a family of distributions $\{\cG_r(n_r)\}_{r \geq 0}$ over tripartite graphs with $n_r$ vertices in each layer, such that for $n_r > r^{4 \cdot 34^r}$, any deterministic protocol $\prot_r$ with,
	\[
		\round(\prot_r) = r \qquad \textnormal{ and } \qquad \suc(\prot_r, \cG_r(n_r)) \geq 15/16,
	\]
	must have, \[
		\bwidth(\prot_r) \geq (n_r)^{(1/2) \cdot (1/ 34^r)} \cdot (480)^{-2}.
	\]
\end{theorem}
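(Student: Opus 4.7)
The plan is to prove this theorem by induction on $r$, via a round elimination argument that turns an $r$-round protocol $\prot_r$ on $\cG_r(n_r)$ into an $(r-1)$-round protocol $\prot_{r-1}$ on $\cG_{r-1}(n_{r-1})$ while losing only $1/\poly(n_{r-1})$ in success probability. Iterating this $r$ times leaves us with a $0$-round protocol on $\cG_0$ whose success probability exceeds the trivial bound (roughly $7/8$, since the base distribution puts a triangle with probability $1/8$), yielding a contradiction. The base case $r=0$ is handled by a direct combinatorial argument on $\cG_0$, where $n_0 = 1$ and each pair-edge appears independently with probability $1/2$.

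The main work is the inductive step. First I would recursively define the parameters $\ell_r = \poly(n_{r-1})$, $k_r = \poly(\ell_r)$, $j_r = \poly(k_r)$, $d_r = \poly(j_r)$, $n_r = \poly(d_r)$, chosen so that the error terms $1/k_r$, $1/\ell_r$, $1/j_r$, each amplified by factors of $n_{r-1}^{O(1)}$, sum to $1/\poly(n_{r-1})$. The distribution $\cG_r$ is produced by taking the three ``special'' sets $(\Astar,\Bstar,\Cstar)$ of size $n_{r-1}$ in each layer, drawing an inner instance $\Gstar \sim \cG_{r-1}$, and padding each vertex in $\Gstar$ with $d_r$ uniformly random unique channels of each type to each other layer, as in \Cref{dist:dist-overview}.

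The heart of the proof is the round elimination simulation. Given $\prot_r$, I would construct $\prot_{r-1}$ whose players share \emph{public} randomness for sampling (i)~$(\Astar,\Bstar,\Cstar)$, (ii)~full $(r-1)$-round instances $\{J^x_i\}$ for each vertex $x$, (iii)~partial $(r-1)$-round instances $\{K^{x,y,t}_i\}$ for each ordered pair and each possible channel type, (iv)~neighbor sets $\{L^{x,t}_i\}$, and (v)~first-round messages from each $x$ to a subset of its $L$-neighbors; then use \emph{pair} randomness between each $(x,y) \in \Gstar$ to sample $M^{(1)}_{x,y}$ conditioned on $\typeval{x,y}$ and all publicly sampled variables; finally each player completes its channel list with \emph{private} randomness. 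Players outside $\Gstar$ have channel-degree $1$ and their inputs/messages can be simulated by the unique $\Gstar$-endpoint they speak to, so no extra communication is needed in $\prot_{r-1}$ beyond rounds $2, \ldots, r$ of $\prot_r$.

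The hard part, and the step I expect to take the most care, is bounding the total variation distance between the simulated joint distribution of $(\Gstar, M^{(1)}, \text{everything else})$ and the true one under $\prot_r$. I would proceed pair-by-pair and term-by-term via a telescoping chain rule: by Pinsker's inequality plus $\mathrm{KL} \le \mathbb{I}$, each discrepancy reduces to a mutual information of the form $\mi{\rM^{(1)}_{x \to y}}{\cdot \mid \cdot}$. I would then split each such term into (a)~a piece bounded by $\bwidth(\prot_r)/k_r$ using the $K$-variables as interchangeable candidates for $\Gstar_{-(x,y)}$ from $x$'s view, (b)~a piece bounded by $n_{r-1} \cdot \bwidth(\prot_r) / \ell_r$ using the $L$-variables to argue that $y$ is interchangeable with any $L^{x,\typeval{x,y}}$-neighbor, and (c)~a piece controlling the correlation between the publicly sampled $M_{\mathrm{public}}$ and $\Gstar$ itself, bounded by $\bwidth(\prot_r)/j_r$ via the $J$-variables (the role previously played by $K$ but now at the level of whole instances). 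Summing across all $O(n_{r-1}^2)$ pairs and taking square roots gives total loss $1/\poly(n_{r-1})$, and the bandwidth relation translates the final contradiction against the $0$-round bound into the stated $\bwidth(\prot_r) \geq n_r^{1/(2\cdot 34^r)}/480^2$ bound, with the constant $34$ absorbing the polynomial slack accumulated in the parameter recursion.
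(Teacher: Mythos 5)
Your proposal is correct and takes essentially the same approach as the paper: induction via a round-elimination lemma (\Cref{lem:round-elim}) that loses $O(\sqrt{s/n_{r-1}})$ in success probability per round, with the loss controlled by a hybrid argument over the publicly/pair-sampled variables (your $J$-, $K$-, $L$-structures correspond to the paper's $\cJall$, $\cKall$, $\Lall$ with parameters $\alpha_r$, $\beta_r$, $\gamma_r$), culminating in a contradiction with the $7/8$ bound for $0$-round protocols on $\cG_0$. The only caveat is that the per-step loss is not $1/\poly(n_{r-1})$ unconditionally but rather $O(\sqrt{s/n_{r-1}})$, and the bandwidth bound on $s$ emerges from forcing this to stay below $1/16$ over all $r$ iterations; your final paragraph indicates you are aware of this, so the plan is sound.
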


Let us see how the hardness of distribution $\cG_r(n_r)$ proves \Cref{res:main}. 
\begin{proof}[Proof of \Cref{res:main}]
By the easy direction of Yao's minimax principle (i.e, an averaging argument), it is sufficient to argue that for deterministic protocols using $o(\log \log n)$ rounds, $O(\log n)$ length messages are not sufficient to solve triangle detection with probability of success at least $15/16$ when input graphs are drawn from distribution $\cG_r(n_r)$. 

In \Cref{thm:hard-dist}, when $r = o(\log \log n_r)$, we know that the initial condition of $n_r > r^{4 \cdot 34^r}$ is satisfied. The bandwidth required, however, is,
\[
	 (n_r)^{(1/2) \cdot (1/ 34^r)} \cdot (480)^{-2} \gg \poly \log{(n_r)}, 
\]
which is more than the $O(\log n)$ bandwidth allotted in the CONGEST model.
\end{proof}

We begin by describing the hard distribution for $0$-round protocols, and we recursively construct hard distributions for $r$-rounds using the distribution for $(r-1)$-rounds. We also talk about the joint distribution of the input and first round messages to set us up for the analysis. 

\subsection{Base Case: Hard Distribution for $0$-rounds}
In this subsection, we will describe the hard distribution for $0$-rounds, and prove some of its necessary properties. 
By a $0$-round protocol, we mean that there is no communication and the players must output whether a triangle exists just based on their own input. This will serve as the base case for our inductive proof of \Cref{thm:hard-dist}. 
\begin{Distribution}\label{dist:0}
	\textbf{Distribution $\cG_0(n_0)$ over graphs $G$ for $0$-round protocols:} 
	
	(See \Cref{fig:0-round} for an illustration.)
	\begin{enumerate}[label=$(\arabic*)$]
		\item Start with a tripartite graph with $n_0$ vertices in each layer, and with vertex set $V = A \sqcup B \sqcup C$, and $A = \{a_1, a_2, \ldots, a_{n_0}\}, B = \{b_1, \ldots, b_{n_0}\}, C = \{c_1, \ldots, c_{n_0}\}$. 
		\item Choose three random vertices $\astar \in A$, $\bstar \in B$ and $\cstar \in C$. 
		\item For all three pairs of $(\astar,\bstar), (\bstar,\cstar), (\cstar,\astar)$, add an edge between each pair with probability $1/2$ uniformly at random and independently. 
	\end{enumerate}
\end{Distribution}

\begin{figure}[t!]
	\centering
	\includegraphics[scale=0.25]{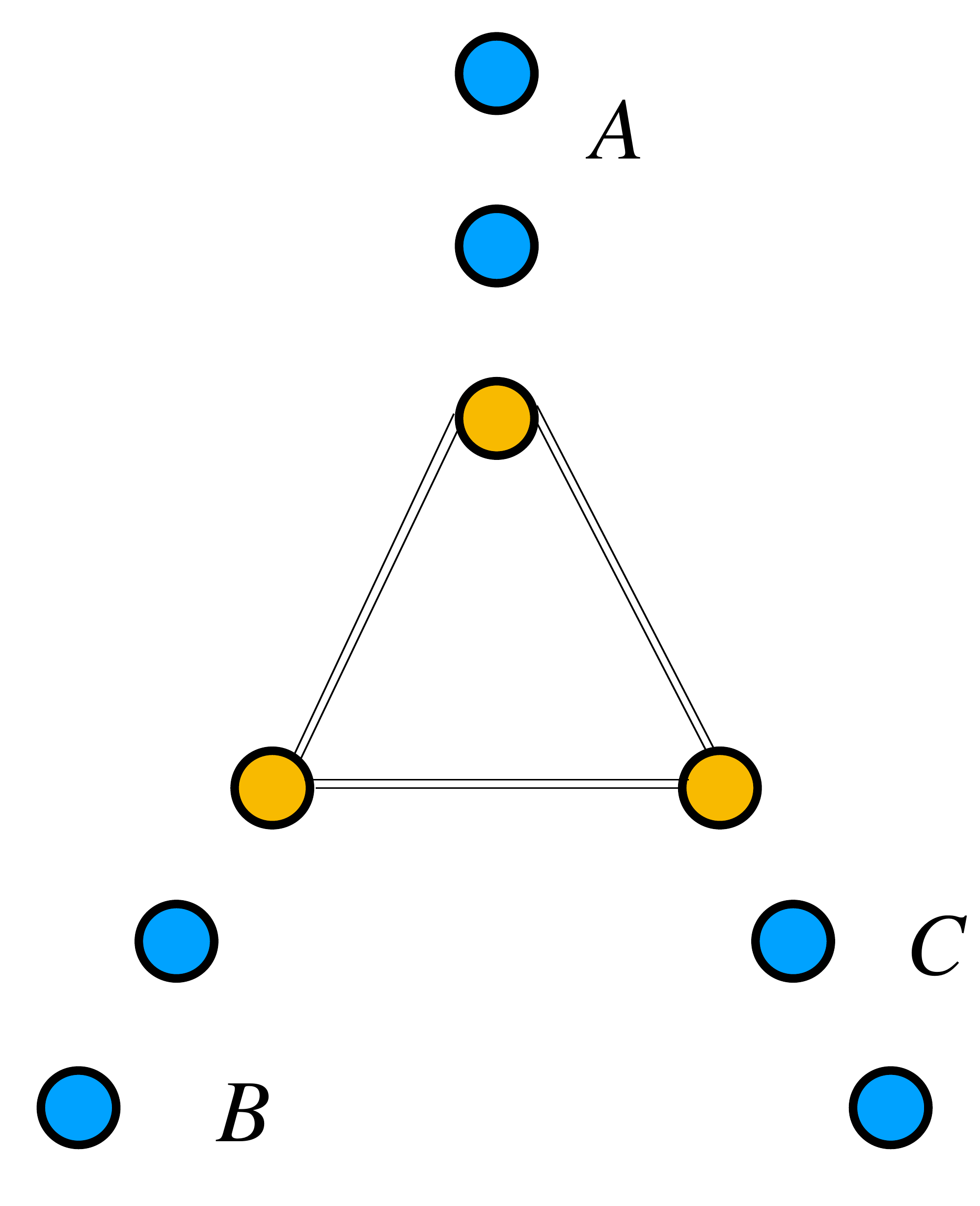}
	\caption{An illustration of an instance from the hard distribution for $0$-rounds with $n_0 = 3$. The middle (yellow) vertices are $\astar, \bstar, \cstar$. The double line indicates that the edge is present with probability $1/2$. }\label{fig:0-round}
\end{figure}

First, we prove a simple property about $0$-round instances sampled from~\Cref{dist:0}. 
\begin{observation}\label{obs:tri-exist-0-round}
	In any graph $G \sim \cG_0(n_0)$, a triangle exists between vertices $\{\astar, \bstar, \cstar\}$ with probability $1/8$, and there is no triangle otherwise. 
\end{observation}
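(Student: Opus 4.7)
The plan is to observe that this is an immediate consequence of the construction of \Cref{dist:0}, and essentially reduces to an independence calculation on three Bernoulli edge indicators.

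First, I would note that by construction of $\cG_0(n_0)$, the \emph{only} edges ever added to $G$ are among the three special vertices $\{\astar, \bstar, \cstar\}$: step~(3) of the distribution only touches the three pairs $(\astar,\bstar), (\bstar,\cstar), (\cstar,\astar)$, and no other edges are introduced anywhere else. In particular, any vertex outside $\{\astar,\bstar,\cstar\}$ is isolated. Hence any triangle in $G$ must have all three of its vertices in $\{\astar, \bstar, \cstar\}$, so the unique candidate triangle is the one on these three vertices.

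Next, the triangle on $\{\astar,\bstar,\cstar\}$ exists in $G$ iff all three of the edges $(\astar,\bstar), (\bstar,\cstar), (\cstar,\astar)$ are simultaneously present. Since step~(3) of \Cref{dist:0} includes each of these three edges independently with probability $1/2$, the probability that all three are present is
\[
\left(\tfrac{1}{2}\right)^3 \;=\; \tfrac{1}{8}.
\]
If at least one of these three edges is absent (probability $7/8$), then $\{\astar,\bstar,\cstar\}$ does not form a triangle and, by the previous paragraph, no triangle exists in $G$ at all.

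There is no real obstacle here; the only mild point to make explicit is that vertices outside $\{\astar,\bstar,\cstar\}$ cannot participate in any triangle because no edges incident on them are ever added, which is why it suffices to analyze the three Bernoulli edge indicators on $\{\astar,\bstar,\cstar\}$. This observation will later serve as the base case from which the hardness of the recursively defined distributions $\cG_r(n_r)$ is bootstrapped in the round-elimination argument.
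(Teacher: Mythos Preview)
Your proposal is correct and follows essentially the same approach as the paper: both observe that the only edges in $G$ are among $\{\astar,\bstar,\cstar\}$ and that the three edges appear independently with probability $1/2$, so the triangle appears with probability $(1/2)^3 = 1/8$. The paper's proof is simply a one-line version of what you wrote.
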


\begin{proof}
	The edges are only between $\astar,\bstar,$ and $\cstar$ and exist independently with probability $1/2$. 
\end{proof} 

It is straightforward that protocols which do not communicate at all cannot have a high probability of success for 0-round instances; we formalize this statement next for completeness. 

\begin{claim}\label{clm:0-round-prot}
	Any deterministic protocol $\prot$ with $0$-rounds detects whether a triangle exists in $G \sim \cG_0(n)$ with success probability at most $7/8$.
\end{claim}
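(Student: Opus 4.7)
The plan is to fix an arbitrary deterministic $0$-round protocol $\prot$ and condition on an arbitrary choice of $\astar, \bstar, \cstar$. Since in a $0$-round protocol each vertex's output is a function only of its own input (i.e., the types of its incident channels), and since in $\cG_0(n_0)$ all potential edges lie among $\{\astar,\bstar,\cstar\}$, I will work with the $8$ equally likely edge patterns over the three pairs $e_1=(\astar,\bstar),\ e_2=(\bstar,\cstar),\ e_3=(\cstar,\astar)$ and show that $\prot$ must fail on at least one of them. Integrating over the choice of $(\astar,\bstar,\cstar)$ then yields the $7/8$ bound.

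The key observation driving the argument is that for each no-triangle pattern obtained by deleting exactly one of the three edges, the vertex opposite the missing edge sees exactly the same set of incident edges as in the triangle case. Concretely, under pattern $\{e_1,e_3\}$ the vertex $\astar$ sees both its incident edges, just as it does under the full triangle $\{e_1,e_2,e_3\}$; analogously $\bstar$ under $\{e_1,e_2\}$ and $\cstar$ under $\{e_2,e_3\}$. Consequently, denoting by $f_{\astar}, f_{\bstar}, f_{\cstar}$ the (deterministic) output functions of these three vertices in terms of their incident edges, the outputs of $\astar$ on patterns $\{e_1,e_3\}$ and $\{e_1,e_2,e_3\}$ coincide, and similarly for $\bstar,\cstar$.

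I will then argue by contradiction: assume $\prot$ succeeds on all $8$ patterns. Correctness on the (no-triangle) pattern $\{e_1,e_3\}$ forces $f_{\astar}(1,1)=\mathsf{No}$, and similarly $f_{\bstar}(1,1)=f_{\cstar}(1,1)=\mathsf{No}$. On the triangle pattern, the inputs to $\astar,\bstar,\cstar$ are exactly $(1,1)$, so all three output $\mathsf{No}$. Every other vertex $v \notin\{\astar,\bstar,\cstar\}$ has no incident edge in \emph{any} of the $8$ patterns (edges in $\cG_0(n_0)$ lie entirely among $\astar,\bstar,\cstar$), so $v$'s output is the same on the triangle pattern as on the empty pattern; correctness on the empty pattern (a no-triangle instance) forces that output to be $\mathsf{No}$. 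Hence on the triangle pattern every vertex outputs $\mathsf{No}$, contradicting correctness. Therefore $\prot$ fails on at least one of the $8$ patterns, incurring failure probability at least $1/8$ conditional on $(\astar,\bstar,\cstar)$.

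I do not anticipate any serious obstacle: the argument is elementary and combinatorial, the only care required is to be explicit that the \emph{same} deterministic output functions of $\astar,\bstar,\cstar$ are reused across the triangle and the three ``one-edge-missing'' patterns, and that vertices outside $\{\astar,\bstar,\cstar\}$ have empty inputs under all patterns, which ties their triangle-case output to their empty-case output via determinism. Averaging over $(\astar,\bstar,\cstar)$ is immediate since the above holds conditionally on every realization, giving $\suc(\prot, \cG_0(n_0)) \le 7/8$.
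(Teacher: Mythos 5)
Your proof is correct and rests on the same key observation the paper uses: conditioned on $(\astar,\bstar,\cstar)$, a starred vertex with both incident edges present cannot tell whether the third edge exists, so any protocol must fail on at least one of the eight equiprobable edge patterns. You organize this as a direct contradiction (deduce all vertices output $\mathsf{No}$ on the triangle pattern from correctness on the three one-edge-missing patterns and the empty pattern), whereas the paper does a short case analysis on whether some starred vertex outputs $\mathsf{Yes}$ at degree two; these are the same argument in different clothing, and yours is arguably the tidier of the two.
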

\begin{proof}
	In the following, we  additionally condition on any fixed choice of $(\astar,\bstar,\cstar)$ and assume this choice is known to all players. This can only strengthen the lower bound. 
	
	First consider any vertex $w \notin \set{\astar,\bstar,\cstar}$. If $w$ outputs \emph{Yes}, namely, that there is a triangle in $G$, then, 
	by~\Cref{obs:tri-exist-0-round}, the answer is only correct with probability $1/8$ as the choice of edges between $\astar,\bstar,\cstar$ is independent. So, we assume all vertices other than $(\astar,\bstar,\cstar)$ output \emph{No}. 
	
	Now, consider any vertex $\xstar \in \set{\astar,\bstar,\cstar}$. When degree of $\xstar$ is anything other than two, it can simply output \emph{No} and will be correct in this case. There are now two cases: 
	\begin{itemize}
		\item \emph{All} vertices in $\set{\astar,\bstar,\cstar}$ still output \emph{No} even when their degree is two: in that case, the protocol always outputs \emph{No} (by our assumption earlier) and thus is wrong with probability $1/8$ by~\Cref{obs:tri-exist-0-round}. 
		\item At least one vertex $\xstar \in \set{\astar,\bstar,\cstar}$ outputs \emph{Yes} when its degree is two: in that case, with probability half the edge between $\ystar \neq \zstar \in \set{\astar,\bstar,\cstar}$ may not appear, thus making
		the protocol wrong with probability $1/2$ in this case. The probability that $\xstar$ has degree two here is $1/4$ and conditioned on this, its output will be wrong with probability $1/8$. Thus, in this case also the protocol 
		is wrong with probability $1/8$ again. 
	\end{itemize}
	Overall the success probability of the protocol is at most $7/8$. 
\end{proof}

\subsection{Hard Distribution for $r$-rounds}

In this subsection, we describe the hard distribution for $r$-round protocols for $r \geq 1$. 
\vspace{-0.15cm}
\paragraph{Notation.}
 All vertices in the graph know which layer among $\{A, B, C\}$ they belong to. Each vertex in $X \in \{A, B, C\}$ has a unique identity $x_i \in X$ for $i \in [n_r]$, which distinguishes it from other vertices in $X$.

For $i \in [n_r]$, the input of each vertex $x_i \in X$ is given as two ordered vectors of length $n_r$ each labeled as $\Nbhood{x}{Y}{i}$ and $\Nbhood{x}{Z}{i}$, denoting the types of all vertex pairs $(x, w)$  for $w \in Y$ and $w \in Z$, respectively. 
The $j^{\textnormal{th}}$ entry in the vector corresponding to layer $Y$ (denoted by $\Nbhood{x}{Y}{i}[y_j]$) contains the type of the vertex pair $x_i, y_j \in [r+1] \cup \{0\}$, for $j \in [n_r]$ (and similarly for $\Nbhood{x}{Z}{i}$ and $Z$). 

Sometimes, we also use $\Nbtogether{x}{i}$ to denote the vectors of $x_i$ together. We use $\Nbtogether{x}{i}[S]$ for any $S \subseteq Y \cup Z$ to denote the types of all the vertex pairs of the form $(x_i, w)$ for $w \in S$. 

To avoid confusion, we sometimes write $X(G)$ to denote the layer $X \in \set{A,B,C}$ of the graph $G$ (to specify the graph). See also our note about the notation of $X$, etc. in~\Cref{sec:prelim}.

\begin{Distribution}\label{box:cGr-definition}
	\textbf{Distribution $\cG_r(n_r)$ over graphs $G$ for $r$-round protocols with $r \geq 1$:}
	
	(See \Cref{fig:1-round} for an illustration.)
	\begin{enumerate}[label=$(\arabic*)$]
		\item \label{item:Gr-Gr-1-sample}Sample a graph $G_{r-1} \sim \cG_{r-1}(n_{r-1})$ for some parameter $n_{r-1}$ to be fixed later in \Cref{eq:params-r-1}.
		\item \label{item:Gr-id-sample} For each $X \in \{A, B, C\}$ in $G$, sample $n_{r-1}$ distinct indices $\Xstar = \{ \xstar_1, \xstar_2, \ldots, \xstar_{n_{r-1}}\}$ from $X$ uniformly. The vertex $x_i$ in layer $X(G_{r-1})$ of $G_{r-1}$ 
		takes on the identity $\xstar_i$ in $G$.  
		\item Set $type(\xstar_i, \ystar_j)$ in $G$ to be $type(x_i,y_j)$ for any $x_i,y_j$ in different layers of $G_{r-1}$ for $i, j \in [n_{r-1}]$. 
		\item \label{item:Gr-neighbor-sample} Sample $2n_{r-1} \cdot (r+1)$ subsets $S^{y \to X}_t, S^{z \to X}_t \subseteq X \setminus \Xstar$, for each $y, z \in G_{r-1}$, and type $ t \in [r] \cup \{0\}$, all disjoint from each other. The sets are of size $d_r$ each, and are sampled uniformly at random. Parameter $d_r$ will be fixed later in \Cref{eq:params-r-1}.
		\item \label{item:Gr-degree-make} For $i \in[n_{r-1}]$ and $x_i \in X=X(G_{r-1})$, each other layer $Y\neq X$, and type $t \in [r] \cup \{0\}$, add channels of type $t$ between $\xstar_i$ to just enough sampled vertices from $S_t^{x_i \to Y}$ so that the total number of neighboring channels between $\xstar_i$ to any other layer $Y$ of type $t$ is exactly $d_r$.\footnote{$\xstar_i$ may have some type-$t$ channels from $G_{r-1}$, so we add more channels to increase its type-$t$ channels to $d_r$.}
	\item Set the type of all pairs of vertices, the types of which have not been fixed yet to be $r+1$. 
	\end{enumerate}
\end{Distribution}

\begin{figure}[t!]
	\centering
	\begin{subfigure}{0.45\textwidth}
		\centering 	\includegraphics[scale=0.25]{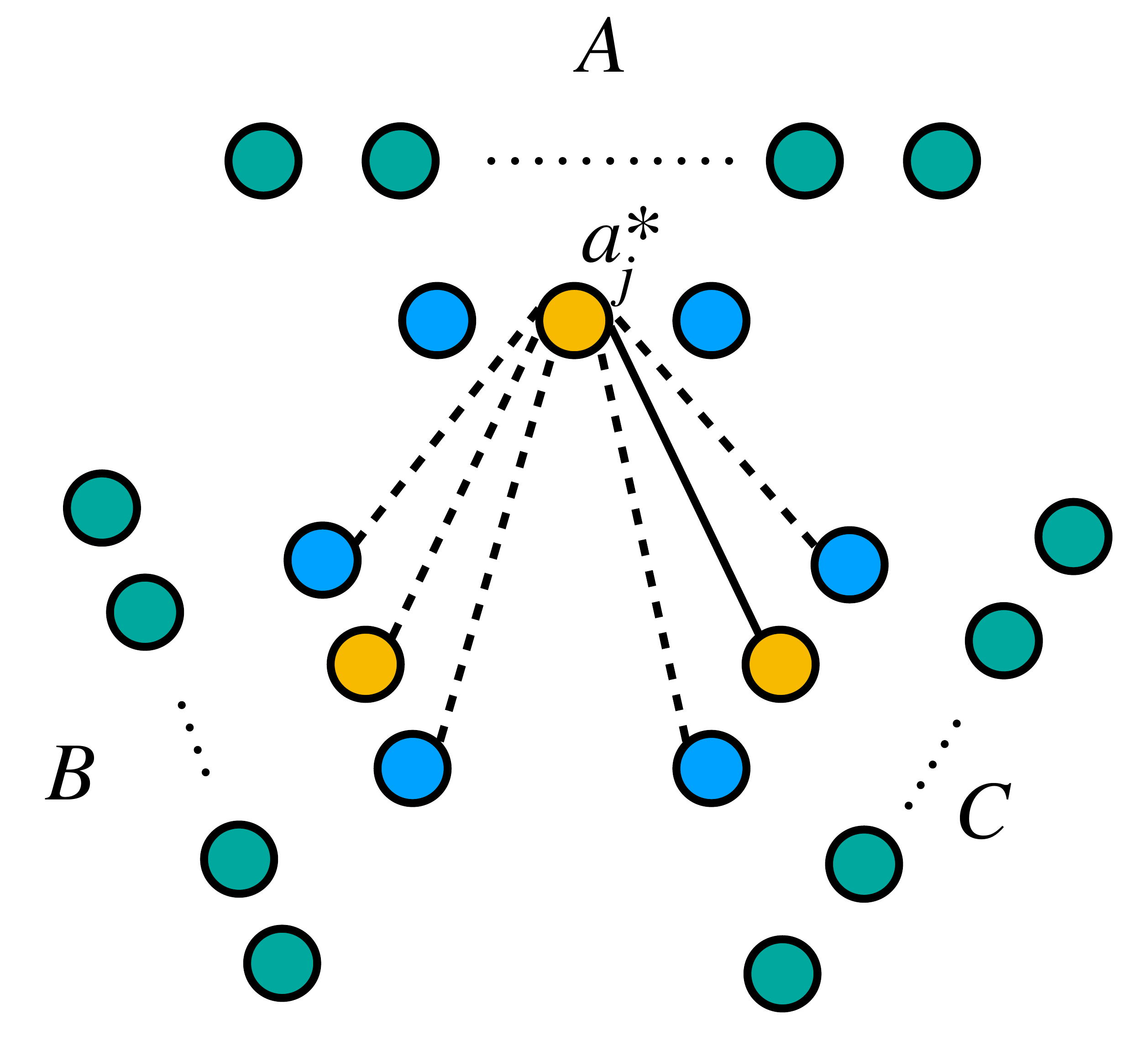}
		\caption{Realization of all types of channels between one inner vertex $\astar_j$ to other layers. The other inner vertices (blue and yellow) also have simlar channels to each other if they are in different layers.}
	\end{subfigure}
	\hfill
	\begin{subfigure}{0.45\textwidth}
		\centering 	\includegraphics[scale=0.25]{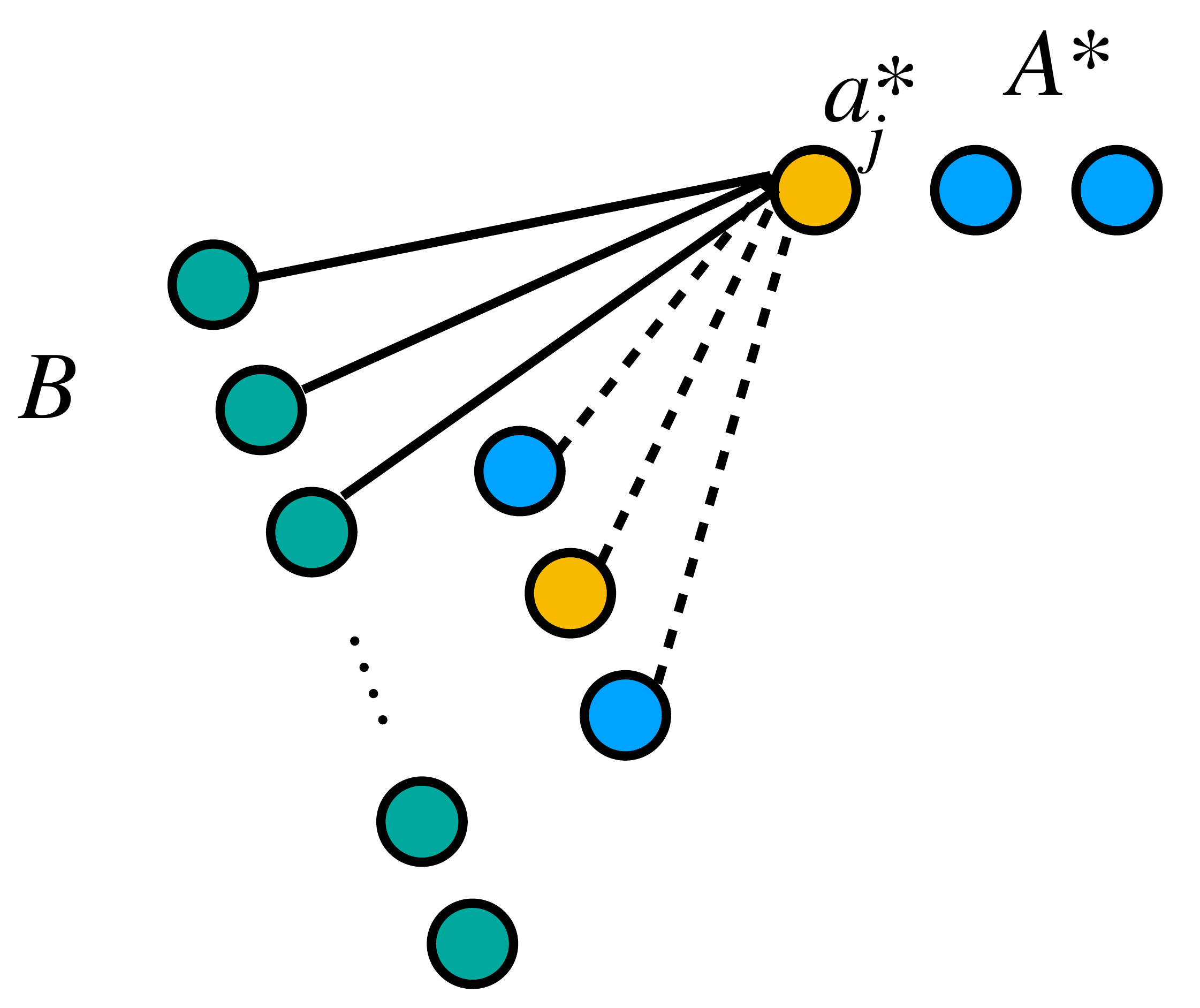}
		\caption{Addition of channels of type $0$ for one inner vertex $\astar_j$ to outer vertices to ensure correct channel-degree for all types (some parts of the instance are omitted).}
	\end{subfigure}

\vspace{0.25cm}

\begin{subfigure}{0.6\textwidth}
	\centering 
	\includegraphics[scale=0.25]{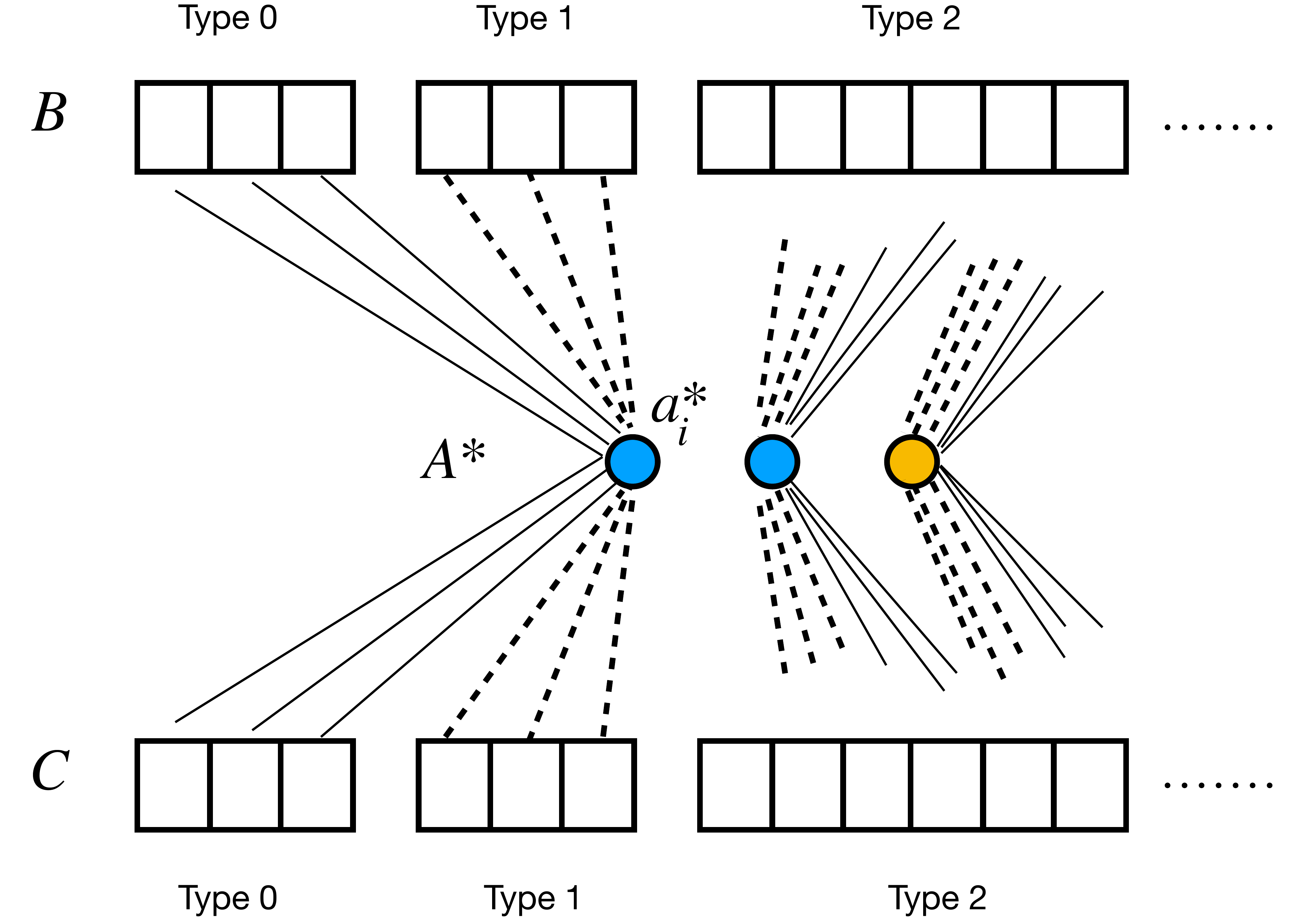}
	\caption{An illustration of the input of one inner vertex. The other inner vertices are connected only to outer vertices $u$ such that type of $(a^*_i, u)$ is $2$. Vertex $a_i$ does not know which vertices in $B, C$ are inner vertices. In the figure, only $\Astar$ and the relevant vertices of $B, C$ are shown. }
\end{subfigure}
	\caption{An illustration of some parts of the 1-round instance. Inner vertices from $0$-round instance are blue and yellow (in accordance with \Cref{fig:0-round}), and outer vertices are green. We used $n_0 = d_1 = 3$ for illustration, but this is certainly not true in the actual instance. Dashed lines indicate a channel of type $1$, straight lines indicate a channel of type $0$, and other pairs of vertices with no lines are of type $2$.  The input of blue and yellow vertices of the inner graph have exactly $3$ channels of type $0$ and type $1$ in total.}\label{fig:1-round} \vspace{-4mm}
\end{figure}
Let us fix the parameters in the hard distribution as follows, \vspace{-0.2cm}
\begin{equation}\label{eq:params-r-1}
	n_{r-1} = (n_r)^{1/34}  \qquad d_r = (n_{r-1})^{13}.
\end{equation}

We sample $n_{r-1}$ vertices from $X$ for each $x_i \in G_{r-1}$ in step \ref{item:Gr-Gr-1-sample}. We also sample $d_r$ distinct vertices for each $y_i, z_j$ and type $t$ with $i, j \in [n_{r-1}]$, and $t \in [r] \cup \{0\}$ in step \ref{item:Gr-neighbor-sample}. In total, 
\begin{equation*}
n_{r-1} + (r+1) \cdot d_{r} \cdot 2n_{r-1}
\end{equation*}
distinct vertices are sampled from $X$ uniformly at random. 
We have to check that $n_r$ is large enough to make this sampling process  feasible:
\begin{align*}
	n_{r-1} + (r+1) \cdot d_{r} \cdot 2n_{r-1} = n_{r-1}(2d_r(r+1) + 1) \leq n_{r-1}^{16} < n_r.
\end{align*}
Thus, we have more than enough room in $n_r$ for step \ref{item:Gr-neighbor-sample}.

We call the vertices in $\Xstar, \Ystar, \Zstar$ to be \textbf{inner vertices} (sampled in step \ref{item:Gr-id-sample}), and all other vertices  as \textbf{outer vertices}. We refer to the graph $G_{r-1}$ from step \ref{item:Gr-Gr-1-sample} as the \textbf{inner graph}. 
We prove some simple properties of $\cG_r(n_r)$ now. 

\begin{observation}\label{obs:G_r-degree}
	The number of neighboring channels of type $t \in [r] \cup \{0\}$ of any inner vertex in $G$ is exactly $d_r$. The channel-degree of all outer vertices is at most one. 
\end{observation}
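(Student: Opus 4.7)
The plan is to verify both parts directly from \Cref{box:cGr-definition}; this observation is essentially a bookkeeping check on the construction rather than a substantive claim, so I would handle the two statements separately by unpacking the definitions.

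For the inner vertex claim, I would appeal directly to step~(5) of \Cref{box:cGr-definition}, which explicitly adds channels of type $t$ from an inner vertex $\xstar_i$ into the sampled set $S_t^{x_i \to Y}$ \emph{so that the total number of type-$t$ channels between $\xstar_i$ and each other layer $Y$ equals $d_r$}. The only thing that could prevent this prescription from succeeding is if $\xstar_i$ already had more than $d_r$ type-$t$ channels to $Y$ before step~(5), via the types inherited in step~(3). But the only pairs whose types are inherited from $G_{r-1}$ are the pairs $(\xstar_i, \ystar_j)$ for $j \in [n_{r-1}]$, of which there are at most $n_{r-1}$; since $n_{r-1} \le d_r$ by \Cref{eq:params-r-1}, step~(5) is feasible and produces exactly $d_r$ type-$t$ channels from $\xstar_i$ to each other layer $Y$.

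For the outer vertex claim, fix any outer vertex $u \in X$. The only step of the construction that can endow $u$ with a channel is step~(5): steps~(1)--(3) assign types only to pairs of inner vertices, and the final step sets every remaining vertex pair to type $r+1$, which is not a channel. In step~(5), $u$ receives a channel only when some inner vertex $w$ in a layer other than $X$ selects $u$ from its sampled set $S_t^{w \to X}$ for some type $t$. But step~(4) declares all the sets $\{S_t^{y \to X}, S_t^{z \to X} : y,z \in G_{r-1},\, t \in [r] \cup \{0\}\}$ to be pairwise disjoint, so $u$ lies in at most one of them. Consequently, at most one inner vertex $w$ can add a channel incident on $u$, giving $u$ channel-degree at most one.

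I do not anticipate any real obstacle; the only mildly subtle point is checking feasibility of step~(5), which reduces to the inequality $n_{r-1} \le d_r$ from \Cref{eq:params-r-1}, and checking the disjointness guarantee that the construction itself asserts in step~(4). Beyond this, the argument is a direct unpacking of definitions and does not require any auxiliary machinery.
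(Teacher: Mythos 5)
Your proof is correct and takes essentially the same approach as the paper: the first claim is read off step~(5) of \Cref{box:cGr-definition}, and the second follows from the pairwise disjointness of the sets $S_t^{w\to X}$ sampled in step~(4). The extra feasibility check you include (that the inherited types from step~(3) cannot exceed $d_r$ because $n_{r-1}\le d_r$) is a small addition the paper handles later in \Cref{obs:input-vec}, but it does not constitute a different argument.
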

\begin{proof}
	In step \ref{item:Gr-degree-make}, we ensure the number of channels of each type $t \in [r] \cup \{0\}$ for each inner vertex is exactly $d_r$ explicitly, by adding enough channels to sets sampled in step \ref{item:Gr-neighbor-sample}. Any outer vertex gets sampled at most once in the sets of step \ref{item:Gr-neighbor-sample}, as all the sets are disjoint. Therefore all outer vertices have channel-degree at most one.
\end{proof}

We only ever refer to the input of inner vertices to talk about graph $G \sim \cG_r(n_r)$, due to the following observation. 

\begin{observation}\label{obs:input-fix-inner}
	In any $G \sim \cG_r(n_r)$, fixing the identity $\xstar_i$ and the two $n_r$ size neighborhood vectors, $\Nbhood{x}{Y}{}, \Nbhood{x}{Z}{}$ for all inner vertices $x_i \in G_{r-1}$ fixes the graph $G$. 
\end{observation}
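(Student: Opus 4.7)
My plan is to observe that a graph $G$ in the tripartite setting is fully specified once we know $\typeval{u,v}$ for every cross-layer pair $(u,v)$, so the task reduces to arguing that the stated information (identities of inner vertices together with their complete neighborhood vectors to the other two layers) determines the type of every such pair. I would split the pairs into two cases: (i) pairs with at least one inner endpoint, and (ii) pairs with both endpoints outer.

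For case (i), the argument is immediate. If $x_i^*$ is inner in layer $X$ and $w$ is any vertex in one of the other two layers $Y$ or $Z$, then $\typeval{x_i^*, w}$ appears as an explicit entry in the neighborhood vector $\Nbtogether{x}{i}[w]$, which by assumption is fixed. This handles every pair with at least one inner endpoint regardless of whether the other endpoint is inner or outer.

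For case (ii), I will appeal to the construction of \Cref{box:cGr-definition}. Steps \ref{item:Gr-Gr-1-sample}--\ref{item:Gr-degree-make} only ever assign a type in $[r]\cup\{0\}$ to pairs in which at least one endpoint lies among the inner vertices: step \ref{item:Gr-Gr-1-sample} through \ref{item:Gr-Gr-1-sample}'s id-transfer fixes types between pairs of inner vertices, and steps \ref{item:Gr-neighbor-sample}--\ref{item:Gr-degree-make} only add channels from an inner vertex to vertices in $X\setminus X^*$. By the final step of the distribution, every pair whose type has not been set in these earlier steps is assigned type $r+1$. Thus any pair $(u,v)$ with both $u$ and $v$ outer automatically has $\typeval{u,v}=r+1$, which is fully determined without any further input. (This is also consistent with \Cref{obs:G_r-degree}: each outer vertex has channel-degree at most one, corresponding to its at most one channel to an inner vertex.)

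Combining the two cases, every cross-layer pair's type is fixed, hence the entire graph $G$ is fixed. The proof is essentially a bookkeeping exercise, and I do not anticipate any genuine obstacle; the only thing to be careful about is enumerating the pair cases and matching them against the exact step of the construction that sets (or does not set) their type, so no outer-outer pair is silently assumed to carry hidden information.
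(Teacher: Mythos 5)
Your proof is correct and takes essentially the same approach as the paper, which simply notes that the only randomness in $\cG_r(n_r)$ consists of $G_{r-1}$, the chosen identities, and the channels sampled per inner vertex, and that all of these are determined by the stated variables; your version is a somewhat more explicit re-phrasing, enumerating pair types (inner-incident vs.\ outer-outer) and verifying each is determined, with the outer-outer case fixed to type $r+1$ by construction.
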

\begin{proof}
	The only randomness involved in the distribution $\cG_r(n_r)$ is in the graph $G_{r-1}$, the identities chosen for each inner vertex, and the at most $(r+1) \cdot d_r$ many neighboring channels sampled for each inner vertex. Thus, the 
	variables in the observation statement fix the graph $G$. 
\end{proof}

We can say something further about the structure of $\Nbhood{x}{Y}{}$ for every inner vertex $x$ and layer $Y$ with $x \notin Y$.
\begin{observation}\label{obs:input-vec}
For every inner vertex $x \in G_{r-1}$, the marginal distribution of $\Nbhood{x}{Y}{}$ for every other layer $Y$ is  obtained by:
\begin{itemize}
	\item Setting the types of all vertices in $\Ystar$ and $\Zstar$ based on the type in $G_{r-1}$. 
	\item Sampling the rest of $\Nbtogether{x}{}$ uniformly random, conditioned on the types of $\Ystar \cup \Zstar$ so that there are exactly $d_r$ vertices of type $t \in [r] \cup \{0\}$, and $n_r - d_r(r+1)$ vertices of type $r+1$.
\end{itemize}
\end{observation}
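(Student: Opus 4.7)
The plan is to unpack steps (2)--(6) of \Cref{box:cGr-definition} and read off the marginal distribution of $\Nbhood{x}{Y}{}$. The entries of $\Nbhood{x}{Y}{}$ at positions $\ystar_j \in \Ystar$ are set in step (3) to coincide with $\typeval{x,y_j}$ in $G_{r-1}$, which yields the first bullet verbatim. It therefore suffices to describe the joint distribution of the remaining entries---those indexed by $Y \setminus \Ystar$---conditioned on $G_{r-1}$ together with the identity map of step (2) that renames the vertices of $G_{r-1}$ inside $G$.

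For each type $t \in [r] \cup \{0\}$, let $n_t(x, Y)$ denote the number of positions $\ystar_j \in \Ystar$ for which the $G_{r-1}$-type of $(x,y_j)$ equals $t$. Step (4) samples the sets $S^{x \to Y}_t \subseteq Y \setminus \Ystar$ uniformly at random, pairwise disjoint from one another and from $\Ystar$, and step (5) then designates a sub-collection of size $d_r - n_t(x, Y)$ from each $S^{x \to Y}_t$ as the set of type-$t$ neighbors of $x$ in $Y \setminus \Ystar$, so that by \Cref{obs:G_r-degree} the total type-$t$ degree of $x$ in $Y$ is exactly $d_r$. Because $S^{x \to Y}_t$ is itself uniform over size-$d_r$ subsets of $Y \setminus \Ystar$, any symmetric or deterministic rule that picks a sub-collection of a prescribed size from it returns a uniform random subset of $Y \setminus \Ystar$ of that size. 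The pairwise disjointness across types is preserved by step (4), and every position left untouched acquires type $r+1$ in step (6).

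Combining these observations, the restriction of $\Nbhood{x}{Y}{}$ to $Y \setminus \Ystar$ is distributed exactly as a uniformly random ordered partition of $Y \setminus \Ystar$ into $r+2$ disjoint subsets: one of size $d_r - n_t(x, Y)$ for each $t \in [r] \cup \{0\}$, together with a remainder of size $n_r - n_{r-1} - \sum_{t=0}^{r}(d_r - n_t(x, Y))$ of type $r+1$. Adding back the inner counts inherited from $\Ystar$ gives totals of $d_r$ for each $t \in [r] \cup \{0\}$ and $n_r - d_r(r+1)$ for type $r+1$ across the whole vector, exactly matching the second bullet. The only mildly delicate point is to check that the ``just enough'' selection rule of step (5) does not inject bias; this is immediate because uniformity of the size-$d_r$ set $S^{x \to Y}_t$ is preserved under any selection that depends only on the set as a combinatorial object (for instance, taking the lexicographically smallest $d_r-n_t(x,Y)$ elements), so the induced size-$(d_r - n_t(x, Y))$ type-$t$ subset remains uniform. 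This is the main---and quite mild---obstacle; the rest of the argument is just bookkeeping of the sampling process.
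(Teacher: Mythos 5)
Your proof has a genuine error in the step it flags as ``the main---and quite mild---obstacle.'' You claim that if $S^{x\to Y}_t$ is a uniformly random size-$d_r$ subset of $Y\setminus\Ystar$, then \emph{any} deterministic selection rule depending only on the set as a combinatorial object (you suggest taking the lexicographically smallest $d_r - n_t(x,Y)$ elements) yields a uniformly random subset of the prescribed size. This is false. Take $Y\setminus\Ystar = \{1,2,3\}$, $d_r = 2$, and select one element by the lexicographic rule: starting from $S$ uniform over $\{1,2\},\{1,3\},\{2,3\}$, the selected element is $1$ with probability $2/3$, $2$ with probability $1/3$, and $3$ with probability $0$---nowhere near uniform. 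What \emph{does} preserve uniformity is picking a \emph{uniformly random} subset of $S^{x\to Y}_t$ of the required size (or, equivalently, picking a rule that is permutation-equivariant and then averaging over the rule's own randomness), which is the reading the paper's own proof implicitly adopts. The statement ``$S$ uniform $\Rightarrow f(S)$ uniform for arbitrary deterministic $f$'' confuses pushing forward a uniform distribution (which is not automatic) with the fact that a uniformly random subset of a uniformly random subset is uniform.

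To repair the argument, replace the appeal to an arbitrary deterministic rule by one of the following: (i) stipulate that step (5) selects the required sub-collection of $S^{x\to Y}_t$ uniformly at random, and invoke the standard fact that a uniform size-$m$ subset of a uniform size-$d$ subset of a universe is a uniform size-$m$ subset of the universe; or (ii) observe that one can equivalently re-parameterize step (4) to directly sample disjoint sets of the exact needed sizes $d_r - n_t(x,Y)$ uniformly at random, which gives the same joint law. With either fix, the rest of your bookkeeping---the observation that marginalizing a collection of disjoint uniformly random sets keeps each tuple of them jointly uniform over disjoint tuples of the right sizes, and the degree-count arithmetic across $\Ystar$ and $Y\setminus\Ystar$---goes through and matches the paper's (considerably terser) argument.
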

\begin{proof}
	We know that the number of channels of type $t \in [r] \cup \{0\}$ for each vertex $x_i \in G_{r-1}$ is exactly $d_r$ from \Cref{obs:G_r-degree}. Therefore the total number of channels with types in $[r] \cup \{0\}$ is $(r+1) \cdot d_r$. The remaining $n_r - d_r \cdot (r+1)$ channels are of type $r+1$. 
	
	The sets $S_t^{x \to Y}$ and $S_t^{x \to Z}$ are also chosen uniformly at random, so that they are disjoint from $\Ystar$ and $\Zstar$. Hence, the final distribution of $\Nbhood{x}{Y}{}$ is such that uniformly random subsets are chosen for each type. The total number of vertices in $\Ystar$ is exactly $n_{r-1}$, and even if all of them had the same type, we know that there is still enough room for $d_r$ vertices of each type $t \in [r] \cup \{0\}$ as $d_r > n_{r-1}$ from \Cref{eq:params-r-1}.
\end{proof}

The identities of inner vertices are chosen at random, so the distribution of the neighborhoods are symmetric over all $n_r$ vertices of $X$. More formally, we have the following observation. 
\begin{claim}\label{clm:symmetry-over-layer}
	For $r \geq 0$, in $\cG_r(n_r)$ for all $i \in [n_{r-1}]$ and $x_i \in X(G_{r-1})$, the marginal distribution of the two neighborhood vectors given to the inner vertex $x_i$ in $G_{r-1}$ is independent of $x$ and $i$.
\end{claim}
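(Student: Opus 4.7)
The plan is to prove \Cref{clm:symmetry-over-layer} by induction on $r$, invoking \Cref{obs:input-vec} to reduce the inductive step to the symmetry already captured by the inductive hypothesis.

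For the base case $r = 0$, inspection of \Cref{dist:0} shows that $\astar, \bstar, \cstar$ are sampled uniformly and independently from their respective layers, and the three candidate edges among them are i.i.d.\ Bernoulli$(1/2)$. There is a single inner vertex per layer, and by the manifest symmetry of the sampling, the joint distribution of its two length-$n_0$ neighborhood vectors is the same across $x \in \{a, b, c\}$: a uniformly chosen coordinate has type $0$ with probability $1/2$ and type $1$ otherwise, while all remaining coordinates have type $1$.

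For the inductive step, fix $r \geq 1$ and assume the claim holds for $\cG_{r-1}(n_{r-1})$. Pick any inner vertex $\xstar_i \in X(G)$. By \Cref{obs:input-vec}, the joint distribution of $(\Nbhood{x}{Y}{i}, \Nbhood{x}{Z}{i})$ can be realized in three steps: (i) sample the joint neighborhood of $x_i$ in $G_{r-1}$; (ii) lift it to length-$n_r$ vectors via the uniform random injections $\Ystar \hookrightarrow Y$ and $\Zstar \hookrightarrow Z$ from step~\ref{item:Gr-id-sample} of \Cref{box:cGr-definition}; and (iii) fill each remaining position with a uniformly random assignment of types, conditioned on exactly $d_r$ positions of each type $t \in [r] \cup \set{0}$ in each vector and all other positions of type $r+1$.

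The crux of the argument is that steps (ii) and (iii) are \emph{oblivious} to the choice of $x$ and $i$: they depend only on the parameters $n_r, n_{r-1}, d_r, r$ (which, since $\card{A} = \card{B} = \card{C} = n_r$, are identical across layers) and on the output of step (i). By the inductive hypothesis, the output distribution of step (i) is itself independent of $x$ and $i$, so composing the three steps yields a joint marginal independent of $x$ and $i$. The main (and rather mild) point to verify carefully is that step (i) must be invoked on the \emph{joint} distribution of the two neighborhood vectors of $x_i$ in $G_{r-1}$, rather than their individual marginals, so that the inductive hypothesis delivers exactly what is needed; the statement of the claim is already phrased this way, so no strengthening is required.
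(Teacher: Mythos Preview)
Your inductive step has a mismatch between what the inductive hypothesis provides and what step (i) requires. The claim at level $r$ concerns the inner vertices $x_i$ of $\cG_r$, which are exactly the $n_{r-1}$ vertices (per layer) of $G_{r-1}$. For step (i) you need the marginal distribution of the neighborhood of $x_i$ inside $G_{r-1}$---equivalently, the input of the vertex with label $x_i$ in $\cG_{r-1}(n_{r-1})$---to be independent of $x$ and $i$ for \emph{every} $i \in [n_{r-1}]$. But the inductive hypothesis is the claim at level $r-1$, which quantifies only over $j \in [n_{r-2}]$, i.e., over the inner vertices of $\cG_{r-1}$ (those coming from $G_{r-2}$). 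These form a strict subset of the $n_{r-1}$ vertices per layer, so the hypothesis does not by itself give the symmetry you invoke for step (i). The issue you flag in your last paragraph (joint vs.\ marginal on the two vectors) is not the real gap; the gap is that the quantifier ranges over $[n_{r-2}]$ in the hypothesis but over $[n_{r-1}]$ in what you need.

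The paper's proof sidesteps this by proving the stronger auxiliary statement ``for every $r' \geq 0$, in $\cG_{r'}(n_{r'})$, the marginal input of \emph{every} vertex (not just inner ones) is independent of its layer and index'' and carrying that through the induction. Its inductive step then treats inner and outer vertices of $\cG_{r-1}$ separately, using the uniformity of the identity and neighbor-set sampling to handle the outer case. Your reduction via \Cref{obs:input-vec} is fine for the inner-vertex part once the right hypothesis is in hand, but to close the induction you must either strengthen the inductive statement to all vertices and add the outer-vertex argument, or appeal directly to the exchangeability of vertex labels in the construction of $\cG_{r-1}$.
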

\begin{proof}
	It is sufficient to show that the marginal distribution of the input given to all vertices in distribution $\cG_{r-1}(n_{r-1})$ are the same, and independent of $x$ and $i$ for all $r \geq 1$. 
	We prove the claim by induction on $r$. 
	For the base case when $r = 1$, the inner graph is sampled from $\cG_0(n_0)$. We know that the vertex $\xstar$ is chosen uniformly at random from $[n_0]$. Therefore, each vertex in layer $X$ has equal probability of being chosen as $\xstar$ in $\cG_0(n_0)$. 
	
	For any $r > 1$, we know that all the identities of inner vertices $\xstar_1, \xstar_2, \ldots, \xstar_{n_{r-2}}$ in $\cG_{r-1}(n_{r-1})$ are chosen uniformly at random from $[n_{r-1}]$. Therefore, any vertex from $X(G_{r-1})$ is equally likely to be chosen as the identity of a vertex from $G_{r-2}$ (the inner graph of $G_{r-1}$). The marginal distribution of the neighborhood in $G_{r-2}$ is identical by induction hypothesis. 
	
	For the outer vertices in $X$, again, the sets sampled in step \ref{item:Gr-neighbor-sample} are uniform over $[n_{r-1}]$, and vertices from $[n_{r-1}]$ are equally likely to be chosen to be a part of these sets in distribution $\cG_{r-1}(n_{r-1})$. Our construction is fully symmetric over the three layers and the vertices in each layer. Therefore, the marginal distribution of the input given to each vertex in graph $G_{r-1}$ is the same and independent of $x$ and $i$. 
\end{proof}

Lastly, we need to argue about the existence of triangles in $G \sim \cG_r(n_r)$.

\begin{observation}\label{obs:r-triangle-exist}
	In any graph $G \sim \cG_r(n_r)$, a triangle exists if and only if a triangle is present in $G_{r-1}$ sampled in step \ref{item:Gr-Gr-1-sample} while sampling $G$.
\end{observation}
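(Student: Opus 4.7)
The plan is to reduce the claim to the elementary fact that outer vertices have too low channel-degree to participate in any triangle, so that triangles in $G$ are forced to live entirely on the inner copy of $G_{r-1}$, whose type-$0$ structure is identical to that of $G_{r-1}$ by step \ref{item:Gr-id-sample}. Concretely, recall that type-$0$ pairs are exactly the edges of $G$ (since $\cC_0 = E$), and a triangle is a triple of pairwise type-$0$ channels.

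First I would handle the reverse direction, which is the only place one needs to work. Suppose $G$ contains a triangle on vertices $u,v,w$; since $G$ is tripartite with parts $A,B,C$, we may assume $u \in A$, $v \in B$, $w \in C$. Every vertex of a triangle has at least two incident edges, i.e.\ at least two pairs of type $0$. By \Cref{obs:G_r-degree}, every outer vertex has channel-degree at most one, and in particular at most one incident pair of type $0$. Hence $u,v,w$ must all be inner vertices, say $u = \astar_i$, $v = \bstar_j$, $w = \cstar_k$ for some $i,j,k \in [n_{r-1}]$. By step \ref{item:Gr-id-sample} of \Cref{box:cGr-definition}, the type of $(\xstar_i, \ystar_j)$ in $G$ equals the type of $(x_i, y_j)$ in $G_{r-1}$ for any two inner vertices in different layers; in particular, all three pairs $(a_i,b_j)$, $(b_j,c_k)$, $(c_k,a_i)$ have type $0$ in $G_{r-1}$, producing a triangle in $G_{r-1}$.

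The forward direction is immediate from the same identification: if $a_i,b_j,c_k$ form a triangle in $G_{r-1}$, then all three pairs are of type $0$ in $G_{r-1}$, and step \ref{item:Gr-id-sample} transfers these type-$0$ assignments verbatim to the inner vertices $\astar_i, \bstar_j, \cstar_k$ in $G$, which therefore form a triangle. I do not expect any real obstacle here: the entire content of the observation is that the construction in \Cref{box:cGr-definition} only \emph{adds} channels of types $t \geq 0$ between inner vertices and fresh outer vertices, and the bounded channel-degree at outer vertices prevents these new channels from creating extra triangles.
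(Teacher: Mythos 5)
Your proof is correct and takes essentially the same route as the paper: both reduce to \Cref{obs:G_r-degree} (outer vertices have channel-degree at most one, hence at most one edge, so cannot lie on a triangle) and then invoke the fact that step \ref{item:Gr-id-sample} copies types between inner vertices verbatim from $G_{r-1}$. Your version is slightly more explicit about the tripartite structure forcing one triangle vertex per layer, but the substance is identical.
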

\begin{proof}
	Firstly, the channel-degree of all outer vertices is at most one by \Cref{obs:G_r-degree}. Thus, all of them have at most one edge also. Hence, only the inner vertices can be a part of a triangle. We do not add or remove any edges between inner vertices, we 
	only add channels which do not form edges. Hence, triangle existence in $G_{r-1}$ is preserved in graph $G$. 
\end{proof}


\subsection{Another Way of Sampling from Hard Distribution}
In this subsection, we talk more about the hard distribution $\cG_r(n_r)$, and give an alternate way of sampling it. This makes the analysis much easier, as it gives us another perspective of looking at the hard distribution. This alternate way is not without loss, however, but we show that these distributions are quite close in total variation distance. 

 We only ever talk about the input and identities of inner vertices, as this fixes distribution $\cG_r(n_r)$ by \Cref{obs:input-fix-inner}. 

\paragraph{Notation.}
We use $\idlayer{X}$ to denote the variables $\xstar_1, \xstar_2, \ldots, \xstar_{n_{r-1}}$ (elements of $\Xstar$) for $X\in {A, B, C}$. We use $\ids$ to denote the variables $(\idlayer{A}, \idlayer{B}, \idlayer{C})$ collectively.
We use $\id{x_i}$ to denote $\xstar_i$ for $i \in [n_{r-1}]$.
We use $\Ninner{x}{i}$ to denote the input of inner vertex $x_i \in G_{r-1}$. This is two vectors of length $n_{r-1}$ each, containing types to layers $Y$ and $Z$ in $G_{r-1}$. 

We use $\cDmarg$ to denote the marginal distribution of $\Ninner{x}{i}$ for each inner vertex $x_i$. Formally, $\cDmarg{}$ is the distribution of $\Nbtogether{x}{i}$ in an instance $G \sim \cG_{r-1}(n_{r-1})$ for $i \in [n_{r-1}]$. Note that this is identical for each $i \in [n_{r-1}]$ by \Cref{clm:symmetry-over-layer}.

We need to define three more parameters to proceed. 
\begin{equation}\label{eq:params-alphabetagamma}
	\alpha_r = (n_{r-1})^{11} \qquad \beta_r = (n_{r-1})^{5}\qquad \gamma_r = (n_{r-1})^6.
\end{equation}
We know that $d_r$ is much larger than $n_{r-1}$ from \Cref{eq:params-r-1}. However, we want to argue that, for any inner vertex $x_i \in G_{r-1}$, given only its input $\Nbtogether{x}{i}$ in graph $G$, there are multiple choices for what the other inner vertices may be; formally, we need the following for each inner vertex $x_i \in G_{r-1}$:
\begin{itemize}[leftmargin=5pt]
	\item An $\alpha_r$-size collection of sets of size $2n_{r-1}$ each, all of which are distributed independently and exactly according $\cDmarg$ with channels to $x_i$. 
	\item For each type $t \in [r] \cup \{0\}$ and any other inner vertex $y \notin X(G_{r-1})$, a $\beta_r$-size collection of sets of size $2n_{r-1}-1$ each, all of which are distributed  independently of each other according to the $(r-1)$-round input on $2n_{r-1}-1$ vertices, conditioned on the channel to the one remaining vertex being of type $t$.
	\item For each type $t \in [r] \cup \{0\}$ and any other inner vertex $y \notin X(G_{r-1})$, a set of size $\gamma_r$ with channels of  type $t$ to $x_i$. 
\end{itemize}
We show it is possible to sample from $\cG_r(n_r)$ while satisfying these conditions with high probability.

\begin{Distribution}\label{box:cGtilr-definition}
			\textbf{Distribution $\cGtil_r(n_r)$ for $r$-round protocols with $r \geq 1$:}
		\begin{enumerate}[label=$(\arabic*)$]
			\item Sample $G_{r-1}$ and sets $\Xstar, \Ystar, \Zstar$ as before from \Cref{box:cGr-definition}. 
			
			\item \label{item:disjt-sampling-1} Fix any inner vertex $x \in G_{r-1}$, and from the other sets $Y \setminus \Ystar$ and $Z \setminus \Zstar$, sample the following subsets, disjoint from each other for all $x$: (\Cref{obs:num-ZZ*} shows that this is feasible)
			\begin{enumerate}[label=$(\alph*)$]
				\item \label{item:J-appears} A collection $\cJup{x} = (\Jupdown{x}{1},  \ldots, \Jupdown{x}{\alpha_r})$, where each $\Jupdown{x}{i}  $ for $i \in [\alpha_r]$ is a set containing $n_{r-1}$ elements of $Y \setminus \Ystar$ and $n_{r-1}$ elements of $Z \setminus \Zstar$.
				\item \label{item:K-appears} For each type $t \in [r] \cup \{0\}$, value $i \in [n_{r-1}]$, each other layer $Y, Z$ a collection 
				\[
				\cKupdown{x \to Y}{ t, i} = (\Kupdown{x \to Y}{ t, i, 1}, \ldots, \Kupdown{x \to Y}{t, i, \beta_r})
				\]
				 where each $\Kupdown{x \to Y}{t, i, j}$ for $j \in [\beta_r]$ is a set containing $(n_{r-1}-1)$ elements from $Y \setminus \Ystar$ and $n_{r-1}$ elements from $Z \setminus \Zstar$. Similarly, $\cKupdown{x \to Z}{t,i}$ is a collection of $\beta_r$-sets, each with $n_{r-1}$ elements from $Y \setminus \Ystar$ and $n_{r-1}-1$ elements from $Z \setminus \Zstar$. 
				\item \label{item:L-appears} For each type $t$ and value $i \in [n_{r-1}]$ and each other layer $Y$, a set $\Lupdown{x \to Y}{t, i}$ of $\gamma_r$ elements from $Y \setminus \Ystar$.
				\end{enumerate}
			\item \label{item:disjt-sampling-2} For each inner vertex $x \in G_{r-1}$,  do the following: 
			\begin{enumerate}
				\item For each $i \in [\alpha_r]$, sample $(\Nbtogether{x}{}[\Jupdown{x }{i}])$ from $\cDmarg{}$ independently of other $i$.
				\item For each $t \in [r] \cup \{0\}$, each other layer $Y$,  $i \in [n_{r-1}]$ and $j \in [\beta_r]$, sample $\Nbtogether{x}{}[\Kupdown{x \to Y}{t, i, j}]$ so that $\Nbtogether{x}{}[\Kupdown{x \to Y}{t, i, j} \cup \{\ystar_i\}]$ is sampled from $\cDmarg$, conditioned on  $type(x, \ystar_i)=t$. 
				\item For each $t \in [r] \cup \{0\}, i \in [n_{r-1}]$, and other layer $Y$, set $\Nbhood{x}{Y}{}[y]$ for each $y \in \Lupdown{x \to Y}{t, i}$ to be type $t$. 
				\item Sample the rest of the input of $\Nbtogether{x}{} \sim \cG_r(n_r)$ conditioned on the random variables sampled in earlier steps independent of the input of other inner vertices (see \Cref{clm:sample-input-separate}.)
			\end{enumerate}
			\item For all pairs of vertices whose types are not fixed yet, set them to be of type $r+1$.
		\end{enumerate}
\end{Distribution}

\begin{figure}[t!]
	\centering
	\includegraphics[scale=0.25]{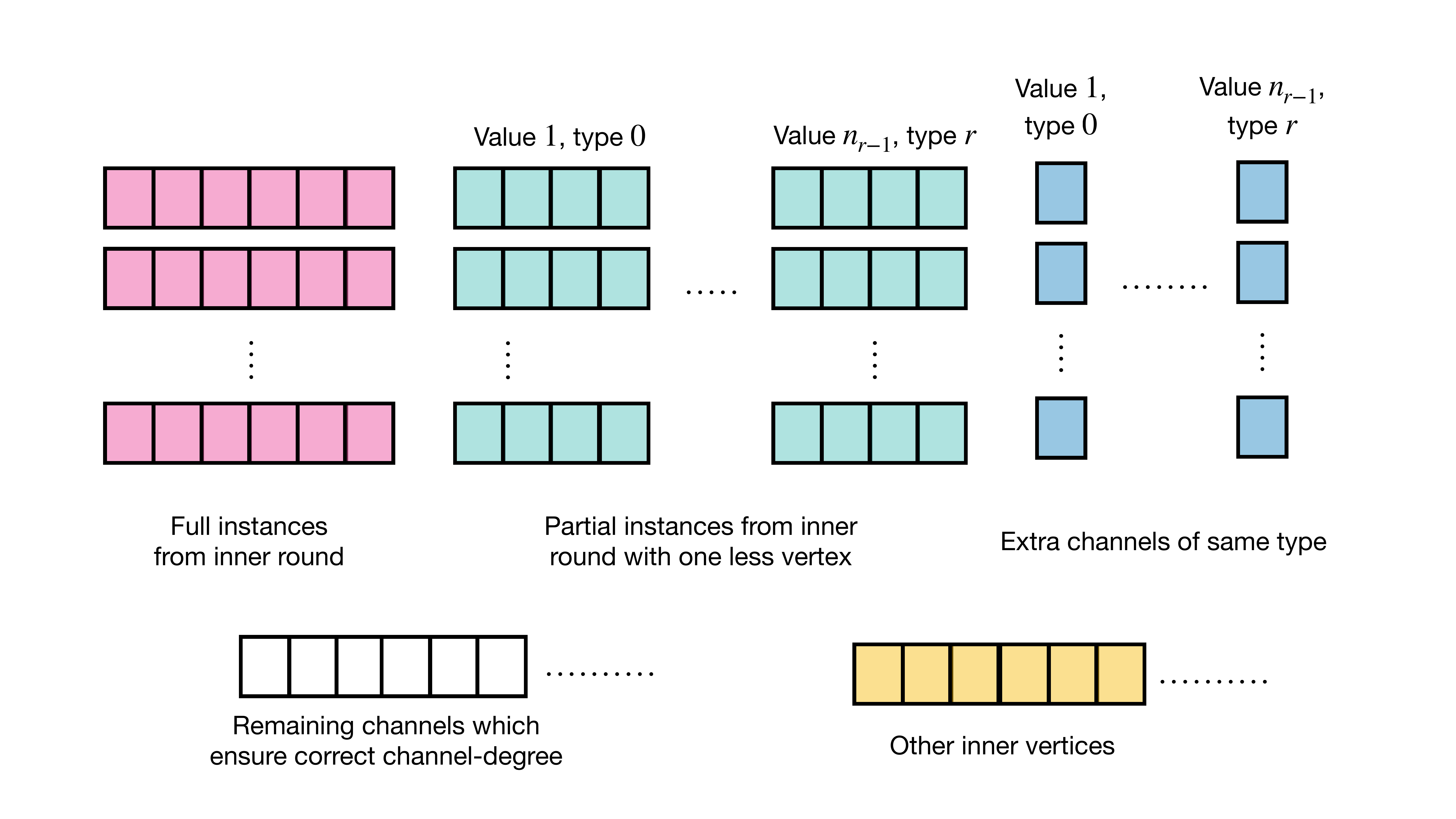}
	\caption{An illustration of what the input of one inner vertex $x$ looks like in $\cGtil_r$. Pink outer vertices correspond to collections $\cJup{}$, green outer vertices correspond to collections $\cKupdown{}{}$ (only for one other layer $Y$ are shown, this is repeated once more), blue outer vertices correspond to sets $\Lupdown{}{}$ (again, only for $Y$ are shown, this is repeated for $Z$), yellow vertices correspond to vertices in $\Ystar \cup \Zstar$, and the unshaded vertices are the other outer vertices with various types to ensure correct number of channels of each type $t \in [r] \cup \{0\}$.}\label{fig:Gtilr}
\end{figure}

The parameters in \Cref{eq:params-r-1} and \Cref{eq:params-alphabetagamma} are chosen so that the sets in step \ref{item:disjt-sampling-1} can be sampled:

\begin{observation}\label{obs:num-ZZ*-onevert}
	In step \ref{item:disjt-sampling-1} of \Cref{box:cGtilr-definition}, for each inner vertex $x$, and layer $Y$ with $x \notin Y$, the total number of identities sampled is at most $2(n_{r-1})^{12}$. 
\end{observation}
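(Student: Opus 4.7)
The plan is a direct counting argument: I will enumerate, for a fixed inner vertex $x$ and a fixed other layer $Y$, the number of identities drawn from $Y \setminus \Ystar$ in each of the three sub-steps \ref{item:J-appears}, \ref{item:K-appears}, \ref{item:L-appears} of step \ref{item:disjt-sampling-1}, then sum them up and plug in the parameter values from \Cref{eq:params-r-1} and \Cref{eq:params-alphabetagamma}.

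In step \ref{item:J-appears}, the collection $\cJup{x}$ consists of $\alpha_r$ sets, each contributing exactly $n_{r-1}$ elements from $Y \setminus \Ystar$, for a total of $\alpha_r \cdot n_{r-1} = (n_{r-1})^{12}$ identities. In step \ref{item:K-appears}, for each of the $r+1$ types $t$ and each of the $n_{r-1}$ indices $i$, both $\cKupdown{x \to Y}{t,i}$ and $\cKupdown{x \to Z}{t,i}$ contribute to $Y$: the former draws $\beta_r \cdot (n_{r-1}-1)$ identities from $Y \setminus \Ystar$ and the latter draws $\beta_r \cdot n_{r-1}$ identities (by the description in \Cref{box:cGtilr-definition}). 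Summing, the step \ref{item:K-appears} contribution is at most
\[
(r+1) \cdot n_{r-1} \cdot \beta_r \cdot \bigl((n_{r-1}-1)+n_{r-1}\bigr) \;\leq\; 2(r+1) \cdot \beta_r \cdot (n_{r-1})^2 \;=\; 2(r+1) \cdot (n_{r-1})^7.
\]
Finally, step \ref{item:L-appears} contributes one set of size $\gamma_r$ from $Y \setminus \Ystar$ for each of the $r+1$ types and each of the $n_{r-1}$ indices, totaling $(r+1) \cdot n_{r-1} \cdot \gamma_r = (r+1) \cdot (n_{r-1})^7$.

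Putting the three bounds together, the total number of identities sampled from $Y \setminus \Ystar$ is at most
\[
(n_{r-1})^{12} + 3(r+1) \cdot (n_{r-1})^7.
\]
To conclude $\leq 2(n_{r-1})^{12}$, it suffices to show that $3(r+1) \cdot (n_{r-1})^7 \leq (n_{r-1})^{12}$, i.e., $3(r+1) \leq (n_{r-1})^5$. This step is essentially immediate and is the only place where the size assumption $n_r > r^{4 \cdot 34^r}$ from \Cref{thm:hard-dist} is used: since $n_{r-1}=(n_r)^{1/34} > r^{4 \cdot 34^{r-1}}$, we have $(n_{r-1})^5 \geq r^{20} \geq 3(r+1)$ for all relevant $r \geq 1$. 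I expect no real obstacle here; the only mildly delicate point is making sure one does not accidentally double-count the contributions from the two different $\cKupdown{}{}$ collections (one indexed by $Y$ and one by $Z$) that both sample from $Y \setminus \Ystar$, and being careful that the $(n_{r-1}-1)$ versus $n_{r-1}$ split is applied on the correct side.
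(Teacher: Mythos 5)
Your proof is correct and follows essentially the same direct-counting approach as the paper: enumerate the contributions of the $J$, $K$, and $L$ collections to a fixed layer $Y$, sum the three terms, and bound the lower-order terms by $(n_{r-1})^{12}$ using the fact that $r+1$ is small compared to a power of $n_{r-1}$. The only cosmetic difference is that you keep the $(r+1)$ factor explicit until the final comparison, whereas the paper absorbs it via $r+1 < n_{r-1}$ one step earlier; both give the same final bound.
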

\begin{proof}
		For each inner vertex $x \in X$, to another layer $Y$, we sample:
		\vspace{-5pt}
	\begin{itemize}
		\item $\alpha_r$-many sets of size $n_{r-1}$ each. 
		\item A collection of $\beta_r$-many sets of size at most $n_{r-1}$ each for type $t \in [r] \cup \{0\}$, value $i \in [n_{r-1}]$ and each other layer which does not contain $x$. 
			\item A collection of size $\gamma_r$ for $(r+1)$ types and $n_{r-1}$ values. 
	\end{itemize}
	\vspace{-5pt}
		In total, we sample, at most 
	\begin{align*}
		\alpha_r \cdot n_{r-1} + 2\beta_r \cdot (n_{r-1})^2 \cdot (r+1) + \gamma_r \cdot (r+1) \cdot n_{r-1} 
		\leq (n_{r-1})^{12} + 2 (n_{r-1})^8 + (n_{r-1})^8 \tag{as $r+1 < n_{r-1}$ for large $n_r$ and by \Cref{eq:params-alphabetagamma} and for large enough $n_{r-1}$} 
		\leq 2 (n_{r-1})^{12}
	\end{align*}
	vertices for each inner vertex not in $Y$. 
\end{proof}

\begin{observation}\label{obs:num-ZZ*}
	In Step \ref{item:disjt-sampling-1} of \Cref{box:cGtilr-definition}, for each layer $Z$, the total number of vertices sampled from $Z \setminus \Zstar$ is at most $4(n_{r-1})^{13} < n_r - n_{r-1}$.
\end{observation}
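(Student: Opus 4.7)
The plan is to simply aggregate the per-vertex bound from~\Cref{obs:num-ZZ*-onevert} over all inner vertices that contribute sampled identities to layer $Z$. Fix a layer $Z \in \set{A,B,C}$. In Step~\ref{item:disjt-sampling-1} of~\Cref{box:cGtilr-definition}, for each inner vertex $x$, the sets $\Jupdown{x}{i}$, $\Kupdown{x \to Y}{t,i,j}$, $\Kupdown{x \to Z}{t,i,j}$, and $\Lupdown{x \to Y}{t,i}$ draw elements only from layers other than the one containing $x$. Thus the inner vertices that contribute to sampling from $Z \setminus \Zstar$ are precisely those lying in the other two layers, of which there are exactly $2n_{r-1}$ in total.

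By~\Cref{obs:num-ZZ*-onevert}, each such inner vertex $x \notin Z$ contributes at most $2(n_{r-1})^{12}$ elements sampled from $Z \setminus \Zstar$. Multiplying, the total number of vertices sampled from $Z \setminus \Zstar$ is bounded by
\[
	2n_{r-1} \cdot 2(n_{r-1})^{12} = 4(n_{r-1})^{13},
\]
which gives the first inequality in the statement.

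For the strict inequality $4(n_{r-1})^{13} < n_r - n_{r-1}$, I substitute the parameter relationship from~\Cref{eq:params-r-1}, namely $n_{r-1} = (n_r)^{1/34}$, to rewrite the left-hand side as $4(n_r)^{13/34}$ while $n_r - n_{r-1} = n_r - (n_r)^{1/34}$. Since $13/34 < 1$, for all sufficiently large $n_r$ (and certainly under the hypothesis $n_r > r^{4 \cdot 34^r}$ assumed throughout~\Cref{thm:hard-dist}), we have $n_r - (n_r)^{1/34} > 4(n_r)^{13/34}$, completing the argument. I do not expect any obstacle here: the lemma is essentially an arithmetic consolidation of the earlier per-vertex count, and its sole purpose is to certify that the disjoint sampling in Step~\ref{item:disjt-sampling-1} of~\Cref{box:cGtilr-definition} is feasible.
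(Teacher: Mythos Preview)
Your proof is correct and follows essentially the same approach as the paper: aggregate the per-vertex bound of $2(n_{r-1})^{12}$ from \Cref{obs:num-ZZ*-onevert} over the $2n_{r-1}$ inner vertices outside $Z$ to obtain $4(n_{r-1})^{13}$, then verify the strict inequality via the parameter choice in \Cref{eq:params-r-1}. The paper's version is terser (it simply asserts the final inequality holds ``by construction''), but the argument is identical.
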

\begin{proof}
	By \Cref{obs:num-ZZ*-onevert}, for each inner vertex not in $Z$, we sample at most, $2(n_{r-1})^{12}$ 
	vertices from $Z$. There are exactly $2n_{r-1}$ inner vertices which are not in $Z$. In total, from set $Z \setminus \Zstar$, we sample at most $4 (n_{r-1})^{13}$ vertices, which is less than $n_{r}-n_{r-1}$ by construction.  
\end{proof}

We show that the parameters are set so that step \ref{item:disjt-sampling-2} can be performed. 
\begin{claim}\label{clm:sample-input-separate}
	In the definition of $\cGtil_r(n_r)$ from \Cref{box:cGtilr-definition}, conditioned on $G_{r-1}$ and values of $\Xstar, \Ystar$ and $\Zstar$, in step \ref{item:disjt-sampling-2}, each $\Nbtogether{x}{}$  can be sampled marginally the same way as in distribution $\cG_r(n_r)$ from \Cref{box:cGr-definition}.
\end{claim}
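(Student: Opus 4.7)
The plan is to verify that step~\ref{item:disjt-sampling-2} of \Cref{box:cGtilr-definition}, when performed conditionally on $G_{r-1}$ and $\ids$, produces for each inner vertex $x$ an input $\Nbtogether{x}{}$ whose marginal matches the one induced by $\cG_r(n_r)$. I would first pin down the target distribution: by \Cref{obs:input-vec}, conditioned on $G_{r-1}$ and $\ids$, the entries of $\Nbtogether{x}{}$ on $\Ystar \cup \Zstar$ are determined by $G_{r-1}$, while the entries on the remaining outer vertices form a uniformly random type-assignment subject to having exactly $d_r$ vertices of each type $t \in [r] \cup \{0\}$ per other layer and type $r+1$ for the rest. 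Moreover, conditioned on $G_{r-1}$ and $\ids$, the inputs $\Nbtogether{x}{}$ for different inner vertices are mutually independent, since they draw from disjoint outer pools; this justifies the per-vertex nature of the sampling in step~\ref{item:disjt-sampling-2}.

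Second, I would check feasibility of the pre-sampling in parts (a)--(c): the induced type-count allocations must stay strictly below $d_r$ per type per layer, so that step (d) can complete them to a valid $\cG_r$-sample. From $\cJup{x}$, each type $t \in [r-1] \cup \{0\}$ receives at most $\alpha_r \cdot d_{r-1}$ vertices per layer, since each $\Jupdown{x}{i}$ drawn from $\cDmarg$ contains exactly $d_{r-1}$ such vertices per layer by \Cref{obs:input-vec} applied at level $r-1$; from $\cKupdown$, each type receives at most $\beta_r \cdot (r+1) \cdot n_{r-1} \cdot d_{r-1}$ vertices; and from $\Lupdown{x \to Y}{t,i}$, each type $t \in [r] \cup \{0\}$ receives exactly $\gamma_r \cdot n_{r-1}$ vertices. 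Using \Cref{eq:params-r-1} and \Cref{eq:params-alphabetagamma}, and that $d_{r-1} < n_{r-1}$, each such count is $\poly(n_{r-1})$ of degree strictly less than $13$, hence $\ll d_r = (n_{r-1})^{13}$. Consequently, the pre-sampled counts always leave room within the $d_r$-budget for step (d) to assign the residual types, and by \Cref{obs:num-ZZ*-onevert} and \Cref{obs:num-ZZ*} the outer pools are large enough for every such allocation to coexist across all inner vertices.

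Finally, I would argue distributional consistency. Once the pre-sampled marginals on $\cJup{x}$, $\cKupdown{x \to Y}{t,i}$, and $\Lupdown{x \to Y}{t,i}$ are shown compatible with the $\cG_r$-marginals on these disjoint subsets, the chain rule of probability then implies that the joint distribution of the pre-sample together with the step (d) completion equals the $\cG_r$-marginal on $\Nbtogether{x}{}$. The main obstacle is precisely this compatibility check: $\cDmarg$ is a uniform distribution with \emph{exact} type counts on $2n_{r-1}$ vertices, whereas the natural $\cG_r$-marginal on a disjoint $2n_{r-1}$-subset of outer vertices is a hypergeometric-type law inherited from the uniform-over-counts structure of \Cref{obs:input-vec}. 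Reconciling these uses the large gap between pre-sampled counts and $d_r$: since pre-sampling consumes only a $1/\poly(n_{r-1})$ fraction of each type's $d_r$-budget, step (d)'s $\cG_r$-conditional has ample flexibility to complete any valid pre-sample into a uniform-over-counts $\cG_r$-sample, and the two descriptions coincide after integration over the completion.
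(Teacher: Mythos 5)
Your proposal goes beyond the paper's proof, which is a pure feasibility check: by \Cref{obs:num-ZZ*-onevert}, the types committed in steps (a)--(c) of step~\ref{item:disjt-sampling-2} number at most $2(n_{r-1})^{12} < d_r$ per layer, so step~(d) has room to complete each type class to exactly $d_r$, and nothing more is argued. Your second paragraph reproduces this budget check in finer-grained form, and that part is fine.

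Your first and third paragraphs, however, contain issues. In the first paragraph, the independence claim is false: conditioned on $G_{r-1}$ and $\ids$ alone, the inputs $\Nbtogether{x}{}$ under $\cG_r(n_r)$ are \emph{not} mutually independent, because the sets $S_t^{y \to X}$ of step~\ref{item:Gr-neighbor-sample} in \Cref{box:cGr-definition} are jointly constrained to be pairwise disjoint. Independence holds only in $\cGtil_r$ and only after also conditioning on $\raux$ (\Cref{obs:input-indep}-\ref{item:ind-prop-1}); the resulting discrepancy is exactly what \Cref{clm:cG-cGtil-close} charges to the $\leq 1/n_{r-1}$ total-variation term. In the third paragraph, the reconciliation you offer is not a valid argument: ``ample flexibility'' left in step~(d) shows that the $\cG_r$-conditional is well-defined, but it cannot convert a pre-sample drawn from the wrong conditional law ($\cDmarg$, which has \emph{exact} type counts on each $\Jupdown{x}{i}$) into one whose joint with the completion carries the correct $\cG_r$-marginal (which is hypergeometric when restricted to a fixed set). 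If the two marginals do coincide, the mechanism must be the randomness of the sets $\Jupdown{x}{i}$, $\Kupdown{x \to Y}{t,i,j}$, $\Lupdown{x \to Y}{t,i}$ themselves: because they are uniformly random disjoint subsets and the pre-sampled type assignments within them are exchangeable, averaging over the positions of the sets restores the uniformity of each type class. That symmetry argument --- not the budget gap --- is what a rigorous proof of distributional equality would need, so your proposal names the right obstacle while offering the wrong reason to dismiss it.
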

\begin{proof}
We will prove that after the sampling process in step \ref{item:disjt-sampling-1} is done, it is feasible to sample the input $\Nbtogether{x}{}$ of any inner vertex $x$ from $\cG_r(n_r)$ marginally. 
	We refer to \Cref{obs:input-vec}. We know that the types of all vertices in $\Ystar \cup \Zstar$ are set correctly. 
	
We argue that it is still possible to sample the input so that uniformly random $d_r$-size subsets are chosen for each type $t \in [r] \cup \{0\}$ for each layer $Y, Z$. 
	
	By \Cref{obs:num-ZZ*-onevert}, we know that the total number of entries in $\Nbhood{x}{Y}{}$ for each other layer $Y$ whose types are fixed by the first three steps of \ref{item:disjt-sampling-2} are at most $2(n_{r-1})^{12}$,
	which is still less than $d_r$ by \Cref{eq:params-r-1}. Therefore, we have enough room to marginally sample the rest of $\Nbtogether{x}{}$ so that the conditions in \Cref{obs:input-vec} are satisfied. 
\end{proof}

Finally, we argue that these distributions are close overall. 

\begin{claim}\label{clm:cG-cGtil-close}
Distribution $\cGtil_r(n_r)$ which is obtained by marginally sampling each $\Nbhood{x}{Y}{}$ as in \Cref{box:cGtilr-definition} is close in total variation distance to $\cG_r(n_r)$. 
\[
	\tvd{\cG_r(n_r)}{\cGtil_r(n_r)} \leq \frac1{n_{r-1}}.
\]
\end{claim}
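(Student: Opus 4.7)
The plan is to couple the two processes on the sampled inner graph $G_{r-1}$ and the identity assignment $\ids$, and reduce the TVD to the probability of a ``collision'' event arising during the marginal sampling of $\cGtil_r(n_r)$. In $\cG_r(n_r)$, the outer-vertex neighbors of all inner vertices are sampled jointly under the global disjointness constraint imposed by the $S^{y \to X}_t$-sets in step~(4) of \Cref{box:cGr-definition}. In $\cGtil_r(n_r)$, by \Cref{clm:sample-input-separate}, the auxiliary collections $\cJ, \cK, \cL$ are sampled disjointly and then each remaining $\Nbtogether{x}{}$ is completed \emph{independently across inner vertices} from the correct individual marginal. The two processes produce identical tuples $(\Nbtogether{x}{})_x$ on every outcome in which the independent completions of $\cGtil_r(n_r)$ happen to remain globally disjoint.

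Let $\event$ denote the event that no outer vertex outside $\Ystar \cup \Zstar$ and not already assigned through $\cJ, \cK, \cL$ is selected as a type-$t$ channel-neighbor by two distinct inner vertices in $\cGtil_r(n_r)$'s marginal sampling. On $\event$ the output is a bona-fide $\cG_r(n_r)$ sample, and the two conditional distributions on this set coincide: by \Cref{obs:input-vec} together with \Cref{clm:symmetry-over-layer}, the marginal law of each $\Nbtogether{x}{}$ is uniform over its valid individual completions, which, combined with the conditional independence across inner vertices in $\cGtil_r(n_r)$ and the symmetry of $\cG_r(n_r)$'s sampling of $S$-sets, forces both conditional laws on $\event$ to be uniform over valid globally-disjoint tuples. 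This yields
\[
	\tvd{\cG_r(n_r)}{\cGtil_r(n_r)} \;\leq\; \Pr_{\cGtil_r(n_r)}[\event^c].
\]

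To bound the right-hand side I would apply a union bound. For each ordered pair of distinct inner vertices, each type $t \in [r] \cup \set{0}$, and each candidate outer vertex $w$, the probability under $\cGtil_r(n_r)$ that both vertices independently select $w$ as a type-$t$ neighbor is at most $\paren{d_r/\paren{n_r - O((n_{r-1})^{13})}}^2$, where the $O((n_{r-1})^{13})$ term is the number of vertices already consumed per layer (\Cref{obs:num-ZZ*}). Summing over $O((n_{r-1})^2)$ pairs of inner vertices, the $(r+1)$ types, and the at most $n_r$ choices of $w$ gives a bound of order $r \cdot (n_{r-1})^2 \cdot (d_r)^2 / n_r$. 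By the parameters $d_r = (n_{r-1})^{13}$ and $n_r = (n_{r-1})^{34}$ in \Cref{eq:params-r-1}, this evaluates to $O(r/(n_{r-1})^6)$, comfortably below $1/n_{r-1}$.

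The main obstacle is the conditional-equality step: rigorously justifying that $\cGtil_r(n_r)$'s distribution on $(\Nbtogether{x}{})_x$, once restricted to the disjointness event $\event$, agrees with the uniform joint-disjoint law of $\cG_r(n_r)$. The subtlety is that $\cGtil_r(n_r)$ samples the auxiliary $\cJ, \cK, \cL$ before the inputs, which could in principle introduce correlations across inner vertices after $\cJ, \cK, \cL$ are marginalized out. The way to dispense with this is to note that by \Cref{clm:sample-input-separate} the individual marginals of each $\Nbtogether{x}{}$ match $\cG_r(n_r)$ exactly, and then use the symmetry of both processes over re-labelings of outer vertices within each layer to conclude that, conditioned on $\event$, every valid globally-disjoint tuple receives equal probability in both distributions. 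Once this reduction is in place, the collision bound is routine given the polynomial slack built into \Cref{eq:params-r-1}.
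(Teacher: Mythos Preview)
Your approach is essentially the paper's: couple on $(G_{r-1},\ids)$, reduce the TVD to the probability of a collision among the independently sampled outer neighbors in $\cGtil_r$, and argue that on the no-collision event the two laws coincide. The only cosmetic difference is that the paper bounds the collision probability via a birthday-style estimate on all $\theta \leq (n_{r-1})^{16}$ samples per layer rather than your pairwise union bound, and it handles the conditional-equality step you flag as the obstacle at the same informal level (marginals agree by \Cref{clm:sample-input-separate} and the support is correct on $\Euniq$).
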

\begin{proof}
In distribution $\cG_r(n_r)$, after the types of vertex pairs $(x, y)$ are set according to $G_{r-1}$ for inner vertices $x, y$, we sample \emph{disjoint} subsets of the remaining $X \setminus \Xstar$ to add as neighbors to inner vertices in $\Ystar$ and $\Zstar$. This ensures that all outer vertices have channel-degree at most one, as we have stated in \Cref{obs:G_r-degree}. 

In distribution $\cGtil_r(n_r)$, all the subsets of outer vertices sampled in step \ref{item:disjt-sampling-1} are disjoint from each other by construction. Even though channels of some type $t \in [r] \cup \{0\}$ are added to all these outer vertices, exactly one channel is added, and they have channel-degree at most one. 

However, the rest of the input for each inner vertex $x$ is sampled independenly of the input of other inner vertices in $\cGtil_r(n_r)$. 
This process corresponds to sampling at most $d_r$ vertices for each type $t \in [r] \cup \{0\}$ for each inner vertex, so that the condition stated in \Cref{obs:G_r-degree} is satisfied. Namely, the condition that the number of channels of each type $t \in [r] \cup \{0\}$ for each inner vertex is exactly $d_r$. Outer vertices sampled for this purpose \emph{may} have collisions, which will cause outer vertices to have channel-degree more than one.

We will show that even though the rest of the outer vertices in step \ref{item:disjt-sampling-2} for each inner vertex are sampled independently of each other, it is highly unlikely that some vertex is sampled twice. Let $\Euniq$ be the event that all the outer vertices sampled to fill up the $d_r$ slots for each type, and for each inner vertex are unique for all layers $X, Y, Z$. We will show that probability of $\Euniq$ happening is large. We have, 
\begin{align*}
\Pr[\Euniq] \leq (\Pr[(2n_{r-1} \cdot (r+1) \cdot d_r  + n_{r-1})\textnormal{ vertices sampled u.a.r. from $[n_r]$ do not overlap}])^3,
\end{align*}
where we bound the probability of a super set of $\Euniq$. In $\Euniq$, fewer than $d_r \cdot (r+1)$ vertices are sampled from any layer $Y$ for any inner vertex $x \notin Y$. We will show that even if $d_r \cdot (r+1)$ vertices are sampled for each inner vertex, in addition to the set $\Ystar$ uniformly at random and independently of each other, no vertex is sampled more than once except with a very small probability. 

We choose $2n_{r-1} \cdot (r+1) \cdot d_r  + n_{r-1} < n_{r-1}^{16} := \theta$ vertices uniformly at random and independently from $n_r $ vertices in layer $Y$. The probability that none of them are sampled more than once is,
\begin{align*}
\binom{n_r}{\theta} \cdot \theta! \cdot \frac1{(n_r)^{\theta}} &= \prod_{i=0}^{\theta-1} (1-i/n_r) \geq \exp\paren{-2 \cdot \sum_{i=0}^{\theta} i/n_r} = \exp(-(\theta-1) \cdot \theta/n_r).
\end{align*}

We do this for all three layers $A, B$ and $C$. The probability that no element is sampled twice in any layer is at least, 
\begin{align*}
	\Pr[\Euniq] \geq	\exp(-3 \cdot (\theta-1) \cdot \theta/n_r) \geq \exp(-3/n_{r-1}^2) \geq 1-1/n_{r-1},
\end{align*}
where we have used the value of $n_r = n_{r-1}^{34} = \theta^2 \cdot n_{r-1}^2$ from \Cref{eq:params-r-1} and that $n_{r-1} \geq 3$. 

Conditioned on event $\Euniq$, we argue that the distribution of $\cG_r(n_r)$ and $\cGtil_r(n_r)$ are identical. We already know that the distributions of the input of each inner vertex is marginally the same by \Cref{clm:sample-input-separate}. When $\Euniq$ happens, all the outer vertices have channel-degree at most one, and the neighbors of each inner vertex are disjoint barring $\Xstar, \Ystar, \Zstar$. This satisfies all the properties of instances sampled from $\cG_r(n_r)$. 

Let $\rIuniq$ be the indicator random variable for event $\Euniq$. 
\begin{align*}
	\tvd{\cG_r(n_r)}{\cGtil_r(n_r)} &\leq \Exp_{I \sim \rIuniq} \tvd{(\cG_r(n_r) \mid \rIuniq = I)}{(\cGtil_r(n_r) \mid \rIuniq = I)} \tag{by ``overconditioning'' as in \Cref{fact:tvd-over-conditioning}}\\
	&\leq \Pr[\rIuniq = 1] \cdot \tvd{(\cG_r(n_r) \mid \Euniq)}{(\cGtil_r(n_r) \mid \Euniq)} + \Pr[\rIuniq = 0] \\
	&=  0 + \Pr[\rIuniq = 0]  \tag{conditioned on $\Euniq$, the distributions are the same} \\
	&\leq \frac1{n_{r-1}},
\end{align*}
where, in the last inequality we have used our upper bound on the probability of $ \neg \Euniq$.
\end{proof}

We need some more notation about random variables in $\cGtil_r(n_r)$ for the later sections. 
\paragraph{Notation.}
We use $\cJall$ to denote the joint random variable containing all $\cJup{x}$ for all inner vertices $x$. Similarly, we use $\cKall, \Lall$ to denote the joint random variable $\cKupdown{x \to Y}{t, i}, \Lupdown{x \to Y}{t, i}$ respectively for all inner vertices $x$, layer $Y$ with $x \notin Y$, type $t \in [r] \cup \{0\}$, and value $i \in [n_{r-1}]$. 
We call the random varibles $\cJall, \cKall, \Lall$ as \textbf{auxiliaries}, and together we use $\aux$ to denote them.

\begin{observation}\label{obs:input-indep}
		We have the following independence properties for the inputs of vertices in graph $G \sim \cGtil_r(n_r)$:
	\begin{enumerate}[label=$(\roman*)$]
		\item \label{item:ind-prop-1} Conditioned on $(\rG_{r-1},\raux, \rids)$, the inputs of each of the inner vertices in $G$ are independent of each other. 
		\item \label{item:ind-prop-2} The input $\Nbhood{x}{Y}{}, \Nbhood{x}{Z}{}$ for any inner vertex $x$ in the graph $G$ is independent of all other random variables in $G_{r-1}$ conditioned on $\Ninner{x}{}, \rids$ and $\raux$. 
		\item \label{item:ind-prop-3} The random variable $\rG_{r-1}$ is independent of $\rids$ and $\raux$. 
	\end{enumerate}
\end{observation}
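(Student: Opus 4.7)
\textbf{Proof proposal for \Cref{obs:input-indep}.} The plan is to verify each of the three independence properties directly from the step-by-step construction of $\cGtil_r(n_r)$ given in \Cref{box:cGtilr-definition}, together with the marginal-sampling guarantee of \Cref{clm:sample-input-separate}. None of the three statements requires any nontrivial probabilistic argument; they are effectively bookkeeping lemmas about which random choices in the construction depend on which others. I will nevertheless have to be careful about what ``$\rG_{r-1}$'' versus ``$\Ninner{x}{}$'' refers to, since the former is a graph on the abstract vertex set $[n_{r-1}]$ in each layer while the latter is the two length-$n_{r-1}$ input vectors of $x$ realized inside $G$.

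For \ref{item:ind-prop-3}, I will observe that step $(1)$ of \Cref{box:cGtilr-definition} samples $\rG_{r-1}$ on a fixed abstract vertex set, and that the identities $\rids$ are then sampled in step $(2)$ of \Cref{box:cGr-definition} purely by choosing $n_{r-1}$ uniformly random elements from each layer of $[n_r]$, with a fresh source of randomness that does not look at the edges of $\rG_{r-1}$. Similarly, the auxiliary collections $\rcJall, \rcKall, \rLall$ in step \ref{item:disjt-sampling-1} are sampled from $X\setminus\Xstar, Y\setminus\Ystar, Z\setminus\Zstar$ using fresh randomness and depend on $\rG_{r-1}$ only through the vertex set, not the edges. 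Hence $(\rids,\raux)$ is a deterministic function of random bits independent of $\rG_{r-1}$, which yields the desired independence.

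For \ref{item:ind-prop-1}, the key observation is that in step \ref{item:disjt-sampling-2} the input $\Nbtogether{x}{}$ of each inner vertex $x$ is sampled using independent randomness for different $x$'s, conditioned on $(\rG_{r-1},\raux,\rids)$. Indeed, once these three are fixed, substeps (a)--(c) deterministically fix some entries of $\Nbtogether{x}{}$ (in a way that only looks at $x$'s own auxiliaries and $x$'s own row of $\rG_{r-1}$), and substep (d) samples the remaining entries independently for each $x$ via the marginal distribution guaranteed by \Cref{clm:sample-input-separate}. Conditional independence of the $\Nbtogether{x}{}$'s across $x$ then follows from the product structure of this sampling.

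For \ref{item:ind-prop-2}, the plan is to reduce to \ref{item:ind-prop-1}. The input $\Nbtogether{x}{}$ is a function only of $\Ninner{x}{}$ (which supplies the types to $\Ystar\cup\Zstar$), of $\rids$ (which labels where those inner entries sit), of $\raux$ restricted to $x$, and of the fresh randomness in substep (d). None of these depend on entries of $\rG_{r-1}$ involving vertices other than $x$, so conditioning further on the rest of $\rG_{r-1}$ adds no information about $\Nbtogether{x}{}$. I expect the only mildly subtle point---and the main obstacle, such as it is---to be phrasing ``all other random variables in $G_{r-1}$'' precisely, since $\rG_{r-1}$ as a whole is jointly distributed with the $\Ninner{\cdot}{}$'s; I will handle this by writing $\rG_{r-1} = (\Ninner{x}{}, \rG_{r-1}^{-x})$ where $\rG_{r-1}^{-x}$ denotes the part of $\rG_{r-1}$ not touching $x$, and then applying \ref{item:ind-prop-1} together with \ref{item:ind-prop-3} to conclude that $\Nbtogether{x}{}$ is independent of $\rG_{r-1}^{-x}$ given $(\Ninner{x}{},\rids,\raux)$.
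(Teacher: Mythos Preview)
Your proposal is correct and follows essentially the same approach as the paper: each part is a direct bookkeeping observation about which random choices in \Cref{box:cGtilr-definition} depend on which others, and the paper's proof makes exactly these points (in somewhat less detail). Your extra care in part~\ref{item:ind-prop-2}---decomposing $\rG_{r-1}$ into $\Ninner{x}{}$ and the rest---is not strictly needed (the paper simply notes that the remaining randomness in sampling $\Nbtogether{x}{}$ is fresh and independent of $\rG_{r-1}$ once $\Ninner{x}{},\rids,\raux$ are fixed), but it does no harm.
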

\begin{proof}
	For part \ref{item:ind-prop-1}, we know that for each inner vertex $x$, once the new identities $\ids$ and auxiliaries $\aux$ are fixed, and graph $G_{r-1}$ is fixed, its input is sampled independently of the other inner vertices in step \ref{item:disjt-sampling-2}.
	
	Part \ref{item:ind-prop-2} follows by a similar argument: once $\Ninner{x}{}, \ids$ and $\aux$ are fixed, the entire input of $x$ is fixed by randomness completely independent of $G_{r-1}$. 
	
	For part \ref{item:ind-prop-3}, it is easy to see that $\rids, \rcJall, \rcKall $ and $\rLall$ are sampled from $A, B, C$ in graph $G$, irrespective of the inner graph $\rG_{r-1}$. 
\end{proof}

\subsection{Distribution of Input and Messages}\label{subsec:dist-input-message}
In this subsection, we talk about the joint distribution of input $\cG_r(n_r)$ and the messages sent across all channels in the first round of communication. 
Let $\prot_r$ be a protocol for solving triangle detection on an input $G $ sampled from $\cG_r(n_r)$.

\paragraph{Notation.}
Let $\cDreal$ denote the distribution of the inputs and messages in the first round of $\prot_r$. For any inner vertex $x_i$, let $\msg{x}{Y}{i}$ denote the $n_r$-length vector of the messages sent by $x_i$ to its channels in $Y$. When there is no channel between $x_i$ to some vertex $y_j$, we use $\bot$ to denote the null message that $x_i$ sends.
We use $\msgrest{x}{i}$ to denote the messages $x_i$ receives from outer vertices. We use $\msgall$ to denote all the messages sent and received by all inner vertices in the first round. Note that this is also all the messages sent over the first round as the type of all pairs of outer vertices is set as $r+1$, and they can send no messages to each other in any round.

We define a distribution $\cDrealtil$ for the input graph and first round messages below. We will show that this distribution is generated when input $G$ is sampled from $\cGtil_r(n_r)$, and protocol $\prot_r$ is run over this graph instead. We will also show that distribution $\cDrealtil$ is close to $\cDreal$. 

\begin{Distribution} \label{box:cDrealtil}
\textbf{Distribution $\cDrealtil$:}
\begin{align*}
	&\rG_{r-1} \times  (\rids, \rcJall, \rcKall, \rLall) \tag{the inner graph, identities, auxiliaries}\\
	&\times (\bigtimes_{\substack{x \in \{a, b, c\}, \\  i \in [n_{r-1}]}} \rNbhood{x}{Y}{i}, \rNbhood{x}{Z}{i} \mid \rNinner{x}{i}, \raux, \rids)  \tag{the inputs of all inner vertices}\\
	&\times (\bigtimes_{\substack{x \in \{a, b, c\}, \\  i \in [n_{r-1}]}} \rmsg{x}{Y}{i}, \rmsg{x}{Z}{i} \mid \rNbhood{x}{Y}{i}, \rNbhood{x}{Z}{i}, \raux, \rids)  \tag{the messages sent by inner vertices}\\
	&\times (\bigtimes_{\substack{x \in \{a, b, c\}, \\  i \in [n_{r-1}]}} (\rmsgrest{x}{i} \mid \rNbhood{x}{Y}{i}, \rNbhood{x}{Z}{i}, \raux, \rids)) \tag{the messages sent by outer vertices to inner vertices}
\end{align*}
\end{Distribution}

\begin{observation}\label{obs:cDrealtil-input-same}
The distribution of inputs in $\cDrealtil$ is sampled from $\cGtil_r(n_r)$.
\end{observation}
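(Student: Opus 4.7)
The plan is to show that the first three factors in the definition of $\cDrealtil$ (ignoring the message factors) exactly replicate the sampling procedure of $\cGtil_r(n_r)$ given in \Cref{box:cGtilr-definition}. Intuitively, $\cDrealtil$ was written in exactly the order that the sampling of $\cGtil_r(n_r)$ dictates: first the inner graph, then the identities and auxiliaries, then the per-inner-vertex inputs. The bulk of the work is just matching each factor to the corresponding step of the distribution and justifying that the conditional factorization is valid; the key tools are the independence properties collected in \Cref{obs:input-indep}.

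First I would check the first factor. The term $\rG_{r-1}$ is sampled marginally in $\cDrealtil$ according to $\cG_{r-1}(n_{r-1})$, which is exactly step (1) of \Cref{box:cGtilr-definition}. Next, the second factor $(\rids, \rcJall, \rcKall, \rLall)$ is sampled independently of $\rG_{r-1}$; by \Cref{obs:input-indep}\ref{item:ind-prop-3} this matches the joint sampling of $(\Xstar, \Ystar, \Zstar)$ and the disjoint auxiliary collections in step (2) of \Cref{box:cGtilr-definition}, since both samplings depend only on the outer vertex labels in $A, B, C$ and not on $G_{r-1}$.

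The third factor is where the actual per-vertex input is built. I would invoke \Cref{obs:input-indep}\ref{item:ind-prop-1} to justify writing the joint conditional distribution of all inner-vertex inputs as a product over inner vertices given $(\rG_{r-1}, \raux, \rids)$. Then \Cref{obs:input-indep}\ref{item:ind-prop-2} lets me replace the conditioning on $\rG_{r-1}$ by conditioning only on $\rNinner{x}{i}$ for each inner vertex $x_i$, which is exactly what appears in $\cDrealtil$. Finally, \Cref{clm:sample-input-separate} ensures that each such conditional is precisely the input distribution prescribed by step (3) of \Cref{box:cGtilr-definition}. Dropping the last two factors of $\cDrealtil$ (the messages) only marginalizes those variables out and does not affect the input distribution.

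I do not expect a genuine obstacle here: the observation is essentially a bookkeeping statement that the factorization in $\cDrealtil$ was deliberately chosen to align with the construction of $\cGtil_r(n_r)$. The only mild care is to verify that the order of conditioning in each factor of $\cDrealtil$ is consistent with the sampling order in \Cref{box:cGtilr-definition}, which is exactly what the three parts of \Cref{obs:input-indep} and \Cref{clm:sample-input-separate} guarantee.
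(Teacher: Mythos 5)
Your proposal is correct and follows essentially the same route as the paper's own proof: both arguments apply \Cref{obs:input-indep}-\ref{item:ind-prop-3} to match the $(\rids,\raux)$ factor, then \ref{item:ind-prop-1} for the per-vertex product factorization, then \ref{item:ind-prop-2} to reduce the conditioning on $\rG_{r-1}$ to conditioning on $\rNinner{x}{i}$. Your extra appeal to \Cref{clm:sample-input-separate} and the remark about marginalizing out the message factors are harmless elaborations, not a different approach.
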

\begin{proof}
	In distribution $\cGtil_r(n_r)$, we know that all the $\rids, \rcJall$, $\rcKall $ and $\rLall$ are sampled conditioned on each other. And, they are independent of $\rG_{r-1}$ from \Cref{obs:input-indep}-\ref{item:ind-prop-3}. Hence, we have that the distributions of $\rG_{r-1}$, $\rids$ and $\raux$ are exactly as in $\cGtil_r(n_r)$ in $\cDrealtil$.
	
	The input of the inner vertices are independent of each other conditioned on $\rG_{r-1}$, $\rids$ and $\raux$ from \Cref{obs:input-indep}-\ref{item:ind-prop-1}. Therefore, they can be sampled separately as in $\cDrealtil$.
	
	Lastly, we know that for any inner vertex $x$, its input is independent of $\rG_{r-1}$ when conditioned on $\rNinner{x}{}$, $\rids$ and $\raux$ by \Cref{obs:input-indep}-\ref{item:ind-prop-2}, so the other random variables in $\rG_{r-1}$ can be ignored also when sampling these inputs. 
\end{proof}

\begin{observation}\label{obs:cDrealtil-messages}
	For any graph $G$ sampled from $\cG_r(n_r)$, the distribution of all the first round messages is the same in $\cDreal$ and $\cDrealtil$ conditioned on the input graph being $G$.
\end{observation}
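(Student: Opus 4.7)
The plan is to show that, conditioned on the input graph $G$, the joint distribution of all first-round messages in both $\cDreal$ and $\cDrealtil$ coincides with the distribution produced simply by running $\prot_r$ on $G$, from which the observation is immediate. The key structural fact is that any vertex's first-round message is a randomized function only of that vertex's own input together with its available randomness tapes (public, pair-wise, and private), and in particular does not depend on global quantities such as $\raux$ or any structural labels beyond the input itself.

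First I would unpack $\cDreal$: once we fix $G$, the inputs of every inner vertex are determined by reading off $\rNbhood{x}{Y}{i},\rNbhood{x}{Z}{i}$ from $G$, and the inputs of every outer vertex are also determined since by \Cref{obs:G_r-degree} each outer vertex has channel-degree at most one (so its input is either entirely type $r{+}1$ or has exactly one non-trivial channel, both cases being forced by $G$). Given these fixed inputs, the joint distribution of all first-round messages is by definition the one that $\prot_r$ prescribes on input $G$.

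Next I would unpack $\cDrealtil$. Here $G$ still fixes all vertex inputs, and since by \Cref{obs:G_r-degree} the inner vertices are exactly those with channel-degree greater than one, $G$ also fixes $\rids$; the auxiliary variables $\raux$ remain random. The crucial point is that each conditional factor in $\cDrealtil$ of the form $\rmsg{x}{Y}{i},\rmsg{x}{Z}{i}\mid \rNbhood{x}{Y}{i},\rNbhood{x}{Z}{i},\raux,\rids$ (and similarly the one for $\rmsgrest{x}{i}$) is by definition inherited from executing $\prot_r$; because the protocol never reads $\raux$, these factors equal $\rmsg{x}{Y}{i},\rmsg{x}{Z}{i}\mid \rNbhood{x}{Y}{i},\rNbhood{x}{Z}{i}$. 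Marginalizing out $\raux$ therefore has no effect on the conditional message distribution given $G$, so the joint distribution of messages in $\cDrealtil$ given $G$ also matches $\prot_r$ on $G$.

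The one place I would take care is the pair randomness, which correlates $\rmsg{u}{v}{}$ with $\rmsg{v}{u}{}$ across different inner vertices $u,v$; accordingly I would read the $\bigtimes$ in the definition of $\cDrealtil$ as assembling the joint conditional distribution inherited from $\prot_r$ rather than as an independent product across vertices. With that reading, no further calculation is needed beyond matching the factors in $\cDreal$ and $\cDrealtil$ term-by-term, and the main obstacle is simply making the bookkeeping of which random variables are fixed by $G$ explicit rather than any nontrivial information-theoretic argument.
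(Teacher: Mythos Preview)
Your overall plan is sound, but you overlooked the one-line simplification that the paper actually uses: $\prot_r$ is \emph{deterministic} (this is the hypothesis of \Cref{thm:hard-dist}, and the paper's proof of the observation opens with exactly this). Once you use determinism, every first-round message is a function of the sender's input alone, so conditioning on $G$ collapses both $\cDreal$ and $\cDrealtil$ to the same point mass on messages; there is nothing to say about randomness tapes, $\raux$, or the meaning of $\bigtimes$.

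Your worry about pair randomness is where this matters. The $\bigtimes$ in \Cref{box:cDrealtil} really is an independent product across inner vertices, not a joint conditional, so your proposed ``reading'' is not a legitimate move. If $\prot_r$ were genuinely randomized with shared pair tapes, messages $\rmsg{x}{Y}{i}$ and $\rmsg{y}{X}{j}$ would be correlated in $\cDreal$ but independent in $\cDrealtil$, and the observation would actually fail. Determinism is precisely what makes the independent product harmless (each factor is a Dirac mass), so you should invoke it explicitly rather than reinterpret the definition.
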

\begin{proof}
	The protocol $\prot$ is deterministic, hence the messages sent by any vertex are only a function of its input. For all the inner vertices, in $\cDrealtil$, the messages are sampled conditioned on their entire input, and are distributed the same as in $\cDreal$. 
	
	For outer vertices, when input graph $G$ lies in support of $\cG_r(n_r)$, we know that the channel-degree of all these vertices is at most one, by \Cref{obs:G_r-degree}. For any outer vertex $u$ connected to inner vertex $x$, the entire input given to $u$ is known to $x$. Thus, $x$ can sample the messages that $u$ would send to $x$ only with its input. 
\end{proof}

\begin{claim}\label{clm:Dreal-Drealtil-close}
	Distributions $\cDreal$ and $\cDrealtil$ are close to each other:
	\[
		\tvd{\cDreal}{\cDrealtil} \leq 1/n_{r-1}.
	\]
\end{claim}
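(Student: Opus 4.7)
The plan is to reduce the bound on the joint distribution of inputs and messages to the bound on the input distributions alone, which has already been established in \Cref{clm:cG-cGtil-close}. I will do this via an explicit coupling.

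First, I would observe that the marginal distribution of the input graph is $\cG_r(n_r)$ in $\cDreal$ (by definition) and $\cGtil_r(n_r)$ in $\cDrealtil$ (by \Cref{obs:cDrealtil-input-same}). Hence \Cref{clm:cG-cGtil-close} yields a coupling of inputs $(\rG, \rG')$ with $\rG \sim \cG_r(n_r)$ and $\rG' \sim \cGtil_r(n_r)$ such that $\Pr[\rG = \rG'] \geq 1 - 1/n_{r-1}$. Note that whenever $\rG = \rG'$ under this coupling, the common value lies in the support of $\cG_r(n_r)$.

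Next I would extend the coupling to messages. Given $\rG$, sample first-round messages $\rM$ according to $\cDreal$ conditioned on input $\rG$; given $\rG'$, sample $\rM'$ according to $\cDrealtil$ conditioned on input $\rG'$. On the event $\rG = \rG'$, the input lies in $\textnormal{supp}(\cG_r(n_r))$, so by \Cref{obs:cDrealtil-messages} the two conditional distributions of messages are identical, and we may couple them so that $\rM = \rM'$. On the complement event (which has probability at most $1/n_{r-1}$), we couple $\rM$ and $\rM'$ arbitrarily from their respective conditional distributions. By construction, $(\rG, \rM) \sim \cDreal$ and $(\rG', \rM') \sim \cDrealtil$, and
\[
    \Pr\!\left[(\rG, \rM) \neq (\rG', \rM')\right] \; \leq \; \Pr[\rG \neq \rG'] \; \leq \; 1/n_{r-1}.
\]
The standard characterization of total variation distance via couplings then gives $\tvd{\cDreal}{\cDrealtil} \leq 1/n_{r-1}$, as desired.

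There is no real obstacle here beyond carefully invoking the right earlier claims: \Cref{clm:cG-cGtil-close} handles the input-level TV bound, and the determinism of $\prot_r$ together with \Cref{obs:cDrealtil-messages} ensures that the conditional message distributions agree on inputs drawn from $\textnormal{supp}(\cG_r(n_r))$. The subtlety to flag is that in $\cDrealtil$ the outer-vertex messages $\rmsgrest{x}{i}$ are \emph{not} determined by the input alone (since outer vertices in $\cGtil_r(n_r)$ may have channel-degree greater than one, in which case $x_i$ does not see all of the outer neighbor's input); but this only matters on inputs outside $\textnormal{supp}(\cG_r(n_r))$, which is precisely the slack budgeted by the $1/n_{r-1}$ probability of the bad event in the input coupling.
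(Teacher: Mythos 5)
Your proof is correct and follows essentially the same route as the paper's: both decompose the total variation distance into an input-level contribution (bounded by \Cref{clm:cG-cGtil-close}) plus a message-level contribution that vanishes on $\supp{\cG_r(n_r)}$ by \Cref{obs:cDrealtil-messages}. The only difference is cosmetic---you phrase the decomposition as an explicit coupling, while the paper applies the weak chain rule for total variation distance (\Cref{fact:tvd-chain-rule}); these are dual formulations of the same fact.
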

\begin{proof}
	By \Cref{obs:cDrealtil-input-same} and \Cref{fact:tvd-chain-rule}, 
	\begin{align*}
		\tvd{\cDreal}{\cDrealtil} &\leq \tvd{\cG_r(n_r)}{\cGtil_r(n_r)} + \Exp_{G \sim \cG_r(n_r)} \tvd{\cDreal(\rmsgall \mid G)}{\cDrealtil(\rmsgall \mid G)}  \\
		&\leq \frac1{n_{r-1}} +  \Exp_{G \sim \cG_r(n_r)} \tvd{\cDreal(\rmsgall \mid G)}{\cDrealtil(\rmsgall \mid G)}   = \frac1{n_{r-1}} ,
	\end{align*}
where the last inequality is by~\Cref{clm:cG-cGtil-close} and the equality follows from \Cref{obs:cDrealtil-messages}.
\end{proof}

\subsection{Proof of \Cref{thm:hard-dist} Barring Round Elimination}\label{sec:proof-hard-dist}

In the next section, we construct a round elimination protocol, with the following parameters. 

\begin{lemma}\label{lem:round-elim}
	For any $s,r \geq 1$ and $\delta \in (0,1)$, 
	given a deterministic protocol $\prot_r$ for instances $G_r \sim \cG_r(n_r)$, with the following parameters:
	\begin{equation*}
		\round(\prot_r) = r \qquad \bwidth(\prot_r) = s \qquad \suc(\prot_r, \cG_r(n_r))\geq \delta, 
	\end{equation*}
	we can construct a deterministic protocol $\prot_{r-1}$ which takes instances  $G_{r-1} \sim \cG_{r-1}(n_{r-1})$ and has the following parameters:
	\begin{equation*}
		\round(\prot_{r-1}) = r-1 \qquad \bwidth(\prot_{r-1}) \leq s \qquad \suc(\prot_{r-1}, \cG_{r-1}(n_{r-1}))\geq \delta - \frac1{n_{r-1}} - 15 \sqrt{\frac{s}{n_{r-1}}}.
	\end{equation*}
\end{lemma}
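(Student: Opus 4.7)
}

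The plan is to construct $\prot_{r-1}$ by following the round elimination scheme sketched at the end of \Cref{sec:our-ideas}. Given input $G_{r-1} \sim \cG_{r-1}(n_{r-1})$, the players of $\prot_{r-1}$ will collectively sample a graph $G_r$ whose inner subgraph $\Gstar$ equals (a relabeling of) $G_{r-1}$, together with a first-round message transcript $M^{(1)}$ of $\prot_r$ on $G_r$; they then simulate rounds $2,\dots,r$ of $\prot_r$ directly, using their own $(r-1)$ real rounds. Concretely, the sampling is done in three stages: first, the players use \emph{public} randomness to sample $(\Astar,\Bstar,\Cstar)$ (which maps $G_{r-1}$ into $\Gstar$), the auxiliaries $(\rcJall,\rcKall,\rLall)$ from \Cref{box:cGtilr-definition}, and the public messages $M_{\textnormal{public}} = \{M^{(1)}_x[\Lupdown{x\to Y}{t,i}] : x,Y,t,i\}$; second, every pair $(x,y) \in \Gstar$ uses \emph{pair} randomness to sample the messages $\rmsg{x}{y}{}, \rmsg{y}{x}{}$ conditioned on $type(x,y)$ and all publicly sampled variables; third, each vertex $x$ uses \emph{private} randomness to complete the remainder of its own $\Nbtogether{x}{}$ to a sample from $\cGtil_r(n_r)$, consistent with all that is already sampled. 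The outer vertices of $\Gstar^c$ can be simulated by their unique inner neighbor since outer channel-degree is at most one (\Cref{obs:G_r-degree}).

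The analysis proceeds by bounding the total variation distance between the joint distribution of $(G_r, M^{(1)})$ produced by the above sampling and the true distribution $\cDrealtil$ of $(G_r, M^{(1)})$ when $G_r \sim \cGtil_r(n_r)$ and $\prot_r$ is executed honestly. Since $\tvd{\cG_r(n_r)}{\cGtil_r(n_r)} \leq 1/n_{r-1}$ by \Cref{clm:cG-cGtil-close} and $\tvd{\cDreal}{\cDrealtil} \leq 1/n_{r-1}$ by \Cref{clm:Dreal-Drealtil-close}, it suffices to compare the simulated distribution to $\cDrealtil$. The comparison splits by the chain rule into three conceptually distinct TVD contributions, corresponding to the three independence assumptions the simulation imposes: (i) the pair-randomness sampling of $\rmsg{x}{y}{}, \rmsg{y}{x}{}$ pretends these are independent of $\rG^*_{-(x,y)}$ and of all other messages $M^{(1)}_{x\to -y}, M^{(1)}_{y\to -x}$ (conditioned on the public variables); (ii) the public-randomness sampling of $M_{\textnormal{public}}$ pretends these are independent of $\rG^*$; (iii) the private-randomness completion pretends consistency with $\cGtil_r(n_r)$ given what was already sampled. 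By Pinsker's inequality (\Cref{fact:pinskers}) and \Cref{fact:kl-info}, each of these TVDs is controlled by the square root of a conditional mutual information.

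The core of the proof is to bound those three mutual informations using the roles of the three families of auxiliaries, exactly as outlined in \Cref{sec:our-ideas}. For (i), splitting via chain rule into correlation of $\rmsg{x}{y}{}$ with $\rG^*_{x,-y}$ and with $\rM^{(1)}_{x\to -y}$: the $\rK$-collections give $k_r$ equally plausible candidates for $\rG^*_{x,-y}$ from $x$'s viewpoint, yielding a bound of $\bwidth(\prot_r)/k_r$ on the first piece; the $\rL$-set makes $\rid{y}$ uniform over $\rL\cup\rid{y}$ (all with the same type to $x$), so that conditioning on $\rM^{(1)}_x[\rL\cap\{<\rid{y}\}]$ and averaging over the location of $\rid{y}$ bounds the second piece by $(n_{r-1}-1)\cdot\bwidth(\prot_r)/(\ell_r+1)$. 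For (iii), the $\rcJall$-collections play the role of the $k_r$ many sub-instances in \Cref{sec:re-background}: $\Gstar$ is one of $\alpha_r$ equally plausible sub-instances from the joint public viewpoint, so the information that $M_{\textnormal{public}}$ carries about $\rG^*$ is at most $\bwidth(\prot_r)$ times the total number of public messages divided by $\alpha_r$. For (ii), the private-completion step is statistically exact conditioned on $\Euniq$, which costs an additional $1/n_{r-1}$ by \Cref{clm:cG-cGtil-close}. Summing the square roots over all $O(n_{r-1}^2)$ pairs, and using the polynomial gap $\ell_r = \mathrm{poly}(n_{r-1})$, $k_r = \mathrm{poly}(\ell_r)$, $\alpha_r = \mathrm{poly}(k_r)$ from \Cref{eq:params-r-1} and \Cref{eq:params-alphabetagamma}, each contribution is at most $O(\sqrt{s/n_{r-1}})$, and together they are bounded by $15\sqrt{s/n_{r-1}}$.

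The main obstacle I anticipate is the pair-randomness comparison (i): the naive computation has cross-terms between $\rmsg{x}{y}{}$ and messages sent by the same vertex $x$ to other inner neighbors, and simply conditioning on $\rG^*_x$ is \emph{not} enough to decouple them (as shown by the shared-private-coin example in \Cref{sec:attempt2}). The crucial trick is the joint use of $\rL$-sets together with partial public messages $M^{(1)}_x[\rL\cap\{<\rid{y}\}]$: conditioning on the latter effectively fixes enough of the private randomness of $x$ on messages it sends to vertices of type $type(x,y)$, and the uniformity of $\rid{y}$ in $\rL\cup\rid{y}$ allows the chain-rule averaging argument of \Cref{sec:attempt3}. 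Making this bookkeeping precise across all pairs $(x,y)$ simultaneously, while keeping the pair/public/private randomness boundary clean in the formal sampling description, is where most of the careful work must go. Once the three mutual informations are bounded as above, correctness of the simulation gives $\suc(\prot_{r-1},\cG_{r-1}(n_{r-1})) \geq \suc(\prot_r,\cG_r(n_r)) - 1/n_{r-1} - 15\sqrt{s/n_{r-1}}$, as claimed.
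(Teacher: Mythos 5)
Your proposal captures the paper's actual proof: the construction via public/pair/private randomness, the role of each auxiliary family ($\rcJall$ for decoupling public messages from inner inputs, $\rcKall$ for decoupling inner messages from the rest of the inner input given $type(x,y)$, $\rLall$ plus the publicly sampled partial messages for decoupling the messages of $x$ from one another via the uniformity of $\rid{y}$ in $\rL\cup\rid{y}$), the Pinsker+KL-to-MI reduction, and the chain-rule averaging — all match. The only difference is organizational: the paper introduces two explicit hybrids $\cH_1,\cH_2$ so that $\cDrealtil\to\cH_1\to\cH_2\to\cDfake$ isolates the $L$-, $J$-, and $K$-contributions in turn (Lemmas~\ref{lem:hybrid-1-inner-msgs}, \ref{lem:cH1-cH2-public-msgs}, \ref{lem:cH2-cDfake-inner-msgs}), whereas you bundle the $K$- and $L$-contributions into a single pair-randomness comparison; this is a valid reorganization and would give the same bounds. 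One small caveat in your write-up: the labels (ii) and (iii) are swapped between their definitions and the paragraph bounding them, and the private completion step is actually statistically exact (\Cref{obs:rest-input-correct}) — the $1/n_{r-1}$ term is entirely the $\cDreal$-vs-$\cDrealtil$ cost that you already account for via \Cref{clm:Dreal-Drealtil-close}, so it should not be double-counted as a separate hybrid contribution.
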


The proof of \Cref{lem:round-elim} is the focus of \Cref{sec:round-elim-descr}. We now prove \Cref{thm:hard-dist} using \Cref{lem:round-elim}.

\begin{proof}[Proof of \Cref{thm:hard-dist}]
	Assume towards a contradiction that there exists a protocol $\prot_r$ with $r$-rounds, 
	\begin{equation*}
		\suc(\prot_r, \cG_r(n_r)) \geq 15/16, \qquad \textnormal{ and }  \qquad \bwidth(\prot_r) = s < n_r^{(1/2) \cdot (1/34^r)} \cdot \frac1{(480)^2}.
	\end{equation*}
	We prove the theorem by repeatedly applying \Cref{lem:round-elim} for $r$ times on protocol $\prot_r$, to get a protocol $\prot_0$ for $\cG_0(n_0)$. 
	The value of $n_0$, using \Cref{eq:params-r-1} is $n_0 = {n_1}^{1/34} = \cdots = {n_r}^{1/34^r}$. 	From the large value of $n_r$ in the statement of the theorem, we have, 
	\begin{equation}\label{eq:inter-thm-1}
		n_0 = n_r^{1/34^r} > r^4. 
	\end{equation}
	The probability of success of $\prot_0$ is at least, 
	\begin{align*}
		&\suc(\prot_r, \cG_r(n_r)) - \sum_{\ell = 1}^r \frac1{n_{r-\ell}} - 15 \cdot \sqrt{s} \cdot  \sum_{\ell = 1}^r \frac1{(n_{r-\ell})^{1/2}} \\
		&\geq \frac{15}{16} - \sum_{\ell = 1}^r \frac1{n_{r-\ell}} - 15 \cdot \sqrt{s} \cdot  \sum_{\ell = 1}^r \frac1{(n_{r-\ell})^{1/2}} \tag{by value of $\suc(\prot_r, \cG_r(n_r))$ from the theorem statement} \\
		&\geq  \frac{15}{16} - r \cdot (\frac1{n_0}) - 15 \cdot \sqrt{s} \cdot  r \cdot (\frac1{n_0^{1/2}}) \tag{as $n_0, n_1, \ldots, n_{r-1}$ are increasing} \\
		&\geq  \frac{15}{16} - \frac{1}{32} - 15 \cdot \sqrt{s} \cdot  r \cdot (\frac1{n_0^{1/2}}) \tag{as $r  < 32 n_0$ from \Cref{eq:inter-thm-1}} \\
		&\geq \frac{29}{32} - 15 \cdot \frac1{480} \cdot r \cdot (1/n_0^{-0.25}) \tag{by value of $s$, and $n_0 = n_r^{1/34^r}$} \\
		&> \frac{29}{32} - \frac1{32} \tag{as $r < {n_0}^{1/4}$ from \Cref{eq:inter-thm-1}} > \frac78.
	\end{align*}
	This contradicts \Cref{clm:0-round-prot}, thereby proving the theorem.
\end{proof}

\clearpage

\section{Round Elimination: Description}
\label{sec:round-elim-descr}

In this section, we give the desired protocol $\prot_{r-1}$ in \Cref{lem:round-elim} for solving instances $G_{r-1} \sim \cG_{r-1}(n_{r-1})$, and provides parts of the analysis of the protocol.
This protocol constructs an instance $G_r \sim \cG_r(n_r)$ and embeds its input $G_{r-1}$ into it. It also samples the first round messages that protocol $\prot_r$ sends using the different sources of randomness in our model (see \Cref{subsec:prelim-model}).

We want to sample the input and messages from $\cDrealtil$. As we have established in \Cref{clm:Dreal-Drealtil-close}, this distribution is close to the correct distribution $\cDreal$. However, sampling from $\cDrealtil$ is not feasible either. 

We sample from a distribution we label as $\cDfake$, and show that this is close in total variation distance to $\cDrealtil$ and thus $\cDreal$. 
The description of $\prot_{r-1}$ is given in steps where the random variables in $\cDrealtil$ and $\cDfake$ are sampled progressively.

\subsection{Key Steps of the Protocol}
In this subsection, we describe some prominent steps of the protocol $\prot_{r-1}$.

\subsubsection{Public Random Variables}
Given a graph $G_{r-1} \sim\cG_{r-1}(n_{r-1})$, it is easy to sample new identites for each of the vertices using public randomness. It is also easy to sample the auxiliary random variables. This is the first step of the protocol. 
To break the correlation between messages that each inner vertex sends, we also sample some messages publicly. 

\begin{Protocol}\label{box:r-1-step1-ids}
	\textbf{Step $\bm{(1)}$ of protocol $\prot_{r-1}$ for $G_{r-1} \sim \cG_{r-1}(n_{r-1})$ given $\prot_{r}$ for $G_{r} \sim \cG_r(n_r)$}:
	\begin{enumerate}[label=$(\arabic*)$]
		\item Sample the random variables $\rids$ and also, $\rcJall, \rcKall$,  and $\rLall$ which jointly form $\raux$ using public randomness. 
		\item For $i \in [n_{r-1}]$, the vertex with identity $x_i$ in $G_{r-1}$ takes on the identity of $\xstar_i \in \Xstar$ in random variable $\rids$. It changes the identities of all its neighbors in $G_{r-1}$ correspondingly.
		\item For each inner vertex $x$ and layer $Y$ with $x \notin Y$, set $\Nbhood{x}{Y}{}[\Lupdown{x \to Y}{t, i}]$ to all be type $t$ for $t \in [r] \cup \{0\}$ and $i \in [n_{r-1}]$. 
		\item For each inner vertex $x$, and layer $Y$ with $x \notin Y$, sample the message that $x$ sends to vertex $y_j \in \Lupdown{x \to Y}{t, i} \cap \{<\ystar_i\}$, for all types $t \in [r] \cup \{0\}$, and $i \in [n_{r-1}]$, with public randomness.
	\end{enumerate}
	
	We use $\Npub{x}$ and $\Mpub{x}$ to denote the input and messages sent by each inner vertex $x$ that are sampled publicly in this protocol. 
\end{Protocol}

\begin{observation}\label{obs:public-msgs-number}
	The total number of messages sampled in $\Mpub{x}$ for each inner vertex $x$ is at most $2 \cdot \gamma_r \cdot (r+1) \cdot n_{r-1}$.
\end{observation}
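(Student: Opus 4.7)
The plan is a direct counting argument over the quantifiers in Step~(4) of Protocol~\ref{box:r-1-step1-ids}, which is the only step in this stage of $\prot_{r-1}$ that contributes messages to $\Mpub{x}$. First I would note that Steps~(1)--(3) only sample identities, auxiliaries, and input \emph{types}, not messages, so $\Mpub{x}$ is populated entirely in Step~(4). That step iterates, for each inner vertex $x$, over the following parameters: the other layer $Y$ with $x \notin Y$ (2 choices), the channel type $t \in [r] \cup \set{0}$ ($r+1$ choices), and the index $i \in [n_{r-1}]$ ($n_{r-1}$ choices). For each such $(Y, t, i)$, one message from $x$ is sampled for every vertex in the intersection $\Lupdown{x \to Y}{t, i} \cap \set{<\ystar_i}$.

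Next I would bound the size of each such intersection by the size of $\Lupdown{x \to Y}{t,i}$, which is exactly $\gamma_r$ by Step~(2c) of \Cref{box:cGtilr-definition}. Multiplying the three indexing factors by the per-slot bound $\gamma_r$ gives
\[
\card{\Mpub{x}} \leq 2 \cdot (r+1) \cdot n_{r-1} \cdot \gamma_r,
\]
which is exactly the claimed bound.

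There is no real obstacle; the argument is pure bookkeeping. The only thing I would double-check is that the intersection $\Lupdown{x \to Y}{t, i} \cap \set{<\ystar_i}$ is being counted at most once per $(x,Y,t,i)$ tuple (not once per vertex in the intersection across multiple tuples), which is immediate since the collections $\Lupdown{x \to Y}{t,i}$ are indexed by $(x,Y,t,i)$ and the sampling in Step~(4) is performed one tuple at a time.
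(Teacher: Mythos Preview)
Your proposal is correct and matches the paper's own proof essentially verbatim: both are straightforward counting arguments multiplying the $2$ layer choices, the $r+1$ type choices, the $n_{r-1}$ index choices, and the per-slot bound $\card{\Lupdown{x \to Y}{t,i}} = \gamma_r$. The paper's version is terser and omits your preliminary remark about Steps~(1)--(3) not contributing messages, but the substance is identical.
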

\begin{proof}
The size of set $\Lupdown{x \to Y}{t, i}$ is exactly $\gamma_r $ for all $i \in [n_{r-1}]$ and type $t \in [r] \cup \{0\}$, and other layer $Y$ with $x \notin Y$. At most $\gamma_r $ many messages are sampled for each of the $r+1$ types and $n_{r-1}$ values of $i$, and the two choices of $Y$.
\end{proof}

We do not have the complete description of $\cDfake$ yet, this will be clear as we formalize more of protocol $\prot_{r-1}$. But we can prove that what we have sampled so far is the same way as in $\cDrealtil$. 

\begin{observation}\label{obs:real-fake-ids-same}
	The distribution of $\rG_{r-1}, \rids$ and $\raux$ are the same in $\cDrealtil$ and $\cDfake$. 
\end{observation}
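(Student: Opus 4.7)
The goal is to show that $(\rG_{r-1}, \rids, \raux)$ is drawn from the same joint distribution in both $\cDrealtil$ and $\cDfake$. My plan is to verify that, in each case, this triple factors as a product of two independent pieces whose marginals coincide. The key ingredient will be~\Cref{obs:input-indep}-\ref{item:ind-prop-3}, which states that in $\cGtil_r(n_r)$ the inner graph $\rG_{r-1}$ is independent of the identities $\rids$ and the auxiliaries $\raux$.

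First I would unpack $\cDrealtil$ using~\Cref{box:cDrealtil}: its first two factors are precisely $\rG_{r-1}$ followed by $(\rids, \rcJall, \rcKall, \rLall)$, sampled from their joint marginal in $\cGtil_r(n_r)$. Applying the independence from~\Cref{obs:input-indep}-\ref{item:ind-prop-3}, this joint marginal factors as $\rG_{r-1} \sim \cG_{r-1}(n_{r-1})$ times $(\rids, \raux)$ drawn from its marginal in $\cGtil_r(n_r)$.

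Next I would trace through $\cDfake$ as generated by~\Cref{box:r-1-step1-ids}. The graph $\rG_{r-1}$ is simply the input to $\prot_{r-1}$, which by assumption is drawn from $\cG_{r-1}(n_{r-1})$. In Step~$(1)$ the protocol samples $\rids$ and $\raux$ using \emph{public} randomness, which by the definition of our model (\Cref{subsec:prelim-model}) is independent of every player's private input and hence of $\rG_{r-1}$. Moreover, this public sampling is designed to mirror Steps~$(1)$--$(2)$ of~\Cref{box:cGtilr-definition} restricted to these variables, so the induced marginal of $(\rids, \raux)$ matches the corresponding marginal under $\cGtil_r(n_r)$.

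Combining these two pieces, in $\cDfake$ the triple $(\rG_{r-1}, \rids, \raux)$ is sampled as a product of the same two marginals as in $\cDrealtil$, so the joint distributions coincide. I do not expect any real obstacle here; the only point to get right is that the relabeling step of the protocol---where each vertex $x_i$ takes the identity $\xstar_i$---is a deterministic bijection applied after $\rids$ has been sampled, and therefore does not alter the distribution of the underlying graph $\rG_{r-1}$, which we continue to view in its original $[n_{r-1}]$ labeling.
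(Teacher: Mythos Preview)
Your proposal is correct and follows essentially the same approach as the paper: both arguments hinge on \Cref{obs:input-indep}-\ref{item:ind-prop-3} to factor $(\rG_{r-1}, \rids, \raux)$ as a product in $\cDrealtil$, and then observe that in $\cDfake$ the public sampling of $(\rids, \raux)$ is done independently of the input $\rG_{r-1}$ with the same marginal. Your write-up is somewhat more detailed (e.g., the remark on relabeling), but the logical structure is identical.
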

\begin{proof}
	Random variable $\rG_{r-1}$ is the same in $\cDrealtil$ and $\cDfake$, as it is just sampled from $\cG_{r-1}(n_{r-1})$.
	
	In step (1) of \Cref{box:r-1-step1-ids}, the random variables $\rids$ and $\raux$ are jointly sampled independently of $\rG_{r-1}$. However, even in $\cDrealtil$, $\rG_{r-1}$ is independent of $ \rids, \raux$ by \Cref{obs:input-indep}-\ref{item:ind-prop-3}. 
\end{proof}

\subsubsection{Sampling Messages Between Inner Vertices}

We now use pair randomness to sample messages between inner vertices. 

In the description of $\cG_r(n_r)$ from \Cref{box:cGr-definition}, the type of each pair $(x, y)$ in $G_{r-1}$ is unchanged. The types of all these vertices is upper bounded by $r$ in graph $G_{r-1} \sim \cG_{r-1}(n_{r-1})$. Hence, all pairs $(x, y) \in G_{r-1}$ are present in $\cC_r$ of instance $G$ from $\cG_r(n_r)$, and they can send messages to each other in the first round of $\prot_r$. We describe how to sample these messages now. 



\begin{Protocol}\label{box:r-1-step2-inner-msgs} 
	\textbf{Step $\bm{(2)}$ of protocol $\prot_{r-1}$ for $G_{r-1} \sim \cG_{r-1}(n_{r-1})$ given $\prot_{r}$ for $G_{r} \sim \cG_r(n_r)$}:
	\vspace{2mm}
	
For each pair $x,y \in G_{r-1}$, with $y \notin X$,  sample the message that $x$ sends to $y$ in the first round conditioned on $\ids$, $\typeval{x, y}, \aux, \Npub{x}$ and $\Mpub{x}$ using pair randomness between $x$ and $y$.

\bigskip

We use $\Minner{x}$ to denote all the messages that inner vertex $x$ sends to some other inner vertex $y \in G_{r-1}$ and $y \notin X$. We use $\Minner{x}(y)$ to denote the message that $x$ sends to the inner vertex $y \in G_{r-1}$, and 
$\Minner{x}(-y)$ to denote all the messages that $x$ sends to other inner vertices that are not $y$. These messages are all sampled by this protocol.  
\end{Protocol}

\subsubsection{Sampling Rest of the Input and Messages}

We can now sample the rest of the random variables in distribution $\cDrealtil$.

\begin{Protocol}\label{box:r-1-step3-input-rest}
			\textbf{Step $\bm{(3)}$ of protocol $\prot_{r-1}$ for $G_{r-1} \sim \cG_{r-1}(n_{r-1})$ given $\prot_{r}$ for $G_{r} \sim \cG_r(n_r)$}:
	\vspace{2mm}

Each inner vertex $x_i \in G_{r-1}$ does the following:
		\begin{enumerate}[label=(\alph*)]
			\item Sample the rest of its input from distribution $\cDrealtil$ and the remaining messages that $x_i$ sends to its neighbors with private randomness conditioned on $\ids$, $\aux$, $\Npub{x_i}$, $\Mpub{x_i}$, $\Minner{x_i}$, and its inner input $\Ninner{x}{i}$. 
	
			\item Sample all the messages that $x_i$ receives from outer vertices, conditioned on the entire input of $x_i$, $\ids$ and $\aux$.
		\end{enumerate}
		
		We use $\Nrest{x_i}$ to denote all the random variables that $x_i$ samples in step (3) of  \Cref{box:r-1-step3-input-rest} using private randomness. This is comprised of the rest of the input of $x_i$, the messages it sends to outer neighbors, and all the messages it receives from its outer neighbors. 
\end{Protocol}

We can prove that the random variables $\Nrest{x_i}$ for each inner vertex $x_i$  are sampled from the correct distribution $\cDrealtil$.

\begin{observation}\label{obs:rest-input-correct}
In $\cDfake$, conditioned on any choice of $\rids$, $\raux$, $\rG_{r-1}$, $\rNpub{x}, \rMpub{x}$ and $\rMinner{x}$ for each inner vertex $x$, $\rNrest{y}$ for all inner vertices $y$ are jointly sampled from $\cDrealtil$. 
\end{observation}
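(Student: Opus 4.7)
My plan is to verify the claim in two stages: first, reduce the joint statement across all inner vertices to a per-vertex statement by exhibiting the appropriate conditional independence, and then check that the per-vertex sampling in Step (3) matches the corresponding conditional of $\cDrealtil$.

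For the first stage, I will argue that under $\cDrealtil$, conditioned on $(\rG_{r-1},\rids,\raux,\{\rNpub{x},\rMpub{x},\rMinner{x}\}_{x})$, the random variables $\rNrest{y}$ are mutually independent across inner vertices $y$. By \Cref{obs:input-indep}-\ref{item:ind-prop-1}, the full inputs $\rNbtogether{x}{i}$ are already mutually independent given $(\rG_{r-1},\raux,\rids)$; hence so are the residual inputs in $\rNrest{y}$ given the already-sampled input pieces $\rNpub{y}$. The public first-round messages $\rMpub{x}$ and the pair-sampled messages $\rMinner{x}$ are, by \Cref{box:cDrealtil} and the definition of $\prot_r$, generated from $\rNbhood{x}{Y}{},\rNbhood{x}{Z}{}$ together with $(\raux,\rids)$ and pair randomness; once we additionally condition on these messages, the residual inputs remain independent across distinct $y$'s because each message involves only one or two inner endpoints whose inputs are being conditioned on. The same holds for the messages received from outer vertices, since those are determined by the inner endpoint's input via \Cref{obs:G_r-degree}. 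So it suffices to prove the per-vertex statement for an arbitrary inner $x_i$.

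For the second stage, fix an inner vertex $x_i$ and condition on all the variables in the observation's hypothesis. Step (3)(a) samples the remaining entries of $(\rNbhood{x}{Y}{i},\rNbhood{x}{Z}{i})$ using private randomness from the $\cDrealtil$-conditional given $(\rNinner{x}{i},\rids,\raux,\rNpub{x_i},\rMpub{x_i},\rMinner{x_i})$ and the already-sampled rest of its first-round outgoing messages. By \Cref{obs:input-indep}-\ref{item:ind-prop-2}, the distribution of $x_i$'s full input in $\cGtil_r(n_r)$ is independent of the remainder of $\rG_{r-1}$ once we condition on $(\rNinner{x}{i},\rids,\raux)$, and both $\rNpub{x_i}$ and its contribution to $\rMpub{x_i},\rMinner{x_i}$ are functions only of quantities local to $x_i$ plus shared randomness; by the chain rule of conditional sampling, the marginal of the unknown entries of $x_i$'s input conditioned on everything already sampled is exactly the $\cDrealtil$-conditional used by Step (3)(a). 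Finally, Step (3)(b) samples $\rmsgrest{x}{i}$ conditioned on the full input of $x_i$ and $(\ids,\aux)$. Since each outer vertex that communicates with $x_i$ has channel-degree at most one (\Cref{obs:G_r-degree}), its entire input is recoverable from $x_i$'s own input, and $\prot_r$ being deterministic makes each such incoming message a deterministic function of $x_i$'s input; therefore this sampling matches the corresponding factor of $\cDrealtil$ in \Cref{box:cDrealtil}.

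The main obstacle in turning this sketch into a line-by-line verification is bookkeeping: the messages $\rMinner{x_i}$ are sampled via pair randomness that is \emph{shared} between $x_i$ and each inner partner $y$, so proving independence of the residuals across vertices requires carefully separating the roles of the different randomness sources and invoking \Cref{obs:input-indep}-\ref{item:ind-prop-1} after, not before, conditioning on the pair messages. Once this bookkeeping is in place, the rest follows by the chain rule applied to \Cref{box:cDrealtil}.
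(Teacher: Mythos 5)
Your proof is correct and follows the same structure as the paper's: invoke \Cref{obs:input-indep}-\ref{item:ind-prop-1},\ref{item:ind-prop-2} to get per-vertex conditional independence (so the joint statement reduces to a per-vertex one), then check that each $\rNrest{x}$ is sampled from the right $\cDrealtil$-conditional. One small clarification on the ``obstacle'' you flag at the end: there is no pair randomness inside $\cDrealtil$ at all---$\prot_r$ is deterministic, so $\rMpub{x}$ and $\rMinner{x}$ are deterministic functions of $x$'s full input alone---and hence conditioning on $\{\rMinner{x}\}_x$ is just conditioning on a per-vertex function of independent inputs, which preserves independence across vertices without any careful separation of randomness sources; the pair randomness is only a sampling mechanism in $\prot_{r-1}$ and is irrelevant once the messages are conditioned on.
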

\begin{proof}
	In $\cDrealtil$, for each inner vertex $x$, its input, all the messages it sends, and the messages it receives from outer vertices are sampled independently of the other inner vertices and independently of $\rG_{r-1}$ when conditioned on $\rNinner{x}{}, \rids$ and $\raux$ by \Cref{obs:input-indep}-\ref{item:ind-prop-1} and \ref{item:ind-prop-2}.
	
	Therefore, in step (3) of \Cref{box:r-1-step3-input-rest}, when we sample the remaining input of each inner vertex $x$ it is sufficient to condition on $\rNinner{x}{}$, $\rMinner{x}$, $\rMpub{x}$, $\rNpub{x}$, $\rids$ and $\raux$. This is true of the remaining messages sent by $x$ also. 
	
	From the definition of $\cDrealtil$ from \Cref{box:cDrealtil}, we know that all the messages that $x$ receives from its outer neighbors are sampled only conditioned on the entire input of $x$, the value of $\aux$ and $\ids$. This process is the same as in  step (3) in \Cref{box:r-1-step3-input-rest}.
\end{proof}

We have sampled all the random variables associated with $\cDrealtil$. However, the distribution of input and first round messages we get is different from $\cDrealtil$, even though parts of them may be the same from \Cref{obs:real-fake-ids-same} and \Cref{obs:rest-input-correct}.

\subsection{Full Protocol $\prot_{r-1}$ and its Distribution}\label{sec:fake-protocol}
We finally put together the multiple parts of protocol $\prot_{r-1}$, and see the full description of $\cDfake$.

\begin{Protocol}\label{box:r-1-final-all-steps}
\textbf{Protocol $\prot_{r-1}$ for $G_{r-1} \sim \cG_{r-1}(n_{r-1})$ given $\prot_{r}$ for $G_{r} \sim \cG_r(n_r)$}:

\begin{enumerate}[label=$(\arabic*)$]
\item The random variable $\rG_{r-1}$ is given as input. Random variables $\rids, \raux,$ and $ \rNpub{x}, \rMpub{x}$ for all inner vertices $x$ are sampled according to \Cref{box:r-1-step1-ids}.  Identities are changed appropriately as given in step (1) from \Cref{box:r-1-step1-ids}. 
\item All messages between pairs of inner vertices $x, y \in G_{r-1}$ which have a channel in $G_{r-1}$ are sampled as given in \Cref{box:r-1-step2-inner-msgs}. 
\item The remaining input for each inner vertex $x_i$, the other messages it sends and all the messages it receives from outer vertices are sampled in \Cref{box:r-1-step3-input-rest}.
\item All the vertices continue to run $\prot_r$ starting from the second round, assuming first round messages are as sampled, and output the same answer as $\prot_r$. 
 See \Cref{clm:run-round2} that specifies how to implement this step. 
 \end{enumerate}
\end{Protocol}

First, let us argue that the vertices can run the protocol $\prot_r$ starting from the second round. 

\begin{claim}\label{clm:run-round2}
	Given any input graph $G$ which lies in the support of $\cG_r(n_r)$, all inner vertices can run protocol $\prot_r$ correctly starting from second round based on the information they have access to. 
\end{claim}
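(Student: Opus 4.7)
My plan is to verify that after executing \Cref{box:r-1-step1-ids,box:r-1-step2-inner-msgs,box:r-1-step3-input-rest}, every inner vertex $x \in G_{r-1}$ has access to exactly the view that it would hold at the end of round one of $\prot_r$ run on $G$. Once this is established, rounds $2,\ldots,r$ of $\prot_r$ can be carried out within the $r-1$ rounds of $\prot_{r-1}$, with any outer vertex simulated by the unique inner vertex that it is connected to.

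First, I would check that every inner vertex $x$ reconstructs its full input in $G$: the entries indexed by $\rLall$ come from the publicly sampled $\Npub{x}$ in \Cref{box:r-1-step1-ids}; the entries corresponding to other inner vertices are read off from $\Ninner{x}{}$, which $x$ already holds as part of $G_{r-1}$; and the remaining entries are produced in \Cref{box:r-1-step3-input-rest}(a) using $x$'s private randomness. Next, I would walk through the four categories of round-one messages that involve $x$ and confirm that $x$ has sampled each one: messages between $x$ and another inner vertex come from the pair randomness in \Cref{box:r-1-step2-inner-msgs}; messages from $x$ to outer vertices in $\bigcup_{t,i,Y}\Lupdown{x \to Y}{t,i}$ come from the public randomness in \Cref{box:r-1-step1-ids}; messages from $x$ to all remaining outer neighbors and messages from every outer neighbor back to $x$ are part of $\Nrest{x}$ and so are sampled by $x$ in \Cref{box:r-1-step3-input-rest}.

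The slightly less routine step is checking that the later rounds of $\prot_r$ can be executed within the later rounds of $\prot_{r-1}$. For any pair of inner vertices $\xstar_i,\ystar_j$, \Cref{box:cGr-definition} sets their type in $G$ to equal the type of $(x_i,y_j)$ in $G_{r-1}$, so the channel is available at round $j$ of $\prot_r$ precisely when $(x_i,y_j)$ is available at round $j-1$ of $\prot_{r-1}$; the two protocols thus agree on the class of channels usable at corresponding rounds. Because $G$ is assumed to lie in the support of $\cG_r(n_r)$, \Cref{obs:G_r-degree} guarantees that every outer vertex has channel-degree at most one, and its sole neighbor $y$ already knows the entire input of that outer vertex (just the type of the single channel) along with the two round-one messages on that channel; consequently $y$ can simulate that outer vertex throughout the remainder of $\prot_r$, using its own private randomness in place of the outer vertex's private randomness, and without any additional communication. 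I do not expect a real obstacle here---the claim is essentially a bookkeeping verification---but the point that requires care is the channel-class alignment between $G$ and $G_{r-1}$ across all the rounds, which is exactly what the type-preserving construction in \Cref{box:cGr-definition} is designed to give.
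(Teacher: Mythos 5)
Your proposal is correct and follows the same structure as the paper's proof (every inner vertex can reconstruct its own input and all of its round-one messages, and can simulate its degree-one outer neighbors). You go one step further than the paper by explicitly verifying the channel-class alignment: a pair of type $t$ is usable in rounds $1,\ldots,r-t$ of $\prot_{r-1}$ and in rounds $1,\ldots,r-t+1$ of $\prot_r$, so shifting $\prot_r$ down by one round matches exactly; the paper leaves this implicit in the phrase ``this proceeds exactly as in $\prot_r$.'' One small inaccuracy: you speak of substituting an inner vertex's private randomness for the outer vertex's, but $\prot_r$ is assumed \emph{deterministic} in \Cref{lem:round-elim}, so simulating an outer vertex needs no randomness at all---only its (fully known) input.
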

\begin{proof}
	First, we argue that all the inner vertices have access to all the messages they send and receive in the first round. For any message sent by inner vertex $x$ to another inner vertex $y$, we use pair randomness between $x, y$ to sample it in step (2) from \Cref{box:r-1-step2-inner-msgs}. Thus, both $x, y$ have access to the message. All messages sent by any inner vertex $x$ to any other inner vertex is known to $x$. All messages received by $x$ sent by some other inner vertex $y$ is known to $x$ as well. 
	
	The remaining messages that $x$ sends to outer vertices, and all the messages it receives from outer vertices are sampled with private randomness in step (3) from \Cref{box:r-1-step3-input-rest}, and are all known to the inner vertex $x$. 
	
	When the input graph $G$ lies in the support of $\cG_r(n_r)$, the channel-degree of all outer vertices is at most one from \Cref{obs:G_r-degree}. Therefore, any inner vertex $x$ with outer neighbor $u$ can simulate the messages $u$ sends to $x$, as it has access to the entire input of $u$. 
	
	As for the inner vertices, they have access to all their inputs. We have argued they also have access to all the first round messages sent and received by them at the end of step (3) of $\prot_{r-1}$. They can continue to run protocol $\prot_r$ from the second round by simulating the behavior of all outer vertices connected to them. This proceeds exactly as in $\prot_r$ till the last round, and the inner vertices can find the output of $\prot_r$ (again, since they can fully simulate outer vertices). 
\end{proof}

Next, let us talk about the distribution $\cDfake$, which is the joint distribution of the input and first round messages that we generate when running protocol $\prot_{r-1}$. 
\begin{Distribution}\label{box:cDfake}
\textbf{Distribution $\cDfake$:}
\begin{align*}
	&\rG_{r-1} \times (\rids, \rcJall, \rcKall, \rLall) \tag{the inner graph, identities and auxiliaries}\\
	&\times (\bigtimes_{x \in G_{r-1}} \rNpub{x}, \rMpub{x} \mid \rids, \raux) \tag{the public parts of input and messages from step (1) in \Cref{box:r-1-step1-ids}} \\
	&\times (\bigtimes_{x \in G_{r-1}} \bigtimes_{\substack{y  \in G_{r-1} \\ y \notin X}} (\rMinner{x}(y) \mid  \rtypeval{x, y}, \rids, \raux, \rNpub{x}, \rMpub{x})   \tag{messages sent by inner vertices to each other in step (2) from \Cref{box:r-1-step2-inner-msgs}} \\
	&\times (\bigtimes_{x \in G_{r-1}}  \rNrest{x} \mid \rMinner{x}, \rNinner{x}{}, \rids, \raux, \rNpub{x}, \rMpub{x}). \tag{rest of the input and messages for inner vertices in step (3) from \Cref{box:r-1-step3-input-rest}}
\end{align*}
\end{Distribution}

\begin{claim}\label{clm:cDfake-prot-r-1}
	The distribution of the input and first round messages in $\prot_{r-1}$ is exactly $\cDfake$ as defined in \Cref{box:cDfake}.
\end{claim}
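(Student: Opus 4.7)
The plan is to verify the claim by a direct step-by-step matching between the four blocks of sampling performed in \Cref{box:r-1-final-all-steps} and the four factors in the chain-rule decomposition of $\cDfake$ given in \Cref{box:cDfake}. This will use the key observation that the three kinds of randomness available in our model (public, pair, and private) yield exactly the conditional independences encoded in $\cDfake$: public randomness is shared globally and so can sample joint conditionals, pair randomness is shared only between two vertices (so samplings for distinct pairs are conditionally independent given the conditioning information), and private randomness is local to a single vertex (so samplings for distinct vertices are conditionally independent).

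First, I would handle the input factor $\rG_{r-1}$: this is trivially the input of $\prot_{r-1}$ and is drawn from $\cG_{r-1}(n_{r-1})$. Next, I match step (1) of the protocol (i.e., \Cref{box:r-1-step1-ids}) to the second and third factors of $\cDfake$. The random variables $\rids$ and $\raux = (\rcJall, \rcKall, \rLall)$ are sampled jointly using public randomness, and this is done independently of $\rG_{r-1}$, which matches the factor $(\rids, \raux)$. In the same step, for each inner vertex $x$, the pieces $\rNpub{x}$ (the types on $\rLupdown{x \to Y}{t,i}$) and $\rMpub{x}$ (the publicly sampled first-round messages from $x$ to $\rLupdown{x \to Y}{t,i} \cap \{<\rid{y}\}$) are drawn using public randomness conditioned on $(\rids, \raux)$, which matches the factor $\bigtimes_x (\rNpub{x}, \rMpub{x} \mid \rids, \raux)$.

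Then I turn to step (2) (\Cref{box:r-1-step2-inner-msgs}), which for each ordered pair of inner vertices $x, y \in G_{r-1}$ with $y \notin X$ samples $\rMinner{x}(y)$ using the pair-randomness tape of $\{x,y\}$, conditioned on $\rtypeval{x,y}, \rids, \raux, \rNpub{x}, \rMpub{x}$. The crucial point is that distinct unordered pairs use distinct pair-randomness tapes, so conditioned on the sampled public variables the family $\{\rMinner{x}(y)\}_{(x,y)}$ is a product distribution; this matches the third line of $\cDfake$. Finally, step (3) (\Cref{box:r-1-step3-input-rest}) samples $\rNrest{x}$ for each inner vertex $x$ using $x$'s private randomness, conditioned on $(\rMinner{x}, \rNinner{x}{}, \rids, \raux, \rNpub{x}, \rMpub{x})$; since private tapes of distinct vertices are independent, the joint sampling across $x$ factors exactly as the last line of $\cDfake$.

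I do not expect a real obstacle here: the statement is essentially a bookkeeping verification, and the only place one must be careful is to confirm that the conditioning listed in each factor of $\cDfake$ contains every variable that is actually visible to the randomness tape used in the corresponding step (so no hidden dependence is dropped), and conversely that the independence properties claimed by $\cDfake$ are genuinely delivered by the separation of pair- and private-randomness tapes. Both checks follow directly from the definition of the three randomness sources in \Cref{subsec:prelim-model} and from the explicit wording of \Cref{box:r-1-step1-ids,box:r-1-step2-inner-msgs,box:r-1-step3-input-rest}, so the proof should be a short paragraph rather than a long argument.
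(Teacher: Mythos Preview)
Your proposal is correct and takes essentially the same approach as the paper: a direct factor-by-factor matching of the sampling steps in \Cref{box:r-1-final-all-steps} against the decomposition in \Cref{box:cDfake}, justified by the separation of public, pair, and private randomness. If anything, you are slightly more explicit than the paper about why the product structure in the inner-message and $\rNrest{}$ factors holds (via the independence of distinct pair/private tapes), which is a reasonable addition for clarity.
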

\begin{proof}
	We know that $\rids, \rcJall$, $\rcKall$ and $\rLall$ are sampled jointly with public randomness in $\prot_{r-1}$, independent of $\rG_{r-1}$. Therefore, their distributions are exactly as given in \Cref{box:cDfake}.
	
	The random variables $\rNpub{x}, \rMpub{x}$ are sampled with public randomness as well, but we know by \Cref{obs:input-indep}-\ref{item:ind-prop-1},\ref{item:ind-prop-2} that they are independent of random variables associated with other inner vertices. Hence, they are sampled only conditioned on $\rids$ and $\raux$. (There is no access to $\rNinner{x}{}$ with public randomness.)
	
	For each pair of inner vertices $x, y$, we sample $\rMinner{x}(y)$ in step (2) of \Cref{box:r-1-step2-inner-msgs} conditioned on exactly the random variables stated in the claim: $\rids$, $\raux$ from public randomness, $\rtypeval{x, y}$ that both inner vertices $x, y$ have access to, and $\rNpub{x}, \rMpub{x}$, again from public randomness. 
	
	Finally, each inner vertex $x$ samples the rest of its input, the remaining messages it sends and the messages it receives from outer vertices conditioned on $\rids, \raux$ from public randomness, $\rMinner{x}$ that it sampled using the pair randomness to other inner vertices, $\rNpub{x}, \rMpub{x}$ again from public randomness and $\rNinner{x}{}$ it has from graph $G_{r-1}$ in step (3) from \Cref{box:r-1-step3-input-rest}. This is as stated in the description of $\cDfake$ in \Cref{box:cDfake}.
\end{proof}

We can see that $\cDfake$ and $\cDrealtil$ are not the same distributions. But, we can prove that they are close to each other in total variation distance. This is the focus of the next section. 

\begin{lemma}[``$\cDfake$ and $\cDreal$ are not that different'']\label{lem:round-elim-distance}
\[
	\tvd{\cDreal}{\cDfake} \leq \frac1{(n_{r-1})} + 15 \sqrt{\frac{s}{n_{r-1}}}.
\]
\end{lemma}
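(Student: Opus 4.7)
The overall plan is to use the triangle inequality with $\cDrealtil$ as an intermediate distribution,
\[
\tvd{\cDreal}{\cDfake} \;\leq\; \tvd{\cDreal}{\cDrealtil} \;+\; \tvd{\cDrealtil}{\cDfake}.
\]
The first summand is already at most $1/n_{r-1}$ by \Cref{clm:Dreal-Drealtil-close}, so the real task is to show $\tvd{\cDrealtil}{\cDfake}\leq 15\sqrt{s/n_{r-1}}$. Comparing \Cref{box:cDrealtil} with \Cref{box:cDfake}, and invoking \Cref{obs:real-fake-ids-same} and \Cref{obs:rest-input-correct}, the two distributions already agree on the joint marginal of $(\rG_{r-1},\rids,\raux)$ and, conditionally on everything else, on $(\rNrest{x})_{x}$. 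Moreover, $\rNpub{x}$ is a deterministic function of $(\rids,\raux)$ in both distributions by step~(3) of \Cref{box:r-1-step1-ids}. Hence only two families of variables can contribute to $\tvd{\cDrealtil}{\cDfake}$: the publicly sampled messages $\rMpub{x}$ for each inner vertex $x$, and the pair-sampled messages $\rMinner{x}(y)$ for each ordered pair of inner vertices $x,y\in G_{r-1}$ with $y\notin X$.

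The next step is to apply Pinsker's inequality (\Cref{fact:pinskers}) and the chain rule of KL divergence,
\[
\tvd{\cDrealtil}{\cDfake}^{2}\;\leq\;\tfrac12\,\kl{\cDrealtil}{\cDfake},
\]
to write the squared TVD as a sum of conditional KL divergences, one for each differing family. Because in each case the distribution is produced by $\prot_r$ acting on the conditioned variables, each conditional KL divergence equals a conditional mutual information (\Cref{fact:kl-info}). Following the decomposition used in \Cref{eq:overview-2} of \Cref{sec:overview}, the $\rMpub{x}$ term reduces to bounding how much $\rMpub{x}$ reveals about $\rNinner{x}{}$ beyond $(\rids,\raux,\rNpub{x})$, while by the chain rule of mutual information (\itfacts{chain-rule}), the $\rMinner{x}(y)$ term splits into (a) the information $\rMinner{x}(y)$ reveals about $\rG^{*}_{x,-y}$ given $\rG^{*}_{x,y}$, and (b) the information $\rMinner{x}(y)$ reveals about the other messages $\rMinner{x}(-y)$ sent by $x$.

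The three resulting mutual informations are then bounded by exploiting the three families of auxiliary variables in $\raux$, following the information-spreading strategy of \Cref{sec:overview}. The $J$-variables $\rcJup{x}=(\rJupdown{x}{1},\ldots,\rJupdown{x}{\alpha_r})$ furnish $\alpha_r$ copies of $\rNinner{x}{}$ that are indistinguishable to $\prot_r$ from the true one, yielding at most $\bigl(2\gamma_r(r+1)n_{r-1}\bigr)\cdot s/\alpha_r$ per inner vertex via \Cref{obs:public-msgs-number} and \itfacts{uniform}. The $K$-variables $\rcKupdown{x\to Y}{t,i}$ furnish $\beta_r$ candidate embeddings of $\rG^{*}_{x,-y}$ indistinguishable to the sender $x$ given $\rG^{*}_{x,y}$, yielding at most $s/\beta_r$ per ordered pair for term (a). Finally, the $L$-variables $\rLupdown{x\to Y}{t,i}$ together with the publicly revealed prefix $\rMpub{x}$ on $\rLupdown{x\to Y}{t,i}\cap\{<\rid{y}\}$ make $\rid{y}$ uniformly distributed over $\rLupdown{x\to Y}{t,i}\cup\{\rid{y}\}$ from $x$'s perspective even conditionally on $\rG^{*}_{x,y}$; a further chain-rule expansion over this uniform choice then produces at most $(n_{r-1}-1)\cdot s/(\gamma_r+1)$ per ordered pair for term (b).

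Summing over the $3n_{r-1}$ inner vertices and the $3n_{r-1}^{2}$ ordered pairs of inner vertices, and plugging in $\alpha_r=n_{r-1}^{11}$, $\beta_r=n_{r-1}^{5}$, $\gamma_r=n_{r-1}^{6}$ from \Cref{eq:params-alphabetagamma}, each of the three contributions is $O(s/n_{r-1}^{3})$; Pinsker then yields $\tvd{\cDrealtil}{\cDfake}=O(\sqrt{s/n_{r-1}^{3}})$, which is easily absorbed in $15\sqrt{s/n_{r-1}}$, and the triangle inequality closes out the lemma. I expect the main obstacle to lie in step (b): one must rigorously justify that conditioning on $\rMpub{x}$ restricted to the prefix $\rLupdown{x\to Y}{t,i}\cap\{<\rid{y}\}$ preserves the permutation symmetry of $\rid{y}$ among $\rLupdown{x\to Y}{t,i}\cup\{\rid{y}\}$ even after conditioning on $\rG^{*}_{x,y}$. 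This is exactly the place where the point-to-point (as opposed to broadcast) nature of CONGEST matters, and is the reason for the particular sampling order baked into Protocols~\ref{box:r-1-step1-ids}--\ref{box:r-1-step3-input-rest}; getting this symmetry right is what makes the pair-randomness-only dependence of $\rMinner{x}(y)$ on $\rtypeval{x,y}$ information-theoretically harmless.
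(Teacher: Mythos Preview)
Your plan is correct and hits exactly the three mutual-information bounds that drive the paper's argument (what the paper proves as \Cref{clm:msgs-mi-low}, \Cref{clm:mi-low-public-msgs}, and \Cref{clm:mi-low-inner-msgs}); the symmetry argument you flag as the main obstacle is precisely the content of \Cref{clm:msgs-mi-low}. Where you differ from the paper is in how you \emph{package} these bounds. The paper introduces two explicit hybrid distributions $\cH_1,\cH_2$ between $\cDrealtil$ and $\cDfake$, applies the weak TVD chain rule (\Cref{fact:tvd-chain-rule}) on each hop, and then Pinsker \emph{per coordinate}, obtaining a sum of square roots such as $6(n_{r-1})^{5/2}\sqrt{s/(\gamma_r+1)}$. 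You instead apply Pinsker once at the top and then the chain rule of KL, obtaining a single square root of the sum of the same mutual-information terms; this is tighter (by concavity of $\sqrt{\cdot}$) and avoids the hybrids entirely. The price is that you must verify, for every coordinate, that the $\cDfake$ conditional given the full history coincides with the $\cDrealtil$ conditional marginalized over the ``extra'' variables---this holds because in $\cDfake$ each $\rMpub{x}$ and $\rMinner{x}(y)$ is produced with fresh randomness depending only on the stated subset, hence remains unchanged when further conditioning on $\rG_{r-1}$ and earlier messages. Two minor slips to fix in your arithmetic: there are $6n_{r-1}^{2}$ ordered inner pairs, not $3n_{r-1}^{2}$; and the $\rMpub{}$ contribution is $O(s/n_{r-1}^{2})$, not $O(s/n_{r-1}^{3})$, since $3n_{r-1}\cdot 2\gamma_r(r{+}1)n_{r-1}/\alpha_r\approx 6s/n_{r-1}^{2}$---still comfortably inside $15\sqrt{s/n_{r-1}}$.
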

Proof of \Cref{lem:round-elim-distance} can be found in \Cref{subsec:round-elim-distance} and constitutes the main technical part of our argument. 
This concludes the description of the round elimination protocol. 

\subsection{Proof of \Cref{lem:round-elim}}\label{subsec:proof-round-elim}
We can prove \Cref{lem:round-elim}, restated below, using \Cref{lem:round-elim-distance}.

\begin{lemma*}[Restatement of \Cref{lem:round-elim}]
	For any $s,r \geq 1$ and $\delta \in (0,1)$, 
	given a deterministic protocol $\prot_r$ for instances $G_r \sim \cG_r(n_r)$, with the following parameters:
	\begin{equation*}
		\round(\prot_r) = r \qquad \bwidth(\prot_r) = s \qquad \suc(\prot_r, \cG_r(n_r))\geq \delta, 
	\end{equation*}
	we can construct a deterministic protocol $\prot_{r-1}$ which takes instances  $G_{r-1} \sim \cG_{r-1}(n_{r-1})$ with the following parameters:
	\begin{equation*}
		\round(\prot_{r-1}) = r-1 \qquad \bwidth(\prot_{r-1}) \leq s \qquad \suc(\prot_{r-1}, \cG_{r-1}(n_{r-1}))\geq \delta - \frac1{n_{r-1}} - 15 \sqrt{\frac{s}{n_{r-1}}}.
	\end{equation*}
\end{lemma*}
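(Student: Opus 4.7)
The plan is to verify that the protocol $\prot_{r-1}$ already fully specified in~\Cref{box:r-1-final-all-steps} meets all three stated parameters: $\round(\prot_{r-1}) = r-1$, $\bwidth(\prot_{r-1}) \le s$, and the stated success probability lower bound. The first two will be immediate from the construction; the substance of the proof is the third, which I will reduce to~\Cref{lem:round-elim-distance} via a standard total variation argument.

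For the round count, steps $(1)$--$(3)$ of~\Cref{box:r-1-final-all-steps} only use public, pair, and private randomness and require no communication among the players of $\prot_{r-1}$. Step $(4)$ runs $\prot_r$ starting from its second round, i.e., for $r-1$ rounds; by~\Cref{clm:run-round2} every inner vertex can simulate its outer neighbors locally, so only $r-1$ rounds of actual communication take place. The bandwidth claim is equally immediate: every bit communicated in $\prot_{r-1}$ is a bit sent along some channel of $\prot_r$ (either between two inner vertices, or between an inner vertex and its outer neighbor, in which case the inner vertex on the other end simulates the outer vertex locally), so $\bwidth(\prot_{r-1}) \le \bwidth(\prot_r) = s$.

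For the success probability, I first observe that correctness of $\prot_r$ on the embedded $G_r$ translates to correctness of $\prot_{r-1}$ on the true input $G_{r-1}$. By~\Cref{obs:r-triangle-exist}, $G_r$ contains a triangle if and only if $G_{r-1}$ does, so the correct decisions coincide. If $\prot_r$ outputs YES via some inner vertex, that vertex is a player of $\prot_{r-1}$ and outputs YES there as well; if $\prot_r$ outputs YES via an outer vertex, the inner vertex simulating that outer vertex learns this and outputs YES in $\prot_{r-1}$. If $\prot_r$ outputs NO everywhere, all inner vertices (which are exactly the players of $\prot_{r-1}$) output NO, as required. Thus $\prot_{r-1}$ is correct on $G_{r-1}$ whenever the simulated run of $\prot_r$ is correct on $G_r$.

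Now let $\event$ be the event that the simulated run of $\prot_r$ outputs the correct answer. Since $\prot_r$ is a deterministic function of the (input, first-round-message) tuple, and this tuple is distributed according to $\cDreal$ under genuine sampling from $\cG_r(n_r)$ and according to $\cDfake$ when generated inside $\prot_{r-1}$ on $G_{r-1} \sim \cG_{r-1}(n_{r-1})$, we obtain $\suc(\prot_r,\cG_r(n_r)) = \Pr_{\cDreal}[\event] \ge \delta$ and $\suc(\prot_{r-1},\cG_{r-1}(n_{r-1})) = \Pr_{\cDfake}[\event]$. Applying the standard total variation bound $|\Pr_{\cDreal}[\event]-\Pr_{\cDfake}[\event]| \le \tvd{\cDreal}{\cDfake}$ together with~\Cref{lem:round-elim-distance} yields
\[
\suc(\prot_{r-1},\cG_{r-1}(n_{r-1})) \;\ge\; \delta - \frac{1}{n_{r-1}} - 15\sqrt{\frac{s}{n_{r-1}}},
\]
as claimed. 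The entire difficulty is therefore deferred to~\Cref{lem:round-elim-distance}, whose proof in~\Cref{subsec:round-elim-distance} carries the information-theoretic heart of the argument: it must decompose the KL divergence between $\cDfake$ and $\cDrealtil$ into per-pair contributions and control each using the $J$-, $K$-, and $L$-auxiliaries under the parameter choices in~\Cref{eq:params-r-1} and~\Cref{eq:params-alphabetagamma}.
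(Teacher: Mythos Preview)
Your argument tracks the paper's proof closely: verify rounds and bandwidth directly from the construction, then transfer correctness via~\Cref{obs:r-triangle-exist} and bound the success-probability drop by $\tvd{\cDreal}{\cDfake}$ using~\Cref{lem:round-elim-distance}. That part is fine.

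There is one genuine omission. The lemma asks for a \emph{deterministic} protocol $\prot_{r-1}$, but the construction in~\Cref{box:r-1-final-all-steps} uses public, pair, and private randomness throughout steps~(1)--(3). Your bound $\suc(\prot_{r-1},\cG_{r-1}(n_{r-1})) \ge \delta - 1/n_{r-1} - 15\sqrt{s/n_{r-1}}$ is for the \emph{randomized} protocol; you still need to fix all three random tapes (by the easy direction of Yao's minimax principle, i.e., an averaging argument over the randomness) to obtain a deterministic protocol achieving at least the same average success probability over $\cG_{r-1}(n_{r-1})$, without increasing rounds or bandwidth. The paper includes exactly this step at the end of its proof, and without it the conclusion does not match the statement.
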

\begin{proof}[Proof of \Cref{lem:round-elim}]
	We will prove that protocol $\prot_{r-1}$ from \Cref{box:r-1-final-all-steps} obeys the conditions in the statement of the lemma. 
	
	For the number of rounds in $\prot_{r-1}$, this is exactly $r-1$ as $\prot_r$ has $r$ rounds, and the first round messages are sampled without any communication between the vertices. 
	
	The total length of any message sent in the $r-1$ rounds of $\prot_{r-1}$ is at most $s$, as all the messages sent by $\prot_r$ are at most $s$ bits. 
	
	We know by \Cref{obs:r-triangle-exist} that for the graphs in the support of $\cG_r(n_r)$, if protocol $\prot_r$ outputs the correct answer, $\prot_{r-1}$ is correct in detecting whether a triangle exists as well. The output of $\prot_{r}$ and $\prot_{r-1}$ are the same when inputs are sampled from $\cDreal$, by \Cref{clm:run-round2}. Therefore, it is sufficient to lower bound the probability that $\prot_r$ outputs the correct answer when the inputs and first round messages are sampled from $\cDfake$ as opposed to $\cDreal$.
	
	We use \Cref{fact:tvd-small} to lower bound the probability of success of $\prot_{r-1}$ where $\event$ is the event that the answer on input $G_{r-1}$ is correct: 
	\begin{align*}
		\Pr_{\cG_{r-1}}[\prot_{r-1} \textnormal{ is correct}] &= \Pr_{\cDfake}[\prot_{r} \textnormal{ is correct}] \tag{by the discussion above} \\
		&\geq \Pr_{\cDfake}[\prot_{r} \textnormal{ is correct}] - \tvd{\cDreal}{\cDfake} \\
		&\geq \delta - \paren{\frac1{n_{r-1}} + 15 \sqrt{\frac{s}{n_{r-1}}}},
	\end{align*}
	where in the second inequality is by the assumption on the correctness probability of $\prot_r$ and~\Cref{lem:round-elim-distance}. 
	
	Finally, the protocol $\prot_{r-1}$ is randomized, but it can be made deterministic by fixing the random string by the easy direction of Yao's minimax principle, without changing its probability of success, rounds, and bandwidth. 
	The different sources of randomness do not cause an issue, as the randomness in all of these sources can be fixed to be a particular string. 
\end{proof}

\clearpage


\section{Round Elimination: Analysis}\label{sec:round-elim-analysis}
In this subsection, we analyze distributions $\cDfake$ (see \Cref{box:cDfake}) and $\cDreal$ (see \Cref{box:cDrealtil}), and prove the upper bound on their total variation distance in \Cref{lem:round-elim-distance}. 
We already know that distribution $\cDreal$ is close to $\cDrealtil$ by \Cref{clm:Dreal-Drealtil-close}. Hence, we focus on proving the distance between $\cDrealtil$ and $\cDfake$ is small.
When we talk about messages in this section, we only refer to messages sent over the first round. 

 Our analysis is split into three parts:

\begin{description}
	\item[Part 1.] The messages sent by each inner vertex $x$ are correlated with each other through the input of $x$. However, in step (2) of $\prot_{r-1}$ from \Cref{box:r-1-step2-inner-msgs}, they are sampled independently, only based on some parts of the input of $x$. We show that the correlation between these messages are low and thus can be sampled this way without too much loss.
	
	\item[Part 2.] We publicly sample some messages sent by each inner vertex $x$ for both layers $Y, Z$ in step (1) of $\prot_{r-1}$ from \Cref{box:r-1-step1-ids}, independent of the input of $x$. We remove the correlation between these messages and the input of $x$ from $G_{r-1}$ to allow for this sampling. 
	
	\item[Part 3.] Messages sent by $x$ to other inner vertices are sampled in step (2) of $\prot_{r-1}$ from \Cref{box:r-1-step2-inner-msgs}. These are sampled conditioned on some part of the input of $x$. We will show that the correlation between these messages and the rest of the input of $x$ from $G_{r-1}$ is low.
\end{description}

These are the only differences between $\cDrealtil$ and $\cDfake$. We have argued that the other random variables are sampled correctly in \Cref{obs:real-fake-ids-same} and \Cref{obs:rest-input-correct}.

\subsection{Setting up the Analysis}\label{subsec:setup-analysis}

We move from $\cDrealtil$ to $\cDfake$ by way of multiple distributions (one for each part) which lie between them and use a hybrid argument. In this subsection, we describe these hybrid distributions. 

First, let us cast distribution $\cDrealtil $ in terms of the random variables in $\cDfake$.

\begin{claim}\label{clm:cDrealtil-new-way}
	Distribution $\cDrealtil$ can be written as:
	\begin{align*}
		&\rG_{r-1} \times (\rids, \rcJall, \rcKall, \rLall) \tag{the inner graph, identities and auxiliaries}\\
		&\times (\bigtimes_{x \in G_{r-1}} \rNpub{x}, \rMpub{x} \mid \rids, \raux, \rNinner{x}{}) \tag{the public parts of input and messages from step (1) in \Cref{box:r-1-step1-ids}} \\
		&\times (\bigtimes_{x \in G_{r-1}} (\rMinner{x} \mid  \rids, \raux, \rNpub{x}, \rMpub{x}, \rNinner{x}{})   \tag{messages sent by inner vertices to each other in step (2) from \Cref{box:r-1-step2-inner-msgs}} \\
		&\times (\bigtimes_{x \in G_{r-1}}  \rNrest{x} \mid \rMinner{x}, \rNinner{x}{}, \rids, \raux, \rNpub{x}, \rMpub{x}). \tag{rest of the input and messages for inner vertices in step (3) from \Cref{box:r-1-step3-input-rest}}
	\end{align*}
\end{claim}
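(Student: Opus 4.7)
The plan is to derive the claim as a direct application of the chain rule of conditional probability combined with the independence properties established in \Cref{obs:input-indep}. The original description of $\cDrealtil$ in \Cref{box:cDrealtil} samples, for each inner vertex $x$ and index $i \in [n_{r-1}]$, first the full input vectors $(\rNbhood{x}{Y}{i}, \rNbhood{x}{Z}{i})$, then the outgoing messages $(\rmsg{x}{Y}{i}, \rmsg{x}{Z}{i})$, then the incoming messages $\rmsgrest{x}{i}$. The reformulation in the claim simply regroups these variables into the four bundles $\rNpub{x}, \rMpub{x}, \rMinner{x}, \rNrest{x}$ and re-expresses the conditionings; the task is to check that this reordering preserves the joint distribution.

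First I would keep $\rG_{r-1} \times (\rids, \rcJall, \rcKall, \rLall)$ as the leading factor, which is unchanged and is justified by \Cref{obs:input-indep}-\ref{item:ind-prop-3}. Next, for the per-vertex terms, by \Cref{obs:input-indep}-\ref{item:ind-prop-1}, conditioned on $(\rG_{r-1}, \rids, \raux)$, the inputs of different inner vertices are mutually independent, and since $\prot_r$ is deterministic, so are all the outgoing messages each of them sends; the messages each inner vertex receives from its outer neighbors are likewise independent across inner vertices because each outer vertex has channel-degree at most one (\Cref{obs:G_r-degree}), so its input is determined by its unique inner neighbor. This justifies writing $\cDrealtil$ as a product over inner vertices $x$. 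Inside each per-vertex factor, \Cref{obs:input-indep}-\ref{item:ind-prop-2} lets me replace conditioning on the entirety of $\rG_{r-1}$ by conditioning only on $\rNinner{x}{}$.

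Within a single per-vertex factor, I would then apply the chain rule to write the joint distribution of $(\rNpub{x}, \rMpub{x}, \rMinner{x}, \rNrest{x}) \mid \rNinner{x}{}, \rids, \raux$ as the product of three conditional distributions in the order listed in the claim. The random variable $\rNpub{x}$ is a deterministic function of $\raux$ alone (the types on the sets $\Lupdown{x \to Y}{t,i}$ are fixed to be $t$), while $\rMpub{x}$, $\rMinner{x}$, and the outgoing-message component of $\rNrest{x}$ are deterministic functions of $x$'s full input because $\prot_r$ is deterministic; the incoming-message component of $\rNrest{x}$ is sampled exactly as prescribed in \Cref{box:cDrealtil}. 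Marginalizing over the parts of $x$'s input not in $\rNpub{x}$ yields well-defined conditional distributions at each stage, and the conditioning variables produced by the chain rule match those listed in the claim one-for-one. The argument is essentially bookkeeping and I do not foresee a technical obstacle; all the heavy lifting in comparing $\cDrealtil$ with $\cDfake$ is deferred to the proof of \Cref{lem:round-elim-distance}, for which this restatement is the starting point.
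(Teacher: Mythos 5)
Your proposal is correct and follows essentially the same route as the paper: both invoke \Cref{obs:input-indep}-\ref{item:ind-prop-1}, \ref{item:ind-prop-2}, \ref{item:ind-prop-3}, the determinism of $\prot_r$, and the definition of $\cDrealtil$ in \Cref{box:cDrealtil} to perform what is ultimately a chain-rule regrouping. The extra remark about outer vertices having channel-degree at most one is harmless but unnecessary here, since $\cDrealtil$ is defined directly as a product over inner vertices.
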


\begin{proof}
It is evident that random variables $\rG_{r-1}, \rids$ and $\raux$ are sampled the same way in the statement and in $\cDrealtil$. 

From \Cref{obs:input-indep}-\ref{item:ind-prop-1}, we know that the inputs of all the inner vertices are independent of each other conditioned on $\rG_{r-1}, \rids$ and $\raux$. Moreover, we know from \Cref{obs:input-indep}-\ref{item:ind-prop-2} that the input of any inner vertex $x$ only depends on the random variables $\rNinner{x}{}, \rids$ and $\raux$  in $\rG_{r-1}$. Therefore, the random variables $\rNpub{x}, \rMpub{x}$ for each inner vertex in $x$ are sampled only conditioned on $\rids, \raux$ and $\rNinner{x}{}$. 

As the protocol is determinstic, we know that the messages sent by any inner vertex $x$ depend only on its input. Therefore, $\rMinner{x}$ is also sampled from the right distribution in the statement of the claim. This argument applies to the rest of the input of $x$ and the messages sent by $x$ to outer vertices also. 

Lastly, in distribution $\cDrealtil$, all the messages received by inner vertex $x$ from outer vertices are sampled only based on the input of $x$, $\rids$ and $\raux$, as is the case in our statement. 
\end{proof}

To bound the distance between $\cDrealtil$ and $\cDfake$, we use weak chain rule of total variation distance from \Cref{fact:tvd-chain-rule} repeatedly. We define an ordering on the vertices and channels for this purpose. 

\paragraph{Ordering vertices and channels.}
We use a lexicographic ordering on the inner vertices of the form $a_1, a_2, \ldots, a_{n_{r-1}},$ then $b_1, \ldots, b_{n_{r-1}}$ and lastly $c_{1}, \ldots, c_{n_{r-1}}$. For each inner vertex $x$, to order the channels that $x$ has to inner vertices in the other two layers, we again use the lexicographic ordering we have over all the vertices. 
(The specific ordering is unimportant and we have picked the simplest one.)

We will now describe the hybrid distributions that we use as intermediate steps between the two distributions $\cDrealtil$ and $\cDfake$. 

\subsubsection*{Sampling Inner Messages Separately}

The first hybrid distribution which lies between $\cDrealtil$ and $\cDfake$ is $\cH_1$ that we define here.
One key difference between $\cDrealtil$ and $\cDfake$ is in how the messages to other inner vertices sent by any inner vertex $x$ are sampled. In $\cDrealtil$, they are sampled together, whereas this correlation is absent in $\cDfake$. We show that this correlation is small and can be broken. 

\begin{Distribution}\label{box:hybrid1-inner-msgs}
	\textbf{Distribution $\cH_1$:}
	\begin{align*}
		&\rG_{r-1} \times (\rids, \rcJall, \rcKall, \rLall) \tag{the inner graph, identities and auxiliaries}\\
			&\times (\bigtimes_{x \in G_{r-1}} \rNpub{x}, \rMpub{x} \mid \rids, \raux, \rNinner{x}{}) \tag{the public parts of input and messages from step (1) in \Cref{box:r-1-step1-ids}} \\
		&\times \eqcolor{(\bigtimes_{x \in G_{r-1}} \bigtimes_{\substack{y \in G_{r-1} \\y \notin X}} (\rMinner{x}(y) \mid  \rids, \raux, \rNpub{x}, \rMpub{x}, \rNinner{x}{})}   \tag{messages sent by inner vertices to each other in step (2) from \Cref{box:r-1-step2-inner-msgs}} \\
		&\times (\bigtimes_{x \in G_{r-1}}  \rNrest{x} \mid \rMinner{x}, \rNinner{x}{}, \rids, \raux, \rNpub{x}, \rMpub{x}). \tag{rest of the input and messages for inner vertices in step (3) from \Cref{box:r-1-step3-input-rest}}
	\end{align*}
\end{Distribution}

We prove distributions $\cDrealtil$ and $\cH_1$ are close to each other in \Cref{subsec:msgs-low-correlation}.

\begin{lemma}[``Messages sent by inner vertices can be sampled separately'']\label{lem:hybrid-1-inner-msgs}
	\[
	\tvd{\cDrealtil}{\cH_1} \leq 6 \cdot (n_{r-1})^{5/2} \cdot \sqrt{\frac{s}{\gamma_r + 1}}.
	\]
\end{lemma}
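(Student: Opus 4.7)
} The distributions $\cDrealtil$ (as rewritten in \Cref{clm:cDrealtil-new-way}) and $\cH_1$ agree on every random variable except for how, for each inner vertex $x$, the tuple $\rMinner{x}=(\rMinner{x}(y))_{y \in G_{r-1},\, y \notin X}$ is sampled given the shared conditioning $\rW_x := (\rids, \raux, \rNpub{x}, \rMpub{x}, \rNinner{x}{})$: jointly in $\cDrealtil$ versus as a product over $y$ in $\cH_1$. By \Cref{obs:real-fake-ids-same} and \Cref{obs:rest-input-correct}, every other random variable (including each $\rNrest{x}$, which is drawn from the same conditional law in both distributions) contributes nothing to the total variation distance. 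Ordering the pairs $(x,y)$ lexicographically with $x$ in the outer loop, the weak chain rule (\Cref{fact:tvd-chain-rule}), Pinsker's inequality (\Cref{fact:pinskers}), Jensen's inequality, and the identity $\Exp[\kl{\cdot}{\cdot}] = \mi{\cdot}{\cdot}$ (\Cref{fact:kl-info}) together yield
\[
\tvd{\cDrealtil}{\cH_1} \;\leq\; \sum_x \sum_{\substack{y \in G_{r-1}\\ y \notin X}} \sqrt{\tfrac{1}{2\ln 2} \cdot \mi{\rMinner{x}(y)}{\rMinner{x}(<y) \mid \rW_x}}.
\]

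\paragraph{Bounding each mutual information term.} Fix a pair $(x,y)$; let $t=\typeval{x,y}$, let $i \in [n_{r-1}]$ be the index with $\rid{y}=\ystar_i$, and let $\rL := \rLupdown{x \to Y}{t,i}$ denote the publicly sampled set of $\gamma_r$ additional type-$t$ neighbors of $x$ in layer $Y$. The crucial observation is a symmetry: if we replace the pair $(\rL, \rid{y})$ inside $\rW_x$ by the unordered set $\rL \cup \{\rid{y}\}$, i.e.,~condition instead on $\rW_{x,y}^- := (\rW_x \setminus \{\rL, \rid{y}\}) \cup \{\rL \cup \{\rid{y}\}\}$, then $\rid{y}$ becomes uniform over the $\gamma_r+1$ elements of this set, while $\rMinner{x}(y)$ and the coordinates of $\rMpub{x}[\rL]$ become an exchangeable family indexed by these elements. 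This holds because (i) $\rL$ is sampled uniformly from $Y \setminus \Ystar$ independently of $\rG_{r-1}$, (ii) every element of $\rL \cup \{\rid{y}\}$ has a type-$t$ channel from $x$ by construction, and (iii) the remaining contents of $\rW_{x,y}^-$ do not distinguish elements of $\rL \cup \{\rid{y}\}$ from one another (in particular, $\rMpub{x}[\rL \cap \{<\rid{y}\}]$ is recovered from the full sequence $\rMpub{x}[\rL]$ together with $\rid{y}$). Combining this symmetry with the chain rule (\itfacts{chain-rule}) gives
\[
\mi{\rMinner{x}(y)}{\rMinner{x}(<y) \mid \rW_x} \;\leq\; \frac{1}{\gamma_r+1}\cdot \mi{\rMinner{x}(y),\, \rMpub{x}[\rL]}{\rMinner{x}(<y) \mid \rW_{x,y}^-} \;\leq\; \frac{2 n_{r-1} \cdot s}{\gamma_r+1},
\]
where the last inequality uses $\en{\rMinner{x}(<y)} \leq 2 n_{r-1} \cdot s$ by \itfacts{uniform}, since $\rMinner{x}(<y)$ comprises at most $2n_{r-1}$ messages of $s$ bits each.

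\paragraph{Summing up.} There are $3 n_{r-1}$ inner vertices $x$ and for each at most $2 n_{r-1}$ inner vertices $y$ in the other two layers, so at most $6(n_{r-1})^2$ terms in total. Inserting the bound on the mutual information,
\[
\tvd{\cDrealtil}{\cH_1} \;\leq\; 6 (n_{r-1})^{2} \cdot \sqrt{\tfrac{1}{2 \ln 2}\cdot \tfrac{2 n_{r-1}\, s}{\gamma_r+1}} \;\leq\; 6\,(n_{r-1})^{5/2}\cdot\sqrt{\tfrac{s}{\gamma_r+1}},
\]
matching the statement of the lemma. The main technical obstacle will be carefully justifying the symmetry step---in particular, verifying that after conditioning on $\rW_{x,y}^-$, the variable $\rid{y}$ really is uniform over $\rL \cup \{\rid{y}\}$ and that $(\rMinner{x}(y), \rMpub{x}[\rL])$ form an exchangeable family under this conditioning---which requires a careful unpacking of how $\rids$, the sets $\rLupdown{x \to Y}{t,i}$, and the public messages $\rMpub{x}$ were jointly generated in \Cref{box:r-1-step1-ids}.
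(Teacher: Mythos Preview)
Your plan is essentially the paper's proof: the same weak chain-rule decomposition over pairs $(x,y)$, the same reduction to bounding $\mi{\rMinner{x}(y)}{\rMinner{x}(<y)\mid \rW_x}$ via \Cref{fact:kl-info}, and the same key symmetry step that merges $\rid{y}$ into $\rLupdown{x\to Y}{t,i}\cup\{\rid{y}\}$ so that $\rid{y}$ becomes uniform over $\gamma_r+1$ interchangeable positions, after which the chain rule collapses the sum to a single mutual-information term bounded by $\en{\rMinner{x}(-y)}\le 2n_{r-1}s$ (this is exactly \Cref{clm:msgs-mi-low} in the paper).

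Two small points to tighten: (i) the Pinsker constant in the paper's convention (\Cref{fact:pinskers}, with KL in bits) is $\tfrac{1}{2}$, not $\tfrac{1}{2\ln 2}$; with your constant the displayed final inequality is off by a factor $\sqrt{1/\ln 2}>1$, whereas with $\tfrac12$ it is an equality; (ii) in the symmetry step, ``$\rMpub{x}[\rL]$'' should really be the full message vector $\rmsg{x}{Y}{}[\rL]$ (only the coordinates in $\rL\cap\{<\rid{y}\}$ are actually in $\rMpub{x}$), and your $\rW_{x,y}^-$ must correspondingly drop the $\rMpub{x}$-coordinates indexed by this $\rL$ from the conditioning---the paper handles this by carrying $\rmsg{x}{Y}{}[P\cap\{<y_j\}]$ explicitly through the chain rule.
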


\subsubsection*{Sampling Public Messages without Inner Inputs}

The only difference between $\cH_1$ and $\cDfake$, is that in some parts of $\cH_1$ there is additional conditioning on some parts of the input $\rNinner{x}{}$ of the inner vertices. We show that these conditionings can be slowly broken. To this end, we give our second hybrid distribution that lies between $\cH_1$ and $\cDfake$, which samples public messages according to $\cDfake$.

\begin{Distribution}\label{box:hybrid2-public-msgs}
	\textbf{Distribution $\cH_2$:}
\begin{align*}
	&\rG_{r-1} \times (\rids, \rcJall, \rcKall, \rLall) \tag{the inner graph, identities and auxiliaries}\\
	&\times \eqcolor{(\bigtimes_{x \in G_{r-1}} \rNpub{x}, \rMpub{x} \mid \rids, \raux)} \tag{the public parts of input and messages from step (1) in \Cref{box:r-1-step1-ids}} \\
	&\times (\bigtimes_{x \in G_{r-1}}\bigtimes_{\substack{y \in G_{r-1} \\y \notin X}} (\rMinner{x}(y) \mid  \rids, \raux, \rNpub{x}, \rMpub{x}, \rNinner{x}{})   \tag{messages sent by inner vertices to each other in step (2) from \Cref{box:r-1-step2-inner-msgs}} \\
	&\times (\bigtimes_{x \in G_{r-1}}  \rNrest{x} \mid \rMinner{x}, \rNinner{x}{}, \rids, \raux, \rNpub{x}, \rMpub{x}). \tag{rest of the input and messages for inner vertices in step (3) from \Cref{box:r-1-step3-input-rest}}
\end{align*}
\end{Distribution}

We prove the following lemma in \Cref{subsec:public-msgs}.
\begin{lemma}[``Low correlation between first round messages and inputs of inner vertices'']\label{lem:cH1-cH2-public-msgs}
	\[
		\tvd{\cH_1}{\cH_2} \leq 3n_{r-1}  \cdot \sqrt{\frac{s \cdot \gamma_r  \cdot (r+1) \cdot n_{r-1}}{\alpha_r + 1}}.
	\]
\end{lemma}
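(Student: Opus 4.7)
The plan is to control $\tvd{\cH_1}{\cH_2}$ by reducing, via the weak chain rule of total variation distance (\Cref{fact:tvd-chain-rule}), to per-inner-vertex contributions involving only the public pair $(\rNpub{x}, \rMpub{x})$. Comparing \Cref{box:hybrid1-inner-msgs} with \Cref{box:hybrid2-public-msgs}, the public pair is the unique point of disagreement: it is sampled conditionally on $\rNinner{x}{}$ in $\cH_1$ and independently of $\rNinner{x}{}$ in $\cH_2$, while every subsequent factor ($\rMinner{x}$ and $\rNrest{x}$) is sampled identically given $(\rNpub{x}, \rMpub{x}, \rids, \raux, \rNinner{x}{})$. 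Moreover, $\rNpub{x}$ is a deterministic function of $(\rids, \raux)$, since by \Cref{box:r-1-step1-ids} it merely records type $t$ on every vertex in $\Lupdown{x \to Y}{t, i}$ and these sets are part of $\raux$. Hence, after the chain rule and an application of Pinsker's inequality (\Cref{fact:pinskers}), the KL-to-mutual-information identity (\Cref{fact:kl-info}), and Jensen's inequality, the task boils down to upper bounding
\[
\mi{\rNinner{x}{}}{\rMpub{x} \mid \rids, \raux}
\]
for each of the $3 n_{r-1}$ inner vertices $x$.

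The central step is to bound this mutual information by roughly $\abs{\rMpub{x}}/(\alpha_r + 1)$ using the collection $\cJup{x}$. By construction (Step 3(a) of \Cref{box:cGtilr-definition}), conditional on $(\rids, \raux)$ the inputs $\Nbtogether{x}{}[\Jupdown{x}{1}], \ldots, \Nbtogether{x}{}[\Jupdown{x}{\alpha_r}]$ are independent draws from $\cDmarg$, which is also the marginal law of $\rNinner{x}{}$ (by \Cref{obs:input-indep}-\ref{item:ind-prop-3} and \Cref{clm:symmetry-over-layer}). Hence the $\alpha_r + 1$ candidates are i.i.d.\ and in particular exchangeable given $(\rids, \raux)$, and since $\rMpub{x}$ is produced by the public sampler from $(\rids,\raux)$ together with public randomness, each candidate carries the same amount of information about $\rMpub{x}$. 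Combining exchangeability with the chain rule of mutual information (\itfacts{chain-rule}), the conditional independence of the slots, and \itfacts{uniform} then yields
\[
(\alpha_r + 1)\cdot \mi{\rNinner{x}{}}{\rMpub{x} \mid \rids, \raux} \leq \mi{\rNinner{x}{}, \Nbtogether{x}{}[\Jupdown{x}{1}], \ldots, \Nbtogether{x}{}[\Jupdown{x}{\alpha_r}]}{\rMpub{x} \mid \rids, \raux} \leq \abs{\rMpub{x}}.
\]
Plugging in $\abs{\rMpub{x}} \leq 2 s \cdot \gamma_r \cdot (r+1) \cdot n_{r-1}$ from \Cref{obs:public-msgs-number}, taking square roots (from Pinsker) with absolute constants absorbed, and summing over the $3 n_{r-1}$ inner vertices recovers the claimed bound $3 n_{r-1} \sqrt{s \cdot \gamma_r \cdot (r+1) \cdot n_{r-1}/(\alpha_r + 1)}$.

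The step I expect to require the most care is setting up the exchangeability precisely. I need to argue that $\rMpub{x}$ can be viewed as a function of $(\rids, \raux)$ and public randomness alone, so that its joint distribution with the tuple $(\rNinner{x}{}, \Nbtogether{x}{}[\Jupdown{x}{1}], \ldots, \Nbtogether{x}{}[\Jupdown{x}{\alpha_r}])$ is invariant under any permutation of the coordinates of that tuple. This is a bookkeeping statement about how the public sampler of \Cref{box:r-1-step1-ids} is written, rather than a genuine probabilistic estimate; once it is in place, the chain rule, \itfacts{uniform}, and Pinsker's inequality complete the argument in essentially the same spirit as the bound on the first term of the RHS of~\Cref{eq:overview-2} carried out in \Cref{sec:attempt3}.
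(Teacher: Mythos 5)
Your outer scaffolding matches the paper's proof step for step: weak chain rule of total variation over the $3n_{r-1}$ public pairs $(\rNpub{x},\rMpub{x})$, the observation (as in \Cref{obs:re-inter1}) that $\rNpub{x}$ is determined by $(\rids,\raux)$, Pinsker plus Jensen, and finally the per-vertex bound $\mi{\rMpub{x},\rNpub{x}}{\rNinner{x}{}\mid \rids,\raux} \leq 2s\,\gamma_r(r+1)n_{r-1}/(\alpha_r+1)$, which is exactly \Cref{clm:mi-low-public-msgs}; the numbers then close to the stated bound. But the step you flag as delicate is not ``just bookkeeping'': the exchangeability, as you state it, is false. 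Conditional on $(\rids,\raux)$, the sets $\ridlayer{Y}\cup\ridlayer{Z}$ and $\Jupdown{x}{1},\ldots,\Jupdown{x}{\alpha_r}$ occupy \emph{fixed, distinguishable} positions inside $\rNbtogether{x}{}$, and $\rMpub{x}$ is a deterministic function of the full input of $x$ and $\rid{x}$. An adversarial $\prot_r$ can echo the type of the channel $(x,y_1)$ for a fixed $y_1 \in Y$ inside one of the $L$-set messages; then, for each $(\rids,\raux)$ with $y_1 \in \ridlayer{Y}$, $\rMpub{x}$ reveals a coordinate of $\rNinner{x}{}$ yet is independent of every $\rNbtogether{x}{}[\Jupdown{x}{j}]$. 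So the joint law of $(\rNinner{x}{},\rNbtogether{x}{}[\Jupdown{x}{1}],\ldots,\rMpub{x})$ given $(\rids,\raux)$ is not permutation-invariant, and your displayed chain-rule inequality does not follow from ``each candidate carries the same amount of information.'' The phrase ``$\rMpub{x}$ is produced by the public sampler from $(\rids,\raux)$ together with public randomness'' compounds the slip: it describes the sampler in $\cH_2$, whereas $\mi{\rNinner{x}{}}{\rMpub{x}\mid\rids,\raux}$ must be evaluated under $\cH_1$, where $\rMpub{x}$ is sampled conditionally on $\rNinner{x}{}$ and genuinely correlates with it.

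What rescues the argument is the index-hiding maneuver the paper carries out in \Cref{clm:mi-low-public-msgs}: after reducing to $\mi{\rMpub{x}}{\rNbtogether{x}{}[\ridlayer{Y}\cup\ridlayer{Z}] \mid \ridlayer{Y}\cup\ridlayer{Z}, \rcJup{x}, \rWrest}$, one conditions only on the \emph{unordered} collection $\rcJup{x}\cup\{\ridlayer{Y}\cup\ridlayer{Z}\}=\cJ$, and then argues (i) the index of the true $\ridlayer{Y}\cup\ridlayer{Z}$ inside $\cJ$ is uniform over the $\alpha_r+1$ slots, and (ii) conditioned on $\cJ$ and $\rWrest$, the tuple $(\rMpub{x},\rNbtogether{x}{}[J_1],\ldots,\rNbtogether{x}{}[J_{\alpha_r+1}])$ is jointly independent of which slot is the true one. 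Point (ii) is the symmetry you were reaching for, stated at the right level of conditioning; it requires the case-by-case verification the paper does (that $\rMpub{x}$ is determined by the input of $x$ and $\rid{x}$, that the $J$-inputs are i.i.d.\ from $\cDmarg$, that the remaining auxiliaries sit in disjoint positions, etc.), and the counterexample above is precisely what goes wrong if you instead fix $\rids$. With (i) and (ii), conditional independence of the slots plus \Cref{prop:info-increase} and the chain rule pay the $1/(\alpha_r+1)$, and the rest of your calculation goes through. In short: right plan, right key quantity, but the exchangeability has to be set up over the unordered collection $\cJ$, not conditional on $(\rids,\raux)$.
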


\subsubsection*{Sampling Inner Messages with only Some Inner Inputs}

We come to the last part of our analysis, which shows that distribution $\cH_2$ is close to $\cDfake$. At this stage, the only difference between them is in how the messages between inner vertices are sampled. In $\cH_2$, when sampling the message that $x_i \in G_{r-1}$ sends to $y_j$, we condition on the entire input of the inner vertex $x_i$, whereas in $\cDfake$, we only condition on the input that both inner vertices $x_i$ and $y_j$ are aware of. 
We show that these distributions are close together (see~\Cref{box:cDfake} for the distribution $\cDfake$). 

\begin{lemma}[``Low correlation between messages of inner vertices and parts of their inputs'']\label{lem:cH2-cDfake-inner-msgs}
	\[
		\tvd{\cH_2}{\cDfake} \leq 6(n_{r-1})^2 \cdot  \sqrt{\frac{s}{2(\beta_r + 1)}}.
	\]
\end{lemma}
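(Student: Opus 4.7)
The plan is to bound $\tvd{\cH_2}{\cDfake}$ via a hybrid argument that switches one inner-vertex message at a time. The two distributions agree on $\rG_{r-1}$, the identities $\rids$, the auxiliaries $\raux$, the public samples $(\rNpub{x}, \rMpub{x})$, and the final tail $\rNrest{x}$; they differ only in the third line, where each message $\rMinner{x}(y)$ is conditioned on the full inner input $\rNinner{x}{i}$ in $\cH_2$ but only on the shared type $\rtypeval{x_i,y_j}$ in $\cDfake$. Using the ordering on inner vertex pairs from~\Cref{subsec:setup-analysis} and the weak chain rule of total variation distance (\Cref{fact:tvd-chain-rule}), I would switch each message's conditioning one pair at a time, bounding $\tvd{\cH_2}{\cDfake}$ by a sum over the at most $6 (n_{r-1})^2$ ordered pairs $(x_i, y_j)$ of inner vertices in different layers of the expected TVD between the two conditional distributions of $\rMinner{x_i}(y_j)$.

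For each such pair, Pinsker's inequality (\Cref{fact:pinskers}) combined with~\Cref{fact:kl-info} reduces bounding the expected TVD to bounding a mutual information of the form $\mi{\rMinner{x_i}(y_j)}{\rNinner{x}{i} \mid \rtypeval{x_i,y_j}, \rW}$, where $\rW$ bundles $\rids, \raux, \rNpub{x_i}, \rMpub{x_i}$. Note that previously-switched messages need not enter $\rW$, since each inner message is sampled independently of the others given the appropriate conditioning in both $\cH_2$ and $\cDfake$. The heart of the argument is the auxiliary collection $\cKupdown{x_i \to Y}{t, j}$: by step~\ref{item:K-appears} of~\Cref{box:cGtilr-definition} together with~\Cref{obs:input-indep}, conditioned on $\rtypeval{x_i,y_j}=t$ and $\rW$, the true remaining input of $x_i$ (i.e., $\rNinner{x}{i}$ restricted to pairs other than $y_j$) and each of the $\beta_r$ alternative vectors $\Nbtogether{x}{i}[\Kupdown{x_i \to Y}{t, j, \ell}]$ are i.i.d.\ draws from the corresponding conditional marginal of $\cDmarg$. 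Hence inserting the $K$-variables into the conditioning costs nothing, while from $x_i$'s viewpoint the index of the true remaining input among the $\beta_r+1$ i.i.d.\ candidates is uniform; applying~\itfacts{uniform} then yields $\mi{\rMinner{x_i}(y_j)}{\rNinner{x}{i} \mid \rtypeval{x_i,y_j}, \rW} \leq s/(\beta_r+1)$, since the $s$-bit message can at best refine the posterior over the index by this amount.

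Pinsker then gives a per-pair TVD bound of $\sqrt{s/(2(\beta_r+1))}$, and summing the $6(n_{r-1})^2$ hybrid steps yields the claimed $6(n_{r-1})^2 \cdot \sqrt{s/(2(\beta_r+1))}$. The main obstacle I anticipate is in the middle step: carefully formalizing the exchangeability between the true remaining input of $x_i$ and the $K$-candidates, verifying that the $K$-variables are conditionally independent of $(\rMinner{x_i}(y_j), \rNinner{x}{i})$ given $\rtypeval{x_i,y_j}$ and $\rW$ so that they can be freely added to the conditioning, and then extracting the $1/(\beta_r+1)$ factor through the resulting symmetry over the index of the true input. This mirrors (and is slightly more delicate than) the first-term analysis of~\Cref{eq:overview-2} in the overview.
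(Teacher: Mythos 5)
Your proposal follows essentially the same route as the paper: it uses the weak chain rule of total variation distance over the ordered inner-vertex pairs (with the observation that previously-switched messages drop out of the conditioning due to conditional independence given $\rwstart$), then Pinsker plus \Cref{fact:kl-info} to reduce each step to the mutual information bound, and finally the $\cK$-collection exchangeability to extract the $1/(\beta_r+1)$ factor, which is exactly what \Cref{clm:mi-low-inner-msgs} establishes. The step you flag as the main obstacle—formalizing that $\rid{-y_i}$ is uniform over the $\beta_r+1$ candidate sets conditioned on their union, verifying independence of that index from all other variables in play, and then running the chain rule over the i.i.d.\ candidates to invoke \itfacts{uniform}—is indeed where the bulk of the work in the paper's proof goes, and your outline of it is accurate.
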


The proof of \Cref{lem:cH2-cDfake-inner-msgs} is presented in \Cref{subsec:inner-msgs}. 

\subsection{Proof of \Cref{lem:round-elim-distance}}\label{subsec:round-elim-distance}

Proof of \Cref{lem:round-elim-distance} now follows from  \Cref{lem:hybrid-1-inner-msgs,lem:cH1-cH2-public-msgs,lem:cH2-cDfake-inner-msgs} with some minimal calculations using the values of $\alpha_r$, $\beta_r$ and $\gamma_r$ from \Cref{eq:params-alphabetagamma}.

\begin{lemma*}[Restatement of \Cref{lem:round-elim-distance}]
	\[
	\tvd{\cDreal}{\cDfake} \leq \frac1{n_{r-1}} + 15 \sqrt{\frac{s}{n_{r-1}}}.
	\]
\end{lemma*}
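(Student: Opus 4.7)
The plan is to bound $\tvd{\cDreal}{\cDfake}$ by routing through the three hybrid distributions $\cDrealtil$, $\cH_1$, $\cH_2$ using the triangle inequality for total variation distance:
\[
\tvd{\cDreal}{\cDfake} \leq \tvd{\cDreal}{\cDrealtil} + \tvd{\cDrealtil}{\cH_1} + \tvd{\cH_1}{\cH_2} + \tvd{\cH_2}{\cDfake}.
\]
The first term is bounded by $1/n_{r-1}$ via \Cref{clm:Dreal-Drealtil-close}, and the remaining three terms are handled by \Cref{lem:hybrid-1-inner-msgs}, \Cref{lem:cH1-cH2-public-msgs}, and \Cref{lem:cH2-cDfake-inner-msgs}, respectively. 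So the entire proof is really just an arithmetic verification that plugging in the parameter values from \Cref{eq:params-alphabetagamma} makes each of the three hybrid bounds collapse to $O(\sqrt{s/n_{r-1}})$.

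I would carry out the bookkeeping term by term. For \Cref{lem:hybrid-1-inner-msgs}, substituting $\gamma_r = (n_{r-1})^{6}$ gives
\[
6(n_{r-1})^{5/2}\sqrt{s/(\gamma_r+1)} \leq 6(n_{r-1})^{5/2}\cdot (n_{r-1})^{-3}\sqrt{s} = 6\sqrt{s/n_{r-1}}.
\]
For \Cref{lem:cH1-cH2-public-msgs}, substituting $\alpha_r = (n_{r-1})^{11}$ and $\gamma_r = (n_{r-1})^{6}$ and using $r+1 \le n_{r-1}$ (which holds since $n_{r-1}$ grows doubly exponentially in $r$ by \Cref{eq:params-r-1}), we get
\[
3n_{r-1}\sqrt{\frac{s\cdot \gamma_r\cdot (r+1)\cdot n_{r-1}}{\alpha_r+1}} \leq 3n_{r-1}\sqrt{\frac{s\cdot (n_{r-1})^{8}}{(n_{r-1})^{11}}} = 3\sqrt{s/n_{r-1}}.
\]
For \Cref{lem:cH2-cDfake-inner-msgs}, substituting $\beta_r = (n_{r-1})^{5}$ gives
\[
6(n_{r-1})^{2}\sqrt{s/(2(\beta_r+1))} \leq 6(n_{r-1})^{2}\cdot (n_{r-1})^{-5/2}\sqrt{s} = 6\sqrt{s/n_{r-1}}.
\]
Summing all four pieces yields $\tvd{\cDreal}{\cDfake} \leq 1/n_{r-1} + (6+3+6)\sqrt{s/n_{r-1}} = 1/n_{r-1} + 15\sqrt{s/n_{r-1}}$, as required.

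There is no real obstacle in this step once the three hybrid lemmas are in place; the only thing to be careful about is that the constants and the exponents in $\alpha_r$, $\beta_r$, $\gamma_r$ are set large enough relative to the prefactors $6(n_{r-1})^{5/2}$, $3n_{r-1}\sqrt{(r+1)n_{r-1}}$, and $6(n_{r-1})^2$ that appear in the three hybrid bounds. This is exactly why the parameters in \Cref{eq:params-alphabetagamma} were chosen as $\alpha_r = (n_{r-1})^{11}$, $\beta_r = (n_{r-1})^{5}$, and $\gamma_r = (n_{r-1})^{6}$: each one is tuned just large enough so that its corresponding hybrid contributes $O(\sqrt{s/n_{r-1}})$, and each is polynomially small in $n_{r-1}$ compared to $d_r = (n_{r-1})^{13}$ (and the disjointness budget in \Cref{obs:num-ZZ*}), which is what made the hybrids even well-defined in the first place. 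The bulk of the work, of course, has already been done in proving the three hybrid lemmas themselves; this final lemma is only the clean composition step.
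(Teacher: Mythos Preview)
Your proposal is correct and matches the paper's proof essentially line for line: both route through $\cDrealtil$, $\cH_1$, $\cH_2$ via the triangle inequality, invoke \Cref{clm:Dreal-Drealtil-close} and \Cref{lem:hybrid-1-inner-msgs,lem:cH1-cH2-public-msgs,lem:cH2-cDfake-inner-msgs}, and substitute the parameter values from \Cref{eq:params-alphabetagamma} to collapse each hybrid term to a multiple of $\sqrt{s/n_{r-1}}$, summing to $15\sqrt{s/n_{r-1}}$.
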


\begin{proof}[Proof of \Cref{lem:round-elim-distance}]
	By~\Cref{clm:Dreal-Drealtil-close}, we have
	\[
		\tvd{\cDreal}{\cDrealtil} \leq \frac1{n_{r-1}}.
	\]
	Thus, we can focus on bounding the distance between  $\cDrealtil$ and $\cDfake$ and use triangle inequality.  
	\begin{align*}
		\tvd{\cDrealtil}{\cDfake} &\leq \tvd{\cDrealtil}{\cH_1} + \tvd{\cH_1}{\cH_2} + \tvd{\cH_2}{\cDfake} \tag{by triangle inequality} \\
		&\leq 6 \cdot \sqrt{\frac{s \cdot (n_{r-1})^5}{ (n_{r-1})^6}} +  \tvd{\cH_1}{\cH_2} + \tvd{\cH_2}{\cDfake} \tag{by \Cref{lem:hybrid-1-inner-msgs} and \Cref{eq:params-alphabetagamma}} \\
		&\leq 6 \sqrt{\frac{s}{n_{r-1}}} + 3n_{r-1}  \cdot \sqrt{\frac{s \cdot \gamma_r  \cdot (r+1) \cdot n_{r-1}}{\alpha_r + 1}} + \tvd{\cH_2}{\cDfake} \tag{by \Cref{lem:cH1-cH2-public-msgs}} \\
		&\leq 6 \sqrt{\frac{s}{n_{r-1}}} + 3  \cdot \sqrt{\frac{s \cdot(n_{r-1})^6  \cdot (r+1) \cdot (n_{r-1})^3}{2 \cdot (n_{r-1})^{11}}}  + \tvd{\cH_2}{\cDfake}  \tag{by \Cref{eq:params-alphabetagamma}} \\
		&\leq  6 \sqrt{\frac{s}{n_{r-1}}} + 3  \cdot \sqrt{\frac{ s \cdot(n_{r-1})^6 \cdot (n_{r-1}) \cdot (n_{r-1})^3}{ (n_{r-1})^{11}}}  + \tvd{\cH_2}{\cDfake} \tag{as $ r + 1 \leq n_{r-1}$ for large $n_r$} \\
		& \leq 9 \sqrt{\frac{s}{n_{r-1}}} +   6(n_{r-1})^2 \cdot  \sqrt{\frac{s}{2(\beta_r + 1)}} \tag{by \Cref{lem:cH2-cDfake-inner-msgs}} \\
		&\leq  9 \sqrt{\frac{s}{n_{r-1}}} + 6 \cdot  \sqrt{\frac{s \cdot (n_{r-1})^4}{2((n_{r-1})^5+ 1)}} \tag{by \Cref{eq:params-alphabetagamma}} \\
		&\leq 15 \sqrt{\frac{s}{n_{r-1}}},
	\end{align*}
	concluding the proof. 
\end{proof}
The rest of this section contains the heart of our whole argument, namely, proving different steps of the round elimination protocol incur low losses. 

\subsection{Messages from One Vertex have Low Correlation}\label{subsec:msgs-low-correlation}
In this subsection, we prove \Cref{lem:hybrid-1-inner-msgs}.
Let us recall the two distributions $\cDrealtil$ and $\cH_1$ from \Cref{box:cDrealtil} and \Cref{box:hybrid1-inner-msgs} respectively. 

\vspace{3mm}
\hspace{-1.5em}
\begin{minipage}{0.49\textwidth}
	\small
		\textbf{Distribution $\cDrealtil$:}
		\begin{align*}
		&\rG_{r-1} \times (\rids, \rcJall, \rcKall, \rLall) \\
		&\times (\bigtimes_{x \in G_{r-1}} \rNpub{x}, \rMpub{x} \mid \rids, \raux, \rNinner{x}{})\\
		&~\eqcolor{\times (\bigtimes_{x \in G_{r-1}} (\rMinner{x} \mid  \rids, \raux, \rNpub{x}, \rMpub{x}, \rNinner{x}{})}  \\
		&\times (\bigtimes_{x \in G_{r-1}}  \rNrest{x} \mid \rMinner{x}, \rNinner{x}{}, \rids, \raux, \rNpub{x}, \rMpub{x}). 
	\end{align*}
\end{minipage}
\vrule width 1pt 
\hspace{0.5em} 
\begin{minipage}{0.49\textwidth}
	\small
		\textbf{Distribution $\cH_1$:}
	\begin{align*}
		&\rG_{r-1} \times (\rids, \rcJall, \rcKall, \rLall)\\
		&\times (\bigtimes_{x \in G_{r-1}} \rNpub{x}, \rMpub{x} \mid \rids, \raux, \rNinner{x}{})  \\
		&~\eqcolor{\times (\bigtimes_{x \in G_{r-1}}\bigtimes_{\substack{y \in G_{r-1} \\y \notin X}}  (\rMinner{x}(y) \mid  \rids, \raux, \rNpub{x}, \rMpub{x}, \rNinner{x}{})}   \\
		&\times (\bigtimes_{x \in G_{r-1}}  \rNrest{x} \mid \rMinner{x}, \rNinner{x}{}, \rids, \raux, \rNpub{x}, \rMpub{x}). 
	\end{align*}
\end{minipage}
\vspace{3mm}

The difference is in the third line, in how the messages sent by $x$ to other inner vertices in $G_{r-1}$ are sampled. In $\cDrealtil$, they are jointly sampled conditioned on $\rids, \raux$, $\rNpub{x}, \rMpub{x}$ and $\rNinner{x}{}$. In $\cH_1$, they are sampled independently of each other, conditioned on the same random variables $\rids, \raux$, $\rNpub{x}, \rMpub{x}$ and $\rNinner{x}{}$. 

\paragraph{Notation.}
We use $\Ninner{x}{i}(y)$ to denote the $type(x_i, y)$ in $G_{r-1}$ for inner vertex $y$ in the input of $x_i$. 

\begin{claim}\label{clm:msgs-mi-low}
For every inner vertex $x \in G_{r-1}$, and every other inner vertex $y_i \in Y$ for $i \in [n_{r-1}]$, 
\[
	\mi{\rMinner{x}(y_i)}{\rMinner{x}(-y_i)\mid  \rNinner{x}{}, \rids, \raux, \rNpub{x}, \rMpub{x}} \leq  \frac1{\gamma_r+1} \cdot 2n_{r-1} \cdot s.
\]
\end{claim}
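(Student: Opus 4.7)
The plan is to exploit a symmetry, from $x$'s point of view, between its inner neighbor $\ystar_i \in \Ystar$ and the $\gamma_r$ ``decoy'' vertices of $L := \Lupdown{x \to Y}{t, i}$, where $t := \Ninner{x}{i}(y_i)$ is the type of the channel $(x, \ystar_i)$, which is fixed by the conditioning on $\rNinner{x}{}$. By the construction in \Cref{box:cGtilr-definition}, every vertex in $L$ already carries a type-$t$ channel to $x$, so all $\gamma_r + 1$ vertices of $U := L \cup \{\ystar_i\}$ contribute identical (type-$t$) entries to $x$'s input vector. Moreover, $\idlayer{Y}$ and the set $L$ are jointly sampled in $\cGtil_r(n_r)$ as uniform disjoint subsets of $Y$, so conditioning on $U$ (together with all other identities and auxiliaries) leaves $\ystar_i$ uniformly distributed over $U$.

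Writing $u_1 < u_2 < \cdots < u_{\gamma_r+1}$ for the elements of $U$ in increasing order, and letting $m_j$ denote the message that $x$ sends to $u_j$ in the first round of $\prot_r$ (a deterministic function of $x$'s input), we have: when $\ystar_i = u_j$, the relevant publicly-sampled portion of $\rMpub{x}$ is exactly $(m_1, \ldots, m_{j-1})$ (the messages to $L \cap \{<\ystar_i\}$), while $\rMinner{x}(y_i) = m_j$. I would rewrite the target mutual information as an expectation over the uniform choice of $j \in [\gamma_r+1]$ for $\ystar_i = u_j$, and then apply the chain rule of mutual information (\itfacts{chain-rule}) to collapse the resulting telescoping sum:
\[
\mi{\rMinner{x}(y_i)}{\rMinner{x}(-y_i) \mid \rNinner{x}{}, \rids, \raux, \rNpub{x}, \rMpub{x}} \;=\; \frac{1}{\gamma_r + 1} \sum_{j=1}^{\gamma_r+1} \mi{m_j}{\rMinner{x}(-y_i) \mid W^{*}, m_1, \ldots, m_{j-1}} \;=\; \frac{\mi{m_1, \ldots, m_{\gamma_r+1}}{\rMinner{x}(-y_i) \mid W^{*}}}{\gamma_r + 1},
\]
where $W^{*}$ denotes the original conditioning with $\ystar_i$ and $L$ replaced by $U$ (and the remainder of $\rMpub{x}$ and $\rNpub{x}$ retained). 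The right-hand side is bounded by $\en{\rMinner{x}(-y_i)}/(\gamma_r + 1)$; since $\rMinner{x}(-y_i)$ is a concatenation of at most $2n_{r-1} - 1$ messages of at most $s$ bits each, \itfacts{uniform} gives the claimed bound of $\frac{1}{\gamma_r + 1} \cdot 2 n_{r-1} \cdot s$.

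The main obstacle will be making the symmetry step fully rigorous. Concretely, one must verify that after conditioning on $W^{*}$, the joint distribution of $(\ystar_i, m_1, \ldots, m_{\gamma_r+1}, \rMinner{x}(-y_i))$ has $\ystar_i$ uniform on $U$ and independent of all the messages. This invariance rests on two points: (i) the entries of $x$'s input at positions $u_1, \ldots, u_{\gamma_r+1}$ are all type $t$ regardless of which $u_j$ is $\ystar_i$, so by determinism of $\prot_r$ the joint distribution of the outgoing messages from $x$ is invariant under relabeling $\ystar_i$ within $U$; and (ii) the remaining parts of the conditioning (those entries of $\rNpub{x}$ and $\rMpub{x}$ not attached to $U$, as well as $\rNinner{x}{}$ and the other identities/auxiliaries) are sampled from processes that do not distinguish among the $u_j$'s. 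Tracking these invariances carefully through the layered conditional structure of $\cDrealtil$---and in particular verifying that the ``$<\ystar_i$'' prefix in $\rMpub{x}$ cleanly produces the telescoping chain on $U$ under averaging---is where the bulk of the technical work will lie.
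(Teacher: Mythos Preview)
Your proposal is correct and follows essentially the same approach as the paper's proof: both exploit the symmetry between $\ystar_i$ and the $\gamma_r$ decoy vertices in $\Lupdown{x\to Y}{t,i}$ (all carrying type-$t$ channels to $x$), argue that $\ystar_i$ is uniform over the union $U$ after suitable reconditioning, drop the event $\rid{y_i}=u_j$ via the independence verification you outlined as points (i) and (ii), and then apply the chain rule to collapse the telescoping sum and bound by $\en{\rMinner{x}(-y_i)}/(\gamma_r+1)$. The paper carries out exactly the independence check you flagged as the ``main obstacle,'' via a bullet-by-bullet case analysis of every conditioned variable, so your plan matches the paper's execution closely.
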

\begin{proof}
	By our notation, we have $\rid{y_i} = \ystar_i$. 
	First, we have, 
	\begin{align*}
			&\mi{\rMinner{x}(y_i)}{\rMinner{x}(-y_i)\mid  \rNinner{x}{}, \rids, \raux, \rNpub{x}, \rMpub{x}} \\
			&\mi{\rMinner{x}(y_i)}{\rMinner{x}(-y_i)\mid  \rNinner{x}{}, \rids, \raux, \rNpub{x}, \rMpub{x}, \eqcolor{\rtypeval{x, y_i}}} \tag{as $\rtypeval{x, y_i}$ is fixed by $\rNinner{x}{}$} \\
			&= 	\mi{\eqcolor{\rmsg{x}{Y}{}[\rid{y_i}]}}{\rMinner{x}(-y_i)\mid \rNinner{x}{}, \rids, \raux, \rNpub{x}, \rMpub{x}, \rtypeval{x, y_i}} \tag{as $\rid{y_i}$ is the vertex in $G$ to which $y_i$ is mapped} \\
			&= \eqcolor{\Exp_{t \sim \rtypeval{x, y_i}}}\mi{\rmsg{x}{Y}{}[\rid{y_i}]}{\rMinner{x}(-y_i)\mid \rNinner{x}{}, \rids, \raux, \rNpub{x}, \rMpub{x}, \eqcolor{\rtypeval{x, y_i}= t}} \tag{by the definition of conditional mutual information}.
	\end{align*}
	We use $\rWrest$ to be the joint random variable $\rNinner{x}{}, \rid{w}$ for each $w\neq y_i$ and $w \neq x$, $\rcJall, \rcKall$, all random variables in $\rLall$ barring $\rLupdown{x \to Y}{t, i}$, all random variables in $\rNpub{x}$ barring $\rNbhood{x}{Y}{}[\rLupdown{x \to Y}{t, i}]$, all random variables in $\rMpub{x}$ barring $\rmsg{x}{Y}{}[y_j]$ for $y_j \in \rLupdown{x \to Y}{t, i} \cap \{<\ystar_i\}$. 
	These random variables are bundled together because they are not relevant to the rest of the proof.

	We prove the statement for each type $t$ separately. Let $\Et$ denote the event that $\rtypeval{x, y_i} = t$. 
	We omit the superscript $ \rightarrow Y$ and replace $x \rightarrow Y$ by $\vecx$ to avoid the clutter. 
	\begin{align*}
		&	\mi{\rmsg{x}{Y}{}[\rid{y_i}]}{\rMinner{x}(-y_i)\mid \rNinner{x}{}, \rids, \raux, \rNpub{x}, \rMpub{x}, \rtypeval{x, y_i}= t} \\
		&=	\mi{\rmsg{x}{Y}{}[\rid{y_i}]}{\rMinner{x}(-y_i)\mid \eqcolor{\rid{y_i}, \rid{x}, \rLupdown{x \to Y}{t, i}, \rNbhood{x}{Y}{}[\rLupdown{x \to Y}{t, i}], \rmsg{x}{Y}{}[\rLupdown{x \to Y}{t, i} \cap \{< \rid{y_i}\}], \rWrest, \Et}} \\
		&=	\mi{\rmsgup{\vecx}[\rid{y_i}]}{\rMinner{x}(-y_i)\mid \rid{y_i}, \rid{x}, \rLupdown{\vecx}{t, i}, \rNbhoodup{\vecx}[\rLupdown{\vecx}{t, i}] ,\rmsgup{\vecx}[\rLupdown{\vecx}{t, i} \cap\{ <\rid{y_i}\}], \rWrest, \Et}  \tag{changing superscript $x \to Y$ to $\vecx$ for readability}\\
		&=	\mi{\rmsgup{\vecx}[\rid{y_i}]}{\rMinner{x}(-y_i)\mid \rid{y_i}, \rid{x}, \rLupdown{\vecx}{t, i}, \eqcolor{\rNbhoodup{\vecx}[\rLupdown{\vecx}{t, i} \cup \{\rid{y_i}\}]},\rmsgup{\vecx}[\rLupdown{\vecx}{t, i} \cap \{<\rid{y_i}\}], \rWrest, \Et}  \tag{as $\rNbhoodup{\vecx}[\rid{y_i}]$ is fixed to be $t$, conditioned on $\Et$ and thus conditioning on $\rNbhoodup{\vecx}[\rid{y_i}]$ is w.l.o.g.}\\
		 &= 	\mi{\rmsgup{\vecx}[\rid{y_i}]}{\rMinner{x}(-y_i)\mid \rid{y_i}, \rid{x}, \eqcolor{\rLupdown{\vecx}{t, i} \cup \{\rid{y_i}\}}, \rNbhoodup{\vecx}[\rLupdown{\vecx}{t, i} \cup \{\rid{y_i}\}], \\
		 	&\hspace{85mm}\rmsgup{\vecx}[\rLupdown{\vecx}{t, i} \cap\{ < \rid{y_i}\}],\rWrest, \Et} \tag{as $\rLupdown{\vecx}{t, i}, \rid{y_i}$ are fixed by $\rLupdown{\vecx}{t, i} \cup \{\rid{y_i}\}, \rid{y_i}$ and vice-versa} \\
		 &= \eqcolor{\Exp_{\rLupdown{\vecx}{t, i} \cup \{\rid{y_i}\} = P}}	\mi{\rmsgup{\vecx}[\rid{y_i}]}{\rMinner{x}(-y_i)\mid \rid{y_i}, \rid{x}, \eqcolor{\rNbhoodup{\vecx}[P], \rmsgup{\vecx}[P\cap \{<\rid{y_i}\}]}, \\
		 	&\hspace{85mm}\rWrest, \eqcolor{\rLupdown{\vecx}{t, i} \cup \{\rid{y_i}\} = P}, \Et}. \tag{by the definition of conditional mutual information}
		\end{align*}
		Again, we prove the statement for every set $P \subset [n_r]$ of size $\gamma_r + 1$. Let $\Econd$ be the event that $\rLupdown{\vecx}{t, i} \cup \{\rid{y_i}\} = P$ and $\Et$ happen. 
		
		We argue that conditioned $\Econd$, the value of $\rid{y_i}$ is uniform over the set $P$. This is because, by the definition of $\rLupdown{\vecx}{t, i}$, for all $y_j \in \rLupdown{\vecx}{t, i}$, the value of $\Nbhood{x}{Y}{}[y_j]  = t$, similar to $\rNbhood{x}{Y}{}[\rid{y_i}]$. 
		The random variable $\rid{y_i}$ is uniform over $[n_r]$, and $\rLupdown{\vecx}{t, i}$ is a uniformly random set of size $\gamma_r$ that does not contain $\rid{y_i}$. If $P$ is the set $\rLupdown{\vecx}{t, i} \cup \{\rid{y_i}\}$, the value of $\rid{y_i}$ can be any value in this set with equal probability. As such, continuing the above equations for any fixed $P$ and $\Econd$, we will get, 
		\begin{align*}
				&\mi{\rmsgup{\vecx}[\rid{y_i}]}{\rMinner{x}(-y_i)\mid \rid{y_i}, \rid{x}, \rmsgup{\vecx}[P\cap \{<\rid{y_i}\}], \rWrest, \Econd} \\
				&= \eqcolor{\Exp_{y_j \in P}} \mi{\rmsgup{\vecx}[y_j]}{\rMinner{x}(-y_i)\mid \rid{x}, \rmsgup{\vecx}[P\cap \{<y_j\}], \rWrest,  \eqcolor{\rid{y_i} = y_j}, \Econd} \\
				&= \eqcolor{\frac1{\gamma_r+1} \cdot \sum_{y_j \in P}} \mi{\rmsgup{\vecx}[y_j]}{\rMinner{x}(-y_i)\mid \rid{x}, \rmsgup{\vecx}[P\cap \{<y_j\}], \rWrest,  \rid{y_i} = y_j, \Econd}.  \tag{as $\rid{y_i}$ is uniform over $P$}
			\end{align*}
	Now, we argue that the event $\rid{y_i}  = y_j$ is independent of the joint distribution of all the other random variables in the mutual information term, conditioned on $\Econd$. 
	Let us list these random variables, and argue in steps (in each step, we condition on everything in the previous steps also). 
	\begin{itemize}
		\item $\rid{w}$ for $w \neq y_i$: these are disjoint from $P$ and independent of the identity of $y_i$ inside $P$. 
		\item $\rNbhood{x}{Y}{}[P]$: this is deterministically fixed to all be $t$ conditioned on $\Econd$.
		\item $\rNbhood{x}{Y}{}$: the input of $x$ to layer $Y$ is already fixed to be type $t$ for all elements of $P$. The rest of the input is independent of \emph{which} of these choices are the actual identity of $y_i$. 
		\item $\rNbhood{x}{Z}{}$: the input of $x$ to layer $Z$ is independent of which of the identities in $P$ belong to $y_i$. 
		\item $\rmsg{x}{Y}{}, \rmsg{x}{Z}{}$: we know the inputs of $x$ are independent of the event $\rid{y_i} = y_j$. As protocol $\prot_r$ is deterministic, the messages sent by $x$ are independent of this event also. 
		\item $\rcJall, \rcKall$: these are sets which are disjoint from $P$, and are also independent of $\rid{y_i} = y_j$.
		\item $\rNinner{x}{}$: this input to $x$ comes from $G_{r-1}$, which is independent of the identity of $y_i$. 
		\item $\rNpub{x}$ barring $\rNbhood{x}{Y}{}[P]$: this is comprised of inputs chosen for $x$ on sets disjoint from $P$ and are independent of which identity in $P$ is given to $y_i$. 
		\item $\rLall$ barring $\rLupdown{x \to Y}{t, i}$: these are all sets disjoint from $P$, independent of what happens inside $P$. 
		\item $\rWrest$: we have argued that all the random variables in $\rWrest $ are independent of $P$.
	\end{itemize}
	As such, the joint distribution of these random variables is independent of the event $\rid{y_i} = y_j$ conditioned on $\Econd$. 
	Hence, we can continue as,
		\begin{align*}
			& \frac1{\gamma_r+1} \cdot \sum_{y_j \in P}  \mi{\rmsgup{\vecx}[y_j]}{\rMinner{x}(-y_i)\mid \rid{x}, \rmsgup{\vecx}[P\cap \{<y_j\}], \rWrest,  \rid{y_i} = y_j, \Econd} \\
			& = \frac1{\gamma_r+1} \cdot \sum_{y_j \in P}  \mi{\rmsgup{\vecx}[y_j]}{\rMinner{x}(-y_i)\mid \rid{x}, \rmsgnos[P\cap \{<y_j\}], \rWrest, \Econd} \tag{as $\rid{y_i} = y_j$ is independent of all random variables together conditioned on $ \Econd$} \\
			&=  \frac1{\gamma_r+1} \cdot  \mi{\rmsgup{\vecx}[P]}{\rMinner{x}(-y_i)\mid \rid{x}, \rWrest, \Econd} \tag{by the chain rule of mutual information in \itfacts{chain-rule}} \\
			&\leq \frac1{\gamma_r + 1} \cdot \en{\rMinner{x}(-y_i) \mid \Econd} \tag {by \itfacts{info-entropy}} \\
			&\leq \frac1{\gamma_r + 1} \cdot 2n_{r-1} \cdot s. \tag{as $\rMinner{x}(-y_i)$ has $2n_{r-1}-1$ messages of length at most $s$ sent by $x$}
	\end{align*}
	Putting things together, we have, 
	\begin{align*}
	&	\mi{\rMinner{x}(y_i)}{\rMinner{x}(-y_i)\mid  \rNinner{x}{}, \rids, \raux, \rNpub{x}, \rMpub{x}} \\
	&\hspace{30pt}\leq \Exp_{\substack{\rtypeval{x, y_i} \\ = t}} ~ \Exp_{\substack{\rLupdown{x \to Y}{t, i} \cup \{\rid{y_i}\} \\ = P}} \bracket{\frac1{\gamma_r + 1} \cdot 2n_{r-1} \cdot s} \tag{by all the bounds above} \\
	&\hspace{30pt}= \frac1{\gamma_r + 1} \cdot 2n_{r-1} \cdot s.
	\end{align*}
This completes the proof.
\end{proof}

We can now prove \Cref{lem:hybrid-1-inner-msgs} using the weak chain rule of total variation distance. First, let us recall this chain rule from \Cref{fact:tvd-chain-rule}. 
For any two distributions $\mu, \nu$ on $k$ random variables $\rw^1, \rw^2, \ldots, \rw^k$, we have, 
\[
	\tvd{\mu}{\nu} \leq \sum_{i \in [k]} \Exp_{\rw^{<i} \sim \mu} \tvd{\mu(\rw^i \mid \rw^{<i})}{\nu(\rw^{i} \mid \rw^{<i} )}. 
\]
To bound the distance between $\cDrealtil$ from \Cref{box:cDrealtil} and $\cH_1$ from \Cref{box:hybrid1-inner-msgs}, we use the lexicographic ordering on all the vertices and inner channels.

We need to define some more random variables. These random variables are local to this subsection. They may be used for different purposes later. (See \Cref{sec:list-rv} for a list of global random variables.)
\begin{itemize}
	\item Variable $\rwstart$: This is the joint random variable $\rG_{r-1}, \rids$, $\rcJall, \rcKall, \rLall$ along with $\rNpub{x}, \rMpub{x}$ for each inner vertex $x$. 
	\item Variables $\rw^{x \to y}$ for inner vertices $x, y$ in different layers : this is the random variable $\rMinner{x}(y)$ for inner vertices $x, y$.  
	\item Variables $\rw^{x}$ for each inner vertex $x$: this is the joint random variable of $\rw^{x \to y}$ for all inner vertices $y$ which are not in the same layer as $x$. We use the lexicographic ordering defined earlier to order these random variables based on $y$.  
	\item Variables $\rwend$: Joint random variable $\rNrest{x}$ for each inner vertex $x$. 
\end{itemize}

We need some simple observations.
\begin{observation}\label{obs:inter-hybrid-list}
About random variables associated with $\cDrealtil$ and $\cH_1$:
\begin{enumerate}[label=$(\roman*)$]
\item \label{item:hybrid1-1}	The random variable $\rwstart$ is distributed the same way in $\cDrealtil$ and $\cH_1$.
\item \label{item:hybrid1-2} In both $\cDrealtil$ and $\cH_1$, $\rw^{w} \perp \rw^{w'} \mid \rwstart$ for any distinct pair of inner vertices $w, w' \in A \cup B \cup C$. This is true regardless of whether $w, w'$ are in the same layer or in different layers. 
\item \label{item:hybrid1-3} 	Conditioned on any choice of random variable $\rwstart$ and $\rw^{x}$ for all inner vertices $x$, distribution of $\rwend$ is the same in $\cDrealtil$ and $\cH_1$. 
\item \label{item:hybrid1-4} In distribution $\cDrealtil$ and $\cH_1$, for any inner vertex $x$, $\rw^{x} \perp \rwstart \mid \rNinner{x}{}, \rids, \raux, \rNpub{x}, \rMpub{x}$. 
\item \label{item:hybrid1-5} In distribution $\cH_1$, for any inner vertex $x$, $\rw^{x\to w} \perp \rw^{x \to w'} \mid \rNinner{x}{}, \rids, \raux, \rNpub{x}, \rMpub{x}$, for any distinct pair of inner vertices $w, w' \in Y \cup Z$ of $G_{r-1}$. Again, $w$ and $w'$ may be in the same layer or different layers.
\end{enumerate}
\end{observation}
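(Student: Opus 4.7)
The plan is to verify each of the five claims by direct inspection of the sampling structures of $\cDrealtil$ and $\cH_1$ in \Cref{clm:cDrealtil-new-way} and \Cref{box:hybrid1-inner-msgs}, leveraging the independence properties collected in \Cref{obs:input-indep} and the fact that $\prot_r$ is deterministic (so every message is a function of the sending vertex's input). All five parts are structural, so the argument reduces to bookkeeping over which random variables come from which conditional distributions.

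For part \ref{item:hybrid1-1}, the joint law of $\rwstart = (\rG_{r-1}, \rids, \raux, \{\rNpub{x}, \rMpub{x}\}_x)$ is generated by the first two lines of each distribution, which are syntactically identical. For part \ref{item:hybrid1-2}, once $\rwstart$ is fixed, $\rNinner{w}{}$ is fixed for every inner $w$ (since it lives inside $\rG_{r-1}$); the third line of each distribution then writes $\rw^w$ as a function of $(\rNinner{w}{}, \rids, \raux, \rNpub{w}, \rMpub{w})$ together with independent per-$w$ randomness, so distinct $\rw^w$ are conditionally independent given $\rwstart$ in either distribution. Part \ref{item:hybrid1-3} is immediate because the fourth line of the two distributions, which defines $\rwend$, is syntactically identical and depends only on $\rwstart$ and the $\rw^x$'s.

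For part \ref{item:hybrid1-4}, I invoke \Cref{obs:input-indep}. In both $\cDrealtil$ and $\cH_1$, the variable $\rw^x$ is drawn from $p(\cdot \mid \rNinner{x}{}, \rids, \raux, \rNpub{x}, \rMpub{x})$; since $\prot_r$ is deterministic, this is the pushforward, under the message map, of the conditional law of $x$'s full input given those same variables. By \Cref{obs:input-indep}-\ref{item:ind-prop-2}, $x$'s full input is independent of $\rG_{r-1} \setminus \rNinner{x}{}$ given $(\rNinner{x}{}, \rids, \raux)$, and by \Cref{obs:input-indep}-\ref{item:ind-prop-1} the inputs of the other inner vertices $x' \neq x$, together with their deterministic public projections $\rNpub{x'}, \rMpub{x'}$, are independent of $x$'s input given $(\rG_{r-1}, \rids, \raux)$. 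These two facts together exhaust $\rwstart$ and show that the extra information in it beyond the conditioning set contributes nothing to $\rw^x$.

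Finally, part \ref{item:hybrid1-5} holds by the construction of $\cH_1$: its third line explicitly writes the law of $\rw^x$ as a product $\bigtimes_{y \notin X}(\rMinner{x}(y) \mid \rNinner{x}{}, \rids, \raux, \rNpub{x}, \rMpub{x})$, so the messages $\rw^{x \to w}$ and $\rw^{x \to w'}$ to different inner recipients are mutually independent once the listed variables are fixed. I do not expect a real obstacle in any of the five parts; the only subtlety worth calling out is in part \ref{item:hybrid1-4}, where one must explicitly note that each $\rNpub{x'}, \rMpub{x'}$ with $x' \neq x$ is a deterministic function of $(\rids, \raux, \rNinner{x'}{}, \text{private randomness of } x')$ in order to apply \Cref{obs:input-indep} and detach it from $\rw^x$.
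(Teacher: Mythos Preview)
Your proposal is correct and follows essentially the same approach as the paper: each part is read off directly from the product-form definitions of $\cDrealtil$ (via \Cref{clm:cDrealtil-new-way}) and $\cH_1$, with parts \ref{item:hybrid1-1}--\ref{item:hybrid1-3} and \ref{item:hybrid1-5} handled exactly as the paper does. For part \ref{item:hybrid1-4} you are slightly more careful than the paper---you invoke \Cref{obs:input-indep} to justify why the $\rNpub{x'},\rMpub{x'}$ for $x' \neq x$ carry no additional information about $\rw^x$---whereas the paper simply appeals to the explicit factorization (``$\rw^x$ is sampled conditioned only on $\rids,\raux,\rNinner{x}{},\rNpub{x},\rMpub{x}$''), which already encodes that independence by construction.
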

\begin{proof}
Part \ref{item:hybrid1-1} is apparent from the definition of distributions $\cDrealtil$ and $\cH_1$, as all random variables in $\rwstart$ are sampled the same way in $\cDrealtil$ and $\cH_1$.

For part \ref{item:hybrid1-2}, we know that when $\rw^{w}$ is sampled in $\cDrealtil$, it is sampled conditioned on random variables $\rids, \raux,  \rNinner{w}{}$, $\rNpub{w}$, and $\rMpub{w}$ and $\rids$. All these random variables are fixed by $\rwstart$. Thus, $\rw^{w}$ and $\rw^{w'}$ are independent for two distinct inner vertices $w, w'$ conditioned on $\rwstart$ in distribution $\cDrealtil$. 

Similarly in $\cH_1$, random variable $\rw^{w}$ is sampled only conditioned on $\rids, \raux,  \rNinner{w}{}$, $\rNpub{w}$, and $\rMpub{w}$, all of which are fixed by $\rwstart$.

Part \ref{item:hybrid1-3} is clear, again from the definition of $\cDrealtil$ and $\cH_1$ as $\rNrest{x}$ is sampled the same way in the two distributions for all inner vertices $x$ for any choice of the other random variables. 

Part \ref{item:hybrid1-4} is evident from how $\rw^{x}$ are sampled. They are conditioned only on $\rids, \raux,  \rNinner{x}{}$, $\rNpub{x}$, and $\rMpub{x}$ and are independent of the other random variables in $\rwstart$.

For part \ref{item:hybrid1-5}, in distribution $\cH_1$, we know that when $\rw^{x \to w} = \rMinner{x}(w)$ is sampled, it is conditioned only on $\rids, \raux,  \rNinner{x}{}$, $\rNpub{x}$, and $\rMpub{x}$. It is independent of the messages sent by $x$ to other inner vertices, conditioned on these random variables. 
\end{proof}

First, we show that the distribution of $\rw^{x}$ in $\cDrealtil$ and $\cH_1$ are close to each other for all inner vertices $x$, conditioned on $\rwstart$. 

\begin{claim}\label{clm:inter-hybrid-1}
		For any inner vertex $x$, we have, 
		\[
			\Exp_{\rwstart \sim \cDrealtil} \tvd{\cDrealtil(\rw^{x} \mid \rwstart )}{\cH_1(\rw^{x} \mid \rwstart )} \leq 2 \cdot (n_{r-1})^{3/2} \cdot \sqrt{\frac{s}{\gamma_r + 1}}.
		\]
\end{claim}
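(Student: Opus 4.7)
The plan is a hybrid/chain-rule argument that reduces the joint TVD on $\rw^{x}$ to a sum of per-message mutual information terms, each of which is controlled by~\Cref{clm:msgs-mi-low}.

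First, I would fix the lexicographic ordering $y_1,y_2,\ldots,y_{2n_{r-1}}$ from~\Cref{subsec:setup-analysis} on the $2n_{r-1}$ inner vertices that lie outside the layer $X$ of $x$, and write $\rw^{x}=(\rw^{x\to y_1},\ldots,\rw^{x\to y_{2n_{r-1}}})$. Applying the weak chain rule of total variation distance (\Cref{fact:tvd-chain-rule}) with respect to this ordering, conditioned on $\rwstart$, gives
\[
\Exp_{\rwstart \sim \cDrealtil}\!\tvd{\cDrealtil(\rw^x\mid\rwstart)}{\cH_1(\rw^x\mid\rwstart)}
\leq \sum_{i=1}^{2n_{r-1}} \Exp_{\rwstart,\rw^{x\to<y_i}\sim\cDrealtil}\!\tvd{\cDrealtil(\rw^{x\to y_i}\mid \rw^{x\to <y_i},\rwstart)}{\cH_1(\rw^{x\to y_i}\mid \rw^{x\to <y_i},\rwstart)}.
\]

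Next, I would simplify each term in the sum. By~\Cref{obs:inter-hybrid-list}-\ref{item:hybrid1-5}, in $\cH_1$ the messages $\rw^{x\to y_i}$ are mutually independent conditioned on $\rNinner{x}{},\rids,\raux,\rNpub{x},\rMpub{x}$, all of which are determined by $\rwstart$; hence $\cH_1(\rw^{x\to y_i}\mid \rw^{x\to <y_i},\rwstart)=\cH_1(\rw^{x\to y_i}\mid \rwstart)$. Moreover, under both $\cDrealtil$ and $\cH_1$ each individual $\rw^{x\to y_i}$ is sampled from the same conditional distribution given $(\rNinner{x}{},\rids,\raux,\rNpub{x},\rMpub{x})$ (in $\cDrealtil$ as a marginal of the joint; in $\cH_1$ by construction), so $\cH_1(\rw^{x\to y_i}\mid \rwstart)=\cDrealtil(\rw^{x\to y_i}\mid \rwstart)$. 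Thus each summand equals
\[
\Exp \tvd{\cDrealtil(\rw^{x\to y_i}\mid \rw^{x\to <y_i},\rwstart)}{\cDrealtil(\rw^{x\to y_i}\mid \rwstart)}.
\]
Applying Pinsker's inequality (\Cref{fact:pinskers}), Jensen's inequality, and the KL-to-mutual-information identity (\Cref{fact:kl-info}) bounds this by $\sqrt{\tfrac{1}{2}\,\mi{\rw^{x\to y_i}}{\rw^{x\to <y_i}\mid\rwstart}}$.

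To finish, I would remove the extra conditioning in $\rwstart$ beyond $(\rNinner{x}{},\rids,\raux,\rNpub{x},\rMpub{x})$. By~\Cref{obs:inter-hybrid-list}-\ref{item:hybrid1-4}, $\rw^{x}$ is independent of the remaining variables of $\rwstart$ given these, so the standard identity for conditional mutual information under this independence lets me replace $\rwstart$ by $(\rNinner{x}{},\rids,\raux,\rNpub{x},\rMpub{x})$ without changing the value. Then monotonicity gives
\[
\mi{\rw^{x\to y_i}}{\rw^{x\to <y_i}\mid\rwstart} \leq \mi{\rMinner{x}(y_i)}{\rMinner{x}(-y_i)\mid \rNinner{x}{},\rids,\raux,\rNpub{x},\rMpub{x}} \leq \frac{2n_{r-1}\cdot s}{\gamma_r+1},
\]
by~\Cref{clm:msgs-mi-low}. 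Summing the square roots over the $2n_{r-1}$ terms yields
\[
\Exp\tvd{\cDrealtil(\rw^{x}\mid\rwstart)}{\cH_1(\rw^{x}\mid\rwstart)} \leq 2n_{r-1}\cdot \sqrt{\frac{n_{r-1}\cdot s}{\gamma_r+1}} = 2\,(n_{r-1})^{3/2}\sqrt{\frac{s}{\gamma_r+1}},
\]
as desired. The only delicate step is the marginal-matching in the second paragraph together with the conditional-independence rewrite in the third; the rest is a mechanical application of Pinsker plus~\Cref{clm:msgs-mi-low}.
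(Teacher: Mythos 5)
Your proposal is correct and takes essentially the same route as the paper: decompose $\rw^x$ over the $2n_{r-1}$ recipients via the weak TVD chain rule, reduce each summand to an expected KL-divergence via Pinsker and Jensen, convert to $\mi{\rMinner{x}(u_j)}{\rMinner{x}(<u_j)\mid\rWcond}$ via~\Cref{fact:kl-info}, pass from $\rMinner{x}(<u_j)$ to $\rMinner{x}(-u_j)$ by data processing, and finish with~\Cref{clm:msgs-mi-low}. The only minor difference is that you spell out the marginal-matching $\cH_1(\rw^{x\to y_i}\mid\rwstart)=\cDrealtil(\rw^{x\to y_i}\mid\rwstart)$ explicitly, which the paper leaves implicit in writing the two conditional distributions side by side; your term ``monotonicity'' for the step replacing $\rMinner{x}(<u_j)$ by $\rMinner{x}(-u_j)$ is really the data processing inequality, as the paper notes.
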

\begin{proof}
Let $u_1, u_2, \ldots, u_{2n_{r-1}}$ be all the inner vertices not in the same layer as $x$, following the ordering we defined.
We use $\rw^j$ to denote the random variable $\rw^{x\to u_j}$ for $j \in [2n_{r-1}]$. We use $\rw^0$ to denote $\rwstart$. We use $\rw^{<j}$ to denote the random variables $\rw^0, \rw^1, \ldots \rw^{j-1}$.
We have, 
\begin{align*}
&\Exp_{\rwstart \sim \cDrealtil} \tvd{\cDrealtil(\rw^{x} \mid \rwstart )}{\cH_1(\rw^{x} \mid \rwstart )} \\
&\leq  \sum_{j \in [2n_{r-1}]} \Exp_{\rw^{<j}\sim \cDrealtil} \tvd{\cDrealtil(\rw^j \mid \rw^{<j})}{\cH_1(\rw^j \mid \rw^{<j} )} \tag{by the weak chain rule of total variation distance \Cref{fact:tvd-chain-rule}}\\
&\leq \frac1{\sqrt{2}} \cdot \sum_{j \in [2n_{r-1}]}  \Exp_{\rw^{<j}\sim \cDrealtil} \sqrt{\kl{\cDrealtil(\rw^j \mid \rw^{<j} )}{\cH_1(\rw^j \mid \rw^{<j} )}} \tag{by Pinsker's inequality \Cref{fact:pinskers}} \\
&\leq \frac1{\sqrt{2}} \cdot \sum_{j \in [2n_{r-1}]}  \sqrt{ \Exp_{\rw^{<j}\sim \cDrealtil} \kl{\cDrealtil(\rw^j \mid \rw^{<j})}{\cH_1(\rw^j \mid \rw^{<j} )}}. \tag{by Jensen's inequality and concavity of square root} 
\end{align*}

We bound the KL-divergence term separately for each $j \in [2n_{r-1}]$. 
Let us use $\rMinner{x}(< u_{j})$ to denote the random variables $\rMinner{x}(u_1), \rMinner{x}(u_2), \ldots, \rMinner{x}(u_{j-1})$.  We use $\rWcond$ to denote the random variables $\raux, \rids,$ $ \rNinner{x}{}, \rNpub{x}$, and $\rMpub{x}$ as they always appear in the conditioning.

We know by \Cref{obs:inter-hybrid-list}-\ref{item:hybrid1-4} that the distribution of $\rw^j \mid \rw^{<j}$ in $\cDrealtil$ is,
\[
	\cDrealtil(\rw^j \mid \rw^{<j}) = \rMinner{x}(u_j) \mid \rMinner{x}(< u_j),  \rWcond.
\]
We know by \Cref{obs:inter-hybrid-list}-\ref{item:hybrid1-5} and \ref{item:hybrid1-4} that the distribution of $\rw^j \mid \rw^{<j}$ in $\cH_1$ is, 
\[
	\cH_1(\rw^j \mid \rw^{<j}) =  \rMinner{x}(u_j) \mid \rWcond.
\]
Combining the above two equations gives us, 
\begin{align*}
&\Exp_{\rw^{<j}\sim \cDrealtil}	\kl{\cDrealtil(\rw^j \mid \rw^{<j})}{\cH_1(\rw^j \mid \rw^{<j} )} \\
&\hspace{30pt}= \Exp_{\rMinner{x}(<u_j), \rWcond} \kl{\rMinner{x}(u_j) \mid\rMinner{x}(<u_j),\rWcond }{\rMinner{x}(u_j) \mid \rWcond} \\
&\hspace{30pt}\leq \mi{\rMinner{x\to u_j}}{\rMinner{x}(<u_j) \mid  \rWcond} \tag{by \Cref{fact:kl-info}} \\
&\hspace{30pt}\leq  \mi{\rMinner{x\to u_j}}{\rMinner{x}(-u_j)  \mid \rWcond} \tag{by \itfacts{data-processing}, as $\rMinner{x}(< u_j)$ is fixed by $\rMinner{x}(- u_j)$} \\
&\hspace{30pt}=   \mi{\rMinner{x\to u_j}}{\rMinner{x}(-u_j)  \mid\rids, \raux, \rNinner{x}{}, \rNpub{x}, \rMpub{x}} \tag{by definition of $\rWcond$} \\
&\hspace{30pt}\leq s \cdot 2n_{r-1} \cdot 1/(\gamma_r + 1) . \tag{by \Cref{clm:msgs-mi-low}}
\end{align*}
Putting things together, we have, 
\begin{align*}
	&	\Exp_{\rwstart \sim \cDrealtil} \tvd{\cDrealtil(\rw^{x} \mid \rwstart )}{\cH_1(\rw^{x} \mid \rwstart )} \\
	&\hspace{30pt}\leq  \frac1{\sqrt{2}} \cdot \sum_{j \in [2n_{r-1}]}  \sqrt{ \Exp_{\rw^{<j}\sim \cDrealtil} \kl{\cDrealtil(\rw^j \mid \rw^{<j})}{\cH_1(\rw^j \mid \rw^{<j} )}} \tag{from our earlier bound on the total variation distance in the proof}\\
	&\hspace{30pt}\leq \frac1{\sqrt{2}} \cdot 2n_{r-1} \cdot \sqrt{s \cdot 2n_{r-1} \cdot 1/(\gamma_r + 1)} = 2 \cdot (n_{r-1})^{3/2} \cdot \sqrt{\frac{s}{\gamma_r + 1}}. \qedhere
\end{align*}
\end{proof}

We prove \Cref{lem:hybrid-1-inner-msgs}, with another application of \Cref{fact:tvd-chain-rule}.
\begin{proof}[Proof of \Cref{lem:hybrid-1-inner-msgs}]
	\newcommand{\rwall}{\ensuremath{\rv{w}^{\textnormal{all}}}}
		Let $u_1, u_2, \ldots, u_{3n_{r-1}}$ be the inner vertices with the lexicographic ordering. We use $\rw^{u_{<\ell}}$ to denote the joint random variable $\rw^{u_1}, \rw^{u_2}, \ldots, \rw^{u_{\ell-1}}$ for $\ell \in [3n_{r-1}]$. We use $\rwall$ to denote all $\rw^{u_{\ell}}$ for $\ell \in [3n_{r-1}]$. Using \Cref{fact:tvd-chain-rule}, we get, 
	\begin{align*}
		&\tvd{\cDrealtil}{\cH_1} \\
		&\hspace{30pt}\leq \tvd{\cDrealtil(\rwstart)}{\cH_1(\rwstart)} + \sum_{\ell \in [3n_{r-1}]} \Exp_{\rw^{u_{< \ell}} \sim \cDrealtil}\tvd{\cDrealtil(\rw^{u_{\ell}} \mid \rw^{u_{<\ell}})}{\cH_1(\rw^{u_{\ell}} \mid \rw^{u_{< \ell}})} \\
		& \hspace{60pt} + 	\Exp_{\rwstart, \rwall \sim \cDrealtil}	\tvd{\cDrealtil(\rwend \mid \rwstart,\rwall)}{\cH_1(\rwend \mid  \rwstart, \rwall)} \\
		&\hspace{30pt}= \eqcolor{0} + \sum_{\ell \in [3n_{r-1}]} \Exp_{\rw^{u_{< \ell}} \sim \cDrealtil}\tvd{\cDrealtil(\rw^{u_{\ell}} \mid \rw^{u_{<\ell}})}{\cH_1(\rw^{u_{\ell}} \mid \rw^{u_{< \ell}})}  \\
		& \hspace{60pt} + 	\Exp_{\rwstart, \rwall \sim \cDrealtil}	\tvd{\cDrealtil(\rwend \mid \rwstart,\rwall)}{\cH_1(\rwend \mid  \rwstart, \rwall)}  \tag{by \Cref{obs:inter-hybrid-list}-\ref{item:hybrid1-1}, we know $\cDrealtil(\rwstart)$ and $\cH_1(\rwstart)$ are the same distribution}  \\
		&\hspace{30pt}= \sum_{\ell \in [3n_{r-1}]} \Exp_{\rw^{u_{< \ell}} \sim \cDrealtil}\tvd{\cDrealtil(\rw^{u_{\ell}} \mid \rw^{u_{<\ell}})}{\cH_1(\rw^{u_{\ell}} \mid \rw^{u_{< \ell}})} + \eqcolor{0} \tag{by \Cref{obs:inter-hybrid-list}-\ref{item:hybrid1-3}, distribution of $\rwend$ is the same} \\
		&\hspace{30pt}= \sum_{\ell \in [3n_{r-1}]} \Exp_{\rwstart \sim {\cDrealtil}} \tvd{\cDrealtil(\rw^{x} \mid \eqcolor{\rwstart} )}{\cH_1(\rw^{x} \mid \eqcolor{\rwstart} )} \tag{by \Cref{obs:inter-hybrid-list}-\ref{item:hybrid1-2}} \\
		&\hspace{30pt}\leq (3n_{r-1})  \cdot 2 \cdot (n_{r-1})^{3/2} \cdot \sqrt{\frac{s}{\gamma_r + 1}} \tag{by \Cref{clm:inter-hybrid-1}} \\
		&\hspace{30pt} = 6 \cdot (n_{r-1})^{5/2} \cdot \sqrt{\frac{s}{\gamma_r + 1}}. \qedhere
	\end{align*}
\end{proof}

\subsection{Inner Inputs and Public Messages have Low Correlation}\label{subsec:public-msgs}

In this subsection, we prove \Cref{lem:cH1-cH2-public-msgs} that bounds the distance between distributions $\cH_1$ and $\cH_2$, from \Cref{box:hybrid1-inner-msgs} and \Cref{box:hybrid2-public-msgs}, respectively. 
We recall the definition of these distributions: 

\vspace{3mm}
\hspace{-1.5em}
\begin{minipage}{0.48\textwidth}
	\small
\textbf{Distribution $\cH_1$:}
\begin{align*}
	&\rG_{r-1} \times (\rids, \rcJall, \rcKall, \rLall)\\
	&~\eqcolor{\times (\bigtimes_{x \in G_{r-1}} \rNpub{x}, \rMpub{x} \mid \rids, \raux, \rNinner{x}{})}  \\
	&\times (\bigtimes_{x \in G_{r-1}}\bigtimes_{\substack{y \in G_{r-1} \\y \notin X}}  (\rMinner{x}(y) \mid  \rids, \raux, \rNpub{x}, \rMpub{x}, \rNinner{x}{})   \\
	&\times (\bigtimes_{x \in G_{r-1}}  \rNrest{x} \mid \rMinner{x}, \rNinner{x}{}, \rids, \raux, \rNpub{x}, \rMpub{x}). 
\end{align*}
\end{minipage}
\vrule width 1pt 
\hspace{0.5em} 
\begin{minipage}{0.48\textwidth}
	\small
	\textbf{Distribution $\cH_2$:}
\begin{align*}
	&\rG_{r-1} \times (\rids, \rcJall, \rcKall, \rLall) \\
	&~\eqcolor{\times (\bigtimes_{x \in G_{r-1}} \rNpub{x}, \rMpub{x} \mid \rids, \raux)} \\
	&\times (\bigtimes_{x \in G_{r-1}}\bigtimes_{\substack{y \in G_{r-1} \\y \notin X}} (\rMinner{x}(y) \mid  \rids, \raux, \rNpub{x}, \rMpub{x}, \rNinner{x}{})    \\
	&\times (\bigtimes_{x \in G_{r-1}}  \rNrest{x} \mid \rMinner{x}, \rNinner{x}{}, \rids, \raux, \rNpub{x}, \rMpub{x}). 
\end{align*}
\end{minipage}
\vspace{3mm}

 The only difference is that the conditioning on $\rNinner{x}{}$ when sampling some messages publicly in the second line in distribution $\cH_1$  is removed in $\cH_2$. We show that the correlation between messages sampled publicly and $\rNinner{x}{}$ is small in this subsection. 

\begin{observation}\label{obs:re-inter1}
For every inner vertex $x$, 
\[
\rNpub{x} \perp \rNinner{x}{} \mid \rids, \raux.
\]
\end{observation}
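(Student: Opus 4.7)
The plan is to observe that $\rNpub{x}$ is, in fact, a \emph{deterministic function} of $\raux$, so the claimed conditional independence will be essentially immediate. Concretely, by the definition of $\Npub{x}$ introduced in \Cref{box:r-1-step1-ids}, the random variable $\rNpub{x}$ consists of the entries of $\rNbhood{x}{Y}{}$ at positions indexed by $\rLupdown{x \to Y}{t,i}$, taken over every type $t \in [r] \cup \{0\}$, every layer $Y \neq X$, and every $i \in [n_{r-1}]$. The step in Protocol~\ref{box:r-1-step1-ids} that sets these entries does so deterministically: for each such $(x,Y,t,i)$, every entry in $\rLupdown{x \to Y}{t,i}$ is assigned type $t$.

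Therefore, once we condition on $\raux$, the index sets $\rLupdown{x \to Y}{t,i}$ (which are part of $\rLall$, hence part of $\raux$) are fixed, and the values written at those indices are deterministic (always equal to $t$). Thus $\rNpub{x}$ becomes a constant. A random variable that is constant under a given conditioning is trivially independent of every other random variable under that conditioning, so in particular
\[
\rNpub{x} \perp \rNinner{x}{} \mid \rids, \raux,
\]
as desired.

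The only thing I expect might need a sentence of care is to check that no other randomness enters into $\rNpub{x}$. But inspection of \Cref{box:r-1-step1-ids} shows that the only components of $\Npub{x}$ are those described above, so there is no hidden dependence on $\rNinner{x}{}$ (or on anything besides $\raux$). In short, the proof is a one-line appeal to the construction of $\rNpub{x}$ together with the fact that $\rLall \subseteq \raux$.
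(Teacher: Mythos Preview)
Your proposal is correct and takes essentially the same approach as the paper: both observe that $\rNpub{x}$ consists of the entries $\rNbhood{x}{Y}{}[\rLupdown{x \to Y}{t,i}]$, which are deterministically set to $t$ once the index sets in $\rLall \subseteq \raux$ are fixed, so $\rNpub{x}$ is constant given $\raux$ and hence independent of $\rNinner{x}{}$.
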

\begin{proof}
We know that $\rNpub{x}$ is comprised of $\Nbhood{x}{Y}{}[\rLupdown{x \to Y}{t, i}]$ for every other layer $Y$ with $x \notin Y$, type $t \in [r] \cup \{0\}$ and value $i \in [n_{r-1}]$. All these values are deterministically fixed to be $t$ for each $t \in [r] \cup \{0\}$ by definition of $\rLupdown{x \to Y}{t, i}$. Thus, $\rNpub{x}$ is independent of $\rNinner{x}{}$ conditioned on $\rids, \raux$.
\end{proof}

\begin{claim}\label{clm:mi-low-public-msgs}
	For every inner vertex $x$, we have, \[
		\mi{\rMpub{x}, \rNpub{x}}{\rNinner{x}{} \mid \rids, \raux} \leq \frac1{(\alpha_r + 1)} \cdot s \cdot 2\gamma_r  \cdot (r+1) \cdot n_{r-1}.
	\]
\end{claim}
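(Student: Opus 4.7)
The proof reduces to bounding $\mi{\rMpub{x}}{\rNinner{x}{} \mid \rids, \raux, \rNpub{x}}$ via an ``information-spreading'' argument over the $\alpha_r+1$ candidate positions $\rJupdown{x}{0}, \ldots, \rJupdown{x}{\alpha_r}$ (where I set $\rJupdown{x}{0} := \Ystar \cup \Zstar$ restricted to the two layers other than $x$'s). First, by the chain rule of mutual information and \Cref{obs:re-inter1} (which gives $\rNpub{x} \perp \rNinner{x}{} \mid \rids, \raux$),
\[
\mi{\rMpub{x}, \rNpub{x}}{\rNinner{x}{} \mid \rids, \raux} = \mi{\rNpub{x}}{\rNinner{x}{} \mid \rids, \raux} + \mi{\rMpub{x}}{\rNinner{x}{} \mid \rids, \raux, \rNpub{x}} = \mi{\rMpub{x}}{\rNinner{x}{} \mid \rids, \raux, \rNpub{x}},
\]
so the first summand vanishes and I focus on the second.

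Next, I set up the candidates. Define $\rE_0 := \rNinner{x}{}$ and, for each $i \in [\alpha_r]$, $\rE_i := \rNbtogether{x}{}[\rJupdown{x}{i}]$. By step~(2)(a) of \Cref{box:cGtilr-definition}, each $\rE_i$ with $i \ge 1$ is sampled i.i.d.~from $\cDmarg$, which is exactly the marginal distribution of $\rE_0$ by construction. Because $\rNpub{x}$ only fixes the types at the $\rLupdown{}{}$-positions (disjoint from every $\rJupdown{x}{i}$), the tuple $(\rE_0, \rE_1, \ldots, \rE_{\alpha_r})$ remains mutually independent and i.i.d.~from $\cDmarg$ after conditioning on $\rids, \raux, \rNpub{x}$. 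The key exchangeability claim is that the joint distribution of $(\rMpub{x}, \rE_i)$ given $(\rids, \raux, \rNpub{x})$ is identical for every $i \in \{0, 1, \ldots, \alpha_r\}$: this follows because \Cref{box:cGtilr-definition} treats the $\rJupdown{x}{i}$-positions symmetrically (the fill-in step~(2)(d) depends only on total type-counts, which is a symmetric statistic of the values placed at the candidate positions) and the protocol $\prot_r$ is label-oblivious, so relabeling which $\rJupdown{x}{i}$-position is regarded as the ``inner'' one preserves the joint law of $(\rNbtogether{x}{}, \rMpub{x})$ together with the identification of the inner candidate.

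Finally, combining this symmetry with the subadditivity of mutual information for conditionally independent variables---derived via the chain rule together with $\mi{\rE_i}{\rE_{<i} \mid \rMpub{x}, \rids, \raux, \rNpub{x}} \geq 0$, exactly as in the derivation preceding \Cref{clm:inter-hybrid-1}---and using \Cref{obs:public-msgs-number} to bound $\en{\rMpub{x}} \leq s \cdot 2\gamma_r(r+1)n_{r-1}$, I obtain
\[
(\alpha_r + 1) \cdot \mi{\rMpub{x}}{\rE_0 \mid \rids, \raux, \rNpub{x}} = \sum_{i=0}^{\alpha_r} \mi{\rMpub{x}}{\rE_i \mid \rids, \raux, \rNpub{x}} \leq \mi{\rMpub{x}}{\rE_0, \ldots, \rE_{\alpha_r} \mid \rids, \raux, \rNpub{x}} \leq s \cdot 2\gamma_r(r+1)n_{r-1},
\]
and dividing by $\alpha_r + 1$ completes the proof. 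The main obstacle will be the exchangeability step: one must carefully trace through \Cref{box:cGtilr-definition} to confirm that relabeling which $\rJupdown{x}{i}$ is the ``inner'' set truly preserves the full joint law of the input together with its identified inner candidate, since naive attempts to swap values at two $\rJupdown{x}{i}$-positions only give $(\rMpub{x}, \rE_0) \stackrel{d}{=} (\rMpub{x}', \rE_i)$ for a differently-computed message $\rMpub{x}'$, and one must argue the exchangeability at the level of the symmetric labeling of equivalently-sampled position-sets instead.
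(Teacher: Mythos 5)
The reduction to $\mi{\rMpub{x}}{\rNinner{x}{}\mid \rids,\raux,\rNpub{x}}$ and the final ``spread over $\alpha_r+1$ candidates plus chain rule'' structure match the paper. The gap is the exchangeability step, and it is a genuine one, not merely a detail to be filled in.

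Your claim is that, \emph{conditioned on a fixed} $(\rids,\raux,\rNpub{x})$, the joint law of $(\rMpub{x},\rE_i)$ is the same for all $i\in\{0,\ldots,\alpha_r\}$. This is false. Once $\rids$ and $\raux$ are fixed, the index sets $\Ystar\cup\Zstar=J_0$ and $\Jupdown{x}{1},\ldots,\Jupdown{x}{\alpha_r}$ are \emph{distinct fixed position sets}, and $\rMpub{x}$ is an arbitrary deterministic function of $\rNbtogether{x}{}$. Nothing forces that function to treat positions symmetrically. Concretely, take a protocol in which $x$'s messages encode the single type $\rNbhood{x}{Y}{}[y_1]$. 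Conditioned on a value of $\rids,\raux$ for which $y_1\in\Ystar$, the message is perfectly correlated with $\rE_0$ and independent of every $\rE_i$, $i\ge 1$; for a value for which $y_1\in\Jupdown{x}{1}$, the roles reverse. So the $\mi{\rMpub{x}}{\rE_i\mid\cdot=w}$ are not equal across $i$ for fixed $w$, and the first equality in your displayed chain, $(\alpha_r+1)\mi{\rMpub{x}}{\rE_0\mid\cdot}=\sum_i\mi{\rMpub{x}}{\rE_i\mid\cdot}$, has no justification. You have noticed the problem yourself (``naive attempts to swap values\ldots only give $(\rMpub{x},\rE_0)\stackrel{d}{=}(\rMpub{x}',\rE_i)$ for a differently-computed $\rMpub{x}'$''), but the suggested repair, ``exchangeability at the level of the symmetric labeling of equivalently-sampled position-sets,'' is exactly what \emph{cannot} be invoked after you have already conditioned on $\rids$ and $\raux$: the labeling is then determined. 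The appeal to ``$\prot_r$ is label-oblivious'' is not available either, since the protocol is adversarial and may depend on positions arbitrarily.

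The paper sidesteps this by changing what is conditioned on. It bundles everything else into $\rWrest$ (which deliberately excludes $\ridlayer{Y},\ridlayer{Z}$ and $\rcJup{x}$), replaces the ordered $(\ridlayer{Y},\ridlayer{Z})$ by the set $\ridlayer{Y}\cup\ridlayer{Z}$, and then conditions only on the \emph{unordered} collection $\rcJup{x}\cup\{\ridlayer{Y}\cup\ridlayer{Z}\}=\cJ$. Under that conditioning, ``which set in $\cJ$ is the inner one'' is genuinely uniform, and the crucial lemma is that the joint law of $(\rMpub{x},\rNbtogether{x}{}[J_1],\ldots,\rNbtogether{x}{}[J_{\alpha_r+1}],\rWrest)$ is independent of the event $\{\ridlayer{Y}\cup\ridlayer{Z}=J_i\}$ (this is the itemized verification in the paper's proof). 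That independence is what lets the paper average over $i$, then drop the ``$=J_i$'' conditioning, and only then apply the chain rule. Your version, by contrast, averages nothing: all the randomness that would have made the $\alpha_r+1$ slots interchangeable has already been fixed. To close the gap you would essentially need to re-derive the paper's conditioning scheme, i.e.\ expand $\rids,\raux$ back into pieces and re-bundle so that the uniform index into $\cJ$ remains random.
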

\begin{proof}
	First, we have, 
	\begin{align*}
		&	\mi{\rMpub{x}, \rNpub{x}}{\rNinner{x}{} \mid \rids, \raux} \\
		&\hspace{30pt}= \mi{\rNpub{x}}{\rNinner{x}{} \mid \rids, \raux} + \mi{\rMpub{x}}{\rNinner{x}{} \mid \rids, \raux, \rNpub{x}} \tag{by the chain rule of mutual information \itfacts{chain-rule}} \\
		&\hspace{30pt} = 0 + \mi{\rMpub{x}}{\rNinner{x}{} \mid \rids, \raux, \rNpub{x}}  \tag{by \Cref{obs:re-inter1} and by \itfacts{info-zero}} \\
		&\hspace{30pt}= 	\mi{\rMpub{x}}{\rNinner{x}{} \mid \eqcolor{\ridlayer{X}, \ridlayer{Y}, \ridlayer{Z}, \rcJall, \rcKall, \rLall}, \rNpub{x}}. \tag{expanding $\raux$ and $\rids$} \\
		&\hspace{30pt}= 	\mi{\rMpub{x}}{\eqcolor{\rNbtogether{x}{}[\ridlayer{Y} \cup \ridlayer{Z}]} \mid \ridlayer{X}, \ridlayer{Y}, \ridlayer{Z}, \rcJall, \rcKall, \rLall, \rNpub{x}}. \tag{as $\rNinner{x}{}$ is fixed by $\rNbtogether{x}{}[\ridlayer{Y} \cup \ridlayer{Z}]$ when the identities are unique}
	\end{align*}
We use $\rWrest $ to denote the random variables $ \ridlayer{X}$, $\rcKall, \rLall, \rNpub{x}$ and all random variables in $\rcJall$ barring $\rcJup{x}$. These random variables are bundled together as they are not relevant to the rest of the argument. 
\begin{align*}
	&	\mi{\rMpub{x}}{\rNbtogether{x}{}[\ridlayer{Y} \cup \ridlayer{Z}] \mid \ridlayer{X}, \ridlayer{Y}, \ridlayer{Z}, \rcJall, \rcKall, \rLall, \rNpub{x}} \\
	&\hspace{30pt}= 	\mi{\rMpub{x}}{\rNbtogether{x}{}[\ridlayer{Y} \cup \ridlayer{Z}] \mid \ridlayer{Y}, \ridlayer{Z}, \rcJup{x}, \eqcolor{\rWrest}} \\
	&\hspace{30pt}= 	\mi{\rMpub{x}}{\rNbtogether{x}{}[\ridlayer{Y} \cup \ridlayer{Z}] \mid \eqcolor{\ridlayer{Y} \cup \ridlayer{Z}}, \rcJup{x}, \rWrest},
	\intertext{
where for the last equality, we have used that $\ridlayer{Y} \cup \ridlayer{Z}$ is fixed by $\ridlayer{Y}, \ridlayer{Z} $ and vice-versa. This is because $\ridlayer{Y} = \{\ridlayer{Y} \cup \ridlayer{Z}\} \cap Y$ and $\ridlayer{Z} = \{\ridlayer{Y} \cup \ridlayer{Z}\} \cap Z$ as they are disjoint. We continue, }
		&\hspace{30pt}= \mi{\rMpub{x}}{\rNbtogether{x}{}[\ridlayer{Y} \cup \ridlayer{Z}] \mid \ridlayer{Y} \cup \ridlayer{Z}, \eqcolor{\rcJup{x} \cup \{\ridlayer{Y} \cup \ridlayer{Z}\}}, \rWrest}, 
\end{align*}
where, the last step is true because $\rcJup{x} \cup \{\ridlayer{Y} \cup \ridlayer{Z}\}, \ridlayer{Y} \cup \ridlayer{Z}$ is fixed by $\rcJup{x}, \ridlayer{Y} \cup \ridlayer{Z}$ and vice-versa (the elements in $\rcJup{x}$ are disjoint from all the elements in $\ridlayer{Y} \cup \ridlayer{Z}$ by definition). 

We know that $\rcJup{x} \cup \{\ridlayer{Y} \cup \ridlayer{Z}\}$ is a collection of size $\alpha_r + 1$, where each set has $2n_{r-1}$ elements ($n_{r-1}$ ones from each of  $Y$ and $Z$). We condition on this collection being sets $\cJ = (J_1, J_2, \ldots, J_{\alpha_r + 1})$ in the next step and have, 
\begin{align*}
&\mi{\rMpub{x}}{\rNbtogether{x}{}[\ridlayer{Y} \cup \ridlayer{Z}] \mid \ridlayer{Y} \cup \ridlayer{Z}, \rcJup{x} \cup \{\ridlayer{Y} \cup \ridlayer{Z}\}, \rWrest} \\
&= \eqcolor{\Exp_{\rcJup{x} \cup \{\ridlayer{Y} \cup \ridlayer{Z}\} = \cJ}}\mi{\rMpub{x}}{\rNbtogether{x}{}[\ridlayer{Y} \cup \ridlayer{Z}] \mid \ridlayer{Y} \cup \ridlayer{Z}, \eqcolor{\rcJup{x} \cup \{\ridlayer{Y} \cup \ridlayer{Z}\} = \cJ}, \rWrest}. \tag{by the definition of conditional mutual information}
\end{align*}
We prove the statement separately for each collection $\cJ$. 

Now, we argue that conditioned on $\rcJup{x} \cup \{\ridlayer{Y} \cup \ridlayer{Z}\} = \cJ$, the value of $\ridlayer{Y} \cup \ridlayer{Z}$ is uniform over all the sets in collection $\cJ$. Random variable $\rcJup{x}$ is made of $\alpha_r$ disjoint collections, each with $n_{r-1}$ elements from each of $Y$ and $Z$. Moreover, both $\rcJup{x}$ and $\ridlayer{Y} \cup \ridlayer{Z}$ are chosen uniformly at random such that the sets in $\rcJup{x}$ and $\ridlayer{Y} \cup \ridlayer{Z}$ are disjoint. Hence, given that  $\rcJup{x} \cup \{\ridlayer{Y} \cup \ridlayer{Z}\} = \cJ$, $\ridlayer{Y} \cup \ridlayer{Z}$ can be any set in this collection chosen uniformly. We continue as, 
\begin{align*}
&\mi{\rMpub{x}}{\rNbtogether{x}{}[\ridlayer{Y} \cup \ridlayer{Z}] \mid \ridlayer{Y} \cup \ridlayer{Z}, \rcJup{x} \cup \{\ridlayer{Y} \cup \ridlayer{Z}\} = \cJ, \rWrest} \\
&= \frac1{\alpha_r + 1} \cdot \sum_{i \in [\alpha_r + 1]} \mi{\rMpub{x}}{\rNbtogether{x}{}[J_i] \mid \rWrest,  \ridlayer{Y} \cup \ridlayer{Z} = J_i,  \rcJup{x} \cup \{\ridlayer{Y} \cup \ridlayer{Z}\} = \cJ}.
\end{align*}

We will prove that the joint distribution of all the random variables in the mutual information term is independent of the event $\{\ridlayer{Y} \cup \ridlayer{Z}\} = J_i$ conditioned on the event $\rcJup{x} \cup \{\ridlayer{Y} \cup \ridlayer{Z}\} = \cJ$. We will go over the terms one by one, and for each term, we also condition on the preceding terms.  
\begin{itemize}
	\item $\rNbtogether{x}{}[J_i]$ for $i \in [\alpha_r + 1]$: all these subsets are sampled from $\cDmarg$ independently of each other. The value of set $\ridlayer{Y} \cup \ridlayer{Z} $ inside collection $\cJ$ has no correlation with the joint distribution of these inputs.
	\item $\rNbhood{x}{Y}{}, \rNbhood{x}{Z}{}$: this is the input of vertex $x$. These variables can only be correlated to the event $\ridlayer{Y} \cup \ridlayer{Z}  = J_i$ through random variables $\rNbtogether{x}{}[J_i]$ for $i \in [\alpha_r + 1]$, which we have just argued is independent of the event. 
	\item $\rWrest$: this has random variable $\rNpub{x}$, which are channel types of $x$ to vertices disjoint from collection $\cJ$ (and independent of what happens inside $\cJ$), and the other random variables $\ridlayer{X}, \rcKall$, $\rLall$, all $\rcJup{w}$ for $w \neq x$.  These other variables are disjoint from $\rcJup{x} \cup \{\ridlayer{Y} \cup \ridlayer{Z}\}$, and are independent of the value of set $\ridlayer{Y} \cup \ridlayer{Z}$ inside collection $\cJ$.
	\item $\rMpub{x}$: these are the messages that $x$ sends to specific indices in $\rLupdown{x \to Y}{t, j}$ for each other layer $Y$, type $t \in [r] \cup \{0\}$ and $j \in [n_{r-1}]$. We have argued that the input to vertex $x$ is independent of the identity of $\ridlayer{Y} \cup \ridlayer{Z}$ inside collection $\cJ$. As the protocol $\prot_r$ is deterministic, $\rMpub{x}$ is fixed by $\rNbhood{x}{Y}{}, \rNbhood{x}{Z}{}, \rid{x}$, and is independent of the event $\ridlayer{Y} \cup \ridlayer{Z} = J_i$. 
\end{itemize}
Thus, the joint distribution of all the random variables is independent of the value of $\{\ridlayer{Y} \cup \ridlayer{Z}\}$ inside collection $\cJ$. The mutual information term then becomes, 
\begin{align*}
	& \frac1{\alpha_r + 1} \cdot \sum_{i \in [\alpha_r + 1]} \mi{\rMpub{x}}{\rNbtogether{x}{}[J_i] \mid \rWrest,  \eqcolor{\ridlayer{Y} \cup \ridlayer{Z} = J_i},  \rcJup{x} \cup \{\ridlayer{Y} \cup \ridlayer{Z}\} = \cJ} \\
	 &\hspace{30pt}= 	 \frac1{\alpha_r + 1} \cdot \sum_{i \in [\alpha_r + 1]} \mi{\rMpub{x}}{\rNbtogether{x}{}[J_i] \mid \rWrest,   \rcJup{x} \cup \{\ridlayer{Y} \cup \ridlayer{Z}\} = \cJ} \\
	 &\hspace{30pt}\leq  \frac1{\alpha_r + 1} \cdot \sum_{i \in [\alpha_r + 1]} \mi{\rMpub{x}}{\rNbtogether{x}{}[J_i] \mid \rWrest,  \eqcolor{\rNbtogether{x}{}[J_1], \ldots, \rNbtogether{x}{}[J_{i-1}]}, \rcJup{x} \cup \{\ridlayer{Y} \cup \ridlayer{Z}\} = \cJ},
\end{align*}
where for the last step we have used that $\rNbtogether{x}{}[J_i] \perp \rNbtogether{x}{}[J_k] \mid \rWrest$, $\rcJup{x} \cup \{\ridlayer{Y} \cup \ridlayer{Z}\} = \cJ$ for any $i, k \in [\alpha_r+1]$ with $i \neq k$, and thus we can apply \Cref{prop:info-increase}. The independence of $\rNbtogether{x}{}[J_i]$ and $\rNbtogether{x}{}[J_k]$ follows because, conditioned on $\rcJup{x} \cup \{\ridlayer{Y} \cup \ridlayer{Z}\} = \cJ$, both these variables are sampled from $\cDmarg$ independently of each other. The value of $\rWrest$ does not affect this independence either, as $\rWrest$ is made of $\rcKall, \rLall, \ridlayer{X}$ , $\rcJup{w}$ for $w \neq x$, all of which are disjoint from $\cJ$ and $\rNpub{x}$ consisting of channel types to $x$ disjoint from collection $\cJ$. 
We proceed with the proof as follows.
\begin{align*}
&	 \frac1{\alpha_r + 1} \cdot \eqcolor{\sum_{i \in [\alpha_r + 1]}} \mi{\rMpub{x}}{\rNbtogether{x}{}[J_i] \mid \rWrest, \eqcolor{\rNbtogether{x}{}[J_1], \ldots, \rNbtogether{x}{}[J_{i-1}]}, \rcJup{x} \cup \{\ridlayer{Y} \cup \ridlayer{Z}\} = \cJ} \\
	 &= \frac{1}{\alpha_r + 1} \cdot \mi{\rMpub{x}}{\eqcolor{\rNbtogether{x}{}[J_1], \rNbtogether{x}{}[J_2], \ldots, \rNbtogether{x}{}[J_{\alpha_r + 1}]} \mid \rWrest, \rcJup{x} \cup \{\ridlayer{Y} \cup \ridlayer{Z}\} = \cJ} \tag{by the chain rule of mutual information in \itfacts{chain-rule}}\\
	 &\leq \frac1{\alpha_r + 1} \cdot \en{\rMpub{x} \mid \rWrest, \rcJup{x} \cup \{\ridlayer{Y} \cup \ridlayer{Z}\} = \cJ} \tag{by \itfacts{info-entropy}} \\
	 &\leq \frac1{\alpha_r + 1} \cdot s \cdot 2\gamma_r  \cdot (r+1) \cdot n_{r-1}. \tag{by~\itfacts{uniform}, \Cref{obs:public-msgs-number}, and as the length of each message is bounded by $s$}
\end{align*}
To finish the proof, we get, 
\begin{align*}
\mi{\rMpub{x}, \rNpub{x}}{\rNinner{x}{} \mid \rids, \raux} &\leq  \Exp_{\rcJup{x} \cup \{\ridlayer{Y} \cup \ridlayer{Z}\} = \cJ} \bracket{\frac1{\alpha_r + 1} \cdot s \cdot 2\gamma_r  \cdot (r+1) \cdot n_{r-1}} \\
&= \frac1{\alpha_r + 1} \cdot s \cdot 2\gamma_r   \cdot (r+1) \cdot n_{r-1}. \qedhere
\end{align*}
\end{proof}

We are ready to prove \Cref{lem:cH1-cH2-public-msgs} using the weak chain rule of total variation distance. 
We begin by defining the random variables on which we apply weak chain rule. These random variables are local to this subsection and may be used for different purposes later. (See \Cref{sec:list-rv} for a list of global random variables.)
	\begin{itemize}
		\item Variable $\rwstart$: this is the joint random variable $\rG_{r-1}, \rids$, $\rcJall, \rcKall, \rLall$. 
		\item Variables $\rw^{x}$ for each inner vertices $x$: this is the joint random variable $\rNpub{x}, \rMpub{x}$.
		\item Variables $\rwend$: joint random variable $\rMinner{x}, \rNrest{x}$ for each inner vertex $x$. 
	\end{itemize}

We use the following simple observations. They are fairly direct, and are not justified further.
\begin{observation}\label{obs:inter-H1H2-list}
About random variables $\rwstart, \rw^x$ for each inner vertex $x$ and $\rwend$, in distributions $\cH_1$ and $\cH_2$, we have, 
\begin{enumerate}[label=$(\roman*)$]
\item \label{item:wstart-same-H1H2} In $\cH_1$ and $\cH_2$, the distribution of random variable $\rwstart$ is the same. 
\item \label{item:wx-indep-H1H2} In $\cH_1$ and $\cH_2$, the random variables $\rw^w$ and $\rw^{w'}$ are independent of each other conditioned on $\rwstart$ for inner vertices $w \neq w'$ (where $w, w'$ could be in the same layer or in different layers). This is because all the random variables $\rids, \raux$ and $\rNinner{x}{}$ for each inner vertex $x$ are fixed when conditioned on $\rwstart$. 
\item \label{item:wx-indep-H1} In $\cH_1$, random variable $\rw^x$ is independent of $\rwstart$ when conditioned on $\rNinner{x}{} , \rids$ and $\raux$. 
\item \label{item:wx-indep-H2} In $\cH_2$, random variable $\rw^x$ is independent of $\rwstart$ when conditioned on $\rids$ and $\raux$.
\item \label{item:wend-indep-H1H2} Conditioned on any choice of $\rwstart$ and $\rw^x$ for each inner vertex $x$, the distribution of random variable $\rwend$ is the same in $\cH_1$ and $\cH_2$. 
\end{enumerate}
\end{observation}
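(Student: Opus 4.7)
The plan is to prove each of the five parts by inspecting the factorizations of $\cH_1$ (\Cref{box:hybrid1-inner-msgs}) and $\cH_2$ (\Cref{box:hybrid2-public-msgs}) side by side and verifying that the claimed (in)dependencies follow directly from how each factor is defined. The only difference between the two distributions lies in the second line of their factorization---how $(\rNpub{x},\rMpub{x})$ is sampled for each inner vertex $x$---so most of the parts reduce to the observation that the remaining factors are literally identical in both distributions, so I do not expect a genuine obstacle here.

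For part~\ref{item:wstart-same-H1H2}, the random variables comprising $\rwstart$, namely $\rG_{r-1}$, $\rids$, $\rcJall$, $\rcKall$, and $\rLall$, are sampled by the first line of the factorization, which is textually the same in both $\cH_1$ and $\cH_2$; this is immediate. For part~\ref{item:wend-indep-H1H2}, I would check that the third and fourth lines of the factorizations of $\cH_1$ and $\cH_2$ are identical (they specify how each $\rMinner{x}(y)$ and each $\rNrest{x}$ is sampled given the earlier variables); therefore, conditioned on any fixing of $\rwstart$ and $(\rw^x)_x$, the distribution of $\rwend$ must be the same in the two distributions.

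For parts~\ref{item:wx-indep-H1} and~\ref{item:wx-indep-H2}, the approach is to read the conditionals off the second line of each factorization directly: in $\cH_1$, $\rw^x=(\rNpub{x},\rMpub{x})$ is sampled conditioned only on $(\rids,\raux,\rNinner{x}{})$, so once these are fixed the randomness defining $\rw^x$ is independent of everything else in $\rwstart$; analogously, in $\cH_2$ the relevant conditioning set is just $(\rids,\raux)$. Finally, for part~\ref{item:wx-indep-H1H2}, I would note that $\rwstart$ already fixes $\rids$, $\raux$, and, through $\rG_{r-1}$, the vector $\rNinner{w}{}$ for every inner vertex $w$; hence in either $\cH_1$ or $\cH_2$ the conditional distribution of $\rw^w$ given $\rwstart$ depends only on variables already determined by $\rwstart$, and the separate $\rw^w$'s are generated from independent pieces of randomness. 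The only point that warrants a little care is that the messages in $\Mpub{x}$ are produced by the deterministic protocol $\prot_r$ applied to $x$'s input---so $\rw^w$ and $\rw^{w'}$ are deterministic functions of disjoint pieces of input once $\rwstart$ is fixed, making their independence automatic.
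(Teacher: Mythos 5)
Your proposal is correct and takes essentially the same approach as the paper. The paper states this observation without proof ("fairly direct, and are not justified further"), but the analogous Observation~\ref{obs:inter-hybrid-list} in the preceding subsection is proved exactly as you do here—by reading the conditioning sets off the factorizations of $\cH_1$ and $\cH_2$, noting that lines one, three, and four are textually identical (giving parts~\ref{item:wstart-same-H1H2} and~\ref{item:wend-indep-H1H2}), that the second line is a product over $x$ whose conditioning variables are all determined by $\rwstart$ (giving part~\ref{item:wx-indep-H1H2}), and that in each distribution the second-line factor for $\rw^x$ is conditioned only on $(\rids,\raux,\rNinner{x}{})$ resp.\ $(\rids,\raux)$, hence is conditionally independent of the rest of $\rwstart$ (giving parts~\ref{item:wx-indep-H1} and~\ref{item:wx-indep-H2}).
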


We prove the following intermediate claim that states that the distribution of $\rw^x$ is close in total variation distance in $\cH_1$ and $\cH_2$.

\begin{claim}\label{clm:inter-H1H2-close}
	For every inner vertex $x$,
	\[
		\Exp_{\rwstart \sim \cH_1} \tvd{\cH_1(\rw^x \mid \rwstart)}{\cH_2(\rw^x \mid \rwstart)} \leq  \sqrt{\frac{s \cdot \gamma_r  \cdot (r+1) \cdot n_{r-1}}{\alpha_r + 1}}.
	\]
\end{claim}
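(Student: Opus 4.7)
The plan is to reduce the total variation distance to a mutual information term and then invoke \Cref{clm:mi-low-public-msgs}. The first step is to identify the two conditional distributions correctly. Since $\rwstart = (\rG_{r-1}, \rids, \rcJall, \rcKall, \rLall)$ fixes $\rNinner{x}{}$ (as it is part of $\rG_{r-1}$), and since in both $\cH_1$ and $\cH_2$ the tuple $(\rG_{r-1}, \rids, \raux, \rw^x)$ has the same joint law as in $\cDrealtil$ except for how $\rw^x$ is sampled, we have
\begin{align*}
\cH_1(\rw^x \mid \rwstart) &= \cDrealtil(\rNpub{x}, \rMpub{x} \mid \rids, \raux, \rNinner{x}{}), \\
\cH_2(\rw^x \mid \rwstart) &= \cDrealtil(\rNpub{x}, \rMpub{x} \mid \rids, \raux).
\end{align*}
In particular, the $\cH_2$-distribution of $\rw^x$ given $\rwstart$ does not actually depend on $\rG_{r-1}$ beyond $(\rids, \raux)$, which follows directly from the definition of $\cH_2$ in \Cref{box:hybrid2-public-msgs}.

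Next, I would chain Pinsker's inequality (\Cref{fact:pinskers}) with Jensen's inequality (applied through the concavity of the square root) to pass from an expectation of total variation distance to the square root of an expectation of KL divergence:
\begin{align*}
\Exp_{\rwstart \sim \cH_1} \tvd{\cH_1(\rw^x \mid \rwstart)}{\cH_2(\rw^x \mid \rwstart)}
&\leq \Exp_{\rwstart \sim \cH_1} \sqrt{\tfrac{1}{2}\, \kl{\cH_1(\rw^x \mid \rwstart)}{\cH_2(\rw^x \mid \rwstart)}} \\
&\leq \sqrt{\tfrac{1}{2} \cdot \Exp_{\rwstart \sim \cH_1} \kl{\cH_1(\rw^x \mid \rwstart)}{\cH_2(\rw^x \mid \rwstart)}}.
\end{align*}
Since $\rwstart$ has the same marginal distribution in $\cH_1$ as in $\cDrealtil$ (by \Cref{obs:inter-H1H2-list}-\ref{item:wstart-same-H1H2} and the construction), and since conditioning on $\rwstart$ is the same as conditioning on $(\rids, \raux, \rNinner{x}{})$ together with extra variables that are independent of $\rw^x$ given $(\rids, \raux, \rNinner{x}{})$ under the real distribution, the expected KL above collapses, via \Cref{fact:kl-info}, to a mutual information term:
\[
\Exp_{\rwstart \sim \cH_1} \kl{\cH_1(\rw^x \mid \rwstart)}{\cH_2(\rw^x \mid \rwstart)} = \mi{\rNpub{x}, \rMpub{x}}{\rNinner{x}{} \mid \rids, \raux}.
\]

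Finally, I would plug in the bound from \Cref{clm:mi-low-public-msgs}, which gives $\mi{\rMpub{x}, \rNpub{x}}{\rNinner{x}{} \mid \rids, \raux} \leq \tfrac{1}{\alpha_r+1} \cdot s \cdot 2\gamma_r \cdot (r+1) \cdot n_{r-1}$, yielding
\[
\Exp_{\rwstart \sim \cH_1} \tvd{\cH_1(\rw^x \mid \rwstart)}{\cH_2(\rw^x \mid \rwstart)} \leq \sqrt{\frac{s \cdot \gamma_r \cdot (r+1) \cdot n_{r-1}}{\alpha_r+1}},
\]
as desired. The one point that requires care is justifying that extra variables inside $\rwstart$ beyond $(\rids, \raux, \rNinner{x}{})$ do not alter the conditional law of $\rw^x$ in either hybrid; in $\cH_2$ this is immediate from how $\rw^x$ is sampled, and in $\cH_1$ it follows from the fact that $\rw^x$ is a deterministic function of $(\rNbtogether{x}{}, \rids, \raux)$ together with the independence structure in \Cref{obs:input-indep}, so no additional technical obstacle beyond routine bookkeeping arises.
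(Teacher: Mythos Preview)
Your proposal is correct and follows essentially the same approach as the paper: identify the two conditional laws via \Cref{obs:inter-H1H2-list}-\ref{item:wx-indep-H1} and \ref{item:wx-indep-H2}, apply Pinsker then Jensen, convert the expected KL to $\mi{\rNpub{x},\rMpub{x}}{\rNinner{x}{}\mid \rids,\raux}$ via \Cref{fact:kl-info}, and finish with \Cref{clm:mi-low-public-msgs}. The only cosmetic difference is that the paper cites \Cref{obs:inter-H1H2-list} parts \ref{item:wx-indep-H1} and \ref{item:wx-indep-H2} directly for the conditional-law identifications rather than rederiving them from \Cref{obs:input-indep}.
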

\begin{proof}
	We know from~\Cref{obs:inter-H1H2-list}-\ref{item:wx-indep-H1}  that, 
	\begin{equation}\label{eq:H1-H2-1}
	\cH_1(\rw^x \mid \rwstart) = \rNpub{x}, \rMpub{x} \mid \rids, \raux, \rNinner{x}{}. 
	\end{equation}
	Similarly in distribution $\cH_2$ from~\Cref{obs:inter-H1H2-list}-\ref{item:wx-indep-H2}, we get, 
	\begin{equation}\label{eq:H1-H2-2}
	\cH_2(\rw^x \mid \rwstart)  = \rNpub{x}, \rMpub{x} \mid \rids, \raux.
	\end{equation}
	We can write the total variation distance as, 
	\begin{align*}
	&\Exp_{\rwstart \sim \cH_1} \tvd{\cH_1(\rw^x \mid \rwstart)}{\cH_2(\rw^x \mid \rwstart)} \\
		&\hspace{30pt}\leq \frac1{\sqrt{2}} \cdot \Exp_{\rwstart \sim \cH_1} \sqrt{\kl{\cH_1(\rw^x \mid \rwstart)}{\cH_2(\rw^x \mid \rwstart)}} \tag{by in Pinsker's inequality \Cref{fact:pinskers}}\\
		&\hspace{30pt}\leq \frac1{\sqrt{2}} \cdot \sqrt{ \Exp_{\rwstart \sim \cH_1} \kl{\cH_1(\rw^x \mid \rwstart)}{\cH_2(\rw^x \mid \rwstart)}} \tag{by Jensen's inequality and concavity of square root} \\
	&\hspace{30pt} =  \frac1{\sqrt{2}} \cdot \sqrt{ \Exp_{\rids, \raux, \rNinner{x}{}} \kl{(\rNpub{x}, \rMpub{x} \mid \rids, \raux, \rNinner{x}{})}{(\rNpub{x}, \rMpub{x} \mid \rids, \raux)}} \tag{by \Cref{eq:H1-H2-1} and \Cref{eq:H1-H2-2}} \\
	&\hspace{30pt}=  \frac1{\sqrt{2}} \cdot \sqrt{ \mi{\rNpub{x}, \rMpub{x}}{\rNinner{x}{} \mid \rids, \raux}} \tag{by relation between KL-Divergence and mutual information \Cref{fact:kl-info}} \\
	&\hspace{30pt}\leq \sqrt{\frac1{\alpha_r + 1} \cdot s \cdot \gamma_r  \cdot (r+1) \cdot n_{r-1}} \tag{by \Cref{clm:mi-low-public-msgs}}. \qedhere
	\end{align*}
\end{proof}
\newcommand{\rwall}{\rw^{\textnormal{all}}}
	
We conclude this subsection by proving \Cref{lem:cH1-cH2-public-msgs}.
\begin{proof}[Proof of \Cref{lem:cH1-cH2-public-msgs}]
	Firstly, by \Cref{obs:inter-H1H2-list}-\ref{item:wstart-same-H1H2}, we know that, 
	\begin{equation}\label{eq:inter-wstart-H1H2}
		\tvd{\cH_1(\rwstart)}{\cH_2(\rwstart)} = 0.
	\end{equation}
	We follow the lexicographic ordering on the vertices to order the random variables $\rw^x$ for inner vertices $x \in G_{r-1}$. 
	Let $u_1, u_2, \ldots, u_{3n_{r-1}}$ be the inner vertices with the ordering. We use $\rw^{u_{<\ell}}$ to denote the joint random variable $\rw^{u_1}, \rw^{u_2}, \ldots, \rw^{u_{\ell-1}}$ for $\ell \in [3n_{r-1}]$. We use $\rwall$ to denote all $\rw^{u_{\ell}}$ for $\ell \in [3n_{r-1}]$. 
	
	By \Cref{obs:inter-H1H2-list}-\ref{item:wx-indep-H1H2}, we have, for any $\ell \in [3n_{r-1}]$, 
	\begin{align}\label{eq:inter-wx-H1H2}
		\Exp_{\rw^{u_{< \ell}} \sim \cH_1}\tvd{\cH_1(\rw^{u_{\ell}} \mid \rw^{u_{< \ell}})}{\cH_2(\rw^{u_{\ell} \mid \rw^{u_{< \ell}}})} &= \Exp_{\rwstart \sim \cH_1} \tvd{\cH_1(\rw^{u_{\ell}} \mid \rwstart)}{\cH_2(\rw^{u_{\ell}} \mid \rwstart)} \nonumber\\
		&\leq \sqrt{\frac{s \cdot \gamma_r  \cdot (r+1) \cdot n_{r-1}}{\alpha_r + 1}},
	\end{align}
	where for the inequality we used \Cref{clm:inter-H1H2-close}. Lastly, from \Cref{obs:inter-H1H2-list}-\ref{item:wend-indep-H1H2}, we have, 
	\begin{equation}\label{eq:inter-H1H2-end}
	\Exp_{\rwstart, \rwall \sim \cH_1}	\tvd{\cH_1(\rwend \mid \rwstart,\rwall)}{\cH_2(\rwend \mid  \rwstart, \rwall)}  = 0.
	\end{equation}
	We can complete the proof easily now. 
	\begin{align*}
		&\tvd{\cH_1}{\cH_2} \\
		&\hspace{30pt}\leq \tvd{\cH_1(\rwstart)}{\cH_2(\rwstart)} + \sum_{\ell \in [3n_{r-1}]} \Exp_{\rw^{u_{< \ell}} \sim \cH_1}\tvd{\cH_1(\rw^{u_{\ell}} \mid \rw^{u_{<\ell}})}{\cH_2(\rw^{u_{\ell}} \mid \rw^{u_{< \ell}})} \\
		& \hspace{60pt} + 	\Exp_{\rwstart, \rwall \sim \cH_1}	\tvd{\cH_1(\rwend \mid \rwstart,\rwall)}{\cH_2(\rwend \mid  \rwstart, \rwall)}  \tag{by \Cref{fact:tvd-chain-rule}}\\
		&\hspace{30pt}= \sum_{\ell \in [3n_{r-1}]} \Exp_{\rw^{u_{< \ell}} \sim \cH_1}\tvd{\cH_1(\rw^{u_{\ell}} \mid \rw^{u_{<\ell}})}{\cH_2(\rw^{u_{\ell}} \mid \rw^{u_{< \ell}})} \tag{by \Cref{eq:inter-wstart-H1H2} and \Cref{eq:inter-H1H2-end} first and last terms are zero} \\
		&\hspace{30pt}\leq 3n_{r-1} \cdot  \sqrt{\frac{s \cdot \gamma_r  \cdot (r+1) \cdot n_{r-1}}{\alpha_r + 1}}. \qedhere \tag{by \Cref{eq:inter-wx-H1H2}}
	\end{align*}
\end{proof}

\subsection{Inner Messages and Inner Inputs have Low Correlation}\label{subsec:inner-msgs}

In this subsection, we prove \Cref{lem:cH2-cDfake-inner-msgs}, which bounds the total variation distance between $\cH_2$ from \Cref{box:hybrid2-public-msgs} and $\cDfake$ from \Cref{box:cDfake}.

\vspace{3mm}
\hspace{-1.5em}
\begin{minipage}{0.48\textwidth}

	\small
\textbf{Distribution $\cH_2$:}
\begin{align*}
	&\rG_{r-1} \times (\rids, \rcJall, \rcKall, \rLall) \\
	&\times (\bigtimes_{x \in G_{r-1}} \rNpub{x}, \rMpub{x} \mid \rids, \raux) \\
	&~\eqcolor{\times (\bigtimes_{x \in G_{r-1}}\bigtimes_{\substack{y \in G_{r-1} \\y \notin X}} (\rMinner{x}(y) \mid  \substack{\rids, \raux, \rNpub{x},  \\ \rMpub{x}, \rNinner{x}{}})}    \\
	&\times (\bigtimes_{x \in G_{r-1}}  \rNrest{x} \mid \rMinner{x}, \rNinner{x}{}, \rids, \raux, \rNpub{x}, \rMpub{x}). 
\end{align*}
\end{minipage}
\vrule width 1pt 
\hspace{0.5em} 
\begin{minipage}{0.48\textwidth}
	\small
\textbf{Distribution $\cDfake$:}
\begin{align*}
	&\rG_{r-1} \times (\rids, \rcJall, \rcKall, \rLall)\\
	&\times (\bigtimes_{x \in G_{r-1}} \rNpub{x}, \rMpub{x} \mid \rids, \raux)\\
	&~\eqcolor{\times (\bigtimes_{x \in G_{r-1}} \bigtimes_{\substack{y \in G_{r-1} \\y \notin X}} (\rMinner{x}(y) \mid \substack{\rids,  
		\raux, \rNpub{x}, \\ \rMpub{x},  \rtypeval{x, y}})}    \\
	&\times (\bigtimes_{x \in G_{r-1}}  \rNrest{x} \mid \rMinner{x}, \rNinner{x}{}, \rids, \raux, \rNpub{x}, \rMpub{x}). 
\end{align*}
\end{minipage}
\vspace{3mm}

The final step to get to distribution $\cDfake$ is that in sampling $\rMinner{x}(y)$ for each inner vertex $x$ and inner vertex $y \notin X$, in $\cH_2$, we condition on $\rNinner{x}{}$, whereas in $\cDfake$, there is only conditioning on $\rtypeval{x, y}$. We will show that the conditioning on the other random variables in $\rNinner{x}{}$ can be removed without much loss in the total variation distance. 

\begin{claim}\label{clm:mi-low-inner-msgs}
	For every inner vertex $x$ and $y_i \in G_{r-1}$, we have, 
	\[
		\mi{\rMinner{x}(y_i)}{\rNinner{x}{} \mid \rids, \raux, \rNpub{x}, \rMpub{x}, \rtypeval{x, y_i}} \leq \frac{s}{\beta_r + 1}.
	\]
\end{claim}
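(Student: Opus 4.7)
The plan is to mirror the structure of the proof of Claim~\ref{clm:mi-low-public-msgs}, but to swap the roles of the auxiliary collection $\rcJup{x}$ and the objects being correlated. Here, instead of spreading the inner input over $\alpha_r + 1$ candidate sets (as was needed to hide $\rNinner{x}{}$ from the public messages), I will spread the inner input over $\beta_r + 1$ candidates via the auxiliary $\rcKupdown{x \to Y}{t, i}$, which was explicitly designed so that $\rNbtogether{x}{}[\Kupdown{x \to Y}{t, i, j} \cup \{\ystar_i\}]$ is distributed exactly like $\rNinner{x}{}$ conditioned on $\rtypeval{x, \ystar_i} = t$. In particular, for $y_i \in Y$, once we condition on $\rtypeval{x, y_i} = t$, the set $\{\ridlayer{Y} \cup \ridlayer{Z}\}$ on which $\rNinner{x}{}$ lives is one of $\beta_r + 1$ exchangeable candidates (itself together with the $\beta_r$ sets $\Kupdown{x \to Y}{t, i, j} \cup \{\ystar_i\}$), all of which carry the same marginal conditional distribution of the inputs of $x$.

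Concretely, I would begin by letting $\Et$ denote the event $\rtypeval{x, y_i} = t$ (handling each $t \in [r] \cup \{0\}$ separately) and rewriting $\rNinner{x}{}$ as $\rNbtogether{x}{}[\ridlayer{Y} \cup \ridlayer{Z}]$. Next I would bundle all auxiliaries and public variables other than $\rcKupdown{x \to Y}{t, i}$ into a single random variable $\rWrest$ and introduce conditioning on the joint set
\[
\rcKupdown{x \to Y}{t, i} \cup \bigl\{\bigl(\ridlayer{Y} \cup \ridlayer{Z}\bigr) \setminus \{\ystar_i\}\bigr\},
\]
which (together with $\ystar_i$) is the same random object as $\rcKupdown{x \to Y}{t, i} \cup \{\Kupdown{x \to Y}{t, i, \ast}\}$ for the ``true'' slot $\ast$. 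This set is fixed w.l.o.g.\ by its unordered value, and conditioned on this, the choice of which of the $\beta_r + 1$ slots carries $\ridlayer{Y} \cup \ridlayer{Z}$ is uniform.

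I would then argue, exactly as in Claim~\ref{clm:mi-low-public-msgs}, that once conditioned on the union of sets, the identity of the ``real'' slot is independent of $\rWrest$, the $\beta_r + 1$ input samples $\rNbtogether{x}{}[\text{slot}_j]$, the rest of $x$'s input, and hence (by determinism of $\prot_r$) of $\rMinner{x}(y_i)$. Applying~\itfacts{chain-rule} to collapse the expectation over the $\beta_r + 1$ slots into a joint mutual information against the tuple $(\rNbtogether{x}{}[\text{slot}_1], \ldots, \rNbtogether{x}{}[\text{slot}_{\beta_r + 1}])$ yields a $1/(\beta_r + 1)$ factor times $\mi{\rMinner{x}(y_i)}{\cdot}$, which by~\itfacts{info-entropy} and~\itfacts{uniform} is at most $s$.

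The main obstacle will be carrying out the independence bookkeeping carefully: the public messages $\rMpub{x}$ and the rest of the input $\rNbhood{x}{Y}{}, \rNbhood{x}{Z}{}$ depend on the actual input of $x$ (which in turn is determined by which slot is the real one), so I need to verify that once we condition on the unordered union and on the $\beta_r + 1$ input samples together with $\rWrest$, both the rest of $x$'s input and $\rMpub{x}$ become functions of the slot-label that drop out under exchangeability. This is analogous to the bullet-list verification in the proof of Claim~\ref{clm:mi-low-public-msgs} and should go through by the same reasoning, relying on (i) the independence of the $\beta_r + 1$ samples from $\cDmarg \mid \Et$, (ii) determinism of $\prot_r$, and (iii) the fact that $\rNpub{x}$ consists of deterministically-typed entries disjoint from the union.
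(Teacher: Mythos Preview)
Your proposal is correct and follows essentially the same argument as the paper's proof. The paper also conditions on $\rtypeval{x,y_i}=t$, rewrites $\rNinner{x}{}$ as $\rNbtogether{x}{}[\rid{-y_i}]$ (your $(\ridlayer{Y}\cup\ridlayer{Z})\setminus\{\ystar_i\}$), bundles everything except $\rcKupdown{x\to Y}{t,i}$ into $\rWrest$, conditions on the union $\rcKupdown{x\to Y}{t,i}\cup\{\rid{-y_i}\}=\cK$, verifies the same independence bullet-list you outline, and then uses~\Cref{prop:info-increase} followed by chain rule to extract the $1/(\beta_r+1)$ factor against $\en{\rMinner{x}(y_i)}\leq s$.
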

\begin{proof}
	We start by using the definition of conditional mutual information to write,
	\begin{align*}
	&	\mi{\rMinner{x}(y_i)}{\rNinner{x}{} \mid \rids, \raux, \rNpub{x}, \rMpub{x}, \rtypeval{x, y_i}} \\
	&\hspace{30pt}= \Exp_{\rtypeval{x, y_i} = t} 	\mi{\rMinner{x}(y_i)}{\rNinner{x}{} \mid \rids, \raux, \rNpub{x}, \rMpub{x}, \rtypeval{x, y_i} = t}. 
	\end{align*}
	Let $\Et$ denote the event that $\rtypeval{x, y_i} = t$. We prove the statement separately for each $t \in [r] \cup \{0\}$. 
	\begin{align*}
&	\mi{\rMinner{x}(y_i)}{\rNinner{x}{} \mid \rids, \raux, \rNpub{x}, \rMpub{x}, \Et} \\
	&\hspace{30pt}= \mi{\rMinner{x}(y_i)}{\eqcolor{\rNbtogether{x}{}[\ridlayer{Y} \cup \ridlayer{Z}]} \mid \eqcolor{\ridlayer{X}, \ridlayer{Y}, \ridlayer{Z}}, \raux, \rNpub{x}, \rMpub{x}, \Et}. \tag{expanding $\rids$, and as $\rNinner{x}{}$ is fixed by $\rNbtogether{x}{}[\ridlayer{Y} \cup \ridlayer{Z}] $}
	\end{align*}
	Let $\rid{-y_i}$ denote the random variables $\ridlayer{Z}$ and $\rid{y'}$ for $y' \neq y_i \in G_{r-1}$, 
	and $\rWrest$ denote the random variables $\ridlayer{X}$, $\rcJall$, $\rLall$, $\rMpub{x}, \rNpub{x}$ and all random variables in $\rcKall$ barring $\rcKupdown{x \to Y}{t, i}$.
	\begin{align*}
&\mi{\rMinner{x}(y_i)}{\rNbtogether{x}{}[\ridlayer{Y} \cup \ridlayer{Z}] \mid \ridlayer{X}, \ridlayer{Y}, \ridlayer{Z}, \raux, \rNpub{x}, \rMpub{x}, \Et} \\
&\hspace{30pt}= \mi{\rMinner{x}(y_i)}{\rNbtogether{x}{}[\ridlayer{Y} \cup \ridlayer{Z}] \mid \ridlayer{Y}, \ridlayer{Z}, \eqcolor{\rcKupdown{x \to Y}{t, i}, \rWrest}, \Et} \\
&\hspace{30pt}=  \mi{\rMinner{x}(y_i)}{\eqcolor{\rNbtogether{x}{}[\rid{y_i} \cup \rid{-y_i}]} \mid \eqcolor{\rid{-y_i}, \rid{y_i}}, \rcKupdown{x \to Y}{t, i}, \rWrest, \Et} \tag{by definition of $\rWrest, \rid{-y_i}$} \\
&\hspace{30pt}=  \mi{\rMinner{x}(y_i)}{\eqcolor{\rNbtogether{x}{}[\rid{-y_i}]} \mid \rid{-y_i}, \rid{y_i}, \rcKupdown{x \to Y}{t, i}, \rWrest, \Et} \tag{as $\rNbtogether{x}{}[\rid{y_i}]$ is fixed to be $t$ conditioned on $\Et$} \\
&\hspace{30pt}= \mi{\rMinner{x}(y_i)}{\rNbtogether{x}{}[\rid{-y_i}] \mid \rid{-y_i}, \rid{y_i}, \eqcolor{\rcKupdown{x \to Y}{t, i} \cup \{\rid{-y_i}\}}, \rWrest, \Et} \tag{as $\rid{-y_i}, \rcKupdown{x \to Y}{t, i}$ is fixed by $\rid{-y_i}, \rcKupdown{x \to Y}{t, i} \cup \{\rid{-y_i}\}$ and vice-versa} \\
&\hspace{30pt}= \eqcolor{\Exp_{\cK \sim \rcKupdown{x \to Y}{t, i} \cup \{\rid{-y_i}\}}}   \mi{\rMinner{x}(y_i)}{\rNbtogether{x}{}[\rid{-y_i}] \mid \rid{-y_i}, \rid{y_i}, \rWrest, \eqcolor{\rcKupdown{x \to Y}{t, i} \cup \{\rid{-y_i}\} = \cK}, \Et}. \tag{by the definition of conditional mutual information}
	\end{align*}
	We prove the statement separately for each collection $\cK = \{K_1, K_2, \ldots, K_{\beta_r + 1}\}$. We know $\cK$ is a collection of $\beta_r + 1$ sets, each of which has $n_{r-1}-1$ elements from $Y \setminus \rid{y_i}$ and $n_r$ elements from $Z$. 
	We argue that $\rid{-y_i}$ is uniform over all the sets in collection $\cK$. This is because $\rcKupdown{x \to Y}{t, i} $ is a collection of $\beta_r$ elements disjoint from $\rid{-y_i}$ chosen uniformly at random. Variable $\rid{-y_i}$ is also chosen uniformly at random from sets with $n_{r-1}-1$ elements from $Y$ and $n_{r-1}$ elements from $Z$. 
	
	Let $\Econd$ be the event that $\rcKupdown{x \to Y}{t, i} \cup \{\rid{-y_i}\} = \cK$ and $\Et$ happen. We have, 
	\begin{align*}
	 &\mi{\rMinner{x}(y_i)}{\rNbtogether{x}{}[\rid{-y_i}] \mid \rid{-y_i}, \rid{y_i}, \rWrest, \eqcolor{\Econd}} \\
	 &= \eqcolor{\frac1{\beta_r + 1} \cdot \sum_{j \in [\beta_r + 1]}}  \mi{\rMinner{x}(y_i)}{\rNbtogether{x}{}[K_j] \mid \rid{y_i}, \rWrest, \eqcolor{\rid{-y_i} = K_j},  \Econd},
	\end{align*}
	by uniformity of $\rid{-y_i}$ argued above. 
	
	Conditioned on $\Econd$, the joint distribution of all the random variables in the mutual information term are independent of the event $ \rid{-y_i} = K_j$. Let us list these random variables, and argue in steps (in each step, we condition on everything in the previous steps also). 
	\begin{itemize}
	\item $\rNbtogether{x}{}[K_{\ell}]$ for all $\ell \in [\beta_r + 1]$: these random variables are all distributed so that $\rNbtogether{x}{}[K_{\ell} \cup \{\rid{y_i}\}] $ is sampled from $\cDmarg$, conditioned on $\Econd$. The value of $\rid{-y_i}$ among the $\beta_r + 1$ sets is irrelevant to the distribution. 
	\item $\rNbhood{x}{Y}{}, \rNbhood{x}{Z}{}$: this is the input of $x$, and it depends on the event $ \rid{-y_i} = K_j$ only through $\rNbtogether{x}{}[K_{\ell}]$ for $\ell \in [\beta_r + 1]$, which we have argued is independent of the value of $\rid{-y_i}$. 
	\item $\rmsg{x}{Y}{}, \rmsg{x}{Z}{}$: as the protocol is deterministic, and the input of $x$ is independent of the value of $\rid{-y_i}$, the messages are independent of the event also. 
	\item $\ridlayer{X}, \rcJall, \rLall$ and $\rcKall$ barring $\rcKupdown{x \to Y}{t, i}$: all these sets are disjoint from sets in collection $\cK$, and therefore are independent of what happens inside $\cK$. 
	\item $\rWrest$: this random variable is comprised of $\ridlayer{X}, \rcJall, \rLall, \rMpub{x}$, $\rNpub{x}$ and $\rcKall$ barring $\rcKupdown{x \to Y}{t, i}$, all of which we have argued are independent of event $\rid{-y_i} = K_j$. 
	\end{itemize}
The joint distribution of all these random variables is independent of the event $\rid{-y_i} = K_j$. 
	We can continue the proof as,
	\begin{align*}
	&	\frac1{\beta_r + 1} \cdot \sum_{j \in [\beta_r + 1]}  \mi{\rMinner{x}(y_i)}{\rNbtogether{x}{}[K_j] \mid  \rid{y_i}, \rWrest, \eqcolor{\rid{-y_i} = K_j},  \Econd} \\
	&\hspace{30pt}= \frac1{\beta_r + 1} \cdot \sum_{j \in [\beta_r + 1]}  \mi{\rMinner{x}(y_i)}{\rNbtogether{x}{}[K_j] \mid \rid{y_i}, \rWrest,  \Econd} \tag{by the independence of the joint distribution of remaining variables and event $\rid{-y_i} = K_j$ } \\
	&\hspace{30pt}\leq  \frac1{\beta_r + 1} \cdot \sum_{j \in [\beta_r + 1]}  \mi{\rMinner{x}(y_i)}{\rNbtogether{x}{}[K_j] \mid \eqcolor{\rNbtogether{x}{}[K_1], \rNbtogether{x}{}[K_2], \ldots, \rNbtogether{x}{}[K_{j-1}]}, \rid{y_i}, \rWrest,  \Econd}, 
	\intertext{
	where we have used that $\rNbtogether{x}{}[K_{j}] \perp \rNbtogether{x}{}[K_{\ell}]$ conditioned on $\rid{y_i}$, $\rWrest$ and $\Econd$ for $\ell \neq j$, so we can apply \Cref{prop:info-increase}. The independence holds because for $\ell \in [\beta_r + 1]$, $\rNbtogether{x}{}[K_{\ell}]$ is sampled so that $\rNbtogether{x}{}[K_{\ell} \cup \{\rid{y_i}\}]$ is distributed according $\cDmarg$ independently other $\ell' \in [\beta_r+1]$ conditioned on $\rid{y_i}, \rWrest$ and the event $\Econd$. We continue
	}
	&\hspace{30pt}= \frac1{\beta_r + 1} \cdot \mi{\rMinner{x}(y_i)}{\eqcolor{\rNbtogether{x}{}[K_1], \rNbtogether{x}{}[K_2], \ldots, \rNbtogether{x}{}[K_{\beta_r + 1}]} \mid \rid{y_i}, \rWrest, \Econd} \tag{by chain rule of mutual information \itfacts{chain-rule}}\\
	&\hspace{30pt}\leq \frac1{\beta_r + 1} \cdot \en{\rMinner{x}(y_i) \mid \rid{y_i}, \rWrest, \Econd} \tag{by \itfacts{info-entropy}} \\
	&\hspace{30pt}\leq \frac1{\beta_r + 1} \cdot s. \qedhere \tag{as message length is bounded by $s$ and \itfacts{uniform}}
	\end{align*}
\end{proof}

We can prove \Cref{lem:cH2-cDfake-inner-msgs} again through weak chain rule of total variation distance from \Cref{fact:tvd-chain-rule}. 
We first define the random variables we perform chain rule over.  These are local to this subsection and may be used for different purposes later. (See \Cref{sec:list-rv} for a list of global random variables.)
\begin{itemize}
	\item Variable $\rwstart$: joint random variable $\rG_{r-1}, \rids$, $\raux$ and $\rNpub{x},\rMpub{x}$ for each inner vertex $x$ . 
	\item Variables $\rw^{x \to y}$ for each pair of inner vertices $x, y$ in different layers: random variable $\rMinner{x}(y)$, corresponding to the message $x$ sends to $y$.
	\item Variables $\rwend$: joint random variable $\rNrest{x}$ for each inner vertex $x$. 
\end{itemize}

For each inner vertex $x$, there are $2n_{r-1}$ other inner vertices in a different layer than  $X$. Hence, totally, there are $6(n_{r-1})^2$ random variables of the form $\rw^{x \to y}$. Let us define an ordering on these $6(n_{r-1})^2$ random variables.  We use the lexicographic ordering on $x$, followed by that of $y$ to order them. Let $\rw^1, \rw^2, \ldots, \rw^{6(n_{r-1})^2}$ be these random variables in order. 

We look at the specific distribution of these random variables in $\cH_2$ and $\cDfake$ more carefully. 
We state the following simple observations without proof.
\begin{observation}\label{obs:inter-H2fake-list}
	About random variables $\rwstart, \rw^{x \to y}$ for each pair of inner vertices $x, y$ in different layers and $\rwend$, in distributions $\cH_2$ and $\cDfake$, we have, 
	\begin{enumerate}[label=$(\roman*)$]
		\item \label{item:wstart-same-H2fake} In $\cH_2$ and $\cDfake$, the distribution of random variable $\rwstart$ is the same. 
		\item \label{item:wxy-indep-H2fake} In $\cH_2$ and $\cDfake$, the random variables $\rw^{i}$ and $\rw^{j}$ are independent of each other conditioned on $\rwstart$ for any $i \neq j$, $i,j \in [6(n_{r-1})^2]$. Random variable $\rw^{x \to y}$ is sampled only conditioned on $\rids, \raux$ and $\rNinner{x}{}, \rNpub{x}, \rMpub{x}$ for each inner vertex $x$, all of which are fixed when conditioned on variable $\rwstart$. 
		\item \label{item:wxy-indep-H2} In $\cH_2$, random variable $\rw^{x \to y}$ is independent of the rest of $\rwstart$ when conditioned on $\rNinner{x}{} , \rids, \raux$, $\rNpub{x}$ and $\rMpub{x}$ inside $\rwstart$. 
		\item \label{item:wxy-indep-fake} In $\cDfake$, random variable $\rw^{x \to y}$ is independent of the rest of $\rwstart$ when conditioned on $\rids, \raux, \rNpub{x}, \rtypeval{x, y}$ and $\rMpub{x}$ inside $\rwstart$.
		\item \label{item:wend-indep-H2fake} Conditioned on any choice of $\rwstart$ and $\rw^{x \to y}$ all pairs of inner vertices $x, y$ in different layers, the distribution of random variable $\rwend$ is the same in $\cH_2$ and $\cDfake$. 
	\end{enumerate}
\end{observation}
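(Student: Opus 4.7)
The plan is to prove each of the five claims by direct inspection of the joint factorizations of $\cH_2$ (\Cref{box:hybrid2-public-msgs}) and $\cDfake$ (\Cref{box:cDfake}), reading off the relevant conditional independences from the sampling recipes. The underlying observation is that $\rwstart = (\rG_{r-1}, \rids, \raux, \{\rNpub{x}, \rMpub{x}\}_x)$ is exactly the block of variables produced by the first two lines of both distributions, so after we verify that the first two lines agree, parts (iii)--(v) will reduce to Markov-type statements about the last two lines. The only real subtlety (and the one step worth flagging as the ``main obstacle'') is in part (ii), where we must confirm that the sampling of the $\rw^{x\to y}$'s in both distributions is \emph{genuinely} independent across pairs once $\rwstart$ is fixed, and in particular that the two \emph{different} forms of conditioning used in $\cH_2$ versus $\cDfake$ are both deterministic functions of $\rwstart$.

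For parts (i) and (v), the argument is a one-line appeal to inspection. In (i), the first line of each distribution samples $(\rG_{r-1},\rids,\raux)$ in exactly the same way, and the second line samples each $(\rNpub{x},\rMpub{x})$ conditioned only on $(\rids,\raux)$ in both cases; hence the joint laws of $\rwstart$ coincide. In (v), the last lines of both distributions describe the same conditional law for $\rNrest{x}$ given $(\rMinner{x},\rNinner{x}{},\rids,\raux,\rNpub{x},\rMpub{x})$, and $\rwstart$ together with all $\rw^{x\to y}=\rMinner{x}(y)$ fixes every one of these conditioning variables (since $\rNinner{x}{}$ is a function of $\rG_{r-1}\subseteq \rwstart$ and $\rids\subseteq \rwstart$).

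For parts (iii) and (iv), I would invoke the standard principle that if a joint distribution is defined by a sequential recipe in which $\rW$ is sampled conditionally on some set $\rS$ of previously sampled variables, then $\rW$ is conditionally independent of all other previously sampled variables given $\rS$. In $\cH_2$ the third line samples $\rw^{x\to y}=\rMinner{x}(y)$ conditioned on $(\rids,\raux,\rNpub{x},\rMpub{x},\rNinner{x}{})$; all of these are fixed by $\rwstart$ (with $\rNinner{x}{}$ determined by $\rG_{r-1}$ and $\rids$), so $\rw^{x\to y}$ is independent of the remaining components of $\rwstart$ given this set, proving (iii). Part (iv) is identical once we replace $\rNinner{x}{}$ by $\rtypeval{x,y}$, which is also a function of $(\rG_{r-1},\rids)\subseteq \rwstart$.

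Finally, for part (ii), the plan is to combine (iii) and (iv) with the fact that the third line of each distribution is written as a product $\bigtimes_{x,y}$ over all ordered inner pairs, i.e.\ each $\rw^{x\to y}$ is sampled using independent fresh randomness given its own conditioning set. Since every such conditioning set is a subset of $\rwstart$ (by the argument used for (iii) and (iv)), conditioning on all of $\rwstart$ removes any dependence among the $\rw^{x\to y}$'s, yielding mutual conditional independence across the $6(n_{r-1})^2$ pairs. The one thing to double-check carefully --- and what I expect will be the bookkeeping hurdle --- is that the \emph{second} sentence of (ii) (``$\rw^{x\to y}$ is sampled only conditioned on $\rids,\raux,\rNinner{x}{},\rNpub{x},\rMpub{x}$'') needs to be verified for both distributions simultaneously: in $\cH_2$ this is immediate from the recipe, while in $\cDfake$ one must note that $\rtypeval{x,y}$ is itself determined by $\rNinner{x}{}$, so any $\rwstart$-measurable conditioning set for $\cDfake$ is a subset of the one listed for $\cH_2$, and the claim still reads correctly once we remember that the observation is phrased in terms of variables that are jointly fixed by $\rwstart$.
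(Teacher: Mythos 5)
Your proof is correct. The paper actually states \Cref{obs:inter-H2fake-list} explicitly \emph{without} proof (``We state the following simple observations without proof''), so there is no paper argument to compare against; your write-up simply fills in the routine verification by reading off the sequential factorizations of $\cH_2$ and $\cDfake$. The key points you identify are exactly right: the first two lines of both distributions coincide (giving (i)); each $\rw^{x\to y}=\rMinner{x}(y)$ is generated by a $\bigtimes_{x,y}$ block using fresh randomness with a conditioning set that is a deterministic function of $\rwstart$ --- $\rNinner{x}{}$ being a function of $\rG_{r-1}$, and $\rtypeval{x,y}$ being a coordinate of $\rNinner{x}{}$ --- which gives (ii)--(iv); and the last line is identical in the two distributions and depends only on variables fixed by $(\rwstart,\{\rw^{x\to y}\})$, which gives (v). One very small remark: the phrase ``$\rtypeval{x,y}$ is itself determined by $\rNinner{x}{}$'' is the right observation, but the conclusion you draw from it (``any $\rwstart$-measurable conditioning set for $\cDfake$ is a subset of the one listed for $\cH_2$'') is phrased more circuitously than it needs to be --- what matters is simply that both conditioning sets are $\rwstart$-measurable; it makes no difference which is contained in which.
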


\begin{claim}\label{clm:inter-H2fake}
	For any $\ell \in [6(n_{r-1})^2]$, we have, 
	\[
	\Exp_{\rwstart, \rw^1, \ldots, \rw^{\ell-1} \sim \cH_2}	\tvd{\cH_2(\rw^{\ell} \mid \rw^1, \ldots, \rw^{\ell-1}, \rwstart)}{\cDfake(\rw^{\ell} \mid \rw^1, \ldots, \rw^{\ell-1}, \rwstart)} \leq  \sqrt{\frac{s}{2(\beta_r + 1)}}.
	\]
\end{claim}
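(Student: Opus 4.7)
The plan is to mirror the argument already used for Claim~6.2 (\Cref{clm:inter-H1H2-close}), reducing the TVD to an expected KL-divergence, identifying that expected KL with a conditional mutual information, and then invoking Claim~6.4 (\Cref{clm:mi-low-inner-msgs}) as a black box.

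The first step will be to strip off the conditioning on $\rw^{1},\ldots,\rw^{\ell-1}$. By \Cref{obs:inter-H2fake-list}-\ref{item:wxy-indep-H2fake}, in both $\cH_2$ and $\cDfake$, the variable $\rw^{\ell}$ is independent of $\rw^{<\ell}$ once we condition on $\rwstart$. Thus the quantity of interest collapses to $\Exp_{\rwstart \sim \cH_2} \tvd{\cH_2(\rw^{\ell} \mid \rwstart)}{\cDfake(\rw^{\ell} \mid \rwstart)}$. Writing $\rw^{\ell} = \rMinner{x}(y_i)$ for the pair of inner vertices $x, y_i$ associated with index $\ell$, parts~\ref{item:wxy-indep-H2} and~\ref{item:wxy-indep-fake} of \Cref{obs:inter-H2fake-list} give explicit forms: $\cH_2(\rw^{\ell}\mid \rwstart)$ is the distribution of $\rMinner{x}(y_i) \mid \rNinner{x}{}, \rids, \raux, \rNpub{x}, \rMpub{x}$, while $\cDfake(\rw^{\ell}\mid \rwstart)$ is the distribution of $\rMinner{x}(y_i) \mid \rids, \raux, \rNpub{x}, \rMpub{x}, \rtypeval{x, y_i}$. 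The key observation that aligns the two sides is that $\rtypeval{x, y_i}$ is a deterministic function of $\rNinner{x}{}$, so I may freely add $\rtypeval{x, y_i}$ to the $\cH_2$ conditioning without changing anything; the two distributions then share the common conditioning $(\rids, \raux, \rNpub{x}, \rMpub{x}, \rtypeval{x, y_i})$ and differ only by whether $\rNinner{x}{}$ is additionally conditioned on.

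Once the distributions are written in this aligned form, I would apply Pinsker's inequality (\Cref{fact:pinskers}) to pass from TVD to KL-divergence, Jensen's inequality with concavity of $\sqrt{\cdot}$ to pull the expectation inside the square root, and finally \Cref{fact:kl-info} to rewrite the averaged KL as a conditional mutual information. This yields
\[
\Exp_{\rwstart \sim \cH_2} \tvd{\cH_2(\rw^{\ell} \mid \rwstart)}{\cDfake(\rw^{\ell} \mid \rwstart)} \leq \sqrt{\tfrac{1}{2}\cdot \mi{\rMinner{x}(y_i)}{\rNinner{x}{} \mid \rids, \raux, \rNpub{x}, \rMpub{x}, \rtypeval{x, y_i}}}.
\]
Plugging in the bound $s/(\beta_r+1)$ from \Cref{clm:mi-low-inner-msgs} gives exactly the claimed $\sqrt{s/(2(\beta_r+1))}$.

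I do not anticipate a real obstacle here: all of the information-theoretic work was done in \Cref{clm:mi-low-inner-msgs}, and this claim is just its total-variation repackaging. The only bookkeeping issue is making sure the conditioning sets in $\cH_2$ and $\cDfake$ are matched so that the resulting expected KL truly is a mutual information about $\rNinner{x}{}$, and this is handled precisely by noting that $\rtypeval{x, y_i}$ lies inside $\rNinner{x}{}$.
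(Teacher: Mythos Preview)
Your proposal is correct and follows essentially the same approach as the paper's proof: strip the $\rw^{<\ell}$ conditioning via \Cref{obs:inter-H2fake-list}-\ref{item:wxy-indep-H2fake}, apply Pinsker and Jensen, identify the averaged KL as the conditional mutual information of \Cref{clm:mi-low-inner-msgs} using parts~\ref{item:wxy-indep-H2} and~\ref{item:wxy-indep-fake}, and plug in the bound. Your explicit remark that $\rtypeval{x,y_i}$ is determined by $\rNinner{x}{}$ (so it can be added to the $\cH_2$ conditioning for free) is exactly what makes the application of \Cref{fact:kl-info} go through, and the paper uses the same fact implicitly.
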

\begin{proof}
	Let $\rw^{\ell} = \rw^{x \to y}$.
	Firstly, by \Cref{obs:inter-H2fake-list}-\ref{item:wxy-indep-H2fake}, we have, 
	\begin{align*}
	\cH_2(\rw^{\ell} \mid \rw^1, \ldots, \rw^{\ell-1}, \rwstart) &= \cH_2(\rw^{\ell} \mid  \rwstart), \quad \text{and} \\
		\cDfake(\rw^{\ell} \mid \rw^1, \ldots, \rw^{\ell-1}, \rwstart) &= \cDfake(\rw^{\ell} \mid  \rwstart).
	\end{align*}
	Hence the total variation distance term becomes, 
	\begin{align*}
		&\Exp_{\rwstart, \rw^1, \ldots, \rw^{\ell-1} \sim \cH_2}	\tvd{\cH_2(\rw^{\ell} \mid \rw^1, \ldots, \rw^{\ell-1}, \rwstart)}{\cDfake(\rw^{\ell} \mid \rw^1, \ldots, \rw^{\ell-1}, \rwstart)} \\
		&\hspace{30pt}=	\Exp_{\rwstart\sim \cH_2}	\tvd{\cH_2(\rw^{\ell} \mid  \rwstart)}{\cDfake(\rw^{\ell} \mid  \rwstart)} \tag{by the equalities above} \\
		&\hspace{30pt}\leq \frac1{\sqrt{2}} \cdot \Exp_{\rwstart \sim \cH_2} \sqrt{\kl{\cH_2(\rw^{\ell} \mid \rwstart)}{\cDfake(\rw^{\ell} \mid \rwstart)}} \tag{by Pinsker's inequality in \Cref{fact:pinskers}}\\
		&\hspace{30pt}\leq \frac1{\sqrt{2}} \cdot \sqrt{ \Exp_{\rwstart \sim \cH_2} \kl{\cH_2(\rw^{\ell} \mid \rwstart)}{\cDfake(\rw^{\ell} \mid \rwstart)}} \tag{by Jensen's inequality and concavity of square root} \\
		&\hspace{30pt}= \frac1{\sqrt{2}} \cdot \sqrt{ \Exp_{\rwstart \sim \cH_2} \kl{(\rMinner{x}(y) \mid \eqcolor{\rNinner{x}{}, \rNpub{x}, \rMpub{x}, \rids, \raux})}{\cDfake(\rw^{\ell} \mid \rwstart)}} \tag{by \Cref{obs:inter-H2fake-list}-\ref{item:wxy-indep-H2}} \\
		\hspace{30pt}&= \frac1{\sqrt{2}} \cdot \sqrt{ \Exp_{\substack{\rids, \raux, \rNinner{x}{}, \\ \rMpub{x}, \rNpub{x} }} \kl{\rMinner{x}(y) \mid \substack{\rNinner{x}{}, \rids, \raux, \\ \rNpub{x}, \rMpub{x}}}{\rMinner{x}(y) \mid \eqcolor{\substack{\rids, \raux, \rMpub{x}, \\ \rNpub{x}, \rtypeval{x,y}}}}} \tag{by \Cref{obs:inter-H2fake-list}-\ref{item:wxy-indep-fake}}\\
		&\hspace{30pt} =  \frac1{\sqrt{2}} \cdot \sqrt{\mi{\rMinner{x}(y)}{\rNinner{x}{} \mid \rids, \raux, \rNpub{x}, \rMpub{x}, \rtypeval{x, y}}}   \tag{by relation between KL-Divergence and mutual information in \Cref{fact:kl-info}} \\
		&\hspace{30pt}\leq \frac1{\sqrt{2}} \cdot \sqrt{\frac{s}{\beta_r + 1}}. \qedhere \tag{by \Cref{clm:mi-low-inner-msgs}} \qedhere
		\end{align*}
\end{proof}

Proof of \Cref{lem:cH2-cDfake-inner-msgs} is simple now. 
\begin{proof}[Proof of \Cref{lem:cH2-cDfake-inner-msgs}]
We have, 
\begin{align*}
	& \tvd{\cH_2}{\cDfake} \\
	&\hspace{30pt}= \tvd{\cH_2(\rwstart)}{\cDfake(\rwstart)} \\
	&\hspace{60pt} + \sum_{\ell \in [6(n_{r-1})^2]} \tvd{\cH_2(\rw^{\ell} \mid \rwstart, \rw^1, \ldots, \rw^{\ell-1})}{\cDfake(\rw^{\ell} \mid \rwstart, \rw^1, \ldots, \rw^{\ell-1})} \\
	&\hspace{90pt} + \tvd{\cH_2(\rwend \mid \rwstart, \rw^1, \ldots, \rw^{6(n_{r-1})^2})}{\cDfake(\rwend \mid \rwstart, \rw^1, \ldots, \rw^{6(n_{r-1})^2})} \tag{by \Cref{fact:tvd-chain-rule}}\\
	&\hspace{30pt}=  \eqcolor{0} + \sum_{\ell \in [6(n_{r-1})^2]} \tvd{\cH_2(\rw^{\ell} \mid \rwstart, \rw^1, \ldots, \rw^{\ell-1})}{\cDfake(\rw^{\ell} \mid \rwstart, \rw^1, \ldots, \rw^{\ell-1})} \\
	&\hspace{60pt} + \tvd{\cH_2(\rwend \mid \rwstart, \rw^1, \ldots, \rw^{6(n_{r-1})^2})}{\cDfake(\rwend \mid \rwstart, \rw^1, \ldots, \rw^{6(n_{r-1})^2})} \tag{by \Cref{obs:inter-H2fake-list}-\ref{item:wstart-same-H2fake}} \\
	&\hspace{30pt}=  \sum_{\ell \in [6(n_{r-1})^2]} \tvd{\cH_2(\rw^{\ell} \mid \rwstart, \rw^1, \ldots, \rw^{\ell-1})}{\cDfake(\rw^{\ell} \mid \rwstart, \rw^1, \ldots, \rw^{\ell-1})} + \eqcolor{0} \tag{by \Cref{obs:inter-H2fake-list}-\ref{item:wend-indep-H2fake}} \\
	&\hspace{30pt}\leq 6(n_{r-1})^2 \cdot  \sqrt{\frac{s}{2(\beta_r + 1)}}. \qedhere \tag{by \Cref{clm:inter-H2fake}}
\end{align*}
\end{proof}

This concludes the analysis of our round elimination protocol and the proof of~\Cref{lem:round-elim-distance}, which in turn, as shown before, 
finalizes the entire proof of~\Cref{lem:round-elim} and consequently~\Cref{thm:hard-dist}.

\bigskip
\section*{Acknowledgements} 
\addcontentsline{toc}{section}{Acknowledgements}

We thank Keren Censor-Hillel and Seri Khoury for introducing us to this problem through open problem sessions and discussions at 
Dagstuhl Seminar ``Graph Algorithms: Distributed Meets Dynamic'' and Simons Institute program on ``Sublinear Algorithms''. We also thank Rotem Oshman 
for helpful conversations on prior work in related models. We are additionally grateful to Keren Censor-Hillel, Seth Pettie and Thatchaphol Saranurak for many helpful comments and pointers to the literature, and to 
Artur Czumaj for the suggestion to add \Cref{sec:cktbounds} to this paper. 

Finally, 
we would like to express our gratitude to the organizers of the above Dagstuhl Seminar and Simons Institute program as well as ``New York Theory Day (Fall 2024)''
that initiated and facilitated of some of these discussions. 


\bibliographystyle{halpha-abbrv}
\bibliography{new}


\appendix

\part*{Appendix}

\newcommand{\ccond}{\ensuremath{c_{\textnormal{\textsc{cond}}}}}

\newcommand{\cdecomp}{\ensuremath{c_{\textnormal{\textsc{exp}}}}}
\newcommand{\crout}{\ensuremath{c_{\textnormal{\textsc{P}}}}}

\newcommand{\cquery}{\ensuremath{c_{\textnormal{\textsc{Q}}}}}

\newcommand{\vol}{\ensuremath{vol}}
\newcommand{\stat}{\ensuremath{\pi}}

\newcommand{\taumix}{\ensuremath{\tau_{mix}}}

\newcommand{\nid}{\ensuremath{new}}

\newcommand{\Texp}{\ensuremath{T_{\textnormal{exp}}}}

\newcommand{\Prou}{\ensuremath{P_{\textnormal{rou}}}}
\newcommand{\Qrou}{\ensuremath{Q_{\textnormal{rou}}}}

\section{CONGEST Lower Bounds and Circuit Complexity Barriers}\label{sec:cktbounds}

This appendix is a note on the barrier proven by \cite{EdenFFKO22} for proving lower bounds on the number of rounds required to solve triangle detection in CONGEST, and how they may break barriers in circuit complexity. 
We optimize the parameters a bit further to improve the result from $n^{\Omega(1)}$ lower bound breaking barriers in circuit complexity to $n^{\Omega(1/\log \log n)}$ lower bound breaking the same barriers in a weaker sense, but the framework of these reductions are exactly the same as \cite{EdenFFKO22}, and we claim no novelty for these ideas. 

Another purpose of this appendix is to keep track of all parameters involved explicitly which leads to the following observation. Even though, strictly speaking, proving an $n^{o(1/\log\log{n})}$ lower bound for triangle detection
does not break circuit complexity barriers \emph{currently}, it is \emph{plausible} that even proving an $\Omega(\log^c n)$ lower bound for some large constant $c > 1$ could  break circuit complexity barriers \emph{assuming} CONGEST algorithms for expander routing can be made to run in $\poly\log{(n)}$ rounds (see \Cref{rem:cktbounds})\footnote{To our knowledge, currently, no CONGEST lower bounds rule out such a possibility.}.

\subsection{Background on Expander Decomposition and Routing}

In this section, we state the prior results on expander decomposition and routing in graphs with high conductance in the CONGEST model. We start with some basic definitions and results. 

\paragraph{Definitions and terminology.}
In any graph $G = (V, E)$, we use $m$ to denote the number of edges and $n$ to denote the \emph{total} number of vertices.\footnote{In the other parts of this paper, $n$ referred to the number of vertices in one layer of a tripartite graph. Throughout this appendix, we are not working with tripartite graphs and use $n$ to denote the total number of vertices.} We use $\deg(v)$ to denote the degree of $v$. 

For any subset $S \subset V$, we use $\vol(S)$ to denote $\sum_{v \in S} \deg(v)$. 
The \textbf{conductance of a cut} $S$ and $V \setminus S$ is defined as:
\[
	\Phi(S) = \frac{\card{\set{(u,v) \mid (u,v) \in E, u \in S, v \in V \setminus S}}}{\min\{\vol(S), \vol(V \setminus S)\}}.
\] 
For the special case $S = \emptyset$, we set $\Phi(S) = 0$. The \textbf{conductance of a graph} $G$ is defined as, 
\[
	\Phi = \min_{S \subset V, 0 < \card{S} < n} \Phi(S).
\]

A \textbf{uniform lazy random walk} in a graph starting from a vertex $s \in V$ performs the following step iteratively: with probability $1/2$, stay at the current vertex; otherwise, pick a uniformly random neighbor of the current vertex and move to that neighbor. 
In any connected graph $G$, the stationary distribution of a lazy random walk is $\stat(u) = \deg(u)/2m$ for all $u \in V$. 

Let $p^s_t(v)$ denote the probability that after $t \geq 0$ steps, a lazy random walk starting from vertex $s \in V$ lands at $v$ for each $v \in V$. Then the \textbf{mixing time}, denoted by $\taumix(G)$ of graph $G$ is defined as the minimum $t$ such that for all $s, v \in V$, \[
	\card{p^s_t(v)-\stat(v)} \leq \stat(v)/n.
\]  
The following relation between mixing time and conductance is well known \cite{MarkA89}:\begin{equation}\label{eq:mixtime}
	\Theta\paren{\frac1{\Phi}} \leq \taumix(G) \leq \Theta\paren{\frac{\log n}{\Phi^2}}.
\end{equation}
It is also known that the diameter $D$ of any graph is at most the mixing time of the graph. 

\subsubsection{Results on Expander Decomposition in CONGEST}
An expander is any graph which has large conductance, usually at least $1/\poly\!\log{\!(n)}$. 
We now define the standard expander decomposition, where the vertex set of the input graph has to be partitioned into clusters that form expanders.

\begin{definition}[Expander Decomposition] \label{prob:exp-decompose}
	Given a graph $G = (V, E)$, and parameters $\eps \in (0,1)$ and $\phi \in (0,1)$, an expander decomposition of $G$ is a partition of $V$ into sets $V_1 \sqcup V_2 \sqcup\ldots \sqcup V_{\ell}$ for some integer $\ell > 0$ such that, 
	\begin{enumerate}
		\item The subgraph of $G$ induced by each partition $V_i$ for $i \in [\ell]$ is an expander with conductance at least $\phi$. 
		\item The number of edges $(u,v) \in E$ such that $ u \in V_i$ and $v \in V_{j}$ for $i \neq j \in [\ell]$, also called \emph{\textbf{inter-cluster edges}}, is at most $\eps \cdot \card{E}$. 
	\end{enumerate}
	Each partition $V_i$ for $i \in [\ell]$ is termed a \emph{\textbf{cluster}} of the decomposition. 
\end{definition} 

In the CONGEST model, we want to perform expander decomposition and require that at the end of the decomposition, each node knows the partition that it belongs to. 

\begin{problem}[Expander Decomposition in CONGEST]\label{prob:decomp-CONGEST}
	Given a graph $G = (V, E)$ in the CONGEST model, given parameters $\eps, \phi \in (0,1)$, the expander decomposition in CONGEST problem 
	is to find an expander decomposition $V_1 \sqcup \ldots \sqcup V_{\ell}$ with parameters $\phi $ and $\eps $ such that each vertex $v \in V$ knows which partition $V_i$ for $i \in [\ell]$ it belongs to. 
	
	We use $\Texp(\eps, \phi, n)$ to denote the minimum number of CONGEST rounds that such a decomposition can be implemented in for all $n$-vertex graphs. 
\end{problem}

The following result is known for expander decomposition in CONGEST. 

\begin{proposition}[\!\!{\cite[Theorem 1.2]{ChenMGS25}}]\label{prop:exp}
	There is an algorithm in the CONGEST model that solves \Cref{prob:decomp-CONGEST}; there exists some constants $\ccond, \cdecomp > 1$ such that, for any $\phi \in (0,1)$, 
	\[
		\Texp(\phi \cdot \log^{\ccond}(n) , \phi, n) \leq (\log n)^{\cdecomp } \cdot \phi^{-4}.
	\]
	The algorithm is randomized and succeeds with high probability.
\end{proposition} 

It is easy to see that once \Cref{prob:decomp-CONGEST} is solved, each vertex can also identify the number of vertices and the edges inside the cluster that it belongs to. 

\begin{claim}\label{clm:find-aux}
In any graph $G$, after \Cref{prob:decomp-CONGEST} is solved with parameters $\eps, \phi$, in an additional $O(\log n \cdot \phi^{-2})$ rounds, each vertex $v \in V$ can also know:
\begin{enumerate}
	\item For each $u \in N(v)$, which cluster $V_i$ for $i \in [\ell]$ that $u$ belongs to. 
	\item If $v \in V_i$, the size of cluster $V_i$, and the total number of edges inside cluster $V_i$. 
	\item If $v \in V_i$, each vertex also gets an identity from $[\card{V_i}]$ that is distinct from the identities of all other vertices in $V_i$, and $v$ knows the identities of all its neighbors that are also present in $V_i$.
\end{enumerate}
\end{claim}

\begin{proof}
	To communicate the cluster that each neighbor of $v$ belongs to, one additional round is sufficient, as each vertex $u \in V$ can broadcast to all its neighbors the identity of the cluster $u$ belongs to in $\log n$ bits. 
	
	Let $v \in V_i$, let $G_i$ be the subgraph of $G$ induced by $V_i$, and let $D$ be the diameter of $G_i$. The vertices can compute a rooted spanning tree of $G_i$ in $O(D)$ time by performing a breadth-first search from each node, and using the identities of the nodes in $V$ for tie-breaking in choosing which breadth-first tree becomes the spanning tree. 
	Once such a spanning tree is computed, the total number of vertices and edges inside $G_i$ can be collected through the edges of the tree in $O(D)$ many rounds. The required identities of vertices in $V_i$ can also be generated by propagating new identities from the root vertex of the spanning tree in $O(D)$ many rounds. 
	
	We know that the diameter of the graph $G_i$ is at most $O(\log n \cdot \phi^{-2})$, by \Cref{eq:mixtime} and the properties of the expander decomposition, which proves the claim.
\end{proof}

\subsubsection{Results on Routing Algorithms in CONGEST}

We cover the results on routing demand instances from \cite{GhaffariKS17}, following the presentation of these results in \cite{ChangPSZ21}.

\begin{problem}[Routing in CONGEST]\label{prob:routing}
	In a graph $G = (V, E)$ in the CONGEST model, an instance of the routing problem is defined as:
	\begin{enumerate}
		\item There are $a$ source-destination pairs $(s_1, t_1), (s_2, t_2), \ldots, (s_a, t_a)$ for some $a \in \IN$. Each source $s_i $ has to send a message of length at most $O(\log n)$ to vertex $t_i$ for $i \in [a]$. Each vertex $v \in V$ knows all vertices $u$ such that $(v, u)$ is one of the source-destination pairs, and the message $v$ should send to $u$.\footnote{The vertices need not know any other information about the source-destination pairs.}
		\item Each vertex $v \in V$ is the source of at most $\deg(v)$ messages and is the destination of at most $\deg(v)$ messages. 
	\end{enumerate}
	There may be multiple instances of the routing problem that need to be routed sequentially.
	
	We use a tuple $(\Prou(n), \Qrou(n))$ to denote the runtime of an algorithm in CONGEST that on $n$-vertex graphs, uses $\Prou(n) \cdot \taumix$ rounds of preprocessing steps, and after these steps, can route each instance of the routing problem in
	 $\Qrou(n) \cdot \taumix$ time, 
	where $\taumix$ is the mixing time of the underlying graph $G$.  
\end{problem}

Let us state one existing routing algorithm that we employ. This way of presenting the results of \cite{GhaffariKS17} can be found in Section 2.1 of \cite{ChangPSZ21}.
\begin{proposition}[cf. Lemma 3.2, 3.3 and 3.4 of \cite{GhaffariKS17}]\label{prop:routing}
There exists constants $\crout, \cquery > 1$ such that, for any integer $k \geq 1$, there exists a routing algorithm with:
\begin{align*}
	\Prou(n) &\leq k \cdot n^{4/k} \cdot (\log n)^{\crout \cdot k}  \cdot O(1), \\
	 \Qrou(n) &\leq (\log n)^{\cquery \cdot k} \cdot O(1),
\end{align*}
that succeeds with high probability.
Note that there is also a $\taumix$ factor in the total number of rounds based on the definitions in \Cref{prob:routing}. 
\end{proposition}

Another routing algorithm is from \cite{GhaffariL18}, which is better than the ones we state from \cite{GhaffariKS17} for a certain ranges of number of instances of routing, although it does not admit a tradeoff in preprocessing and routing requests. 

To conclude our background section, we summarize the problems we consider and the functions corresponding to their runtimes. This may help the reader keep track of these functions while perusing the next subsection. 

\renewcommand{\arraystretch}{1.5}

\begin{center}
	\begin{tabular}{|c|p{10cm}|}
		\hline
		\textbf{Function}  & \textbf{Definition} \\
		\hline
		$\Texp(\eps, \phi, n)$ &The minimum number of rounds taken by a CONGEST algorithm that finds an expander decomposition with parameters $\eps, \phi$ of the input $n$-vertex graph. \\
		\hline
		$\Prou(n)$ &In the routing problem, $P(n) \cdot \taumix$ is the total number of rounds required for preprocessing. \\
		\hline 
		$\Qrou(n)$ &In the routing problem, after pre-processing, $Q(n) \cdot \taumix$ is the total number of rounds required to handle one instance of the routing demand problem. \\
		\hline
	\end{tabular}
\end{center}

\subsection{From Circuits to CONGEST Algorithms}

In this section, we will show how CONGEST lower bounds can also imply circuit lower bounds for an explicit function.
For $n, m \geq 1$, let $f_{n,m} : \{0,1\}^{2m \log n} \rightarrow \{0,1\}$  denote the function that takes as input 
a representation of a graph $G = (V, E)$ where each edge is given by $ 2 \log n$ bits denoting the identities of the two end points of the edge in $\log n$ bits each. The function outputs $1$ if the input graph has a triangle, and outputs $0$ otherwise. 

We show that we can simulate a circuit that computes $f_{n,m}$ in the CONGEST model. 

\begin{lemma}\label{lem:ckts-congest}
Given a family of circuits $\mathcal{C} = \{C\}_{n, m \geq 1}$ that compute function $f_{n,m}$ with constant fan-in and fan-out, depth $d(n, m)$ and total number of gates at most $m \cdot s(n, m)$, 
for any graph $G = (V, E)$, for any parameters $\eps \in (0,1/3)$ and $\phi \in (0,1)$, there is an algorithm in CONGEST that solves triangle detection in 
\[
	\Paren{\Texp(\eps, \phi, n) + O(\log n \cdot \phi^{-2})  \cdot \paren{\Prou(n) + \Qrou(n) \cdot s(n, m) \cdot d(n, m) }} \cdot O\paren{\frac{\log n}{\log (1/3\eps)}}
\]
many rounds.
\end{lemma}

\begin{proof}
First, we perform expander decomposition on graph $G$ with parameters $\eps, \phi$ in $\Texp(\eps, \phi, n)$ many rounds. Each cluster performs the following procedures for simulating circuits in parallel. 

\paragraph{Choosing the circuit.}
We use the result in \Cref{clm:find-aux} to make sure each vertex in any cluster $V_i$ knows all the information listed in the statement of \Cref{clm:find-aux}. We use $\nid(v)$ to denote the identities from \Cref{clm:find-aux}.

In each cluster $V_i$, a vertex $v \in V_i$ is said to be a \textbf{good node} if the number of neighbors of $v$ inside cluster $V_i$ is at least as many as the number of neighbors of $v$ outside cluster $V_i$. All other vertices are called \textbf{bad nodes}.

We let graph $G_i$ be the subgraph of $G$ with the union of all the edges incident on the good nodes of $V_i$ (including ones which go out of $V_i$), and all the edges between two vertices in $V_i$. Using an argument similar to the one in \Cref{clm:find-aux}, the total number of these edges can be known to all vertices in $V_i$ in $O(\log n \cdot \phi^{-2})$ rounds. Let $m_i$ be the total number of these edges.

Each cluster $V_i$ simulates the circuit $C_{n, m_i}$, and each vertex in cluster $V_i$ knows the value of $n$ and $m_i$. Hence, all the vertices in the cluster know which circuit to simulate.

\paragraph{Assignment of gates to vertices.}
Each gate $g$ of the circuit $C_{n, m_i}$ is assigned to one of the nodes in $V_i$ that is responsible for computing the output of gate $g$, as well as sending this output to all the gates in the next layer of the circuit that $g$ feeds into. The assignment of gates is as follows:
\begin{enumerate}[label=$(\roman*)$]
	\item Each gate $g$ of the input layer corresponds to some edge $(u,v)$ of graph $G$. If both $u, v \in V_i$, all these $2 \log n$ gates are assigned to either $u$ or $v$, lexicographically breaking ties based on $\nid(u)$ and $\nid(v)$. Moreover, which gate among the input layer corresponds to which edge can be fixed by arguments similar to the proof of \Cref{clm:find-aux}: the root of the spanning tree can communicate to each of its neighbors a partition of the input gates corresponding to the number of edges in each of the subtrees, and this can be done recursively. This will take $O(\log (n) \cdot \phi^{-2})$ rounds, also based on the upper bound on the diameter of the cluster.
	
	\item Each gate in any layer which is not the input layer, is assigned to a vertex in $V_i$ so that the total number of gates in any layer given to any vertex $v$ is at most $2 \cdot s(n, m_i) \cdot \deg_i(v)$, where $\deg_i(v)$ is the total number of neighbors of $v$ inside the cluster as follows:
	\begin{enumerate}
		\item The gates assigned to each vertex $v$ form $2 \cdot s(n, m_i)$ many disjoint contiguous segments in layer $j$. 
		\item There is a partition of $[m_i]$ into $\card{V_i}$ many contiguous segments, and each segment is given to a vertex $v \in V_i$. Let $[g_{v, 1}, g_{v, 2}]$ be the segment given to vertex $v$, and we require that $g_{v, 2} - g_{v, 1} + 1 = \deg_i(v)$. 
		\item All the gates numbered $(x-1) \cdot (m_i) + y$ is assigned to vertex $v$ for each $x \in [2\cdot s(n, m_i)]$, $g_{v, 1} \leq y \leq g_{v, 2}$ if such a gate exists. 
	\end{enumerate}
The total number of gates in each layer is at most $m_i \cdot s(n, m_i)$, so, all the gates are assigned to some vertex. 
Each vertex $v$ can also be made to know the integers $g_{v, 1}$ and $g_{v, 2}$ with arguments similar to assigning gates in the input layer: the root picks a contiguous segment of length equal to its degree for itself starting from $1$, then partitions the rest of the gates among its subtrees based on the number of edges. This step is performed recursively in each subtree. Totally in $O(\log(n) \cdot \phi^{-2})$ rounds, each vertex $v$ knows $g_{v, 1}$ and $g_{v, 2}$ and hence, all the gates that are assigned to it. 
\end{enumerate}
Next, we show how the vertices can communicate to each other the output of the gates they are assigned to. 

\paragraph{Routing gate outputs.}
The vertices simulate the circuit layer by layer starting from the input layer.  We prove that this simulation is possible by induction. For the base case, we know that the value of all the gates in the input layer is known to the vertices to which the gates are assigned to trivially. 
For the induction hypothesis, let us assume that the output of all the gates in layer $j - 1$ is known to the vertices that they are assigned to. We will show that the output of all these gates in layer $j-1$, once found, can be routed to the vertices that are assigned the gates in layer $j$ that they feed into.

\emph{Only} for the purpose of routing the output of gates in each layer, we assume that each vertex $v \in V_i$ has $\deg_i(v)$ many identities:  all integers from $g_{v, 1}$ to $g_{v, 2}$ (including $g_{v ,1}$ and $g_{v, 2}$). The identities of each vertex can be specified by $O(\log m_{i})$ bits $g_{v, 1}, g_{v, 2} \leq m_i$. The vertices can send all their identities to their neighbors in $1$ additional round. 

The vertices use the preprocessing routine for the routing algorithm, which can be run in $\Prou(n)$ rounds.  Next, for each gate $g$, the vertex $v$ that $g$ is assigned to needs to send the output of $g$ to $O(1)$ many vertices, which are assigned the gates in layer $j$ that $g$ feeds into.  The identities of these vertices are known to $v$ as it knows circuit $C_{n,m_i}$ (if the gate number is $g$, $(g-1) \mod m_i + 1$ is the identity of the vertex that $g$ is assigned to) and the output of each gate is also only $1$ bit.
  The total number of gates assigned to each vertex $v$ is at most $s(n, m_i) \cdot 2 \cdot \deg_i(v)$, this can be viewed as $O(s(n, m_i))$ many instances of the routing problem, each of which takes $\Qrou(n) \cdot O(\log n \cdot \phi^{-2})$.\footnote{Here, we assign multiple identites to each vertex, to have a total of $\deg_i(v)$ many identites each, and we claim the routing algorithm can still be run correctly. This can be easily seen by performing the routing algorithm on another graph with $m_i$ vertices: a replacement product of the original graph in cluster $V_i$ with a family of $O(1)$-regular low-degree expanders of conductance $\Omega(1)$. Each vertex in the new graph will have a unique identity from $[m_i]$, and any round of a CONGEST algorithm can be simulated in the original graph in cluster $V_i$. Refer to Section 4 and Proposition 4.2 of \cite{AssadiSW19} for details on replacement products and the fact that the mixing time increases by at most an $O(1)$ factor, respectively.  } 
  
This routing process is repeated till the last layer of the circuit, and if a triangle is found in graph $G_i$ the vertex with the output gate of the circuit outputs the existence of the triangle, and the algorithm terminates. 

\paragraph{Recursion.}
If there is no triangle found in any of the clusters, all the edges $(u,v)$ where $u, v$ are two good nodes inside the same cluster, i.e., $u, v \in V_i$ for some $i \in [\ell]$ are removed. (Note the definition of a good node is with respect to the cluster it belongs to.) First, we argue that we remove at least $(1-3\eps) \cdot m$ edges. 

We know by the parameters on the expander decomposition that the total number of inter-cluster edges is at most $\eps \cdot m$. We will argue that each edge $(u,v)$ with $u, v \in V_i$ that is not removed can be charged to an inter-cluster edge, so that each inter-cluster edge receives at most two charges. This will prove that the total number of edges left is at most $3 \cdot \eps \cdot m$. 
Let $(u_1, v), (u_2, v), \ldots, (u_a, v)$ be all the edges incident to some bad node $v$ in cluster $V_i$ with $u_j \in V_i$ for all $j \in[a]$. We know that $v$ has at least $a$ many edges to vertices outside cluster $V_i$ by the definition of bad nodes. We charge each of these edges to one such inter-cluster edge incident on $v$, so that each inter-cluster edge receives at most one charge. Globally, each inter-cluster edge $(x, y)$ receives at most two charges, as it receives at most one charge from each of its end-points. 

We have shown that at least half the edges are removed, and we recurse on the remaining network by computing another expander decomposition and circuit simulation. This recursion is done till there are only $O(1)$ edges left in the graph, wherein the existence of a triangle can be detected trivially in $O(1)$ rounds. The total number of levels of recursion is at most $\log m/(\log (1/3\eps))$. 

\paragraph{Correctness.}
It is easy to see that if the algorithm outputs the existence of a triangle, there is a triangle in the graph, as the outputs of the circuits are correct. 
The edges we remove at each level of the recursion are incident on two good nodes from the same cluster $V_i$. The only way such an edge $(u,v)$ can be a part of a triangle is if there exists some vertex $w \in V$ with $(u, w), (v, w) \in E$. However, when the circuit is simulated in $V_i$, the edges $(u, w)$ and $(v, w)$ would be included as a part of the input, as $u, v$ are good nodes. Hence the existence of a triangle would be detected by the circuit. 
Therefore, when the algorithm declares that there is no triangle after all the levels of the recursion, no triangle exists in the graph. 

\paragraph{Total number of rounds.}
The number of levels of the recursion is $O((\log n)/(\log (c_2/3)))$. Assuming $\Texp(\eps, \phi, n), \Prou(n), \Qrou(n), s(n, m)$ and $d(n, m)$ are increasing functions in $n, m$, each level of the recursion takes:
\begin{enumerate}[label=$(\roman*)$]
	\item $\Texp(\eps, \phi, n)$ rounds to compute the expander decomposition. 
	\item $O(\log n \cdot \phi^{-2})$ rounds to propagate the required information to all the nodes in each cluster before circuit simulation. 
	\item $\Prou(n) \cdot O(\log n \cdot \phi^{-2})$ pre-processing time before circuit simulation via routing requests (as $\taumix = O(\log n \cdot \phi^{-2})$).
	\item $O(s(n,m)) \cdot \Qrou(n) \cdot O(\log n \cdot \phi^{-2})$ rounds to compute one layer of the circuit, repeated for $d(n,m)$ many layers. 
\end{enumerate}
This completes the proof.
\end{proof}

The following result now establishes the barrier for proving CONGEST lower bounds. 

\begin{proposition}\label{clm:exist-lb}
	If there exists a large enough constant $c > 0$ and integer $n_0$, such that for all $n \geq n_0$ triangle detection in CONGEST requires $n^{c/\log \log n}$ rounds, then there is an explicit function $f:\{0,1\}^{N} \rightarrow \{0,1\}$ such that 
	the following is true. There exists integer $N_0$ such that for all $N \geq N_0$, there exists no family of circuits computing $f_N$ with depth $O(\log N)$, constant fan-in, constant fan-out and size $N^{1+c'/\log \log N}$ for some constant $c'$ 
	depending on $c$ only.
\end{proposition}
\begin{proof}
The proof is a direct application of \Cref{lem:ckts-congest} and the results of \Cref{prop:exp,prop:routing}. 

Firstly, in \Cref{prop:exp}, we set $\phi = \log^{-\ccond }(n)/6$. This gives:
\begin{align*}
	&\Texp(\phi \cdot \log^{\ccond}(n), \phi, n) \leq (\log n)^{\cdecomp} \cdot \phi^{-4} \\
	&\implies \Texp(1/6, (\log n)^{-\ccond}, n) \leq (\log n)^{\cdecomp + 4\ccond}.
\end{align*}
This also sets the mixing time of the clusters as $O(1) \cdot (\log n)^{2\ccond +1}$, by \Cref{eq:mixtime}.

Next, we set parameter $k$ in \Cref{prop:routing} to be $k= \log \log n/100 (\cquery + \crout)$. This gives:
\begin{align*}
	&\Prou(n) \leq O(\log \log n)\cdot (n^{400(\cquery + \crout)/\log \log n}  \cdot (\log n)^{\log \log n}) \leq n^{(2 + 400(\cquery + \crout))/(\log \log n)},\\
	&\Qrou(n) \leq O(1) \cdot (\log n)^{\log \log n} \leq n^{1/\log \log n},  
\end{align*}
for large enough $n$. 

Let $N := 2m \log n$, and $f_N$ be the function that takes as input $m$ edges, each represented by $ 2\log n$ bits for both the end-points, and computes whether there is a triangle in the input graph. Given a circuit of depth $O(\log N) = O(\log n)$, and size $s(n,m)$, 
by \Cref{lem:ckts-congest}, there is an algorithm that solves triangle detection in CONGEST in
\begin{align*}
	&(\Texp(1/6, (\log n)^{-\ccond}, n) + O(\log^{2\ccond + 1} n)  \cdot (\Prou(n) + \Qrou(n) \cdot s(n, m) \cdot d(n, m))) \cdot O(\log n) \\
	&= n^{O(1/\log \log n)} \cdot s(n,m). \tag{by our equations above, and depth $O(\log n)$}
\end{align*}
By our premise, we get $s(n,m) \geq n^{c'/\log \log n}$ for some constant $c'$, which is a function of $c$. This proves that the size of the circuits is at least $m \cdot n^{c'/\log \log n} \geq m^{1+ \Omega(1/\log \log m)}$, for an explicit function on at least
 $N = \Theta(m\log{m})$ bits.
\end{proof}

\paragraph{Barrier aspect of~\Cref{clm:exist-lb}.} We can now provide the main message of this appendix. 
\Cref{clm:exist-lb} implies a \emph{barrier} because of the following: the current lower bounds in circuit complexity on the size of a  circuit with constant fan-in and fan-out computing an explicit function on $N$ bits using arbitrary Boolean operator gates in $O(\log N)$ depth is $(3 + 1/86)N - o(N)$ (see \cite{GolovnevKW21}). Obtaining any superlinear lower bound on the size of such circuits has been a major open problem in complexity theory for decades. 
A lower bound of $n^{\Omega(1/\log\log{n})}$ rounds in CONGEST for triangle detection then implies an $N^{1+\Omega(1/\log\log{N})}$ size lower bounds for an explicit problem (detecting a triangle) on above circuits, 
way above the longstanding $\Theta(N)$ threshold of existing lower bounds.

\begin{remark}\label{rem:cktbounds}	
	\Cref{clm:exist-lb} identifies $n^{\Omega(1/\log\log{n})}$ rounds as a barrier for proving distributed CONGEST lower bounds for triangle detection. However, the arguments here can imply even more. 
	
	It is plausible that the runtime of the expander routing in \Cref{prob:routing} (which is the bottleneck here) can be improved much beyond what is currently known. 
	In fact, it is possible that these problems can be solved in $\poly\!\log{\!(n)}$ rounds in CONGEST. 
	Thus, in absence of a lower bound that rules out possibility of such an algorithm for expander routing, any lower bound 
	for distributed triangle detection of the form $\Omega(\log^c{(n)})$ rounds, for some sufficiently large constant $c > 1$, has the \emph{potential} of breaking current circuit complexity barriers. 
	
	 As a result, we believe that at this stage, namely in absence of a concrete lower bound on \Cref{prob:routing} for the parameters mentioned earlier, the \emph{limit} of one can realistically hope for an unconditional lower bound for distributed triangle
	  detection is $\poly\!\log{(n)}$ and even $\Omega(\log n)$ rounds. 
\end{remark}

\clearpage

\renewcommand{\arraystretch}{1.5}

\section{List of Random Variables in the Main Proofs}\label{sec:list-rv}

We compile a list of random variables used throughout our proofs and their meanings here. 

\begin{itemize}
\item First, we begin with the random variables associated with inputs from distribution $\cG_r(n_r)$.
\begin{center}
\begin{tabular}{|c|p{10cm}|}
	\hline
\textbf{Random Variable}  & \textbf{Definition} \\
	\hline
$\rG_{r-1}$ & The inner graph sampled in the hard distribution for $r$-rounds \\
$\ridlayer{X}$ & The identities chosen in $G_r$ for inner vertices in layer $X$ of $G_{r-1}$ \\
$\rids$ & The identities chosen for all the inner vertices \\
$\rid{x_i}$ & The identity chosen for inner vertex $x_i$ \\
$\rtypeval{x, w}$ & The type of vertex pair $x, w \in G$ which lies in $[r+1] \cup \{0\}$ \\
$\rNinner{x}{i}$ & The input given to inner vertex $x_i$ in $G_{r-1}$ \\
$\rNbhood{x}{Y}{i}$	 & The $n_r$-length vector given to $x_i$ of the list of its neighbors to layer $Y$ \\
$\rNbhood{x}{Y}{i}[\ell]$ &The type of vertex pair $(x_i, y_{\ell})$ for $\ell \in [n_r]$ \\
$\rNbtogether{x}{i}$ & The $2n_{r-1}$ length vector containing both $\rNbhood{x}{Y}{i}$ and $\rNbhood{x}{Z}{i}$ \\
$\rNbtogether{x}{i}[S]$ & The type of vertex pair $(x, w)$ for $w \in S \subseteq Y \cup Z$ \\
\hline
\end{tabular}
\end{center}

\item Next, we talk about the random variables associated with the messages. 
\begin{center}
	\begin{tabular}{|c|p{10cm}|}
		\hline
		\textbf{Random Variable}  & \textbf{Definition} \\
		\hline
$\rmsg{x}{Y}{i}$ &The $n_r$-length vector denoting list of messages that inner vertex $x_i$ sends to neighboring channels in layer $Y$ \\
$\rmsg{x}{Y}{i}[\ell]$ &The message sent by inner vertex $x_i$ to $y_{\ell}$ for $\ell \in [n_r]$ in the first round, with $\typeval{x_i, y_{\ell}} \leq r$  \\
$\rmsgrest{x}{i}$ & The messages received by inner vertex $x_i$ from its outer neighbors \\
$\rMinner{x_i}$ & The $2n_{r-1}$ length list of messages sent by inner vertex $x_i$ to other inner vertices in first round \\
$\rMinner{x_i}(w)$ &The message sent by inner vertex $x_i$ to some other inner vertex $w$ in the first round \\
$\rNrest{x_i}$ &The random variables that inner vertex $x_i$ samples in \Cref{box:r-1-step3-input-rest} comprised of the input in each group that is not fixed, and the messages sent by all outer vertices to $x_i$ \\
		\hline
	\end{tabular}
\end{center}

\item Finally, we give the random variables associated with the round elimination protocol. 
\begin{center}
	\begin{tabular}{|c|p{10cm}|}
		\hline
		\textbf{Random Variable}  & \textbf{Definition} \\
		\hline
		$\rNpub{x}$ & The random variable containing the types of channels to inner vertex $x$ which are sampled with public randomness in \Cref{box:r-1-step1-ids} \\
		$\rMpub{x}$ & The random variable containing all the messages sent by inner vertex $x$ that are sampled publicly in \Cref{box:r-1-step1-ids} \\
		$\rcJup{x}$ & The random variable associated with the collection of $\alpha_r$ many subsets of $Y \cup Z$ for inner vertex $x$ in \Cref{box:cGtilr-definition} from \ref{item:disjt-sampling-1}-\ref{item:J-appears} \\
		$\rcJall$ &The random variable containing collections $\rcJup{x}$ for each inner vertex $x$, used to break correlation between public messages and inner input\\
		$\rcKupdown{x \to Y}{t, i}$ & For inner vertex $x$, other layer $Y$, type $t \in [r] \cup \{0\}$ and value $i \in[n_{r-1}]$, collection of $\beta_r$ many subsets of $Y \cup Z$ sampled in \ref{item:disjt-sampling-1}-\ref{item:K-appears} of \Cref{box:cGtilr-definition} \\
		$\rcKall$ & The random variable containing collections $\rcKupdown{x \to Y}{t, i}$ for inner vertex $x$, layer $Y$ with $x \notin Y$, type $t \in [r] \cup \{0\}$ and value $i \in [n_{r-1}]$ for breaking correlation between messages to inner vertices and inner inputs  \\
		$\rLupdown{x \to Y}{t, i}$ & For inner vertex $x$, other layer $Y$, type $t \in [r] \cup \{0\}$ and value $i \in[n_{r-1}]$, a set of $\gamma_r$ elements of $Y$ sampled in \ref{item:disjt-sampling-1}-\ref{item:L-appears} of \Cref{box:cGtilr-definition} \\
		$\rLall$ & The random variable containing sets $\rLupdown{x \to Y}{t, i}$ for  inner vertex $x$, layer $Y$ with $x \notin Y$, type $t \in [r] \cup \{0\}$ and value $i \in [n_{r-1}]$ to break correlations among messages to inner vertices  \\
		$ \raux$ & Auxiliary random variables $\rcJall, \rcKall$ and $\rLall$ defined for breaking correlation \\
		\hline
	\end{tabular}
\end{center}
\end{itemize}

\clearpage


\section{A Schematic Organization of the Main Proofs}\label{sec:schematic-org}

This a schematic organization of the flow of main components of the proofs in our paper (we also include~\Cref{clm:Dreal-Drealtil-close} which is minor component compared to the rest but is technically needed
to complete the picture here). Each arrow points to the main component(s) used in the proof of originating component. 

\bigskip

\hspace{-2cm}
\begin{tikzpicture}
    \node[draw, rounded corners, minimum width=6cm, minimum height=2cm, align=center] (box1)
    {
       {\textbf{\Cref{res:main}}} \\
        (Main result) \\[0.5cm] 
        Introduced in {\Cref{sec:intro}} \\ 
        Proved in \Cref{sec:hard-dist}
    };
    
    \node[draw, rounded corners, minimum width=6cm, minimum height=2cm, align=center] (box2) [below=1.5cm of box1]
    {
        \textbf{\Cref{thm:hard-dist}} \\
        (Lower bound for  the hard distribution)  \\[0.5cm] 
        Introduced in {\Cref{sec:hard-dist}} \\ 
        Proved in \Cref{sec:proof-hard-dist}
    };
    
    \node[draw, rounded corners, minimum width=6cm, minimum height=2cm, align=center] (box3) [below=1.5cm of box2]
    {
        \textbf{\Cref{lem:round-elim}} \\
        (Round elimination protocol)  \\[0.5cm] 
        Introduced in {\Cref{sec:proof-hard-dist}} \\ 
        Proved in \Cref{subsec:proof-round-elim}
    };
    
      \node[draw, rounded corners, minimum width=6cm, minimum height=2cm, align=center] (box4) [below=1.5cm of box3]
    {
        \textbf{\Cref{lem:round-elim-distance}} \\
        (Distance of $\cDreal$ and $\cDfake$) \\[0.5cm] 
        Introduced in {\Cref{sec:fake-protocol}} \\ 
        Proved in \Cref{subsec:round-elim-distance}
    };
    
     \node[draw, rounded corners, minimum width=6cm, minimum height=2cm, align=center, dashed] (box-temp) [left=0.5cm of box4]
    {
        \textbf{\Cref{clm:Dreal-Drealtil-close}} \\
        (Distance of $\cDreal$ and $\cDrealtil$) \\[0.5cm] 
        Introduced in {\Cref{subsec:dist-input-message}} \\ 
        Proved in \Cref{subsec:dist-input-message}
    };
    
          \node[draw, rounded corners, minimum width=6cm, minimum height=2cm, align=center] (box5) [below left=1.5cm and 0.5cm of box4]
    {
        \textbf{\Cref{lem:hybrid-1-inner-msgs}} \\
        (Distance of $\cDrealtil$ and $\cH_1$) \\[0.5cm] 
        Introduced in \Cref{subsec:setup-analysis} \\ 
        Proved in \Cref{subsec:msgs-low-correlation}
    };
    
              \node[draw, rounded corners, minimum width=6cm, minimum height=2cm, align=center] (box6) [below=1.5cm of box4]
    {
        \textbf{\Cref{lem:cH1-cH2-public-msgs}} \\
        (Distance of $\cH_1$ and $\cH_2$) \\[0.5cm] 
          Introduced in \Cref{subsec:setup-analysis} \\ 
        Proved in \Cref{subsec:public-msgs}
    };
    
              \node[draw, rounded corners, minimum width=6cm, minimum height=2cm, align=center] (box7) [below right=1.5cm and 0.5cm of box4]
    {
        \textbf{\Cref{lem:cH2-cDfake-inner-msgs}} \\
        (Distance of $\cH_2$ and $\cDfake$) \\[0.5cm] 
          Introduced in \Cref{subsec:setup-analysis} \\ 
        Proved in \Cref{subsec:inner-msgs}
    };

   	 \draw[->, line width=1pt]
    		(box1) -- (box2);
		
	\draw[->, line width=1pt]	
		(box2) -- (box3);
		
	 \draw[->, line width=1pt]	
		(box3) -- (box4);
		
			 \draw[->, line width=1pt]	
		(box4) -- (box5);

			 \draw[->, line width=1pt]	
		(box4) -- (box6);

			 \draw[->, line width=1pt]	
		(box4) -- (box7);

			 \draw[->, line width=1pt, dashed]	
		(box4) -- (box-temp);

\end{tikzpicture}

\clearpage

\section{Background on Information Theory}\label{app:info}

We now briefly introduce some definitions and facts from information theory that are used in our proofs. We refer the interested reader to the text by Cover and Thomas~\cite{CoverT06} for an excellent introduction to this field, 
and the proofs of the statements used in this Appendix. 

For a random variable $\rA$, we use $\supp{\rA}$ to denote the support of $\rA$ and $\distribution{\rA}$ to denote its distribution. 
When it is clear from the context, we may abuse the notation and use $\rA$ directly instead of $\distribution{\rA}$, for example, write 
$A \sim \rA$ to mean $A \sim \distribution{\rA}$, i.e., $A$ is sampled from the distribution of random variable $\rA$. 

\begin{itemize}[leftmargin=10pt]
\item We denote the \emph{Shannon Entropy} of a random variable $\rA$ by
$\en{\rA}$, which is defined as: 
\begin{align}
	\en{\rA} := \sum_{A \in \supp{\rA}} \Pr\paren{\rA = A} \cdot \log{\paren{1/\Pr\paren{\rA = A}}} \label{eq:entropy}
\end{align} 
\noindent
\item The \emph{conditional entropy} of $\rA$ conditioned on $\rB$ is denoted by $\en{\rA \mid \rB}$ and defined as:
\begin{align}
\en{\rA \mid \rB} := \Ex_{B \sim \rB} \bracket{\en{\rA \mid \rB = B}}, \label{eq:cond-entropy}
\end{align}
where 
$\en{\rA \mid \rB = B}$ is defined in a standard way by using the distribution of $\rA$ conditioned on the event $\rB = B$ in \Cref{eq:entropy}.

\item The \emph{mutual information} of two random variables $\rA$ and $\rB$ is denoted by
$\mi{\rA}{\rB}$ and is defined:
\begin{align}
\mi{\rA}{\rB} := \en{A} - \en{A \mid  B} = \en{B} - \en{B \mid  A}. \label{eq:mi}
\end{align}
\noindent
\item The \emph{conditional mutual information} $\mi{\rA}{\rB \mid \rC}$ is $\en{\rA \mid \rC} - \en{\rA \mid \rB,\rC}$ and hence by linearity of expectation:
\begin{align}
	\mi{\rA}{\rB \mid \rC} = \Ex_{C \sim \rC} \bracket{\mi{\rA}{\rB \mid \rC = C}}. \label{eq:cond-mi}
\end{align} 
\end{itemize}

\subsection{Useful Properties of Entropy and Mutual Information}\label{sec:prop-en-mi}

We shall use the following basic properties of entropy and mutual information throughout. 

\begin{fact}\label{fact:it-facts}
  Let $\rA$, $\rB$, $\rC$, and $\rD$ be four (possibly correlated) random variables.
   \begin{enumerate}
  \item \label{part:uniform} $0 \leq \en{\rA} \leq \log{\card{\supp{\rA}}}$. The right equality holds
    iff $\distribution{\rA}$ is uniform.
  \item \label{part:info-zero} $\mi{\rA}{\rB \mid \rC} \geq 0$. The equality holds iff $\rA$ and
    $\rB$ are \emph{independent} conditioned on $\rC$.
    \item \label{part:info-entropy} $\mi{\rA}{\rB \mid \rC} \leq \en{\rB}$ for any random variables $\rA, \rB, \rC$. 
  \item \label{part:cond-reduce} \emph{Conditioning on a random variable reduces entropy}:
    $\en{\rA \mid \rB,\rC} \leq \en{\rA \mid  \rB}$.  The equality holds iff $\rA \perp \rC \mid \rB$.
  \item \label{part:chain-rule} \emph{Chain rule for mutual information}: $\mi{\rA,\rB}{\rC \mid \rD} = \mi{\rA}{\rC \mid \rD} + \mi{\rB}{\rC \mid  \rA,\rD}$.
  \item \label{part:data-processing} \emph{Data processing inequality}: for a function $f(\rA)$ of $\rA$, $\mi{f(\rA)}{\rB \mid \rC} \leq \mi{\rA}{\rB \mid \rC}$. 
   \end{enumerate}
\end{fact}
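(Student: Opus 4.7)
The proposal is to verify each of the six items from the definitions in \Cref{eq:entropy}--\Cref{eq:cond-mi}, relying on two workhorse inequalities: the non-negativity of Kullback--Leibler divergence, $\kl{\mu}{\nu} \geq 0$ (with equality iff $\mu = \nu$), which itself follows from Jensen's inequality applied to the convex function $-\log$; and the identity $\mi{\rA}{\rB \mid \rC} = \Exp_{C \sim \rC}\bracket{\kl{\distributions{\rA,\rB}{\rC=C}}{\distributions{\rA}{\rC=C} \times \distributions{\rB}{\rC=C}}}$, which rewrites conditional mutual information as an expected KL divergence between a joint distribution and the product of its marginals. Once these two ingredients are in hand, the six parts are essentially one-line deductions, so the plan is simply to order them so that later parts may use earlier ones.

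First I would dispatch part~\ref{part:uniform}: non-negativity of $\en{\rA}$ follows term-by-term from $\Pr(\rA=A) \in [0,1]$, and the upper bound $\en{\rA} \leq \log{\card{\supp{\rA}}}$ is exactly $\kl{\distribution{\rA}}{\text{Uniform on }\supp{\rA}} \geq 0$ after rearrangement, with equality iff $\distribution{\rA}$ is uniform. Next, part~\ref{part:info-zero} is precisely the KL-divergence identity for $\mi{\rA}{\rB\mid\rC}$ noted above; the equality condition translates to $\distributions{\rA,\rB}{\rC=C}$ factoring as a product for every $C$ in the support, which is the definition of conditional independence. Part~\ref{part:chain-rule} follows by expanding both sides through \Cref{eq:mi} and \Cref{eq:cond-mi} into sums of (conditional) entropies and observing the telescoping cancellation $\en{\rA,\rB\mid\rD} - \en{\rA,\rB\mid\rC,\rD} = (\en{\rA\mid\rD}-\en{\rA\mid\rC,\rD}) + (\en{\rB\mid\rA,\rD}-\en{\rB\mid\rA,\rC,\rD})$.

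Part~\ref{part:cond-reduce} is then the assertion that $\mi{\rA}{\rC \mid \rB} \geq 0$, rewritten via \Cref{eq:mi}; so it is a corollary of part~\ref{part:info-zero}, with the stated equality condition inherited from there. Part~\ref{part:info-entropy} follows by writing $\mi{\rA}{\rB\mid\rC} = \en{\rB\mid\rC} - \en{\rB\mid\rA,\rC}$ via \Cref{eq:mi}, using non-negativity of conditional entropy (immediate from \Cref{eq:cond-entropy} and part~\ref{part:uniform}) to drop the second term, and then applying part~\ref{part:cond-reduce} with an empty base to get $\en{\rB\mid\rC} \leq \en{\rB}$. Finally, for part~\ref{part:data-processing}, the chain rule of part~\ref{part:chain-rule} gives $\mi{\rA}{\rB\mid\rC} = \mi{f(\rA)}{\rB\mid\rC} + \mi{\rA}{\rB\mid f(\rA), \rC}$ (using that conditioning on $f(\rA)$ is absorbed once we also have $\rA$), and the remainder term is $\geq 0$ by part~\ref{part:info-zero}.

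The main ``obstacle,'' if there is one, is bookkeeping rather than depth: one must be careful about the equality cases in parts~\ref{part:uniform}, \ref{part:info-zero}, and~\ref{part:cond-reduce}, since those rely on tracing the equality condition of Jensen's inequality through the expectation over $\rC$ (each conditional slice needs the product form, not just the average). Everything else is mechanical manipulation of the definitions and is entirely standard; indeed, given that these are textbook facts, the proposal would be to cite a standard reference such as~\cite{CoverT06} rather than reproduce all calculations in full, and only spell out the parts that fix notation used elsewhere in the paper.
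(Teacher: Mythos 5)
Your proposal is correct, and the paper in fact gives no proof at all for these items---it simply states them as standard facts and refers the reader to Cover and Thomas~\cite{CoverT06}, which is exactly what you yourself suggest at the end. Your accompanying sketches (non-negativity of KL for part~\ref{part:uniform} and~\ref{part:info-zero}, the entropy-chain-rule telescoping for part~\ref{part:chain-rule}, reducing part~\ref{part:cond-reduce} to $\mi{\rA}{\rC\mid\rB}\geq 0$, combining non-negativity of conditional entropy with part~\ref{part:cond-reduce} for part~\ref{part:info-entropy}, and the chain-rule decomposition $\mi{\rA}{\rB\mid\rC}=\mi{f(\rA)}{\rB\mid\rC}+\mi{\rA}{\rB\mid f(\rA),\rC}$ for part~\ref{part:data-processing}) are all correct and standard, so they add verification the paper omits without changing the approach.
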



\noindent
We also use the following two standard propositions, regarding the effect of conditioning on mutual information.

\begin{proposition}\label{prop:info-increase}
  For random variables $\rA, \rB, \rC, \rD$, if $\rA \perp \rD \mid \rC$, then, 
  \[\mi{\rA}{\rB \mid \rC} \leq \mi{\rA}{\rB \mid  \rC,  \rD}.\]
\end{proposition}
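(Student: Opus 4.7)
The plan is to prove this via the chain rule for mutual information, which is the cleanest route. Specifically, I would expand $\mi{\rA}{\rB,\rD \mid \rC}$ in two different ways using \itfacts{chain-rule}: once as $\mi{\rA}{\rD \mid \rC} + \mi{\rA}{\rB \mid \rC, \rD}$, and once as $\mi{\rA}{\rB \mid \rC} + \mi{\rA}{\rD \mid \rC, \rB}$. Setting these equal gives
\[
\mi{\rA}{\rD \mid \rC} + \mi{\rA}{\rB \mid \rC, \rD} = \mi{\rA}{\rB \mid \rC} + \mi{\rA}{\rD \mid \rC, \rB}.
\]

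Next, I would use the assumption $\rA \perp \rD \mid \rC$ together with \itfacts{info-zero} (equality case) to conclude $\mi{\rA}{\rD \mid \rC} = 0$. Substituting into the identity above and rearranging yields
\[
\mi{\rA}{\rB \mid \rC, \rD} = \mi{\rA}{\rB \mid \rC} + \mi{\rA}{\rD \mid \rC, \rB}.
\]
Finally, applying the non-negativity of mutual information (\itfacts{info-zero}) to the last term, $\mi{\rA}{\rD \mid \rC, \rB} \geq 0$, gives the desired inequality $\mi{\rA}{\rB \mid \rC} \leq \mi{\rA}{\rB \mid \rC, \rD}$.

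There is no serious obstacle here; this is a routine application of the chain rule. An alternative route, which I would mention only briefly if space permitted, is to write both mutual informations as differences of entropies via \Cref{eq:mi}, observe that $\en{\rA \mid \rC, \rD} = \en{\rA \mid \rC}$ by the equality case of \itfacts{cond-reduce} applied under $\rA \perp \rD \mid \rC$, and then use conditioning-reduces-entropy on the subtracted term $\en{\rA \mid \rB,\rC,\rD} \leq \en{\rA \mid \rB, \rC}$. Both routes arrive at the same conclusion in a few lines.
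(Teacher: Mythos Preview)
Your proposal is correct. Your primary route via the chain rule is a clean and valid argument; the paper instead uses exactly what you describe as your alternative route, working directly with entropies: it uses $\en{\rA \mid \rC} = \en{\rA \mid \rC,\rD}$ from the independence assumption together with $\en{\rA \mid \rB,\rC} \geq \en{\rA \mid \rB,\rC,\rD}$ from \itfacts{cond-reduce}, and then compares the two entropy differences. Both arguments are standard, equally short, and rely on the same underlying facts (the chain-rule identity you use is itself just a rearrangement of the entropy equalities), so neither buys anything over the other here.
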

 \begin{proof}
  Since $\rA$ and $\rD$ are independent conditioned on $\rC$, by
  \itfacts{cond-reduce}, $\HH(\rA \mid  \rC) = \HH(\rA \mid \rC, \rD)$ and $\HH(\rA \mid  \rC, \rB) \ge \HH(\rA \mid  \rC, \rB, \rD)$.  We have,
	 \begin{align*}
	  \mi{\rA}{\rB \mid  \rC} &= \HH(\rA \mid \rC) - \HH(\rA \mid \rC, \rB) = \HH(\rA \mid  \rC, \rD) - \HH(\rA \mid \rC, \rB) \\
	  &\leq \HH(\rA \mid \rC, \rD) - \HH(\rA \mid \rC, \rB, \rD) = \mi{\rA}{\rB \mid \rC, \rD}. \qed
	\end{align*}
	
\end{proof}

\begin{proposition}\label{prop:info-decrease}
  For random variables $\rA, \rB, \rC,\rD$, if $ \rA \perp \rD \mid \rB,\rC$, then, 
  \[\mi{\rA}{\rB \mid \rC} \geq \mi{\rA}{\rB \mid \rC, \rD}.\]
\end{proposition}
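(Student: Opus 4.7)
The plan is to mirror the proof of \Cref{prop:info-increase}, but now exploiting the conditional independence assumption $\rA \perp \rD \mid \rB, \rC$ on the \emph{lower-entropy} (post-conditioning on $\rB$) side rather than on the side without $\rB$. Recall that we want to show $\mi{\rA}{\rB \mid \rC} \geq \mi{\rA}{\rB \mid \rC, \rD}$, so we want the two ``subtracted'' entropies $\en{\rA \mid \rB, \rC}$ and $\en{\rA \mid \rB, \rC, \rD}$ to be equal (rather than an inequality going the wrong way), and the two ``leading'' entropies $\en{\rA \mid \rC}$ and $\en{\rA \mid \rC, \rD}$ to be ordered by the ``conditioning reduces entropy'' property.

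Concretely, I would write both mutual information terms using~\Cref{eq:mi}:
\[
\mi{\rA}{\rB \mid \rC} = \en{\rA \mid \rC} - \en{\rA \mid \rB, \rC}, \qquad \mi{\rA}{\rB \mid \rC, \rD} = \en{\rA \mid \rC, \rD} - \en{\rA \mid \rB, \rC, \rD}.
\]
By the hypothesis $\rA \perp \rD \mid \rB, \rC$, the equality part of \itfacts{cond-reduce} gives $\en{\rA \mid \rB, \rC} = \en{\rA \mid \rB, \rC, \rD}$. Independently, the inequality part of \itfacts{cond-reduce} gives $\en{\rA \mid \rC} \geq \en{\rA \mid \rC, \rD}$ (no assumption needed for this direction). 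Subtracting the equality from the inequality yields the desired bound.

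There is essentially no obstacle here; the only thing to be careful about is to apply the conditional-independence hypothesis in the correct direction (on the term conditioned on $\rB$) so that it produces an equality rather than an unhelpful inequality. The whole argument is a two-line manipulation of Shannon entropies using only \itfacts{cond-reduce} and the definition of (conditional) mutual information.
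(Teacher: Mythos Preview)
Your proposal is correct and essentially identical to the paper's proof: both expand the mutual information terms via entropies, use the hypothesis $\rA \perp \rD \mid \rB,\rC$ together with the equality case of \itfacts{cond-reduce} to get $\en{\rA \mid \rB,\rC} = \en{\rA \mid \rB,\rC,\rD}$, and use the inequality case of \itfacts{cond-reduce} to get $\en{\rA \mid \rC} \geq \en{\rA \mid \rC,\rD}$.
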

 \begin{proof}
 Since $\rA \perp \rD \mid \rB,\rC$, by \itfacts{cond-reduce}, $\HH(\rA \mid \rB,\rC) = \HH(\rA \mid \rB,\rC,\rD)$. Moreover, since conditioning can only reduce the entropy (again by \itfacts{cond-reduce}), 
  \begin{align*}
 	\mi{\rA}{\rB \mid  \rC} &= \HH(\rA \mid \rC) - \HH(\rA \mid \rB,\rC) \geq \HH(\rA \mid \rD,\rC) - \HH(\rA \mid \rB,\rC) \\
	&= \HH(\rA \mid \rD,\rC) - \HH(\rA \mid \rB,\rC,\rD) = \mi{\rA}{\rB \mid \rC,\rD}. \qed
 \end{align*}

\end{proof}

\subsection{Measures of Distance Between Distributions}\label{sec:prob-distance}

We use two main measures of distance (or divergence) between distributions, namely the \emph{Kullback-Leibler divergence} (KL-divergence) and the \emph{total variation distance}. 

\paragraph{KL-divergence.} For two distributions $\mu$ and $\nu$ over the same probability space, the \textbf{Kullback-Leibler (KL) divergence} between $\mu$ and $\nu$ is denoted by $\kl{\mu}{\nu}$ and defined as: 
\begin{align}
\kl{\mu}{\nu}:= \Ex_{a \sim \mu}\Bracket{\log\frac{\mu(a)}{{\nu}(a)}}. \label{eq:kl}
\end{align}
We also have the following relation between mutual information and KL-divergence. 
\begin{fact}\label{fact:kl-info}
	For random variables $\rA,\rB,\rC$, 
	\[\mi{\rA}{\rB \mid \rC} = \Ex_{(B,C) \sim {(\rB,\rC)}}\Bracket{ \kl{\distribution{\rA \mid \rB=B,\rC=C}}{\distribution{\rA \mid \rC=C}}}.\] 
\end{fact}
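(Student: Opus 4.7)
The plan is to unfold both sides of the claimed identity into expectations of log-ratios and verify they agree, using only the basic definitions already recorded in the excerpt. Concretely, I would start from the identity $\mi{\rA}{\rB \mid \rC} = \en{\rA \mid \rC} - \en{\rA \mid \rB,\rC}$ from~\Cref{eq:mi}, combined with the definitions of conditional entropy (\Cref{eq:cond-entropy}) and KL-divergence (\Cref{eq:kl}). The key observation is that entropies can be written as $\en{\rA \mid \rC} = \Ex_{(A,C)}[-\log \Pr(\rA=A \mid \rC=C)]$ and similarly for $\en{\rA\mid\rB,\rC}$, so the two terms combine into a single log-ratio.

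The main step is to perform this unfolding carefully. Writing
\[
\mi{\rA}{\rB \mid \rC} = \Ex_{(A,B,C)\sim(\rA,\rB,\rC)}\!\bracket{\log \frac{\Pr(\rA=A \mid \rB=B,\rC=C)}{\Pr(\rA=A \mid \rC=C)}},
\]
I would then apply the tower rule to pull the outer expectation over $(\rB,\rC)$ and leave the inner expectation over $\rA$ conditioned on $\rB=B,\rC=C$. Because the log-ratio inside depends only on $A$ (with $B,C$ fixed), the inner expectation is exactly
\[
\Ex_{A \sim \distribution{\rA \mid \rB=B,\rC=C}}\!\bracket{\log \frac{\Pr(\rA=A \mid \rB=B,\rC=C)}{\Pr(\rA=A \mid \rC=C)}} = \kl{\distribution{\rA \mid \rB=B,\rC=C}}{\distribution{\rA \mid \rC=C}},
\]
by the definition of KL-divergence.

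This is a standard textbook identity and I do not expect any substantive obstacle; the only subtlety is handling pairs $(B,C)$ with $\Pr(\rB=B,\rC=C)=0$, where the conditional distributions are undefined. This is resolved by the usual convention $0\cdot\log 0 = 0$ since such $(B,C)$ carry zero mass in the outer expectation and hence do not contribute to either side. The argument is purely definitional and requires no tools beyond the facts already listed in~\Cref{sec:prop-en-mi}.
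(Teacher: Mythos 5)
The paper itself offers no proof of~\Cref{fact:kl-info}: it is stated in~\Cref{app:info} as a standard fact and the reader is referred to Cover and Thomas~\cite{CoverT06} for details. Your derivation is correct and is the canonical way to prove this identity: writing $\mi{\rA}{\rB\mid\rC}=\en{\rA\mid\rC}-\en{\rA\mid\rB,\rC}$ as a single expectation of a log-ratio over $(\rA,\rB,\rC)$, then applying the tower property to peel off the outer expectation over $(\rB,\rC)$ so that the inner expectation over $\rA\mid\rB{=}B,\rC{=}C$ matches the definition of $\kl{\distribution{\rA\mid\rB{=}B,\rC{=}C}}{\distribution{\rA\mid\rC{=}C}}$ exactly. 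The one technical point you flag---zero-probability pairs $(B,C)$---is indeed the only subtlety, and your resolution (they carry zero mass and so contribute nothing to either side, under the convention $0\log 0=0$) is the standard one. One small remark: when you subtract the two entropy expressions, the first is naturally an expectation over $(\rA,\rC)$ while the second is over $(\rA,\rB,\rC)$; it is worth stating explicitly that the first can be promoted to an expectation over $(\rA,\rB,\rC)$ since the integrand does not depend on $B$, so that the two terms combine under a common expectation. With that small clarification your argument is complete and matches what a proof of this fact would look like in the reference the paper cites.
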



\paragraph{Total variation distance.} We denote the \textbf{total variation distance} between two distributions $\mu$ and $\nu$ on the same 
support $\Omega$ by $\tvd{\mu}{\nu}$, defined as: 
\begin{align}
\tvd{\mu}{\nu}:= \max_{\Omega' \subseteq \Omega} \paren{\mu(\Omega')-\nu(\Omega')} = \frac{1}{2} \cdot \sum_{x \in \Omega} \card{\mu(x) - \nu(x)}.  \label{eq:tvd}
\end{align}
\noindent
We use the following basic properties of total variation distance. 
\begin{fact}\label{fact:tvd-small}
	Suppose $\mu$ and $\nu$ are two distributions for $\event$, then, 
	$
	{\mu}(\event) \leq {\nu}(\event) + \tvd{\mu}{\nu}.
$
\end{fact}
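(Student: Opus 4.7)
The plan is to prove the inequality directly from the ``maximum over events'' formulation of total variation distance given in~\Cref{eq:tvd}. Recall that
\[
	\tvd{\mu}{\nu} = \max_{\Omega' \subseteq \Omega} \paren{\mu(\Omega') - \nu(\Omega')},
\]
where $\Omega$ is the common sample space on which $\mu$ and $\nu$ are defined and $\event \subseteq \Omega$ is any event for which the probabilities $\mu(\event)$ and $\nu(\event)$ make sense.

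The single step of the proof would be to specialize the maximum on the right-hand side to the particular choice $\Omega' = \event$. Since $\event$ is one admissible subset of $\Omega$ over which the maximum is taken, this immediately gives
\[
	\mu(\event) - \nu(\event) \;\leq\; \max_{\Omega' \subseteq \Omega} \paren{\mu(\Omega') - \nu(\Omega')} \;=\; \tvd{\mu}{\nu},
\]
and rearranging yields the desired bound $\mu(\event) \leq \nu(\event) + \tvd{\mu}{\nu}$.

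There is no real obstacle here: the statement is essentially a one-line reading of the definition of total variation distance. The only point that one might want to double-check is the equivalence of the two formulas in~\Cref{eq:tvd} (the maximum form and the $\tfrac{1}{2} \ell_1$-form), but this equivalence is already baked into the definition and is not needed for the fact itself. Hence the ``proof'' will be a single display together with one sentence justifying the choice of $\Omega' = \event$.
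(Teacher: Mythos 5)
Your proof is correct and is the standard one-line argument: plugging $\Omega' = \event$ into the maximization form of $\tvd{\mu}{\nu}$ in \Cref{eq:tvd} immediately gives $\mu(\event) - \nu(\event) \leq \tvd{\mu}{\nu}$, and rearranging finishes. The paper itself states \Cref{fact:tvd-small} without proof (it is listed among standard properties in \Cref{sec:prob-distance}), so there is no alternative argument in the paper to compare against; your proof is precisely the one the authors would have intended a reader to supply from the definition.
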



We also have the following (chain-rule) bound on the total variation distance of joint variables.

\begin{fact}\label{fact:tvd-chain-rule}
	For any distributions $\mu$ and $\nu$ on $n$-tuples $(X_1,\ldots,X_n)$, 
	\[
		\tvd{\mu}{\nu} \leq \sum_{i=1}^{n} \Exp_{X_{<i} \sim \mu} \tvd{\mu(X_i \mid X_{<i})}{\nu(X_i \mid X_{<i})}. 
	\]
\end{fact}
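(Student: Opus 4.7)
The plan is to prove Fact~\ref{fact:tvd-chain-rule} by induction on $n$, with the key step being a two-variable chain rule that follows from a standard $L^1$ triangle inequality. This is a textbook result, so the main work is just cleanly decomposing the joint measures via their conditionals.

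First I would establish the two-variable base identity: for distributions $\mu,\nu$ on pairs $(X,Y)$,
\[
\tvd{\mu(X,Y)}{\nu(X,Y)} \leq \tvd{\mu(X)}{\nu(X)} + \Exp_{X \sim \mu}\tvd{\mu(Y\mid X)}{\nu(Y\mid X)}.
\]
To see this, factor $\mu(x,y)=\mu(x)\mu(y\mid x)$ and $\nu(x,y)=\nu(x)\nu(y\mid x)$, then insert the hybrid term $\mu(x)\nu(y\mid x)$ inside the absolute value and apply the triangle inequality:
\[
|\mu(x)\mu(y\mid x) - \nu(x)\nu(y\mid x)| \leq \mu(x)\,|\mu(y\mid x)-\nu(y\mid x)| + |\mu(x)-\nu(x)|\,\nu(y\mid x).
\]
Summing over $x,y$ and dividing by two, the first sum gives $\Exp_{X\sim\mu}\tvd{\mu(Y\mid X)}{\nu(Y\mid X)}$ (since $\sum_y |\mu(y\mid x)-\nu(y\mid x)|=2\tvd{\mu(Y\mid X=x)}{\nu(Y\mid X=x)}$) and the second gives $\tvd{\mu(X)}{\nu(X)}$ (since $\sum_y \nu(y\mid x)=1$).

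Next, I would iterate this identity. Treat $X_{<i}$ as the ``$X$'' coordinate and $X_i$ as the ``$Y$'' coordinate in the two-variable bound applied to the joint distribution of $(X_{\leq i})$:
\[
\tvd{\mu(X_{\leq i})}{\nu(X_{\leq i})} \leq \tvd{\mu(X_{<i})}{\nu(X_{<i})} + \Exp_{X_{<i}\sim\mu}\tvd{\mu(X_i\mid X_{<i})}{\nu(X_i\mid X_{<i})}.
\]
A straightforward induction on $i$ (with base case $i=1$, where the left side equals the expectation term and the first term on the right vanishes) then yields the claimed bound with $n$ replaced by any $i$, and taking $i=n$ gives the statement.

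There is no real obstacle here since the inequality is standard and the decomposition is purely algebraic; the only care needed is to make sure the hybrid term is inserted with the $\mu$-marginal (so that its expectation matches the statement, which draws $X_{<i}$ from $\mu$), and to be explicit about the factor of $\tfrac{1}{2}$ coming from the definition of $\tvd{\cdot}{\cdot}$ in Eq.~\eqref{eq:tvd}.
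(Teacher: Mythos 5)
The paper states this as an unproved background fact in~\Cref{sec:prob-distance} (no proof is given; only~\Cref{fact:tvd-over-conditioning} in that subsection is proved), so there is no paper proof to compare against. Your argument is a correct and standard derivation: the two-variable bound via the hybrid decomposition $\mu(x)\mu(y\mid x)-\nu(x)\nu(y\mid x)=\mu(x)\bigl(\mu(y\mid x)-\nu(y\mid x)\bigr)+\bigl(\mu(x)-\nu(x)\bigr)\nu(y\mid x)$ and the triangle inequality is exactly what is needed, and the telescoping induction over $i$ with the partition $(X_{<i},X_i)$ gives the stated $n$-term bound. One small point worth being explicit about in a write-up: the base case $i=1$ has $X_{<1}$ empty, so $\tvd{\mu(X_{<1})}{\nu(X_{<1})}=0$ and the expectation over the empty tuple is trivial, which is precisely why the first term on the right vanishes and the induction launches cleanly.
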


We also have the following ``over conditioning'' property. 

\begin{fact}\label{fact:tvd-over-conditioning}
	For any random variables $\rX,\rY,\rZ$, 
	\[
		\tvd{\rX}{\rY} \leq \tvd{\rX\rZ}{\rY\rZ} = \Exp_{Z} \tvd{(\rX \mid \rZ=Z)}{(\rY \mid \rZ=Z)}.  
	\]
\end{fact}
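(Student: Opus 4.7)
The plan is to prove both parts of the statement directly from the definition of total variation distance in \Cref{eq:tvd}, treating $(\rX,\rZ)$ and $(\rY,\rZ)$ as two joint distributions in which the marginal of $\rZ$ is the same (this is implicit in the notation, and is exactly the setting in which the fact is applied, e.g., in \Cref{clm:cG-cGtil-close}). Let $\mu$ denote the joint distribution in which the first marginal is $\rX$, and $\nu$ the joint distribution in which the first marginal is $\rY$; let $\pi(z)$ denote the common marginal distribution of $\rZ$. Then both the inequality and the equality are very short calculations.

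For the inequality $\tvd{\rX}{\rY} \leq \tvd{\rX\rZ}{\rY\rZ}$, I would simply expand the marginal distribution of $\rX$ (resp.\ $\rY$) as a sum over values of $\rZ$ and invoke the triangle inequality:
\[
\tvd{\rX}{\rY} = \tfrac12 \sum_x \Bigl|\sum_z \mu(x,z) - \sum_z \nu(x,z)\Bigr| \leq \tfrac12 \sum_{x,z} |\mu(x,z) - \nu(x,z)| = \tvd{\rX\rZ}{\rY\rZ}.
\]
This is just the standard data-processing inequality for total variation distance specialized to the coordinate projection $(x,z)\mapsto x$, but writing it out by hand is cleaner than invoking the general principle.

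For the equality $\tvd{\rX\rZ}{\rY\rZ} = \Exp_{Z}\tvd{(\rX \mid \rZ{=}Z)}{(\rY \mid \rZ{=}Z)}$, I would factor each joint distribution as $\mu(x,z) = \pi(z)\,\mu(x \mid z)$ and $\nu(x,z) = \pi(z)\,\nu(x \mid z)$, using the fact that the $\rZ$-marginal $\pi$ is common to both. Pulling $\pi(z)$ outside the absolute value (which is legal since $\pi(z)\geq 0$) gives
\[
\tvd{\rX\rZ}{\rY\rZ} = \tfrac12 \sum_z \pi(z) \sum_x |\mu(x \mid z) - \nu(x \mid z)| = \Exp_{z \sim \pi}\bigl[\tvd{(\rX \mid \rZ{=}z)}{(\rY \mid \rZ{=}z)}\bigr],
\]
which is exactly the claimed equality after renaming $z$ to $Z$ in the expectation notation.

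There is no real technical obstacle here: both steps are one-line manipulations of the definition of $\mathrm{tvd}$, and the only subtle point to flag explicitly in the write-up is that the equality (as opposed to a weaker inequality) relies on $\rZ$ having the same marginal distribution under both joint laws — an assumption consistent with how the fact is invoked elsewhere in the paper.
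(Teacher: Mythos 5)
Your proof is correct and follows essentially the same route as the paper's: expand the definition of total variation distance, factor the joint law through $\rZ$, and apply the triangle inequality. The only cosmetic difference is that you prove the inequality $\tvd{\rX}{\rY} \leq \tvd{\rX\rZ}{\rY\rZ}$ directly (via marginalization and the triangle inequality) and then establish the equality separately, whereas the paper proves the equality first and then shows $\tvd{\rX}{\rY} \leq \Exp_Z\tvd{\cdot}{\cdot}$ by the same marginalization argument; the two orderings are interchangeable. Your explicit remark that the equality hinges on $\rZ$ having a common marginal under the two joint laws is a point the paper's proof uses tacitly but does not state, and it is a fair observation to record.
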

\begin{proof}
First, we prove the equality between the second term and the third term in the statement. 
\begin{align*}
	\tvd{\rX\rZ}{\rY\rZ} &=\frac12 \cdot \sum_{W, Z} \Pr[(W, Z)] \card{\Pr[\rX\rZ = (W, Z)] - \Pr[\rY\rZ=(W, Z)]}  \\
	&=\frac12 \cdot \sum_{W, Z}  \card{\Pr[\rZ = Z] \cdot (\Pr[\rX = W \mid  \rZ = Z] - \Pr[\rY = W \mid \rZ= Z])} \\
	&= \sum_{Z} \Pr[\rZ =Z] \cdot \frac12 \sum_{W}  \card{\Pr[\rX = W \mid  \rZ = Z] - \Pr[\rY = W \mid \rZ= Z]} \\
	&= \sum_{Z} \Pr[\rZ =Z] \cdot\tvd{(\rX \mid \rZ= Z)}{(\rY \mid \rZ = Z)} \\
	&= \Exp_{Z} \tvd{(\rX \mid \rZ= Z)}{(\rY \mid \rZ = Z)}.
\end{align*}

Now we prove the inequality between the first term and the third term. 
\begin{align*}
		\tvd{\rX}{\rY} &= \frac12 \cdot \sum_{W} \card{\Pr[\rX = W] - \Pr[\rY = W]} \\
		&= \frac12 \cdot \sum_{W} \card{\sum_{Z} \Pr[\rZ = Z](\Pr[\rX = W \mid \rZ = Z] - \Pr[\rY = W \mid \rZ = Z])} \\
		& \leq  \frac12 \cdot \sum_{W}  \sum_{Z} \Pr[\rZ = Z] \card{\Pr[\rX = W \mid \rZ = Z] - \Pr[\rY = W \mid \rZ = Z]} \\
		&= \sum_{Z}   \Pr[\rZ = Z] \cdot \paren{\frac12 \cdot \sum_{W} \card{\Pr[\rX = W \mid \rZ = Z] - \Pr[\rY = W \mid \rZ = Z]}}  \\
		&= \Exp_{Z} \tvd{\rX \mid \rZ= Z}{\rY \mid \rZ = Z}. \qedhere
\end{align*}
\end{proof}


The following Pinsker's inequality bounds the total variation distance between two distributions based on their KL-divergence.

\begin{fact}[Pinsker's inequality]\label{fact:pinskers}
	For any distributions $\mu$ and $\nu$, 
	$
	\tvd{\mu}{\nu} \leq \sqrt{\frac{1}{2} \cdot \kl{\mu}{\nu}}.
	$ 
\end{fact}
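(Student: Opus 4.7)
The plan is to prove Pinsker's inequality in two steps: first reduce to the case of two-point (Bernoulli) distributions, and then verify the inequality in that special case by a short calculus argument on a single real variable.

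For the reduction step, I would exploit the variational definition of total variation distance already given in \Cref{eq:tvd}. Pick a set $A \subseteq \Omega$ that attains $\tvd{\mu}{\nu} = \mu(A) - \nu(A)$, and set $p := \mu(A)$ and $q := \nu(A)$. Let $\mu'$ and $\nu'$ denote the pushforwards of $\mu$ and $\nu$ under the indicator function $\mathbf{1}_A$, so that $\mu' = \textnormal{Ber}(p)$ and $\nu' = \textnormal{Ber}(q)$. By construction $\tvd{\mu'}{\nu'} = |p-q| = \tvd{\mu}{\nu}$. The key inequality $\kl{\mu'}{\nu'} \leq \kl{\mu}{\nu}$ (data processing for KL-divergence) follows from the log-sum inequality applied to the partition $\{A, A^c\}$: grouping the summands in the definition \eqref{eq:kl} of KL over $A$ and $A^c$ can only decrease the value. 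Thus it suffices to prove $2(p-q)^2 \leq \kl{\textnormal{Ber}(p)}{\textnormal{Ber}(q)}$ for all $p, q \in [0,1]$.

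For the Bernoulli case, I would define
\[
  f(p,q) \; := \; p \log \tfrac{p}{q} + (1-p) \log \tfrac{1-p}{1-q} - 2(p-q)^2
\]
and show $f(p,q) \geq 0$ by fixing $p$ and analyzing $f$ as a function of $q \in (0,1)$. A direct computation gives
\[
  \frac{\partial f}{\partial q} \; = \; \frac{q-p}{q(1-q)} - 4(q-p) \; = \; (q-p)\left(\frac{1}{q(1-q)} - 4\right).
\]
Since $q(1-q) \leq 1/4$ on $[0,1]$, the bracketed factor is nonnegative, and so the derivative has the same sign as $q-p$. Consequently $f(p,\cdot)$ is decreasing on $(0,p]$ and increasing on $[p,1)$, so its minimum occurs at $q=p$, where $f(p,p) = 0$. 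Hence $f \geq 0$ on $(0,1)^2$, and combining with the reduction step completes the proof.

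The main obstacle, as I see it, is essentially bookkeeping rather than substance: one must handle the degenerate cases $q \in \{0,1\}$ (where $\kl{\textnormal{Ber}(p)}{\textnormal{Ber}(q)} = +\infty$ unless $p$ matches, in which case both sides vanish), and one must invoke the log-sum inequality carefully for measures on arbitrary spaces if $\Omega$ is not discrete. Neither is deep, so the two-step reduction above yields a clean self-contained proof.
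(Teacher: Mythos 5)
Your proof is correct. Note, however, that the paper does not supply its own proof of Pinsker's inequality: it is stated (along with the other entries of \Cref{fact:it-facts}, \Cref{fact:kl-info}, etc.) as a standard fact, with the reader referred to Cover and Thomas~\cite{CoverT06}. So there is no "paper approach" to compare against, and what you have given is the classical two-step derivation (reduce to the Bernoulli case via the data-processing inequality for KL-divergence, then close the Bernoulli case by the one-variable calculus bound using $q(1-q)\leq 1/4$). The derivative computation
\[
  \frac{\partial f}{\partial q} \;=\; (q-p)\Bigl(\tfrac{1}{q(1-q)}-4\Bigr)
\]
is right, and your treatment of the degenerate cases $q\in\{0,1\}$ is the correct caveat.

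One small remark worth recording: your calculus step implicitly takes $\log$ to be the \emph{natural} logarithm (the derivative of $p\log(p/q)$ in $q$ is $-p/q$ only for $\ln$). The paper, like most of the CONGEST/streaming literature, measures entropy in bits, so its $\log$ is base $2$. This is harmless here: since $\kl{\mu}{\nu}_{\text{bits}} = \kl{\mu}{\nu}_{\text{nats}}/\ln 2 > \kl{\mu}{\nu}_{\text{nats}}$, the inequality you prove in nats immediately implies the (weaker) base-$2$ version with the same constant $\tfrac12$ that the paper states. But it is worth noting this explicitly so that the constant $\tfrac12$ is not mistaken for being tight in the paper's base-$2$ convention.
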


%
%

\end{document}